\crefname{appendix}{}{}
\begin{document}

\begin{frontmatter}

%% Title, authors and addresses

\title{Exact and Heuristic Computation of the Scanwidth of Directed Acyclic Graphs\tnoteref{fund1}}

\author[inst1]{Niels Holtgrefe\corref{cor1}}
\ead{N.A.L.Holtgrefe@tudelft.nl}

\author[inst1]{Leo van Iersel}
\ead{L.J.J.vanIersel@tudelft.nl}

\author[inst2]{Mark Jones}
\ead{M.Jones@mdx.ac.uk}

\affiliation[inst1]{
            organization={Delft Institute of Applied Mathematics, Delft University of Technology},
            %addressline={Mekelweg~4}, 
            city={Delft},
            %postcode={2628~CD}, 
            country={The Netherlands}}

\affiliation[inst2]{
            organization={Department of Computer Science, Middlesex University},
            %addressline={The Burroughs}, 
            city={London},
            %postcode={NW4 4BT}, 
            country={United Kingdom}}

\tnotetext[fund1]{This paper received funding from the Dutch Research Council (NWO) under projects OCENW.M.21.306 and OCENW.KLEIN.125.}

\cortext[cor1]{Corresponding author}

\begin{abstract}
To measure the tree-likeness of a directed acyclic graph (DAG), a new width parameter that considers the directions of the arcs was recently introduced: \emph{scanwidth}. We present the first algorithm that efficiently computes the exact scanwidth of general DAGs. For DAGs with one root and scanwidth $k$ it runs in $O(k \cdot n^k \cdot m)$ time. The algorithm also functions as an FPT algorithm with complexity $O(2^{4 \ell - 1} \cdot \ell \cdot n + n^2)$ for phylogenetic networks of level-$\ell$, a type of DAG used to depict evolutionary relationships among species. Our algorithm performs well in practice, being able to compute the scanwidth of synthetic networks up to 30 reticulations and 100 leaves within 500 seconds. Furthermore, we propose a heuristic that obtains an average practical approximation ratio of 1.5 on these networks. While we prove that the scanwidth is bounded from below by the treewidth of the underlying undirected graph, experiments suggest that for networks the parameters are close in practice.
\end{abstract}

%% Research highlights
%\begin{highlights}
%\item The scanwidth of a directed acyclic graph can be computed efficiently
%\item Using heuristics, the scanwidth can be well approximated in practice
%\item The scanwidth is close to the treewidth in practice
%\end{highlights}

%% Keywords
\begin{keyword}
Scanwidth \sep Width measure \sep Directed acyclic graphs \sep Phylogenetic networks \sep Parametrized algorithms
\end{keyword}

\end{frontmatter}

\section{Introduction}
Treewidth is an extensively researched width measure that quantifies the `tree-likeness' of an undirected graph. The parameter comes with a corresponding tree decomposition, which represents the graph in a tree-like way (see e.g.~\cite{diestel2017graph}). When considering directed acyclic graphs (DAGs), tree decompositions become less natural since they do not preserve the natural top-to-bottom structure of a DAG. With this in mind, Berry, Scornavacca, and Weller~\cite{Berry2022b} recently developed a DAG-specific measure of tree-likeness: scanwidth. Contrary to treewidth, scanwidth is not agnostic to the directions of the arcs, thus making it a more natural parameter for DAGs.

Berry, Scornavacca, and Weller motivate the use of scanwidth from the perspective of phylogenetics: the study of evolutionary relationships among different genes, species, or populations. These relationships can be represented by a type of DAG with a single root: a phylogenetic network. The leaves of such a network represent the studied set of species (or taxa), whereas the root is a common ancestor. The arcs and their directions depict how species evolve over time, while the internal vertices are either reticulate events (where multiple species converge), or speciation events (where a species diverges into multiple species).

Phylogenetics is a research area with many computationally hard problems such as \textsc{Network Inference} \cite{rabier2021inference}, \textsc{Tree Containment} \cite{iersel2023embedding, vanIersel2017}, \textsc{Small Parsimony} \cite{Scornavacca2022} and \textsc{Hybridization Number} \cite{bernardini2023constructing, Borst2022}. A common way to solve such problems is by using so-called parameterized algorithms, where the time complexity is expressed as a function of both the input size and an additional parameter that is small in practice. In phylogenetics these algorithms tend to exploit the observation that reticulate events are fairly rare for most practical phylogenetic networks (when compared to the overall size of the network). In some sense, these networks thus still remain somewhat tree-like. Two well-known parameters in phylogenetics that measure this tree-likeness are the reticulation number and level of a network (see e.g. \cite{van2010phylogenetic}). Both measures have seen a vast amount of successful uses as a parameter in parameterized algorithms (see e.g. \cite{Borst2022, vanIersel2017} and the survey \cite{bulteau2019parameterized}).

Since the treewidth is smaller than both the reticulation number and the level, it has recently attracted interest from the phylogenetics community. Although treewidth has already successfully been applied for parameterized algorithms in phylogenetics \cite{iersel2023embedding, Scornavacca2022}, using scanwidth seems to be more natural, thus allowing for a more intuitive design of parameterized algorithms \cite{Scornavacca2022}. Indeed, algorithms in phylogenetics relying on scanwidth are already starting to appear \cite{rabier2021inference}. 

Berry, Scornavacca, and Weller named scanwidth after the informal concept of `scanning' a DAG. Imagine a `scanner line' for each leaf of a DAG. As these scanner lines move up through the DAG, they scan the arcs of the DAG. A scanner line merges with another scanner line when they meet at a vertex. The order in which the arcs and vertices are scanned is determined by a tree extension: a tree on the same vertex set as the original DAG, with the constraint that it maintains the natural ordering of the DAG. Therefore, this tree extension functions as a route for the scanner lines. The goal is now to find a tree extension that minimizes the maximum number of arcs that are cut by a line during the scanning. This number is referred to as the scanwidth of the DAG. \cref{fig:scanning a graph} provides an illustration of the concept of scanning a DAG.

\begin{figure}[htb]
     \centering
     \begin{subfigure}[b]{0.47\textwidth}
         \centering
		\begin{tikzpicture}[scale=0.46]
	\begin{pgfonlayer}{nodelayer}
		\node [style={graph_node}, label=right:{$b$}] (41) at (2, 0) {};
		\node [style={graph_node}, label=right:{$c$}] (42) at (4, 0) {};
		\node [style={graph_node}, label=right:{$a$}] (43) at (0, 1.5) {};
		\node [style={graph_node}, label=right:{$x$}] (44) at (2, 1.5) {};
		\node [style={graph_node}, label=right:{$y$}] (45) at (4, 1.5) {};
		\node [style={graph_node}, label=right:{$u$}] (46) at (1, 3) {};
		\node [style={graph_node}, label=right:{$v$}] (47) at (3, 3) {};
		\node [style={graph_node}, label=right:{$q$}] (48) at (2, 4.5) {};
		\node [style={graph_node}, label=right:{$w$}] (49) at (5, 3) {};
		\node [style={graph_node}, label=right:{$\rho$}] (50) at (3, 6) {};
		\node [style={graph_node}, label=right:{$d$}] (51) at (6, 1.5) {};
	\end{pgfonlayer}
	\begin{pgfonlayer}{edgelayer}
		\draw [style={graph_edge}] (50) to (48);
		\draw [style={graph_edge}] (48) to (46);
		\draw [style={graph_edge}] (46) to (43);
		\draw [style={graph_edge}] (48) to (47);
		\draw [style={graph_edge}] (46) to (44);
		\draw [style={graph_edge}] (47) to (44);
		\draw [style={graph_edge}] (44) to (41);
		\draw [style={graph_edge}] (47) to (45);
		\draw [style={graph_edge}] (45) to (42);
		\draw [style={graph_edge}] (50) to (49);
		\draw [style={graph_edge}] (49) to (45);
		\draw [style={graph_edge}] (49) to (51);
	\end{pgfonlayer}
\end{tikzpicture}
         \caption{DAG with single root}
         \label{subfig:scanning_network}
     \end{subfigure}
     \hfill
     \begin{subfigure}[b]{0.47\textwidth}
         \centering
         \begin{tikzpicture}[scale=0.46]
	\begin{pgfonlayer}{nodelayer}
		\node [style={graph_node},label=left:{$b$}] (1) at (3.5, 2) {};
		\node [style={graph_node},label=below left:{$c$}] (2) at (6.5, 3.5) {};
		\node [style={graph_node},label=below:{$a$}] (3) at (0.5, 3.5) {};
		\node [style={graph_node},label=right:{$x$}] (4) at (3.5, 3.5) {};
		\node [style={graph_node},label=above right:{$y$}] (5) at (5, 4.5) {};
		\node [style={graph_node},label=above left:{$u$}] (6) at (2, 4.5) {};
		\node [style={graph_node},label=above left:{$v$}] (7) at (3.5, 5.5) {};
		\node [style={graph_node},label=above left:{$q$}] (8) at (5, 6.5) {};
		\node [style={graph_node},label=above right:{$w$}] (9) at (6.5, 7.5) {};
		\node [style={graph_node},label=above right:{$\rho$}] (10) at (5, 8.5) {};
		\node [style={graph_node},label=below:{$d$}] (11) at (8, 6.5) {};
		\node [style=none] (12) at (2.75, 2.75) {};
		\node [style=none] (13) at (1.5, 3.25) {};
		\node [style=none] (14) at (0.75, 4.5) {};
		\node [style=none] (15) at (4.25, 2.75) {};
		\node [style=none] (16) at (5.25, 3.5) {};
		\node [style=none] (17) at (6.25, 4.5) {};
		\node [style=none] (18) at (2.25, 3.5) {};
		\node [style=none] (19) at (3.5, 4.5) {};
		\node [style=none] (20) at (3, 4.5) {};
		\node [style=none] (21) at (2, 5.5) {};
		\node [style=none] (22) at (3.75, 4.5) {};
		\node [style=none] (23) at (5, 5.25) {};
		\node [style=none] (24) at (4.75, 5.5) {};
		\node [style=none] (25) at (3.75, 6.5) {};
		\node [style=none] (26) at (5, 7.5) {};
		\node [style=none] (27) at (6, 6.5) {};
		\node [style=none] (28) at (6.75, 6.5) {};
		\node [style=none] (29) at (7.75, 7.5) {};
		\node [style=none] (30) at (5.25, 7.5) {};
		\node [style=none] (31) at (6.25, 8.5) {};
	\end{pgfonlayer}
		\draw [style={extension_edge}] (10.center) to (9.center);
		\draw [style={extension_edge}] (9.center) to (8.center);
		\draw [style={extension_edge}] (8.center) to (7.center);
		\draw [style={extension_edge}] (7.center) to (5.center);
		\draw [style={extension_edge}] (5.center) to (2.center);
		\draw [style={extension_edge}] (4.center) to (1.center);
		\draw [style={extension_edge}] (6.center) to (3.center);
		\draw [style={extension_edge}] (7.center) to (6.center);
		\draw [style={extension_edge}] (6.center) to (4.center);			
		\draw [style={extension_edge}] (9.center) to (11.center);
		\draw [style={cut_line}, draw=red!17, bend left=15] (14.center) to (13.center); %a
		\draw [style={cut_line}, draw=red!17, bend left=15] (12.center) to (15.center);
		\draw [style={cut_line}, draw=red!17, bend left=15] (16.center) to (17.center);
		\draw [style={cut_line}, draw=red!34, bend left=15] (18.center) to (19.center);
		\draw [style={cut_line}, draw=red!50, bend right=15] (20.center) to (21.center);
		\draw [style={cut_line}, draw=red!34, in=165, out=60, looseness=1.25] (22.center) to (23.center); % y
		\draw [style={cut_line}, draw=red!67, bend right=15] (24.center) to (25.center);
		\draw [style={cut_line}, draw=red!82, bend left=15] (26.center) to (27.center);
		\draw [style={cut_line}, draw=red!17, bend left=15] (28.center) to (29.center); % d
		\draw [style={cut_line}, draw=red!100, bend left=15] (30.center) to (31.center);
	\begin{pgfonlayer}{edgelayer}
		\draw [style={graph_edge}, bend left=45, looseness=2.50] (10) to (8);
		\draw [style={graph_edge}, bend right=15] (8) to (6);
		\draw [style={graph_edge}] (6) to (3);
		\draw [style={graph_edge}] (8) to (7);
		\draw [style={graph_edge}] (6) to (4);
		\draw [style={graph_edge}, bend right=45, looseness=2.50] (7) to (4);
		\draw [style={graph_edge}] (4) to (1);
		\draw [style={graph_edge}] (7) to (5);
		\draw [style={graph_edge}] (5) to (2);
		\draw [style={graph_edge}] (10) to (9);
		\draw [style={graph_edge}, in=135, out=-120, looseness=2.25] (9) to (5);
		\draw [style={graph_edge}] (9) to (11);
	\end{pgfonlayer}
\end{tikzpicture}
         \caption{Scanning of the DAG}
     \end{subfigure}
        \caption{A DAG with a single root (a), and a tree extension of the DAG (b), functioning as a route for the scanner lines. The tree extension is indicated by the grey edges, while the arcs of the DAG are drawn back in, following the edges of the tree extension. The scanner lines start at the leaves and move up through the tree extension, at each step becoming brighter red. The tree extension is optimal, thus the scanwidth of this DAG is 3: the maximum number of DAG arcs that are cut by one of the scanner lines.}
        \label{fig:scanning a graph}
\end{figure}
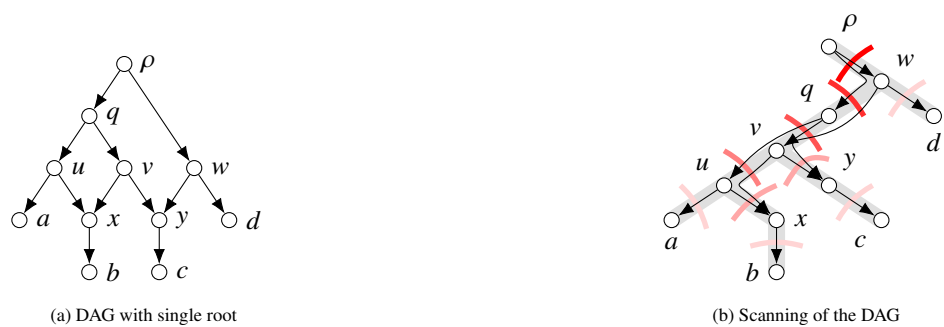

An NP-hardness proof and some structural results were already presented in~\cite{Berry2022b}. Apart from that, scanwidth has only very briefly been mentioned in \cite{rabier2021inference,Scornavacca2022}. Magne et al. \cite{magne2023edge} independently introduced the closely related edge-treewidth, which can be considered the undirected analogue of scanwidth. Similar to scanwidth, edge-treewidth has not seen other research efforts yet. A second closely related parameter is the directed cutwidth (see e.g. \cite{bodlaender2009derivation}). Its solution space is more restrictive than it is for scanwidth since it only allows linear orderings instead of tree extensions. Consequently, most results for directed cutwidth are not immediately transferable to scanwidth.

This paper aims to advance the understanding of scanwidth, with a special focus on exact and heuristic algorithms to compute the NP-hard parameter. The tree extensions returned by these algorithms can then be used by future algorithms relying on scanwidth as a parameter.

The structure of the paper is as follows. The first three subsections of \cref{sec:preliminaries} contain the formal introduction to scanwidth as well as some preliminary graph theoretical notions. In \cref{subsec:bounds} we prove that the scanwidth is bounded from below by the treewidth. Although this bound is already mentioned in \cite{Berry2022b}, it has not previously been formally proved. We also generalize the result by Rabier et al. \cite{rabier2021inference}, that the scanwidth of a binary level-$k$ network is at most $k+1$, to non-binary networks. \cref{sec:reductions} discusses some reduction rules to decrease the size of instances.

\cref{sec:exact_methods} is focuses on exact algorithms that compute the scanwidth. Using a recursive approach, the $O (n! \cdot n \cdot m)$ brute force time complexity is first improved to $O(4^n \cdot \mathrm{polylog}(n))$. This algorithm forms the basis of an algorithm with running time $O((k+r-1) \cdot m \cdot n^{k+r-1})$ for DAGs with $r$ roots, where $k$ is the scanwidth. This proves that with respect to the parameter scanwidth, the scanwidth problem for DAGs with a fixed number of roots falls within the complexity class XP, containing problems solvable in \emph{slicewise polynomial} time: $n^{f(k)}$ time for some computable function~$f$. Using reduction rules, this same algorithm can also be parametrized by the level $\ell$ of a network, running in $O(2^{4 \ell - 1} \cdot \ell \cdot n + n^2)$ time. Therefore, with the level as parameter, the scanwidth problem is in the class of \emph{fixed-parameter tractable} (FPT) problems, which contains problems solvable in $f(\ell) \cdot n^c$ time for some computable function $f$ and constant $c$.

\cref{sec:heuristics} explores different heuristics, with a cut-splitting heuristic showing positive results. Although the heuristic is proved to be non-optimal in general, computational experiments in \cref{sec:experiments} are promising. They show that the heuristic attains an average practical approximation ratio of 1.5 for networks of 30 reticulations and 100 leaves, when enhanced with simulated annealing. The XP algorithm is shown to be the fastest exact computation method in practice, being able to compute the scanwidth of networks up to 30 reticulations and 100 leaves within 500 seconds. Moreover, $88.9 \%$ of those networks are solvable within 60 seconds.

Finally in \cref{sec:conclusion}, we discuss the results and directions for further research.

\section{Preliminaries}\label{sec:preliminaries}
We will use the big-Oh notation to analyze complexities of algorithms. Additionally, we use the less common $\tilde{O}$ to suppress the polynomial and logarithmic factors in the time complexity, resulting in, for example, $O(2^n \cdot n^3 \cdot \log n) = \tilde{O} (2^n)$. Regarding graph theoretical concepts, we mostly follow the standard notation as presented in \cite{diestel2017graph}. Below, we will present some additional conventions and possibly less common notions.

Similarly to the degree $\delta (v)$ of a vertex $v$, we write $\delta (W)$ for the degree of $W \subseteq V$, which counts the number of edges with one endpoint in $W$ and one endpoint in $V \setminus W$ for a graph $G=(V,E)$. To avoid confusion, we sometimes use $\delta_G (v)$ or $\delta_G (W)$ to emphasize the graph $G$. We write $u \connect{G} v$ if two vertices $u$ and $v$ are connected (i.e. there exists a path between them) in a graph $G$. A \emph{block} (or `biconnected component') is a maximal connected induced subgraph without any cut vertices (i.e. vertices whose removal increases the number of connected components).

A \emph{weakly connected} directed graph is a graph whose underlying undirected graph is connected. Analogously, two vertices $u$ and $v$ are weakly connected in $G$ (denoted by $u \connect{G} v$) if they are connected in the underlying undirected graph. A \emph{(weakly connected) component} of $G$ is a maximal weakly connected induced subgraph of $G$. A vertex in a directed graph is a \emph{cut vertex} if its deletion increases the number of weakly connected components. A \emph{block} of a directed graph is a maximal weakly connected induced subgraph without any cut vertices.

\paragraph{Directed acyclic graphs}
If a directed graph $G$ contains no directed cycles, we call it a \emph{directed acyclic graph} (DAG). In a DAG, a vertex with indegree 0 is called a \emph{root} (often labelled as $\rho$), and a vertex with outdegree 0 is referred to as a \emph{leaf}. If $G$ has exactly one root, we call $G$ \emph{rooted}. Otherwise, if $G$ has multiple roots, it is \emph{multi-rooted}. The tails of arcs that enter a vertex $v$ are the \emph{parents} of $v$. Similarly, the heads of arcs coming out of $v$ are \emph{children} of $v$. We call $W \subseteq V$ a \emph{sinkset} if $\deltaout (W) = 0$. We write $W \sqsubseteq U$ for any $U \subseteq V$ if both $W \subseteq U$ and $W$ is a sinkset. Since a DAG contains no directed cycles, it naturally exhibits a top-to-bottom structure. More formally, it defines a partial order on its vertices: we write $v <_G u$ if there exists a directed path from $u$ to $v$.

Unless otherwise specified, this paper will consider each graph $G$ to be a weakly connected, directed acyclic graph without self-loops and parallel arcs. If it is clear from the context, we sometimes drop the adjective `directed' from the notions defined above.

\paragraph{Phylogenetic networks}
\label{subsec:phylogenetic_networks}

A rooted, weakly connected DAG $G=(V,E)$ is a \emph{(rooted) network} if each vertex $v\in V$ is of one of the following types: (i) (unique) \emph{root} with $\deltain(v) = 0$; (ii) \emph{leaf} with $\deltain (v) = 1$ and $\deltaout (v) = 0$; (iii) \emph{tree-vertex} with $\deltain (v) = 1$ and $\deltaout (v) \geq 2$; (iv) \emph{reticulation (vertex)} with $\deltain (v) \geq 2$ and $\deltaout(v) = 1$. Furthermore, if the root has degree 2, the leaves degree 1, and all other vertices degree 3, $G$ is a \emph{binary network}.

The \emph{reticulation number} $r(G)$ of a network $G=(V,E)$ is the sum of indegrees of all reticulation vertices, minus the number of reticulation vertices. A network has \emph{level} $k$ (or is level-$k$) if each block of the network has reticulation number at most $k$. The reticulation number and level of a network are often only defined for binary networks but were generalized to non-binary networks in \cite{van2010phylogenetic}. 

A network~$N$ is a \emph{phylogenetic network} on a set of labels $X$ if its leaves are bijectively labelled by the elements of $X$. As an example, consider the graph from \cref{subfig:scanning_network}, which serves as a phylogenetic network on the set of labels $\{a,b,c,d \}$.

\subsection{Graph layouts}
\label{subsec:graph_layouts}
A \emph{(linear) layout} $\sigma$ (also known as a `linear ordering', `linear arrangement', `numbering', or `labelling') of a graph $G$ is a total ordering of its vertex set. This ordering can be represented by a directed path on $V(G)$. A \emph{tree layout} $\Gamma$ (also known as an `agreeing tree') of a graph $G$ is a partial ordering of its vertex set with a unique largest element, and the constraint that for all $uv \in E(G)$, the vertices $u$ and $v$ must be comparable in $\Gamma$ (i.e. $u <_\Gamma v$ or $v <_\Gamma u$). It is represented by a rooted directed tree on $V(G)$, where the root corresponds to this largest element. Due to the constraint, edges of the graph are not allowed to `cross' different branches of the tree layout.

A linear layout $\sigma$ (resp. tree layout $\Gamma$) of a DAG $G$ is \emph{$G$-respecting} if $u <_G v$ implies $u <_\sigma v$ (resp. $u <_\Gamma v$) for all $u,v \in V(G)$. A $G$-respecting linear layout (resp. tree layout) of $G$ is called an \emph{extension} (resp. \emph{tree extension}) of~$G$. Consequently, all arcs of a DAG point in the same direction when drawn in a (tree) extension. For an extension, the arcs point backwards (i.e. towards the first element of the ordering), while in a tree extension, the arcs point away from the root. \cref{fig:layout_example} serves as a visualization of the above concepts. As a convention, we will always draw (tree) extensions as presented in this figure. To avoid confusion, \cref{subfig:layout:notree} shows a graph $H$ that may mistakenly be interpreted as a tree extension.

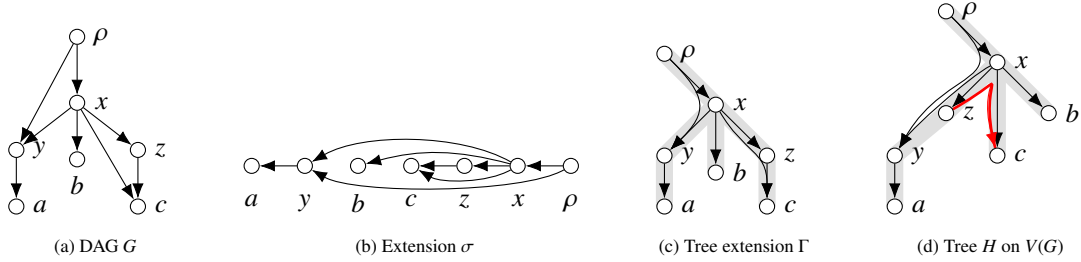
\begin{figure}[htb]
     \centering
     \begin{subfigure}[b]{0.20\textwidth}
         \centering
         \begin{tikzpicture}[xscale=0.4, yscale=0.5]
	\begin{pgfonlayer}{nodelayer}
		\node [style={graph_node}, label=right:{$\rho$}] (51) at (2, 4.5) {};
		\node [style={graph_node}, label=right:{$a$}] (52) at (0, 0) {};
		\node [style={graph_node}, label=below:{$b$}] (53) at (2, 1.25) {};
		\node [style={graph_node}, label=right:{$c$}] (54) at (4, 0) {};
		\node [style={graph_node}, label=right:{$y$}] (55) at (0, 1.5) {};
		\node [style={graph_node}, label=right:{$z$}] (56) at (4, 1.5) {};
		\node [style={graph_node}, label=right:{$x$}] (57) at (2, 2.75) {};
	\end{pgfonlayer}
	\begin{pgfonlayer}{edgelayer}
		\draw [style={graph_edge}] (51) to (57);
		\draw [style={graph_edge}] (51) to (55);
		\draw [style={graph_edge}] (57) to (55);
		\draw [style={graph_edge}] (57) to (56);
		\draw [style={graph_edge}] (57) to (53);
		\draw [style={graph_edge}] (55) to (52);
		\draw [style={graph_edge}] (56) to (54);
		\draw [style={graph_edge}] (57) to (54);
	\end{pgfonlayer}
\end{tikzpicture}
         \caption{DAG $G$}
         \label{subfig:layout:graph}
     \end{subfigure}
     \begin{subfigure}[b]{0.30\textwidth}
         \centering
         \begin{tikzpicture}[scale=0.7]
	\begin{pgfonlayer}{nodelayer}
		\node [style={graph_node}, label={[label distance=1.5mm]below:{$\rho$}}] (65) at (6, 0) {};
		\node [style={graph_node}, label={[label distance=1.5mm]below:{$a$}}] (66) at (0, 0) {};
		\node [style={graph_node}, label={[label distance=1.5mm]below:{$b$}}] (67) at (2, 0) {};
		\node [style={graph_node}, label={[label distance=1.5mm]below:{$c$}}] (68) at (3, 0) {};
		\node [style={graph_node}, label={[label distance=1.5mm]below:{$y$}}] (69) at (1, 0) {};
		\node [style={graph_node}, label={[label distance=1.5mm]below:{$z$}}] (70) at (4, 0) {};
		\node [style={graph_node}, label={[label distance=1.5mm]below:{$x$}}] (71) at (5, 0) {};
	\end{pgfonlayer}
	\begin{pgfonlayer}{edgelayer}
		\draw [style={graph_edge}] (65) to (71);
		\draw [style={graph_edge}, bend left, looseness=0.50] (65) to (69);
		\draw [style={graph_edge}, bend right, looseness=0.75] (71) to (69);
		\draw [style={graph_edge}] (71) to (70);
		\draw [style={graph_edge}, bend right=15] (71) to (67);
		\draw [style={graph_edge}] (69) to (66);
		\draw [style={graph_edge}] (70) to (68);
		\draw [style={graph_edge}, bend left, looseness=0.75] (71) to (68);
	\end{pgfonlayer}
\end{tikzpicture}
         \caption{Extension $\sigma$}
         \label{subfig:layout:ext}
     \end{subfigure}
     \begin{subfigure}[b]{0.20\textwidth}
         \centering
         \begin{tikzpicture}[scale=0.45]
	\begin{pgfonlayer}{nodelayer}
		\node [style={graph_node}, label=right:{$\rho$}] (51) at (0, 4.5) {};
		\node [style={graph_node}, label=right:{$a$}] (52) at (0, 0) {};
		\node [style={graph_node}, label=right:{$b$}] (53) at (1.5, 1) {};
		\node [style={graph_node}, label=right:{$c$}] (54) at (3, 0) {};
		\node [style={graph_node}, label=right:{$y$}] (55) at (0, 1.5) {};
		\node [style={graph_node}, label=right:{$z$}] (56) at (3, 1.5) {};
		\node [style={graph_node}, label=right:{$x$}] (57) at (1.5, 3) {};
	\end{pgfonlayer}
		\draw [style={extension_edge}] (51.center) to (57.center);
		\draw [style={extension_edge}] (57.center) to (53.center);
		\draw [style={extension_edge}] (57.center) to (55.center);
		\draw [style={extension_edge}] (55.center) to (52.center);
		\draw [style={extension_edge}] (57.center) to (56.center);
		\draw [style={extension_edge}] (56.center) to (54.center);
	\begin{pgfonlayer}{edgelayer}
		\draw [style={graph_edge}] (51) to (57);
		\draw [style={graph_edge}, in=45, out=-45, looseness=1.50] (51) to (55);
		\draw [style={graph_edge}] (57) to (55);
		\draw [style={graph_edge}] (57) to (56);
		\draw [style={graph_edge}] (57) to (53);
		\draw [style={graph_edge}] (55) to (52);
		\draw [style={graph_edge}] (56) to (54);
		\draw [style={graph_edge}, in=90, out=-60, looseness=1.25] (57) to (54);
	\end{pgfonlayer}
\end{tikzpicture}
         \caption{Tree extension $\Gamma$}
         \label{subfig:layout:tree}
     \end{subfigure}
     \begin{subfigure}[b]{0.20\textwidth}
         \centering
         \begin{tikzpicture}[scale=0.45]
	\begin{pgfonlayer}{nodelayer}
		\node [style={graph_node}, label=right:{$\rho$}] (58) at (1.25, 6) {};
		\node [style={graph_node}, label=right:{$a$}] (59) at (-0.25, 0.25) {};
		\node [style={graph_node}, label=right:{$b$}] (60) at (4.25, 3) {};
		\node [style={graph_node}, label=right:{$c$}] (61) at (2.75, 1.75) {};
		\node [style={graph_node}, label=right:{$y$}] (62) at (-0.25, 1.75) {};
		\node [style={graph_node}, label=right:{$z$}] (63) at (1.25, 3) {};
		\node [style={graph_node}, label=right:{$x$}] (64) at (2.75, 4.5) {};
	\end{pgfonlayer}
			\draw [style={extension_edge}] (58.center) to (64.center);
		\draw [style={extension_edge}] (64.center) to (63.center);
		\draw [style={extension_edge}] (63.center) to (62.center);
		\draw [style={extension_edge}] (62.center) to (59.center);
		\draw [style={extension_edge}] (64.center) to (61.center);
		\draw [style={extension_edge}] (64.center) to (60.center);
	\begin{pgfonlayer}{edgelayer}
		\draw [style={graph_edge}] (58) to (64);
		\draw [style={graph_edge}, in=60, out=-45, looseness=1.75] (58) to (62);
		\draw [style={graph_edge}, bend right=15, looseness=0.50] (64) to (62);
		\draw [style={graph_edge}] (64) to (63);
		\draw [style={graph_edge}] (64) to (60);
		\draw [style={graph_edge}] (62) to (59);
		\draw [style={graph_edge}] (64) to (61);
		\draw [style={graph_edge},color=red, in=105, out=30, looseness=3.50] (63) to (61);
        \draw [line width=1, color=red, in=105, out=30, looseness=3.50] (63) to (61);
	\end{pgfonlayer}
\end{tikzpicture}
         \caption{Tree $H$ on $V(G)$}
         \label{subfig:layout:notree}
     \end{subfigure}
        \caption{(a): Weakly connected DAG $G$. (b): An extension $\sigma$ of $G$ with the arcs of $G$ also drawn. (c): A tree extension $\Gamma$ of $G$ indicated by the grey arcs, whose direction is downwards. The arcs of $G$ are also drawn in $\Gamma$ and are made to follow the grey arcs. (d): A tree $H$ on $V(G)$ that is not a tree extension, because $z$ and $c$ are not comparable in $H$, while they are adjacent in $G$. Visually this means that the corresponding thick red arc $zc$ of $G$ `crosses' two branches of the tree.}
        \label{fig:layout_example}
\end{figure}

We will now cover some notation, partially adopted from \cite{Berry2022b}, on the above notions. Throughout this paper, we will exclusively reserve the Greek letters $\sigma$ and $\pi$ for extensions, and the Greek capital letters $\Gamma$ and $\Omega$ for tree extensions. For a weakly connected DAG $G$ (resp. a connected undirected graph~$G$) and $U \subseteq V(G)$, we use $\Pi[U]$ to denote the set of extensions (resp. linear layouts) of $G[U]$. Thus, $\Pi[V(G)]$ is the set of all extensions (resp. linear layouts) of $G$. Similarly, we use $\mathcal{T} [U]$ to denote the set of tree extensions (resp. tree layouts) of $G[U]$.

The vertex at position $i$ of a layout $\sigma$ is denoted by $\sigma (i)$. The suborder of $\sigma$ starting at position $i$ till position $j$ is written as $\sigma [i\ldots j]$, while the order starting at $i$ till the last vertex is $\sigma [i \ldots ]$. The restriction of an ordering to a subset $U \subseteq V(G)$ is written as $\sigma[U]$. If $A$ and $B$ are two disjoint subsets of $V(G)$, and $\sigma$ (resp. $\pi$) is a layout of $G[A]$ (resp. $G[B]$), we write $\sigma \circ \pi$ for the concatenation of $\sigma$ and $\pi$ (that is, $\sigma$ followed by $\pi$). Note that the positions and vertices of a linear layout are in bijection. Therefore, we will sometimes treat the vertices as interchangeable with their position in the linear layout. In particular, we will write $\sigma[1 \ldots v]$ to denote $\sigma[1 \ldots \sigma^{-1} (v)]$ and $\sigma [v \ldots]$ to denote $\sigma [\sigma^{-1} (v) \ldots]$. For a vertex set $V$, we also write $[V] = \{ 1, \ldots, |V| \}$. Since subgraphs of the type $G[\sigma[1 \ldots i]]$ will appear throughout this paper, we often denote them as $G[1 \ldots i]$ if the extension $\sigma$ is clear from the context.

\subsection{Cutwidth}
\label{subsec:cutwidth_def}
Following \cite{Berry2022b}, we will first define cutwidth, which will make it easier to explain scanwidth.
Cutwidth is a width parameter for graphs that has seen a lot of attention since the 1970s (see the survey \cite{diaz2002survey} and its addendum \cite{petit2013addenda}). Multiple variants exist, but we focus on the specific version of the parameter for DAGs. There is no consensus on the naming of this DAG-variant of cutwidth. In \cite{bodlaender2012note} it is referred to as `minimum cutwidth for directed acyclic graphs'. Other authors call it `directed cutwidth' \cite{bodlaender2009derivation, Berry2022b}. For the sake of brevity, we will refer to it simply as \emph{cutwidth}.

\begin{definition}[Cutwidth]
\label{def:cutwidth}
Let $G=(V,E)$ be a weakly connected DAG. For an extension $\sigma$ and a position $i$ of $\sigma$, we will denote $\CW_i^\sigma = \{ uv \in E: u \in \sigma[i+1 \ldots], v \in \sigma [1\ldots i] \}$. Then the \emph{cutwidth} of $G$ is 
$\cw (G) = \min_{\sigma \in \Pi[V]} \max_{i \in [V]} |\CW_i^\sigma|.$
Furthermore, we let $\cw (\sigma, G) = \max_{i \in [V]} |\CW_i^\sigma |$ be the cutwidth of $\sigma$, where $|\CW_i^\sigma |$ is the cutwidth of $\sigma$ at position $i$.
\end{definition}

Intuitively, an extension of a DAG is considered optimal in terms of cutwidth if the maximum number of arcs crossing a gap between two vertices is as small as possible. An example of an optimal and a non-optimal extension in terms of cutwidth is shown in \cref{fig:cutwidth_example}. 

\begin{figure}[htb]
     \centering
     \parbox{.31\textwidth}{
     \begin{subfigure}[b]{0.99\linewidth}
         \centering
         \begin{tikzpicture}[scale=0.45]
	\begin{pgfonlayer}{nodelayer}
		\node [style={graph_node}, label=right:{$a$}] (40) at (0, 0) {};
		\node [style={graph_node}, label=right:{$b$}] (41) at (2, 0) {};
		\node [style={graph_node}, label=right:{$c$}] (42) at (4, 0) {};
		\node [style={graph_node}, label=right:{$x$}] (43) at (0, 1.5) {};
		\node [style={graph_node}, label=right:{$y$}] (44) at (2, 1.5) {};
		\node [style={graph_node}, label=right:{$z$}] (45) at (4, 1.5) {};
		\node [style={graph_node}, label=right:{$u$}] (46) at (1, 3) {};
		\node [style={graph_node}, label=right:{$v$}] (47) at (3, 3) {};
		\node [style={graph_node}, label=right:{$q$}] (48) at (2, 4.5) {};
		\node [style={graph_node}, label=right:{$w$}] (49) at (5, 3) {};
		\node [style={graph_node}, label=right:{$\rho$}] (50) at (3, 6) {};
	\end{pgfonlayer}
	\begin{pgfonlayer}{edgelayer}
		\draw [style={graph_edge}] (50) to (48);
		\draw [style={graph_edge}] (48) to (46);
		\draw [style={graph_edge}] (46) to (43);
		\draw [style={graph_edge}] (43) to (40);
		\draw [style={graph_edge}] (48) to (47);
		\draw [style={graph_edge}] (46) to (44);
		\draw [style={graph_edge}] (47) to (44);
		\draw [style={graph_edge}] (44) to (41);
		\draw [style={graph_edge}] (47) to (45);
		\draw [style={graph_edge}] (45) to (42);
		\draw [style={graph_edge}] (50) to (49);
		\draw [style={graph_edge}] (49) to (45);
	\end{pgfonlayer}
\end{tikzpicture}
         \caption{Weakly connected DAG $G$}
         \label{fig:cutwidth_example:graph}
     \end{subfigure}
     }
     \parbox{.67\textwidth}{
     \begin{subfigure}[b]{0.98\linewidth}
         \centering
         \begin{tikzpicture}[xscale=0.75, yscale=0.6]
	\begin{pgfonlayer}{nodelayer}
		\node [style={graph_node}, label={[label distance=1.5mm]below:{$a$}}] (11) at (0, 0) {};
		\node [style={graph_node}, label={[label distance=1.5mm]below:{$b$}}] (12) at (2, 0) {};
		\node [style={graph_node}, label={[label distance=1.5mm]below:{$c$}}] (13) at (5, 0) {};
		\node [style={graph_node}, label={[label distance=1.5mm]below:{$x$}}] (14) at (1, 0) {};
		\node [style={graph_node}, label={[label distance=1.5mm]below:{$y$}}] (15) at (3, 0) {};
		\node [style={graph_node}, label={[label distance=1.5mm]below:{$z$}}] (16) at (6, 0) {};
		\node [style={graph_node}, label={[label distance=1.5mm]below:{$u$}}] (17) at (4, 0) {};
		\node [style={graph_node}, label={[label distance=1.5mm]below:{$v$}}] (18) at (7, 0) {};
		\node [style={graph_node}, label={[label distance=1.5mm]below:{$q$}}] (19) at (8, 0) {};
		\node [style={graph_node}, label={[label distance=1.5mm]below:{$w$}}] (20) at (9, 0) {};
		\node [style={graph_node}, label={[label distance=1.5mm]below:{$\rho$}}] (21) at (10, 0) {};
		\node [style=none] (22) at (6.75, -1) {};
		\node [style=none] (23) at (6.75, 1) {};
	\end{pgfonlayer}
		\draw [style={cut_line}, bend right=15] (23.center) to (22.center);	
	\begin{pgfonlayer}{edgelayer}
		\draw [style={graph_edge}, bend left=330] (21) to (19);
		\draw [style={graph_edge}, in=30, out=150, looseness=0.75] (19) to (17);
		\draw [style={graph_edge}, bend left, looseness=0.50] (17) to (14);
		\draw [style={graph_edge}] (14) to (11);
		\draw [style={graph_edge}] (19) to (18);
		\draw [style={graph_edge}] (17) to (15);
		\draw [style={graph_edge}, bend left=330, looseness=0.75] (18) to (15);
		\draw [style={graph_edge}] (15) to (12);
		\draw [style={graph_edge}] (18) to (16);
		\draw [style={graph_edge}] (16) to (13);
		\draw [style={graph_edge}] (21) to (20);
		\draw [style={graph_edge}, bend left, looseness=0.50] (20) to (16);
	\end{pgfonlayer}
\end{tikzpicture}
         \caption{Optimal extension $\sigma$}
         \label{fig:cutwidth_example:extension_opt}
     \end{subfigure}
     \begin{subfigure}[b]{0.98\linewidth}
         \centering
         \begin{tikzpicture}[xscale=0.75, yscale=0.6]
	\begin{pgfonlayer}{nodelayer}
		\node [style={graph_node}, label={[label distance=1.5mm]below:{$a$}}] (11) at (2, 0) {};
		\node [style={graph_node}, label={[label distance=1.5mm]below:{$b$}}] (12) at (4, 0) {};
		\node [style={graph_node}, label={[label distance=1.5mm]below:{$c$}}] (13) at (0, 0) {};
		\node [style={graph_node}, label={[label distance=1.5mm]below:{$x$}}] (14) at (3, 0) {};
		\node [style={graph_node}, label={[label distance=1.5mm]below:{$y$}}] (15) at (5, 0) {};
		\node [style={graph_node}, label={[label distance=1.5mm]below:{$z$}}] (16) at (1, 0) {};
		\node [style={graph_node}, label={[label distance=1.5mm]below:{$u$}}] (17) at (6, 0) {};
		\node [style={graph_node}, label={[label distance=1.5mm]below:{$v$}}] (18) at (7, 0) {};
		\node [style={graph_node}, label={[label distance=1.5mm]below:{$q$}}] (19) at (8, 0) {};
		\node [style={graph_node}, label={[label distance=1.5mm]below:{$w$}}] (20) at (9, 0) {};
		\node [style={graph_node}, label={[label distance=1.5mm]below:{$\rho$}}] (21) at (10, 0) {};
		\node [style=none] (22) at (5.75, -1) {};
		\node [style=none] (23) at (5.75, 1) {};
	\end{pgfonlayer}
		\draw [style={cut_line}, bend right=15] (23.center) to (22.center);
	\begin{pgfonlayer}{edgelayer}
		\draw [style={graph_edge}, bend left=330] (21) to (19);
		\draw [style={graph_edge}, in=30, out=150, looseness=0.75] (19) to (17);
		\draw [style={graph_edge}, bend left=25, looseness=0.50] (17) to (14);
		\draw [style={graph_edge}] (14) to (11);
		\draw [style={graph_edge}] (19) to (18);
		\draw [style={graph_edge}] (17) to (15);
		\draw [style={graph_edge}, bend left=330, looseness=0.75] (18) to (15);
		\draw [style={graph_edge}] (15) to (12);
		\draw [style={graph_edge}, bend left=15, looseness=0.75] (18) to (16);
		\draw [style={graph_edge}] (16) to (13);
		\draw [style={graph_edge}] (21) to (20);
		\draw [style={graph_edge}, bend right=25, looseness=0.50] (20) to (16);
	\end{pgfonlayer}
\end{tikzpicture}
         \caption{Non-optimal extension $\pi$}
         \label{fig:cutwidth_example:extension_bad}
     \end{subfigure}
     }
        \caption{(a): Weakly connected DAG $G$. (b): An optimal extension $\sigma$ of $G$ with cutwidth 4, attained at the red cut. (c): A non-optimal extension $\pi$ of $G$ with cutwidth 5, attained at the red cut.}
        \label{fig:cutwidth_example}
\end{figure}
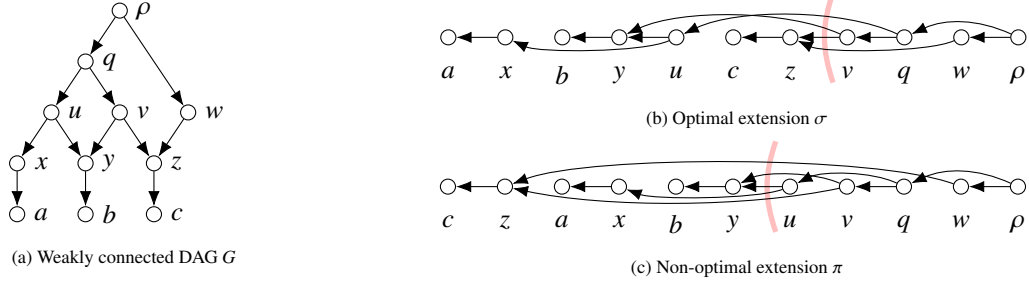

It is implictly shown in \cite{Berry2022b} that computing the cutwidth of a DAG is NP-hard. Regarding exact algorithms, one can compute the cutwidth of a DAG in $\tilde{O} (2^n)$ time \cite{bodlaender2012note}. The theoretically fastest parametrized algorithm for the cutwidth of a DAG was introduced in \cite{bodlaender2009derivation}, and it runs in linear time for fixed cutwidth of $k$, showing that the problem is FPT.

\subsection{Scanwidth}
\label{subsec:scanwidth_def}
In the introductory section of this paper, we provided some intuition on the idea of `scanning' a phylogenetic network. This concept can be extended to all weakly connected DAGs. Throughout this paper we aim to provide results for this broader class of graphs.\footnote{Most results in this paper, if not all, generalize to disconnected graphs by considering a `forest extension', which contains a separate tree extension for each component of the graph.} With cutwidth in mind, we are ready to define scanwidth. The scanwidth of a DAG was formally introduced by Berry, Scornavacca, and Weller \cite{Berry2022b} as follows.
\begin{definition}[Scanwidth]
\label{def:scanwidth-ext}
Let $G=(V,E)$ be a weakly connected DAG. For an extension $\sigma$ and a position $i$ of $\sigma$, we will denote $\SW_i^\sigma (G) = \{ uv \in E: u \in \sigma[i+1 \ldots], v \in \sigma[1 \ldots i], v \connect{G[1\ldots i]} \sigma(i) \}$. Then the \emph{scanwidth of $G$} is 
$$\sw (G) = \min_{\sigma \in \Pi[V]} \max_{i \in [V]} |\SW_i^\sigma(G)|.$$
Furthermore, we let $\sw (\sigma, G) = \max_{i \in [V]} |\SW_i^\sigma (G)|$ be the scanwidth of $\sigma$, where $|\SW_i^\sigma (G)|$ is the scanwidth of $\sigma$ at position $i$. If $G$ is clear from the context, we mostly write $\SW_i^\sigma$ instead of $\SW_i^\sigma (G)$.
\end{definition}

This definition is closely related to the definition of cutwidth. However, instead of counting all arcs in the cut-set $\CW_i^\sigma$, we only count those arcs entering a vertex $v$ that is weakly connected to $\sigma (i)$ in the graph $G[\sigma[1 \ldots i]]$. Recall that we write this as $v \connect{G[1 \ldots i]} \sigma(i)$. Before illustrating this definition with \cref{fig:scanwidth_example}, we introduce an alternative characterization of scanwidth.

The previous definition involves extensions and will turn out the be more convenient in proofs, as induction-based proofs are often easier when iterating over the positions of an extension. In \cite[Prop.\,1]{Berry2022b} it is shown that the next definition equivalently defines scanwidth.\footnote{Berry, Scornavacca, and Weller prove the equivalence only for networks, but the proof does not rely on labelled leaves and a single root. We can therefore extend the equivalence to arbitrary weakly connected DAGs.} This alternative definition relies on tree extensions, thus aligning more closely with the `scanning'-intuition given in the introduction.

\begin{definition}[Scanwidth]
\label{def:scanwidth-tree}
Let $G=(V,E)$ be a weakly connected DAG. For a tree extension $\Gamma$ and a vertex $v$ of $V$, we will denote $\GW_v^\Gamma (G) = \{ xy \in E: x >_\Gamma v \geq_\Gamma y\}$. Then the \emph{scanwidth} of $G$ is 
$$\sw (G) = \min_{\Gamma \in \mathcal{T} (V)} \max_{v \in V} |\GW_v^\Gamma (G)|.$$
Furthermore, we let $\sw (\Gamma, G) = \max_{v \in V} |\GW_v^\Gamma (G)|$ be the scanwidth of $\Gamma$, where $|\GW_v^\Gamma (G)|$ is the scanwidth of $\Gamma$ at vertex $v$. If $G$ is clear from the context, we mostly write $\GW_v^\Gamma$ instead of $\GW_v^\Gamma (G)$.
\end{definition}

To illustrate the two definitions and their relation, we take a look at \cref{fig:scanwidth_example}. \cref{fig:scanwidth_example:graph} depicts the same weakly connected DAG as in \cref{fig:cutwidth_example}, while \cref{fig:scanwidth_example:canonical} shows a tree extension $\Gamma^\sigma$ (the thick grey arcs, whose direction is downwards) with the arcs of the original DAG drawn in it. For each vertex $v$ in the graph, the set $\GW_v^{\Gamma^\sigma}$ contains the arcs that enter the vertex or pass it to reach a vertex lower in the tree extension. Visually, these sets correspond to cuts in the tree extension. As mentioned in the introduction, scanwidth can thus be viewed as a tree analogue of cutwidth. One can quickly check that the scanwidth of this (optimal) tree extension $\Gamma^\sigma$ equals 3, which is attained at the vertex~$v$, where we have that $\GW_v^{\Gamma^\sigma} = \{qv, qu, wz \}$. In a similar fashion, \cref{fig:scanwidth_example:noncanonical} shows a tree extension of the same graph, with a non-optimal scanwidth of 4, attained at the vertex $z$.

\begin{figure}[htb]
     \centering
     \begin{subfigure}[b]{0.31\textwidth}
         \centering
         \begin{tikzpicture}[scale=0.45]
	\begin{pgfonlayer}{nodelayer}
		\node [style={graph_node}, label=right:{$a$}] (40) at (0, 0) {};
		\node [style={graph_node}, label=right:{$b$}] (41) at (2, 0) {};
		\node [style={graph_node}, label=right:{$c$}] (42) at (4, 0) {};
		\node [style={graph_node}, label=right:{$x$}] (43) at (0, 1.5) {};
		\node [style={graph_node}, label=right:{$y$}] (44) at (2, 1.5) {};
		\node [style={graph_node}, label=right:{$z$}] (45) at (4, 1.5) {};
		\node [style={graph_node}, label=right:{$u$}] (46) at (1, 3) {};
		\node [style={graph_node}, label=right:{$v$}] (47) at (3, 3) {};
		\node [style={graph_node}, label=right:{$q$}] (48) at (2, 4.5) {};
		\node [style={graph_node}, label=right:{$w$}] (49) at (5, 3) {};
		\node [style={graph_node}, label=right:{$\rho$}] (50) at (3, 6) {};
	\end{pgfonlayer}
	\begin{pgfonlayer}{edgelayer}
		\draw [style={graph_edge}] (50) to (48);
		\draw [style={graph_edge}] (48) to (46);
		\draw [style={graph_edge}] (46) to (43);
		\draw [style={graph_edge}] (43) to (40);
		\draw [style={graph_edge}] (48) to (47);
		\draw [style={graph_edge}] (46) to (44);
		\draw [style={graph_edge}] (47) to (44);
		\draw [style={graph_edge}] (44) to (41);
		\draw [style={graph_edge}] (47) to (45);
		\draw [style={graph_edge}] (45) to (42);
		\draw [style={graph_edge}] (50) to (49);
		\draw [style={graph_edge}] (49) to (45);
	\end{pgfonlayer}
\end{tikzpicture}
         \caption{Weakly connected DAG $G$}
         \label{fig:scanwidth_example:graph}
     \end{subfigure}
     \begin{subfigure}[b]{0.31\textwidth}
         \centering
         \begin{tikzpicture}[xscale=0.45, yscale=0.4]
	\begin{pgfonlayer}{nodelayer}
		\node [style={graph_node}, label=right:{$a$}] (0) at (0.5, 2) {};
		\node [style={graph_node}, label=right:{$b$}] (1) at (3.5, 2) {};
		\node [style={graph_node}, label=right:{$c$}] (2) at (6.5, 3.5) {};
		\node [style={graph_node}, label=right:{$x$}] (3) at (0.5, 3.5) {};
		\node [style={graph_node}, label=right:{$y$}] (4) at (3.5, 3.5) {};
		\node [style={graph_node}, label=right:{$z$}] (5) at (5, 4.5) {};
		\node [style={graph_node}, label=above left:{$u$}] (6) at (2, 4.5) {};
		\node [style={graph_node}, label=above left:{$v$}] (7) at (3.5, 5.5) {};
		\node [style={graph_node}, label=above left:{$q$}] (8) at (5, 6.5) {};
		\node [style={graph_node}, label=right:{$w$}] (9) at (6.5, 7.5) {};
		\node [style={graph_node}, label=above right:{$\rho$}] (10) at (5, 8.5) {};
	\end{pgfonlayer}
		\draw [style={extension_edge}] (10.center) to (9.center);
		\draw [style={extension_edge}] (9.center) to (8.center);
		\draw [style={extension_edge}] (8.center) to (7.center);
		\draw [style={extension_edge}] (7.center) to (5.center);
		\draw [style={extension_edge}] (5.center) to (2.center);
		\draw [style={extension_edge}] (4.center) to (1.center);
		\draw [style={extension_edge}] (3.center) to (0.center);
		\draw [style={extension_edge}] (6.center) to (3.center);
		\draw [style={extension_edge}] (7.center) to (6.center);
		\draw [style={extension_edge}] (6.center) to (4.center);	
	\begin{pgfonlayer}{edgelayer}
		\draw [style={graph_edge}, bend left=45, looseness=2.50] (10) to (8);
		\draw [style={graph_edge}, bend right=15] (8) to (6);
		\draw [style={graph_edge}] (6) to (3);
		\draw [style={graph_edge}] (3) to (0);
		\draw [style={graph_edge}] (8) to (7);
		\draw [style={graph_edge}] (6) to (4);
		\draw [style={graph_edge}, bend right=45, looseness=2.50] (7) to (4);
		\draw [style={graph_edge}] (4) to (1);
		\draw [style={graph_edge}] (7) to (5);
		\draw [style={graph_edge}] (5) to (2);
		\draw [style={graph_edge}] (10) to (9);
		\draw [style={graph_edge}, in=140, out=-120, looseness=2.00] (9) to (5);
	\end{pgfonlayer}
\end{tikzpicture}
         \caption{Canonical tree extension $\Gamma^\sigma$}
         \label{fig:scanwidth_example:canonical}
     \end{subfigure}
     \begin{subfigure}[b]{0.31\textwidth}
         \centering
         \begin{tikzpicture}[xscale=0.45, yscale=0.4]
	\begin{pgfonlayer}{nodelayer}
		\node [style={graph_node}, label=right:{$a$}] (0) at (-1, 1) {};
		\node [style={graph_node}, label=right:{$b$}] (1) at (2, 1) {};
		\node [style={graph_node}, label=right:{$c$}] (2) at (3.5, 3.5) {};
		\node [style={graph_node}, label=right:{$x$}] (3) at (-1, 2.5) {};
		\node [style={graph_node}, label=right:{$y$}] (4) at (2, 2.5) {};
		\node [style={graph_node}, label=above left:{$z$}] (5) at (2, 4.5) {};
		\node [style={graph_node}, label=above left:{$u$}] (6) at (0.5, 3.5) {};
		\node [style={graph_node}, label=above left:{$v$}] (7) at (3.5, 5.5) {};
		\node [style={graph_node}, label=above left:{$q$}] (8) at (5, 6.5) {};
		\node [style={graph_node}, label=right:{$w$}] (9) at (6.5, 7.5) {};
		\node [style={graph_node}, label=above right:{$\rho$}] (10) at (5, 8.5) {};
		\node [style=none] (11) at (1, 3.75) {};
	\end{pgfonlayer}
		\draw [style={extension_edge}] (10.center) to (9.center);
		\draw [style={extension_edge}] (9.center) to (8.center);
		\draw [style={extension_edge}] (8.center) to (7.center);
		\draw [style={extension_edge}] (7.center) to (5.center);
		\draw [style={extension_edge}] (5.center) to (2.center);
		\draw [style={extension_edge}] (5.center) to (6.center);
		\draw [style={extension_edge}] (6.center) to (3.center);
		\draw [style={extension_edge}] (3.center) to (0.center);
		\draw [style={extension_edge}] (4.center) to (1.center);
		\draw [style={extension_edge}] (6.center) to (4.center);
	\begin{pgfonlayer}{edgelayer}
		\draw [style={graph_edge}, bend left=45, looseness=2.50] (10) to (8);
		\draw [style={graph_edge}, bend right, looseness=0.50] (8) to (6);
		\draw [style={graph_edge}] (6) to (3);
		\draw [style={graph_edge}] (3) to (0);
		\draw [style={graph_edge}] (8) to (7);
		\draw [style={graph_edge}] (6) to (4);
		\draw [style={graph_edge}] (4) to (1);
		\draw [style={graph_edge}] (7) to (5);
		\draw [style={graph_edge}] (5) to (2);
		\draw [style={graph_edge}] (10) to (9);
		\draw [style={graph_edge}, bend left=15, looseness=0.75] (9) to (5);
		\draw [in=105, out=-165, looseness=0.50] (7) to (11.center);
		\draw [style={graph_edge}, in=120, out=-90, looseness=0.75] (11.center) to (4);
	\end{pgfonlayer}
\end{tikzpicture}
         \caption{Tree extension $\Gamma'$}
         \label{fig:scanwidth_example:noncanonical}
     \end{subfigure}
     \vfill
     \begin{subfigure}[b]{0.66\textwidth}
         \centering
         \begin{tikzpicture}[scale=0.75]
	\begin{pgfonlayer}{nodelayer}
	\node [style={none}] (99) at (0, 1) {};
		\node [style={graph_node}, label={[label distance=2.25mm]below:{$a$}}] (11) at (0, 0) {};
		\node [style={graph_node}, label={[label distance=2.25mm]below:{$b$}}] (12) at (2, 0) {};
		\node [style={graph_node}, label={[label distance=2.25mm]below:{$c$}}] (13) at (5, 0) {};
		\node [style={graph_node}, label={[label distance=2.25mm]below:{$x$}}] (14) at (1, 0) {};
		\node [style={graph_node}, label={[label distance=2.25mm]below:{$y$}}] (15) at (3, 0) {};
		\node [style={graph_node}, label={[label distance=2.25mm]below:{$z$}}] (16) at (6, 0) {};
		\node [style={graph_node}, label={[label distance=2.25mm]below:{$u$}}] (17) at (4, 0) {};
		\node [style={graph_node}, label={[label distance=2.25mm]below:{$v$}}] (18) at (7, 0) {};
		\node [style={graph_node}, label={[label distance=1.25mm]below:{$q$}}] (19) at (8, 0) {};
		\node [style={graph_node}, label={[label distance=1.25mm]below:{$w$}}] (20) at (9, 0) {};
		\node [style={graph_node}, label={[label distance=1.25mm]below:{$\rho$}}] (21) at (10, 0) {};
	\end{pgfonlayer}
	\begin{pgfonlayer}{edgelayer}
		\draw [style={graph_edge}, bend left=330] (21) to (19);
		\draw [style={graph_edge}, in=30, out=150, looseness=0.75] (19) to (17);
		\draw [style={graph_edge}, bend left, looseness=0.50] (17) to (14);
		\draw [style={graph_edge}] (14) to (11);
		\draw [style={graph_edge}] (19) to (18);
		\draw [style={graph_edge}] (17) to (15);
		\draw [style={graph_edge}, bend left=330, looseness=0.75] (18) to (15);
		\draw [style={graph_edge}] (15) to (12);
		\draw [style={graph_edge}] (18) to (16);
		\draw [style={graph_edge}] (16) to (13);
		\draw [style={graph_edge}] (21) to (20);
		\draw [style={graph_edge}, bend left, looseness=0.50] (20) to (16);
	\end{pgfonlayer}
\draw[black,fill=gray!50,rounded corners=2mm,opacity=0.35] ($(11.north west)+(-0.2,0.15)$) -- ($(11.north east)+(0.3,0.15)$) -- ($(11.south east)+(0.3,-0.15)$) -- ($(11.south west)+(-0.2,-0.15)$) -- cycle;
\draw[black,fill=gray!70,rounded corners=2mm,opacity=0.35] ($(11.north west)+(-0.3,0.25)$) -- ($(14.north east)+(0.3,0.25)$) -- ($(14.south east)+(0.3,-0.25)$) -- ($(11.south west)+(-0.3,-0.25)$) -- cycle;

\draw[black,fill=gray!50,rounded corners=2mm,opacity=0.35] ($(12.north west)+(-0.2,0.15)$) -- ($(12.north east)+(0.3,0.15)$) -- ($(12.south east)+(0.3,-0.15)$) -- ($(12.south west)+(-0.2,-0.15)$) -- cycle;
\draw[black,fill=gray!70,rounded corners=2mm,opacity=0.35] ($(12.north west)+(-0.3,0.25)$) -- ($(15.north east)+(0.3,0.25)$) -- ($(15.south east)+(0.3,-0.25)$) -- ($(12.south west)+(-0.3,-0.25)$) -- cycle;

\draw[black,fill=gray!70,rounded corners=2mm,opacity=0.35] ($(11.north west)+(-0.4,0.35)$) -- ($(17.north east)+(0.3,0.35)$) -- ($(17.south east)+(0.3,-0.35)$) -- ($(11.south west)+(-0.4,-0.35)$) -- cycle;

\draw[black,fill=gray!50,rounded corners=2mm,opacity=0.35] ($(13.north west)+(-0.2,0.15)$) -- ($(13.north east)+(0.3,0.15)$) -- ($(13.south east)+(0.3,-0.15)$) -- ($(13.south west)+(-0.2,-0.15)$) -- cycle;
\draw[black,fill=gray!70,rounded corners=2mm,opacity=0.35] ($(13.north west)+(-0.3,0.25)$) -- ($(16.north east)+(0.3,0.25)$) -- ($(16.south east)+(0.3,-0.25)$) -- ($(13.south west)+(-0.3,-0.25)$) -- cycle;

\draw[black,fill=gray!30,rounded corners=2mm,opacity=0.25] ($(11.north west)+(-0.5,0.45)$) -- ($(18.north east)+(0.3,0.45)$) -- ($(18.south east)+(0.3,-0.45)$) -- ($(11.south west)+(-0.5,-0.45)$) -- cycle;%v
%\draw[black!0,fill=gray!40,rounded corners=2mm,opacity=0.25] ($(11.north west)+(-0.5,0.55)$) -- ($(19.north east)+(0.3,0.55)$) -- ($(19.south east)+(0.3,-0.55)$) -- ($(11.south west)+(-0.5,-0.55)$) -- cycle;%q
%\draw[black!0,fill=gray!40,rounded corners=2mm,opacity=0.25] ($(11.north west)+(-0.55,0.65)$) -- ($(20.north east)+(0.3,0.65)$) -- ($(20.south east)+(0.3,-0.65)$) -- ($(11.south west)+(-0.55,-0.65)$) -- cycle;%w
\end{tikzpicture}
         \caption{Optimal extension $\sigma$}
         \label{fig:scanwidth_example:extension}
     \end{subfigure}
        \caption{(a): Weakly connected, rooted DAG $G$. (b): Optimal canonical tree extension $\Gamma^\sigma$ with scanwidth 3, attained at the vertex $v$. (c): Non-canonical tree extension $\Gamma'$ with scanwidth 4, attained at the vertex $z$. (d): Optimal extension $\sigma$ with scanwidth 3, attained at the vertex $v$. For each $i\leq 8$, the outermost grey shaded areas containing only vertices belonging to $\sigma [1 \ldots i]$ depict the weakly connected components of $G[1 \ldots i]$. For $i \geq 8$, $G[1 \ldots i]$ is weakly connected and therefore consists of just one component.}
        \label{fig:scanwidth_example}
\end{figure}
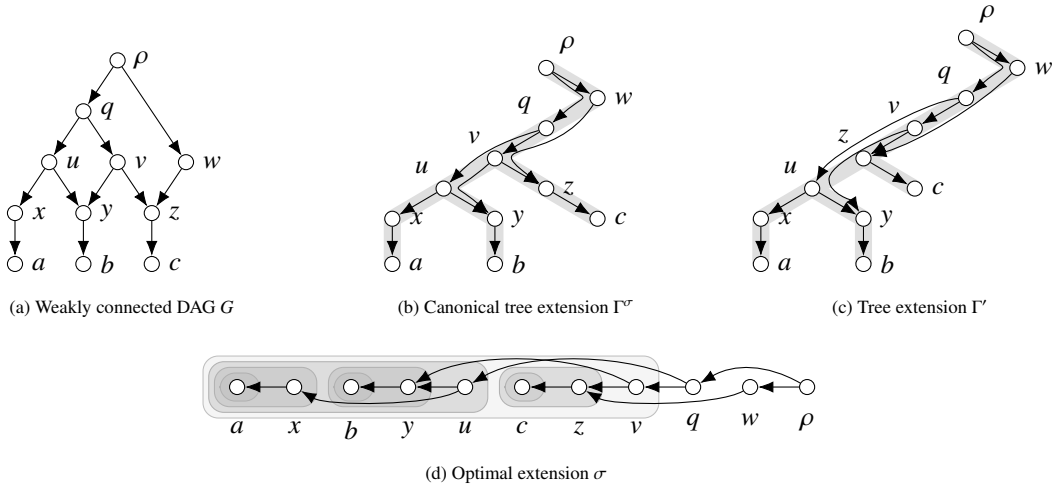

We just stated that the scanwidth of the graph $G$ in \cref{fig:scanwidth_example:graph} equals 3, meaning that there also exists an extension with a scanwidth of 3 when considering \cref{def:scanwidth-ext}. One such extension $\sigma$ is shown in \cref{fig:scanwidth_example:extension}. In comparison to tree extensions, testing membership of an arc in one of the sets~$\SW$ is slightly more involved. It requires knowledge of the weak connectivity relations within subgraphs of the graph. In \cref{fig:scanwidth_example:extension} these relations are depicted by the grey shaded areas. Consider for example the vertex $z =\sigma (7)$. From the figure we can see that $\{a,x,b,y,u \}$ and $\{c, z \}$ form the two components of $G[1 \ldots 7]$. Therefore, $\SW_z^\sigma = \{vz, wz \}$, and the set does not contain the arcs $vy$ and $qu$, as they enter the other component. Compare this to the cutwidth of this extension, where these two arcs would also be counted (see \cref{fig:cutwidth_example:extension_opt}).

Upon further examination of the example in \cref{fig:scanwidth_example}, an interesting observation emerges. Note that the sets $\GW$ associated with the tree extension $\Gamma^\sigma$ and the sets $\SW$ associated with the extension $\sigma$ coincide at each vertex. This is not coincidental, since $\Gamma^\sigma$ is the \emph{canonical tree extension} for $\sigma$, a crucial notion from \cite{Berry2022b} used to prove the equivalence between the two scanwidth definitions. First, recall that the \emph{transitive reduction} of $G$ is another graph $H$ on the same vertex set and with as few arcs as possible, such that for all pairs of vertices $u,v$, there exists a directed path from $u$ to $v$ in $G$ if and only if there exists a directed path from $u$ to $v$ in $H$. It is a well-known fact that the transitive reduction of a DAG is unique and can be obtained by exhaustively deleting arcs $uv$ for which there is a directed path from $u$ to $v$ containing at least one other vertex. Formally, the canonical tree extension is now defined as:

\begin{definition}[Canonical tree extension]
\label{def:canonical_tree}
Let $G=(V,E)$ be a weakly connected DAG and $\sigma$ an extension of $G$. Then, we denote the \emph{canonical tree extension} for $\sigma$ as $\Gamma^\sigma$, and it is defined as the transitive reduction of the DAG
$H = (V, \{uv : u >_\sigma v , u \connect{G[1 \ldots u]} v\} ). $  
\end{definition}

Lemma 5 from \cite{Berry2022b} establishes the relation between canonical tree extensions and extensions. It shows that $\Gamma^\sigma$ is indeed a tree extension of $G$ that has the same scanwidth as $\sigma$. Moreover, any extension of the canonical tree extension has the same scanwidth again. Note that such an extension can easily be found in linear time with a breadth-first-search, given that the tree extension is canonical.

Using Lemma 5 from \cite{Berry2022b} and the fact that the sets $\GW_v^\Gamma$ uniquely determine the tree extension $\Gamma$ (\cref{lem:GW_unique}), it is possible to completely characterize the canonical tree extensions by a more easily checkable condition. The proof of the result is deferred to \cref{sec:omitted_proofs}. Recall that for a tree extension~$\Gamma$ of~$G$ and a vertex~$v$, the graph $G[V(\Gamma_v)]$ is the subgraph of~$G$ induced by all vertices in the subtree of~$\Gamma$ rooted at~$v$.

\begin{restatable}{proposition}{propcanon}
\label{prop:canonical_tree_iff}
Let $G=(V,E)$ be a weakly connected DAG, $\Gamma$ a tree extension of $G$, and $\sigma$ an extension of $\Gamma$. For each $v\in V$, let $\Gamma_v$ be the subtree of $\Gamma$ rooted at $v$. Then, $\Gamma$ is the canonical tree extension for $\sigma$ if and only if $G[V(\Gamma_v)]$ is weakly connected for all $v\in V$.
\end{restatable}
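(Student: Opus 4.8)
\textit{Proof proposal.} The plan is to reduce the whole statement to a single identity: for every $v$, $V(\Gamma_v) = C_v$, where $C_v$ denotes the vertex set of the weakly connected component of $v$ in $G[\sigma[1 \ldots v]]$ (so $v \in C_v \subseteq \sigma[1 \ldots v]$ and $G[C_v]$ is weakly connected by definition). I would establish two facts. First, for \emph{any} tree extension $\Omega$ of $G$ of which $\sigma$ is an extension, $G[V(\Omega_v)]$ is weakly connected for all $v$ if and only if $V(\Omega_v) = C_v$ for all $v$. Second, the canonical tree extension always satisfies $V(\Gamma^\sigma_v) = C_v$ for all $v$. Given these, the proposition follows: a tree is determined by the vertex sets of its rooted subtrees (since $w \le_\Omega v \iff w \in V(\Omega_v)$; alternatively $\GW^\Omega_v = \{xy \in E : y \in V(\Omega_v),\, x \notin V(\Omega_v)\}$ depends only on $E$ and $V(\Omega_v)$ because $\Omega$ is a tree extension, after which \cref{lem:GW_unique} applies), so for ``$\Leftarrow$'' the first fact gives $V(\Gamma_v)=C_v$ and the second gives $V(\Gamma^\sigma_v)=C_v$, whence $\Gamma=\Gamma^\sigma$; and for ``$\Rightarrow$'' the second fact directly gives $V(\Gamma_v)=C_v$, so $G[V(\Gamma_v)]=G[C_v]$ is weakly connected.

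The first fact rests on a ``boundary edge'' lemma: if $\Omega$ is a tree extension of $G$ and $\sigma$ is an extension of $\Omega$, then $C_v \subseteq V(\Omega_v)$ for every $v$. Suppose not, so some $b \in C_v$ lies outside $V(\Omega_v)$. Since $G[C_v]$ is connected and $v \in C_v \cap V(\Omega_v)$, there is a $G$-edge between some $a \in C_v \cap V(\Omega_v)$ and some $b \in C_v \setminus V(\Omega_v)$. If the arc is $a \to b$ then $b <_G a$, hence $b <_\Omega a \le_\Omega v$ and so $b \in V(\Omega_v)$, a contradiction. If the arc is $b \to a$ then $a <_\Omega b$, so both $b$ and $v$ lie on the chain of $\Omega$-ancestors of $a$ and are therefore comparable: if $b \le_\Omega v$ then $b \in V(\Omega_v)$, a contradiction; if $v <_\Omega b$ then $v <_\sigma b$, contradicting $b \in C_v \subseteq \sigma[1 \ldots v]$. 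So the lemma holds. The reverse inclusion is where the connectivity hypothesis enters: $V(\Omega_v) \subseteq \sigma[1 \ldots v]$ (as $\sigma$ extends $\Omega$), so if $G[V(\Omega_v)]$ is weakly connected it is a connected subset of $\sigma[1\ldots v]$ containing $v$, hence contained in the component $C_v$. Combining, $G[V(\Omega_v)]$ weakly connected $\iff V(\Omega_v)=C_v$.

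The second fact identifies the subtrees of $\Gamma^\sigma$ itself. Write $H = (V, \{uw : u >_\sigma w,\ u \connect{G[1 \ldots u]} w\})$, so $\Gamma^\sigma$ is the transitive reduction of $H$. Transitive reduction preserves reachability, and the arcs of a tree extension point away from its root, so $V(\Gamma^\sigma_v)$ equals the set of vertices reachable from $v$ in $H$. If $v = x_0 \to x_1 \to \cdots \to x_k = w$ is a directed $H$-path, the defining conditions force $v = x_0 >_\sigma \cdots >_\sigma x_k = w$, so every $x_i$ lies in $\sigma[1 \ldots v]$; and $x_i \connect{G[1 \ldots x_i]} x_{i+1}$ together with $\sigma[1 \ldots x_i] \subseteq \sigma[1 \ldots v]$ gives $x_i \connect{G[1 \ldots v]} x_{i+1}$, so chaining yields $w \connect{G[1 \ldots v]} v$, i.e. $w \in C_v$. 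Hence $V(\Gamma^\sigma_v) \subseteq C_v$. Since $\Gamma^\sigma$ is a tree extension of $G$ and $\sigma$ is an extension of it (\cite[Lem.\,5]{Berry2022b}, resp.\ \cite[Lem.\,5d]{Berry2022b}), the boundary-edge lemma applied to $\Gamma^\sigma$ gives the reverse inclusion, so $V(\Gamma^\sigma_v) = C_v$ for all $v$.

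The step I expect to need the most care is the boundary-edge lemma: one must argue cleanly, via arc directions and the ``ancestors form a chain'' property, that a $G$-edge crossing the boundary of $V(\Omega_v)$ is impossible, being careful about the case split on the arc's orientation and about the interplay of $<_G$, $<_\Omega$ and $<_\sigma$. The remaining bookkeeping — that transitive reduction preserves reachability, that $V(\Gamma^\sigma_v)$ is exactly the $H$-reachable set, and the citation of \cite[Lem.\,5]{Berry2022b} to license applying the lemma to $\Gamma^\sigma$ (it is a tree extension of which $\sigma$ is an extension) — is routine.
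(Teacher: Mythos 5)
Your proposal is correct, but it takes a genuinely different route from the paper's proof. The paper disposes of the ``only if'' direction by citing Lemma~5f of Berry, Scornavacca, and Weller directly, and proves the ``if'' direction by contraposition at the level of \emph{arc sets}: it first shows $\SW_v^\sigma \subseteq \GW_v^\Gamma$ for every $v$, then combines \cref{lem:GW_unique} with their Lemma~5i (which gives $\SW_v^\sigma = \GW_v^{\Gamma^\sigma}$) to locate a vertex where the inclusion is strict, and finally observes that any arc $xy \in \GW_v^\Gamma \setminus \SW_v^\sigma$ witnesses a vertex $y \in V(\Gamma_v)$ that is not weakly connected to $v$ in $G[V(\Gamma_v)]$. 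You instead work at the level of \emph{vertex sets}, reducing both directions to the identity $V(\Gamma_v) = C_v$. Your boundary-edge lemma ($C_v \subseteq V(\Omega_v)$ for any tree extension $\Omega$ of which $\sigma$ is an extension) is sound --- the orientation case split and the ancestor-chain argument are exactly the delicate points, and you handle both correctly --- and your identification $V(\Gamma^\sigma_v) = C_v$ via reachability in the auxiliary graph $H$ also goes through; in fact the inclusion $C_v \subseteq V(\Gamma^\sigma_v)$ could be obtained even more directly, since $vw$ is itself an arc of $H$ for every $w \in C_v \setminus \{v\}$. The trade-off is clear: your argument is more self-contained, needing from the cited work only that $\Gamma^\sigma$ is a tree extension of which $\sigma$ is an extension, and it re-derives the content of Lemmas~5f and~5i while isolating a reusable structural fact (the rooted subtrees of the canonical tree extension are exactly the components $C_v$); the paper's argument is shorter because it offloads that work to the cited lemmas and never needs to name the sets $C_v$ explicitly.
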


Recalling the examples from \cref{fig:scanwidth_example}, we can now quickly deduce that $\Gamma^\sigma$ in \cref{fig:scanwidth_example:canonical} is indeed canonical for $\sigma$. On the other hand, the tree extension $\Gamma'$ from \cref{fig:scanwidth_example:noncanonical} can now be shown to not be canonical.

One might be interested in the canonical tree extension, while only an extension is at hand. The previous proposition enables us to perform this task in quadratic time. The algorithm (formally described in \cite{holtgrefe2023scanwidth}) directly builds up the tree extension from its leaves to the root. Specifically, the algorithm makes sure that the extension is also an extension of the built-up tree extension, and furthermore, that the subgraphs $G[V(\Gamma_v )]$ are weakly connected.

\subsection[Treewidth is less than or equal to scanwidth, which is less than or equal to the level + 1]{Treewidth $\leq$ scanwidth $\leq$ level + 1}\label{subsec:bounds}

The proofs of the two bounds presented in this subsection are deferred to \cref{sec:omitted_proofs}, since these results are not vital to the other parts of this paper. However, the bounds do paint a nice picture of how the scanwidth fits within the landscape of other graph parameters. Note that it immediately follows from the definitions that scanwidth is bounded from above by cutwidth. It is also trivial to prove that scanwidth is bounded from below by the very similar \emph{edge-treewidth}, as introduced in \cite{magne2023edge}.

\paragraph{Treewidth} In the introduction, the use of scanwidth as a parameter in algorithms was motivated by the successful applicability of another tree measure: \emph{treewidth}. Although normally defined by a so-called \emph{tree-decomposition} (see e.g. \cite{diestel2017graph}), we will use a different - yet equivalent - formulation for treewidth. This allows us to relate treewidth to scanwidth in a more straightforward manner. The formulation we use is adapted from \cite{Scornavacca2022}.

\begin{definition}[Treewidth]
\label{def:treewidth}
Let $G=(V,E)$ be a connected undirected graph. For a tree layout $\Gamma$ and a vertex $v$ of $V$, we will denote $\TW_v^\Gamma (G) = \{ u \in V:  u >_\Gamma v, \exists w \leq_\Gamma v \text{ s.t. } uw \in E \}$. Then the \emph{treewidth} of $G$ is $\tw (G) = \min_{\Gamma \in \mathcal{T} (V)} \max_{v \in V} |\TW_v^\Gamma (G)|$. Furthermore, we denote $\tw (\Gamma, G) = \max_{v \in V} |\TW_v^\Gamma (G)|$.
\end{definition}

Recall that $\mathcal{T} (V)$ is the set of all tree layouts of $G$, since $G$ is an undirected graph. To exemplify the relation between treewidth and scanwidth, consider the canonical tree extension $\Gamma^\sigma$ in \cref{fig:scanwidth_example:canonical}, where $\GW^{\Gamma^\sigma}_v \!(G) = \{wz, qu, qv \}$. For the treewidth, we can disregard the directions, and the set $\TW^{\Gamma^\sigma}_v \!(G)$ now contains only the endpoints in $\GW^{\Gamma^\sigma}_v \!(G) $ that are higher up in the tree than $v$. Therefore, $\TW^{\Gamma^\sigma}_v \!(G) = \{w, q \}$.

In \cite{Berry2022b} it is mentioned without proof that the treewidth of the underlying undirected graph of a DAG lower bounds its scanwidth.\footnote{It was actually stated the other way round (and also mistakenly stated that cutwidth bounds scanwidth from below). However, one of the authors confirmed that it was intended as expressed here (Mathias Weller, personal communication).} This fact is far from obvious when looking at the common definition of the treewidth and heavily relies on the uncommon alternative definition we have given here. As this definition is not referred to in \cite{Berry2022b}, we formally state the result here and provide a proof in the appendix. The proof relies on the fact that a tree extension of a DAG is also a valid tree layout of its underlying undirected graph. As a side note, we remark that another well-known parameter, `tree-depth', is also defined via such tree layouts~\cite{nevsetvril2006tree}; hence a tree extension is also a valid tree-depth decomposition, albeit optimizing a different objective.

\begin{restatable}{lemma}{twbound}
\label{lem:treewidth-bound}
Let $G=(V,E)$ be a weakly connected DAG and $\tilde{G}$ its underlying undirected graph, then $$ \tw (\tilde{G} ) \leq \sw (G)  .$$
\end{restatable}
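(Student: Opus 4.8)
The plan is to argue entirely in terms of the tree-based characterisations (\cref{def:scanwidth-tree} for scanwidth and \cref{def:treewidth} for treewidth), and to show that one optimal tree extension of $G$, reinterpreted as a tree layout of $\tilde G$, already witnesses the bound. Concretely, let $\Gamma$ be a tree extension of $G$ with $\sw(\Gamma,G)=\sw(G)$. Since $\Gamma$ makes every pair of $G$-adjacent vertices comparable and has a unique largest element, it is a legitimate tree layout of $\tilde G$; hence it suffices to prove the pointwise inequality $|\TW_v^\Gamma(\tilde G)|\le|\GW_v^\Gamma(G)|$ for every $v\in V$, and then take maxima.

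The core of the argument is an injection from $\TW_v^\Gamma(\tilde G)$ into $\GW_v^\Gamma(G)$. Fix $v$ and take $u\in\TW_v^\Gamma(\tilde G)$, so that $u>_\Gamma v$ and there is some $w\le_\Gamma v$ with $uw\in E(\tilde G)$. I would first observe that the corresponding arc of $G$ must be oriented from $u$ to $w$: if instead $wu\in E(G)$, then $w$ is a parent of $u$, so $u<_G w$ and hence $u<_\Gamma w\le_\Gamma v<_\Gamma u$, a contradiction. Picking one such witness $w(u)$ for each $u$, the arc $u\,w(u)$ satisfies $u>_\Gamma v\ge_\Gamma w(u)$, so $u\,w(u)\in\GW_v^\Gamma(G)$; and the map $u\mapsto u\,w(u)$ is injective because distinct vertices $u$ yield arcs with distinct tails (and $G$ has no parallel arcs). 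This gives $|\TW_v^\Gamma(\tilde G)|\le|\GW_v^\Gamma(G)|$.

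Taking the maximum over $v\in V$ yields $\tw(\Gamma,\tilde G)\le\sw(\Gamma,G)=\sw(G)$, and since $\tw(\tilde G)$ is by definition the minimum of $\tw(\Gamma',\tilde G)$ over all tree layouts $\Gamma'$ of $\tilde G$, we conclude $\tw(\tilde G)\le\sw(G)$. There are no genuine calculations here, so I do not anticipate a real obstacle; the only points needing care are the verification that a tree extension of $G$ really is a tree layout of $\tilde G$ (comparability of every $\tilde G$-edge together with the unique largest element) and the orientation argument ruling out the case $wu\in E(G)$, which is exactly where the ``directedness'' of scanwidth is used.
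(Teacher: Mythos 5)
Your proposal is correct and follows essentially the same route as the paper: both reinterpret an optimal tree extension of $G$ as a tree layout of $\tilde G$ and compare $\TW_v^\Gamma(\tilde G)$ with $\GW_v^\Gamma(G)$ pointwise at each vertex, the only cosmetic difference being that you build an injection $u \mapsto u\,w(u)$ from $\TW_v^\Gamma(\tilde G)$ into $\GW_v^\Gamma(G)$ whereas the paper uses the reverse surjection $xy \mapsto x$. Your explicit check that the witness edge must be oriented from $u$ to $w$ in $G$ is a point the paper leaves implicit, but it does not change the argument.
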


\paragraph{Level of a network}

Using that indegrees of sinksets of a network are at most the reticulation number + 1 (\cref{lem:sinkset_reticulation_nr}) and that the scanwidth of a network is the maximum scanwidth of its blocks (\cref{cor:split_blocks}), we can obtain the following result:

\begin{restatable}{lemma}{levelbound}
\label{lem:level_bound}
Let $G=(V,E)$ be a level-$k$ network, then $$\sw(G) \leq k+1.$$
\end{restatable}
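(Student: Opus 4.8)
The plan is to combine the two auxiliary results that the paper has already flagged. First I would invoke \cref{cor:split_blocks}, which says $\sw(G) = \max_B \sw(B)$ over the blocks $B$ of $G$. This reduces the problem to bounding $\sw(B)$ for a single block $B$ of a level-$k$ network. Since $G$ is level-$k$, each such block has reticulation number $r(B) \le k$, so it suffices to show $\sw(B) \le r(B) + 1$ for any (weakly connected) block $B$ of a network.

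For the block $B$, I would exhibit a specific tree extension whose scanwidth is at most $r(B)+1$ and then invoke \cref{def:scanwidth-tree}. The natural candidate is the canonical tree extension $\Gamma^\sigma$ associated with an arbitrary extension $\sigma$ of $B$; by \cref{prop:canonical_tree_iff}, for every vertex $v$ the subgraph $B[V(\Gamma^\sigma_v)]$ is weakly connected, and moreover $V(\Gamma^\sigma_v)$ is a sinkset of $B$ (it is closed downward under $<_B$, because $\Gamma^\sigma$ is a tree extension). Now observe that $\GW_v^{\Gamma^\sigma}(B) = \{ xy \in E : x >_{\Gamma^\sigma} v \ge_{\Gamma^\sigma} y \}$ is exactly the set of arcs of $B$ entering the sinkset $W := V(\Gamma^\sigma_v)$, i.e. $|\GW_v^{\Gamma^\sigma}(B)| = \deltain_B(W)$. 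Here I use that every arc counted has its head in $W$ (a descendant of $v$ in $\Gamma^\sigma$, hence $\le_{B} v$) and its tail outside $W$; conversely any arc into $W$ has its tail comparable to its head in the tree extension and strictly above $v$. So the scanwidth of $\Gamma^\sigma$ equals $\max_{W} \deltain_B(W)$ over the sinksets $W$ arising as $V(\Gamma^\sigma_v)$.

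It then remains to bound $\deltain_B(W)$ for a sinkset $W$ in the block $B$. This is where \cref{lem:sinkset_reticulation_nr} comes in: the indegree of any sinkset of a network is at most its reticulation number plus one. Strictly, $B$ on its own is a block, not necessarily a "network" in the sense of the paper (its root may have outdegree $1$, or it may have other small-degree vertices), so I would either appeal to \cref{lem:sinkset_reticulation_nr} in the generality in which it was proved, or argue directly: for a sinkset $W$, each arc entering $W$ either enters a reticulation of $B$ (contributing to $r(B)$, since a reticulation has indegree $\ge 2$ and contributes indegree$-1$ to $r(B)$, so it can absorb at most one "extra" incoming arc per unit of its reticulation contribution) or enters a non-reticulation vertex, of which there can be at most one on the "boundary" because $W$ is downward-closed and $B$ is weakly connected with a single entry point into $W$ through the unique ancestor structure. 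Counting these two contributions gives $\deltain_B(W) \le r(B) + 1$. Combining, $\sw(B) \le \sw(B,\Gamma^\sigma) = \max_v \deltain_B(V(\Gamma^\sigma_v)) \le r(B) + 1 \le k+1$, and taking the max over blocks finishes the proof.

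The main obstacle I anticipate is the bookkeeping in the last step: making precise why the arcs entering a sinkset $W$ can be charged to the reticulations inside $W$ plus at most one "tree-like" boundary arc, and ensuring this works for an arbitrary block $B$ rather than a full network (handling the degenerate vertex types that can appear when a network is decomposed into blocks). This is exactly the content the paper has isolated as \cref{lem:sinkset_reticulation_nr}, so if that lemma is stated for sinksets of networks in sufficient generality, the argument here is short; otherwise one must redo the charging argument at the block level. Everything else — the block decomposition reduction and the identification of $\GW_v^{\Gamma^\sigma}(B)$ with $\deltain_B(V(\Gamma^\sigma_v))$ — is routine given \cref{cor:split_blocks} and \cref{prop:canonical_tree_iff}.
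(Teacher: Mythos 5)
Your proposal is correct and follows essentially the same route as the paper's proof: reduce to blocks via \cref{cor:split_blocks}, identify each scanwidth contribution as the indegree of a weakly connected sinkset, and bound that indegree by \cref{lem:sinkset_reticulation_nr} (whose proof is indeed carried out for arbitrary DAGs via the quantity $r_W(G)$, so your worry about blocks not being networks is resolved exactly as you hoped). The only cosmetic difference is that you phrase the sinkset identification through the canonical tree extension and the sets $\GW_v^{\Gamma^\sigma}$, whereas the paper works directly with an optimal extension and the sets $\SW_v^\sigma$; the two are interchangeable here.
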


The fact that the scanwidth of a binary network is at most its level + 1 has already been stated without proof in \cite{Berry2022b} and is proved in the appendix of \cite{rabier2021inference} in a different setting. Our more graph-theoretical proof is self-contained and also holds for non-binary networks.

The above bound is certainly not tight in general. Consider for example the network in \cref{fig:level}, which is a variation of a network from \cite{rabier2021inference}. This network always has a scanwidth of 3 but can be extended to have an arbitrarily large level.

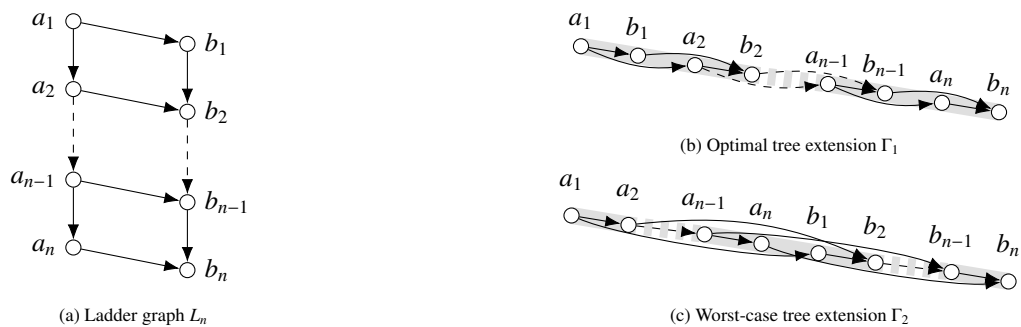
\begin{figure}[htb]
\centering
\valign{#\cr
  \hsize=0.35\columnwidth
  \begin{subfigure}{0.35\columnwidth}
  \centering
  \begin{tikzpicture}[scale=0.6]
	\begin{pgfonlayer}{nodelayer}
		%\node [style={graph_node},label=left:{$\rho$}] (0) at (1.25, 6.25) {};
		\node [style={graph_node},label=left:{$a_1$}] (1) at (0, 5) {};
		\node [style={graph_node},label=right:{$b_1$}] (2) at (2.5, 4.5) {};
		\node [style={graph_node},label=right:{$b_2$}] (3) at (2.5, 3) {};
		\node [style={graph_node},label=left:{$a_2$}] (4) at (0, 3.5) {};
		\node [style={graph_node},label=left:{$a_{n-1}$}] (5) at (0, 1.5) {};
		\node [style={graph_node},label=right:{$b_{n-1}$}] (6) at (2.5, 1) {};
		\node [style={graph_node},label=right:{$b_n$}] (7) at (2.5, -0.5) {};
		\node [style={graph_node},label=left:{$a_n$}] (8) at (0, 0) {};
	\end{pgfonlayer}
	\begin{pgfonlayer}{edgelayer}
		%\draw [style={graph_edge}] (0) to (2);
		\draw [style={graph_edge}] (2) to (3);
		%\draw [style={graph_edge}] (0) to (1);
		\draw [style={graph_edge}] (1) to (4);
		\draw [style={graph_edge}] (1) to (2);
		\draw [style={graph_edge}] (4) to (3);
		\draw [style={graph_edge}] (6) to (7);
		\draw [style={graph_edge}] (5) to (8);
		\draw [style={graph_edge}] (8) to (7);
		\draw [style={graph_edge}] (5) to (6);
		\draw [style={graph_edge_dotted}] (3) to (6);
		\draw [style={graph_edge_dotted}] (4) to (5);
	\end{pgfonlayer}
\end{tikzpicture}
  \caption{Ladder graph $L_n$}
  \end{subfigure}
  
  \cr\noalign{\hfill}
  \hsize=0.6\columnwidth
  
  \begin{subfigure}{0.6\columnwidth}
  \centering
  \begin{tikzpicture}[scale=0.5]
	\begin{pgfonlayer}{nodelayer}
		%\node [style={graph_node},label=above:{$\rho$}] (0) at (0, 0) {};
		\node [style={graph_node},label=above:{$a_1$}] (1) at (1.5, -0.25) {};
		\node [style={graph_node},label=above:{$b_1$}] (2) at (3, -0.5) {};
		\node [style={graph_node},label=above:{$b_2$}] (3) at (6, -1) {};
		\node [style={graph_node},label=above:{$a_2$}] (4) at (4.5, -0.75) {};
		\node [style={graph_node},label=above:{$a_{n-1}$}] (5) at (8, -1.25) {};
		\node [style={graph_node},label=above:{$b_{n-1}$}] (6) at (9.5, -1.5) {};
		\node [style={graph_node},label=above:{$b_n$}] (7) at (12.5, -2) {};
		\node [style={graph_node},label=above:{$a_n$}] (8) at (11, -1.75) {};
	\end{pgfonlayer}
	\begin{pgfonlayer}{edgelayer}
		%\draw [style={graph_edge}, bend left=15] (0) to (2);
		\draw [style={graph_edge}, bend left=15] (2) to (3);
		%\draw [style={graph_edge}] (0) to (1);
		\draw [style={graph_edge}, bend right=15] (1) to (4);
		\draw [style={graph_edge}] (1) to (2);
		\draw [style={graph_edge}] (4) to (3);
		\draw [style={graph_edge}, bend left=15] (6) to (7);
		\draw [style={graph_edge}, bend right=15] (5) to (8);
		\draw [style={graph_edge}] (8) to (7);
		\draw [style={graph_edge}] (5) to (6);
		\draw [style={graph_edge_dotted}, bend left=15] (3) to (6);
		\draw [style={graph_edge_dotted}, bend right=15] (4) to (5);
	\end{pgfonlayer}		
		%\draw [style={extension_edge}] (0.center) to (1.center);
		\draw [style={extension_edge}] (1.center) to (2.center);
		\draw [style={extension_edge}] (2.center) to (4.center);
		\draw [style={extension_edge}] (4.center) to (3.center);
		\draw [style={extension_edge}] (5.center) to (6.center);
		\draw [style={extension_edge}] (6.center) to (8.center);
		\draw [style={extension_edge_dashed}] (3.center) to (5.center);
		\draw [style={extension_edge}] (8.center) to (7.center);

\end{tikzpicture}
  \caption{Optimal tree extension $\Gamma_1$}
  \end{subfigure}
  
  \vfill
  
  \begin{subfigure}{0.6\columnwidth}
  \centering
  \begin{tikzpicture}[scale=0.5,label distance=3]
	\begin{pgfonlayer}{nodelayer}
		%\node [style={graph_node},label=above:{$\rho$}] (0) at (0, 0) {};
		\node [style={graph_node},label=above:{$a_1$}] (1) at (1.5, -0.25) {};
		\node [style={graph_node},label=above:{$b_1$}] (2) at (8, -1.25) {};
		\node [style={graph_node},label=above:{$b_2$}] (3) at (9.5, -1.5) {};
		\node [style={graph_node},label=above:{$a_2$}] (4) at (3, -0.5) {};
		\node [style={graph_node},label=above:{$a_{n-1}$}] (5) at (5, -0.75) {};
		\node [style={graph_node},label=above:{$b_{n-1}$}] (6) at (11.5, -1.75) {};
		\node [style={graph_node},label=above:{$b_n$}] (7) at (13, -2) {};
		\node [style={graph_node},label=above:{$a_n$}] (8) at (6.5, -1) {};
	\end{pgfonlayer}
	\begin{pgfonlayer}{edgelayer}
		%\draw [style={graph_edge}, bend left=15, looseness=0.75] (0) to (2);
		\draw [style={graph_edge}] (2) to (3);
		%\draw [style={graph_edge}] (0) to (1);
		\draw [style={graph_edge}] (1) to (4);
		\draw [style={graph_edge}, bend right=15, looseness=0.50] (1) to (2);
		\draw [style={graph_edge}, bend left=15] (4) to (3);
		\draw [style={graph_edge}] (6) to (7);
		\draw [style={graph_edge}] (5) to (8);
		\draw [style={graph_edge}, bend right=15, looseness=0.50] (8) to (7);
		\draw [style={graph_edge}, bend left=15, looseness=0.50] (5) to (6);
		\draw [style={graph_edge_dotted}] (3) to (6);
		\draw [style={graph_edge_dotted}] (4) to (5);
	\end{pgfonlayer}		
		%\draw [style={extension_edge}] (0.center) to (1.center);
		\draw [style={extension_edge}] (1.center) to (4.center);
		\draw [style={extension_edge}] (5.center) to (8.center);
		\draw [style={extension_edge}] (8.center) to (2.center);
		\draw [style={extension_edge}] (2.center) to (3.center);
		\draw [style={extension_edge}] (6.center) to (7.center);
		\draw [style={extension_edge_dashed}] (4.center) to (5.center);
		\draw [style={extension_edge_dashed}] (3.center) to (6.center);

\end{tikzpicture}
  \caption{Worst-case tree extension $\Gamma_2$}
  \end{subfigure}
  
  \cr
}
\caption{(a): The \emph{ladder-graph} $L_n$ (with $n \geq 3$), which is a rooted binary network with level $n-1$ and $2n$ vertices. (b): An optimal tree extension $\Gamma_1$ of $L_n$ with scanwidth 3. (c): The worst-case tree extension $\Gamma_2$ of $L_n$ with scanwidth $n$.
}
        \label{fig:level}
\end{figure}

\section{Reduction rules}
\label{sec:reductions}
Before discussing algorithms that compute the scanwidth and find its corresponding (tree) extension, we present some reduction rules. In \cref{subsec:blocks} we will explain that one can split the scanwidth problem over the blocks of a network. \cref{subsec:edge_contraction} will cover a rule that suppresses indegree-1 outdegree-1 vertices, while \cref{subsec:total_reduction} summarizes the complete reduction scheme and provides a bound on the size of a reduced instance.

\subsection{Splitting into (s-)blocks}
\label{subsec:blocks}
Magne et al. \cite{magne2023edge} mention that the edge-treewidth can be split into subproblems for each block of a graph. A similar result is true for the scanwidth of a rooted DAG. The intuitive reason is that one may recursively place the vertices of a highest block at the end of an extension. One needs to be careful, however, as this is not the case for DAGs with multiple roots. Then, if multiple blocks contain a root, none of these blocks necessarily needs to be completely above the others. To remedy this problem, we introduce a non-standard generalization of a block for DAGs with multiple roots, which we call \emph{scanwidth-blocks} or \emph{s-blocks}, and whose definition is illustrated in \cref{fig:sblocks}. If a DAG has a single root, its s-blocks coincide with its blocks.

\begin{definition}[S-block]\label{def:sblocks}
Let $G=(V,E)$ be a weakly connected DAG, and let $H$ be the underlying undirected graph of $G$ where edges are added between all roots of $G$. Then for all $W \subseteq V$, the subgraph $G[W]$ is an s-block of $G$ if $H[W]$ is a block of $H$.
\end{definition}

\begin{figure}[htb]
     \centering
     \begin{subfigure}[b]{0.32\textwidth}
         \centering
         \begin{tikzpicture}[xscale=0.45, yscale=0.35]
	\begin{pgfonlayer}{nodelayer}
		\node [style={graph_node}] (0) at (2.75, 12.25) {};
		\node [style={graph_node}] (1) at (3.5, 9) {};
		\node [style={graph_node}] (2) at (4.75, 11) {};
		\node [style={graph_node}] (3) at (4.25, 7.25) {};
		\node [style={graph_node}] (4) at (7, 8.5) {};
		\node [style={graph_node}] (5) at (5.5, 9.75) {};
		\node [style={graph_node}] (6) at (6.75, 11.75) {};
		\node [style={graph_node}] (7) at (8.5, 14.25) {};
		\node [style={graph_node}] (8) at (9, 11) {};
		\node [style={graph_node}] (9) at (6, 6.75) {};
		\node [style={graph_node}] (10) at (8.75, 6.75) {};
		\node [style={graph_node}] (12) at (6, 2.75) {};
		\node [style={graph_node}] (13) at (6, 0) {};
		\node [style={graph_node}] (14) at (7, 1.75) {};
		\node [style={graph_node}] (15) at (9.5, 4) {};
		\node [style={graph_node}] (16) at (9.25, 1.75) {};
		\node [style={graph_node}] (17) at (4.5, 5.25) {};
		\node [style={graph_node}] (18) at (5, 3.5) {};
		\node [style={graph_node}] (19) at (3.25, 3.5) {};
		\node [style={graph_node}] (20) at (2.5, 1.25) {};
		\node [style={graph_node}] (21) at (4.25, 1.5) {};
		\node [style={graph_node}] (22) at (2.5, 7) {};
		\node [style={graph_node}] (23) at (6.5, 5.5) {};
		\node [style={graph_node}] (29) at (7.75, 4.5) {};
		\node [style={graph_node}] (30) at (1.5, 9) {};
	\end{pgfonlayer}
	\begin{pgfonlayer}{edgelayer}
		\draw [style={graph_edge}] (0) to (22);
		\draw [style={graph_edge}] (0) to (1);
		\draw [style={graph_edge}] (2) to (1);
		\draw [style={graph_edge}] (2) to (3);
		\draw [style={graph_edge}] (1) to (3);
		\draw [style={graph_edge}] (1) to (22);
		\draw [style={graph_edge}] (6) to (5);
		\draw [style={graph_edge}] (5) to (4);
		\draw [style={graph_edge}] (6) to (4);
		\draw [style={graph_edge}] (7) to (4);
		\draw [style={graph_edge}] (8) to (4);
		\draw [style={graph_edge}] (7) to (8);
		\draw [style={graph_edge}] (4) to (3);
		\draw [style={graph_edge}] (3) to (17);
		\draw [style={graph_edge}] (17) to (18);
		\draw [style={graph_edge}] (17) to (19);
		\draw [style={graph_edge}] (19) to (20);
		\draw [style={graph_edge}] (19) to (21);
		\draw [style={graph_edge}] (4) to (10);
		\draw [style={graph_edge}] (15) to (16);
		\draw [style={graph_edge}] (12) to (13);
		\draw [style={graph_edge}] (14) to (13);
		\draw [style={graph_edge}] (4) to (9);
		\draw [style={graph_edge}] (9) to (23);
		\draw [style={graph_edge}] (23) to (29);
		\draw [style={graph_edge}] (29) to (12);
		\draw [style={graph_edge}] (29) to (14);
		\draw [style={graph_edge}] (29) to (16);
		\draw [style={graph_edge}] (10) to (29);
		\draw [style={graph_edge}] (29) to (15);
		\draw [style={graph_edge}] (4) to (29);
		\draw [style={graph_edge}] (0) to (30);
	\end{pgfonlayer}
\end{tikzpicture}
         \caption{Weakly connected DAG $G$}
         \label{fig:sblock:cutvertex}
     \end{subfigure}
     \hfill
     \begin{subfigure}[b]{0.32\textwidth}
         \centering
         \begin{tikzpicture}[xscale=0.45, yscale=0.35]
\tikzstyle{infinity_weight}=[draw=black!80, dashed]
\tikzstyle{graph_edge}=[draw=black]
	\begin{pgfonlayer}{nodelayer}
		\node [style={graph_node}] (0) at (2.75, 12.25) {};
		\node [style={graph_node}] (1) at (3.5, 9) {};
		\node [style={graph_node}] (2) at (4.75, 11) {};
		\node [style={graph_node}] (3) at (4.25, 7.25) {};
		\node [style={graph_node}] (4) at (7, 8.5) {};
		\node [style={graph_node}] (5) at (5.5, 9.75) {};
		\node [style={graph_node}] (6) at (6.75, 11.75) {};
		\node [style={graph_node}] (7) at (8.5, 14.25) {};
		\node [style={graph_node}] (8) at (9, 11) {};
		\node [style={graph_node}] (9) at (6, 6.75) {};
		\node [style={graph_node}] (10) at (8.75, 6.75) {};
		\node [style={graph_node}] (12) at (6, 2.75) {};
		\node [style={graph_node}] (13) at (6, 0) {};
		\node [style={graph_node}] (14) at (7, 1.75) {};
		\node [style={graph_node}] (15) at (9.5, 4) {};
		\node [style={graph_node}] (16) at (9.25, 1.75) {};
		\node [style={graph_node}] (17) at (4.5, 5.25) {};
		\node [style={graph_node}] (18) at (5, 3.5) {};
		\node [style={graph_node}] (19) at (3.25, 3.5) {};
		\node [style={graph_node}] (20) at (2.5, 1.25) {};
		\node [style={graph_node}] (21) at (4.25, 1.5) {};
		\node [style={graph_node}] (22) at (2.5, 7) {};
		\node [style={graph_node}] (23) at (6.5, 5.5) {};
		\node [style={graph_node}] (29) at (7.75, 4.5) {};
		\node [style={graph_node}] (30) at (1.5, 9) {};
	\end{pgfonlayer}
	\begin{pgfonlayer}{edgelayer}
		\draw [style={graph_edge}] (0) to (22);
		\draw [style={graph_edge}] (0) to (1);
		\draw [style={graph_edge}] (2) to (1);
		\draw [style={graph_edge}] (2) to (3);
		\draw [style={graph_edge}] (1) to (3);
		\draw [style={graph_edge}] (1) to (22);
		\draw [style={graph_edge}] (6) to (5);
		\draw [style={graph_edge}] (5) to (4);
		\draw [style={graph_edge}] (6) to (4);
		\draw [style={graph_edge}] (7) to (4);
		\draw [style={graph_edge}] (8) to (4);
		\draw [style={graph_edge}] (7) to (8);
		\draw [style={graph_edge}] (4) to (3);
		\draw [style={graph_edge}] (3) to (17);
		\draw [style={graph_edge}] (17) to (18);
		\draw [style={graph_edge}] (17) to (19);
		\draw [style={graph_edge}] (19) to (20);
		\draw [style={graph_edge}] (19) to (21);
		\draw [style={graph_edge}] (4) to (10);
		\draw [style={graph_edge}] (15) to (16);
		\draw [style={graph_edge}] (12) to (13);
		\draw [style={graph_edge}] (14) to (13);
		\draw [style={graph_edge}] (4) to (9);
		\draw [style={graph_edge}] (9) to (23);
		\draw [style={graph_edge}] (23) to (29);
		\draw [style={graph_edge}] (29) to (12);
		\draw [style={graph_edge}] (29) to (14);
		\draw [style={graph_edge}] (29) to (16);
		\draw [style={graph_edge}] (10) to (29);
		\draw [style={graph_edge}] (29) to (15);
		\draw [style={graph_edge}] (4) to (29);
		\draw [style={graph_edge}] (0) to (30);
		\draw [style={infinity_weight}] (0) to (7);
		\draw [style={infinity_weight}] (0) to (2);
		\draw [style={infinity_weight}] (0) to (6);
		\draw [style={infinity_weight}] (7) to (6);
		\draw [style={infinity_weight}] (7) to (2);
		\draw [style={infinity_weight}] (6) to (2);
	\end{pgfonlayer}
\end{tikzpicture}
         \caption{Auxiliary graph $H$}
     \end{subfigure}
          \hfill
     \begin{subfigure}[b]{0.32\textwidth}
         \centering
         \begin{tikzpicture}[xscale=0.45, yscale=0.35]
\tikzstyle{infinity_weight}=[draw=black!30, dashed]
	\begin{pgfonlayer}{nodelayer}
		\node [style={graph_node}] (31) at (15.75, 2.5) {};
		\node [style={graph_node}] (32) at (16, 0) {};
		\node [style={graph_node}] (33) at (17, 1.75) {};
		\node [style={graph_node}] (34) at (17.25, 4.25) {};
		\node [style={graph_node}] (35) at (19.75, 2.75) {};
		\node [style={graph_node}] (36) at (19.5, 0.5) {};
		\node [style={graph_node}] (37) at (18.25, 3.5) {};
		\node [style={graph_node}] (38) at (18, 8.5) {};
		\node [style={graph_node}] (39) at (17, 6.75) {};
		\node [style={graph_node}] (40) at (19.5, 6.75) {};
		\node [style={graph_node}] (41) at (17.25, 5.5) {};
		\node [style={graph_node}] (42) at (18.5, 4.5) {};
		\node [style={graph_node}] (43) at (12.75, 12.25) {};
		\node [style={graph_node}] (44) at (13.5, 9) {};
		\node [style={graph_node}] (45) at (14.75, 11) {};
		\node [style={graph_node}] (46) at (14.25, 7.25) {};
		\node [style={graph_node}] (47) at (17, 8.5) {};
		\node [style={graph_node}] (48) at (15.5, 9.75) {};
		\node [style={graph_node}] (49) at (16.75, 11.75) {};
		\node [style={graph_node}] (50) at (18.5, 14.25) {};
		\node [style={graph_node}] (51) at (19, 11) {};
		\node [style={graph_node}] (52) at (12.5, 7) {};
		\node [style={graph_node}] (53) at (12, 11.25) {};
		\node [style={graph_node}] (54) at (11.25, 8.5) {};
		\node [style={graph_node}] (55) at (15, 6.75) {};
		\node [style={graph_node}] (56) at (14.5, 5.25) {};
		\node [style={graph_node}] (57) at (14, 4.5) {};
		\node [style={graph_node}] (58) at (14.75, 2.75) {};
		\node [style={graph_node}] (59) at (13.5, 5.25) {};
		\node [style={graph_node}] (60) at (12.5, 3.75) {};
		\node [style={graph_node}] (61) at (12.75, 2.75) {};
		\node [style={graph_node}] (62) at (13.5, 1) {};
		\node [style={graph_node}] (63) at (12, 2.75) {};
		\node [style={graph_node}] (64) at (11.25, 1) {};
	\end{pgfonlayer}
	\begin{pgfonlayer}{edgelayer}
		\draw [style={graph_edge}] (31) to (32);
		\draw [style={graph_edge}] (33) to (32);
		\draw [style={graph_edge}] (34) to (31);
		\draw [style={graph_edge}] (34) to (33);
		\draw [style={graph_edge}] (35) to (36);
		\draw [style={graph_edge}] (37) to (36);
		\draw [style={graph_edge}] (37) to (35);
		\draw [style={graph_edge}] (38) to (40);
		\draw [style={graph_edge}] (38) to (39);
		\draw [style={graph_edge}] (39) to (41);
		\draw [style={graph_edge}] (41) to (42);
		\draw [style={graph_edge}] (40) to (42);
		\draw [style={graph_edge}] (38) to (42);
		\draw [style={graph_edge}] (43) to (52);
		\draw [style={graph_edge}] (43) to (44);
		\draw [style={graph_edge}] (45) to (44);
		\draw [style={graph_edge}] (45) to (46);
		\draw [style={graph_edge}] (44) to (46);
		\draw [style={graph_edge}] (44) to (52);
		\draw [style={graph_edge}] (49) to (48);
		\draw [style={graph_edge}] (48) to (47);
		\draw [style={graph_edge}] (49) to (47);
		\draw [style={graph_edge}] (50) to (47);
		\draw [style={graph_edge}] (51) to (47);
		\draw [style={graph_edge}] (50) to (51);
		\draw [style={graph_edge}] (47) to (46);
		\draw [style={graph_edge}] (53) to (54);
		\draw [style={graph_edge}] (55) to (56);
		\draw [style={graph_edge}] (57) to (58);
		\draw [style={graph_edge}] (59) to (60);
		\draw [style={graph_edge}] (61) to (62);
		\draw [style={graph_edge}] (63) to (64);
		\draw [style={infinity_weight}] (43) to (45);
		\draw [style={infinity_weight}] (45) to (49);
		\draw [style={infinity_weight}] (49) to (50);
		\draw [style={infinity_weight}] (50) to (43);
		\draw [style={infinity_weight}] (43) to (49);
		\draw [style={infinity_weight}] (45) to (50);
	\end{pgfonlayer}
\end{tikzpicture}
         \caption{S-blocks of $G$}
         \label{subfig:sblocks}
     \end{subfigure}
        \caption{(a): A multi-rooted weakly connected DAG $G$. (b): The auxiliary undirected graph $H$ (as defined in \cref{def:sblocks}) where the newly added edges are dashed. (c): The s-blocks of the DAG $G$.}
        \label{fig:sblocks}
\end{figure}
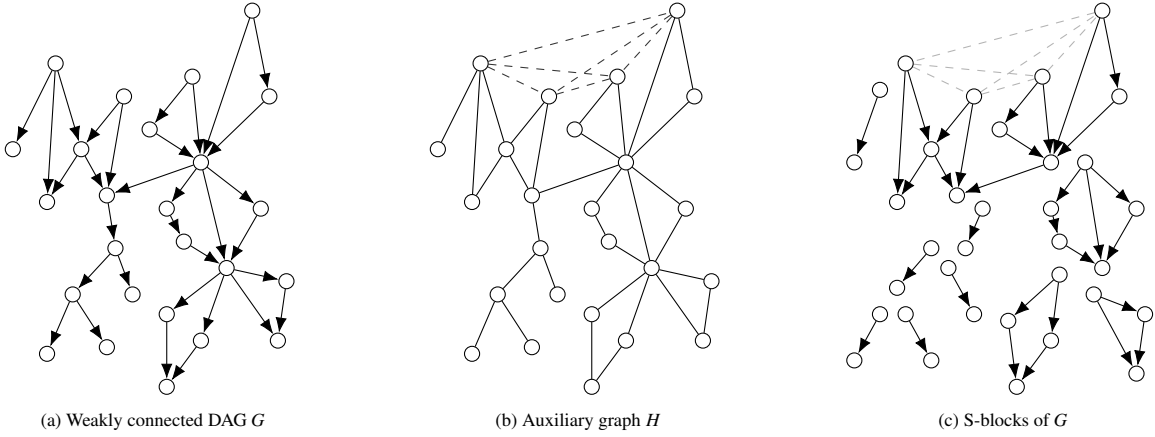

The proof of the following theorem is rather technical, yet the result is very intuitive: we can split the graph into its s-blocks when computing the scanwidth. Note that the proof is constructive and can therefore be used to reconstruct solutions of the whole problem from its subproblems.
\begin{theorem}
\label{thm:split_sblocks}
Let $G$ be a weakly connected DAG and $\mathcal{S}(G)$ the set of s-blocks of $G$. Then, we have that $\sw(G) = \max_{H \in \mathcal{S}(G)} \sw(H). $
\end{theorem}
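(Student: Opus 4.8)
The plan is to establish the two inequalities $\sw(G)\ge\sw(H)$ for each $H\in\mathcal S(G)$ and $\sw(G)\le\max_{H\in\mathcal S(G)}\sw(H)$ separately; write $k:=\max_{H\in\mathcal S(G)}\sw(H)$. For the first, easy direction I would fix an optimal tree extension $\Gamma$ of $G$ and, for an s-block $H=G[W]$, restrict $\Gamma$ to $W$. The restriction $\Gamma[W]$ inherits $G[W]$-respectingness and the comparability of all edges of $G[W]$; that it has a \emph{unique} largest element follows because every s-block of a weakly connected DAG is itself weakly connected (a disconnected s-block would have to be all of $G$), together with the observation that restricting a tree layout to a connected induced subgraph keeps a unique maximum: taking the minimal subtree $S$ of $\Gamma$ spanning $W$, if its topmost vertex lay outside $W$ it would have two child-subtrees each meeting $W$, and a $G[W]$-path between two such vertices would have to contain an edge with $\Gamma$-incomparable endpoints --- impossible since $\Gamma$ is a tree layout. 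Finally, since $\Gamma[W]$ agrees with $\Gamma$ on $W$, we have $\GW_v^{\Gamma[W]}(H)=\{xy\in E(H):x>_\Gamma v\ge_\Gamma y\}\subseteq\GW_v^\Gamma(G)$ for every $v\in W$, so $\sw(H)\le\sw(\Gamma[W],H)\le\sw(\Gamma,G)=\sw(G)$.

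For the harder inequality $\sw(G)\le k$ I would induct on $|\mathcal S(G)|$, the base case $|\mathcal S(G)|=1$ being trivial. For the inductive step, work with the block--cut tree of the auxiliary graph $H$ from \cref{def:sblocks}. Since the roots of $G$ are pairwise adjacent in $H$, either $G$ has a single root or all roots lie in a common s-block; in both cases one can pick a leaf s-block $H_\ell=G[W_\ell]$ of the block--cut tree whose unique cut vertex $c$ satisfies: $W_\ell\setminus\{c\}$ contains no root of $G$. The key structural facts I would prove are: (i) every vertex of $W_\ell\setminus\{c\}$ is strictly below $c$ in $<_G$ --- follow parents upward inside $W_\ell$, a process that stays in $W_\ell$ (as $c$ is the only cut vertex of the block $W_\ell$, no arc leaves $W_\ell\setminus\{c\}$) and must terminate at $c$ since there is no other root --- so in particular $c$ is the root of every tree extension of $H_\ell$; and (ii) $G':=G-(W_\ell\setminus\{c\})$ is weakly connected, $c$ retains all of its parents in $G'$ (they lie above $c$, hence outside $W_\ell\setminus\{c\}$), and $\mathcal S(G')=\mathcal S(G)\setminus\{H_\ell\}$.

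With these in hand, let $\Gamma'$ be a tree extension of $G'$ with $\sw(\Gamma',G')\le k$ (it exists by the induction hypothesis and (ii)), and let $\Gamma_\ell$ be an optimal tree extension of $H_\ell$, rooted at $c$ by (i). Form $\Gamma$ by gluing $\Gamma_\ell$ below $c$ inside $\Gamma'$, identifying the two copies of $c$. Because every arc of $G$ lies entirely inside $W_\ell$ or entirely inside $V(G')$, the glued tree $\Gamma$ is again a tree extension of $G$. It then remains to bound $|\GW_v^\Gamma(G)|$ by cases: for $v\in W_\ell\setminus\{c\}$ one checks $\GW_v^\Gamma(G)=\GW_v^{\Gamma_\ell}(H_\ell)$; for $v=c$ one checks $\GW_c^\Gamma(G)=\GW_c^{\Gamma'}(G')$; and for $v\in V(G')\setminus\{c\}$ one checks $\GW_v^\Gamma(G)\subseteq\GW_v^{\Gamma'}(G')$ --- using that the only arcs crossing between $W_\ell$ and $V(G')$ enter $c$ and that these are the same in $G$ and $G'$. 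In every case the size is at most $k$, so $\sw(\Gamma,G)\le k$, completing the induction.

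I expect the main obstacle to be this second direction, and within it the bookkeeping in two places: proving $\mathcal S(G')=\mathcal S(G)\setminus\{H_\ell\}$ --- this is exactly where passing from ordinary blocks to s-blocks (i.e.\ adding edges between roots) is essential, since for multi-rooted DAGs an optimal tree extension need not ``stack'' the ordinary blocks --- and the case analysis showing the glued tree extension never increases a cut, which requires carefully tracking which vertices lie above or below $v$ in the two pieces and the special role of the shared vertex $c$. The $\ge$ direction and structural fact (i) are comparatively routine.
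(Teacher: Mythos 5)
Your proof is correct and follows essentially the same route as the paper's: induction on the number of s-blocks, peeling off a pendant non-root s-block at its unique cut vertex $c$ (shown, as in the paper, to be the unique root of that s-block), and combining optimal layouts of the two pieces. The only differences are presentational — you glue tree extensions at $c$ where the paper concatenates linear extensions $\sigma_1'\circ(v)\circ\dots$, and you spell out the easy direction via restriction of a tree layout where the paper simply invokes monotonicity of scanwidth under weakly connected subgraphs.
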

\begin{proof}
We will prove this by induction on the number of s-blocks $k$. The base case where $k=1$ follows trivially.

Let $k\geq 1$, and assume that the theorem holds for any graph with $k$ s-blocks. Let $G$ be a weakly connected DAG with $k+1$ s-blocks. By definition these s-blocks are exactly the blocks of the graph $H$ from \cref{def:sblocks}. But then, there must be an s-block that contains all the roots of $G$ (the `root-block'), as those roots form a clique in the auxiliary graph $H$ (see also \cref{subfig:sblocks}). Let $G_1 = G[V_1]$ (with $V_1 \subset V$) be an s-block that is not the root-block and such that $G_1$ only shares one vertex $v$ with the rest of $G$. As all vertices in $V_1$ have a path from a root in $G$ and all such paths must pass through $v$, we must have that $v$ is the unique root of $G_1$.

Now define $G_2 = G[V \backslash V_1 \cup \{ v \}] = (V_2, E_2)$. Then, $G_1$ and $G_2$ are subgraphs that only have $v$ in common. Let $\sigma_1 \in \Pi[V_1]$ and $\sigma_2 \in \Pi[V_2]$. Since $v$ is the root of $G_1$, we know that it always holds that $\sigma_1 = \sigma_1' \circ (v)$, where $\sigma_1' \in \Pi[V_1 \setminus \{ v\}]$. Because $G_1$ is a pendent s-block of $G$, we can define $\sigma = \sigma_1' \circ \sigma_2 \in \Pi[V]$. Since no arcs are directed from $V_2 \setminus \{ v\}$ to $V_1 \setminus \{ v\}$ (because both parts are only connected through $v$), we then have:
\begin{align}
\sw(\sigma, G) = \max_{w \in V} |\SW_w^\sigma (G)| &= \max \left\{ \max_{w \in V_1\setminus \{v\}} |\SW_w^{\sigma} (G)|, \max_{ w \in V_2} |\SW_w^{\sigma} (G)| \right\} \nonumber\\
&= \max \left\{ \max_{w \in V_1} |\SW_w^{\sigma_1} (G_1)|, \max_{ w \in V_2} |\SW_w^{\sigma_2} (G_2)| \right\} \nonumber\\
&= \max \left\{ \sw(\sigma_1, G_1) , \sw(\sigma_2 , G_2) \right\}.\label{eq:sblocks}
\end{align}
Here, we used that $|\SW_v^{\sigma_1} (G_1)| = 0$ as $v$ is the root of $G_1$.

We will now prove that $\sw( G) = \max \{\sw(G_1) , \sw(G_2)\}$. Since the $k+1$ s-blocks of $G$ are exactly $G_1$ and the $k$ s-blocks of $G_2$, the induction hypothesis will then prove the theorem.

($\leq$)
Let $\sigma_1 = \sigma_1' \circ (v) \in \Pi[V_1]$ and $\sigma_2 \in \Pi[V_2]$, such that $\sw(G_1) = \sw(\sigma_1, G_1)$ and $\sw(G_2) = \sw(\sigma_2, G_2)$. Let $\sigma = \sigma_1' \circ \sigma_2 \in \Pi[V]$. Then, using \cref{eq:sblocks}, $\sw(G) \leq  \max \{\sw(G_1) , \sw(G_2)\}$.

($\geq$) $G_1$ and $G_2$ are both weakly connected subgraphs of $G$. Thus their respective scanwidth values both form a lower bound on the scanwidth of $G$, providing us with $\sw(G)  \geq \max \{\sw(G_1) , \sw(G_2)\}$.
\end{proof}

An immediate corollary is of this result is that one can split over the blocks to compute the scanwidth if $G$ has a single root, since the s-blocks and blocks then coincide.

\begin{corollary}
\label{cor:split_blocks}
Let $G$ be a weakly connected rooted DAG and $\mathcal{B}(G)$ the set of blocks of $G$. Then, we have that $\sw(G) = \max_{H \in \mathcal{B}(G)} \sw(H). $
\end{corollary}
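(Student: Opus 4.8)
The plan is to derive this directly from \cref{thm:split_sblocks} by observing that, for a rooted DAG, the s-blocks and the blocks coincide, so the corollary is a specialisation of the theorem. First I would recall \cref{def:sblocks}: the s-blocks of $G$ are exactly the induced subgraphs $G[W]$ for which $H[W]$ is a block of the auxiliary undirected graph $H$, where $H$ is obtained from the underlying undirected graph $\tilde{G}$ of $G$ by inserting an edge between every pair of roots of $G$. Since $G$ is rooted it has a single root, hence no edges are added and $H=\tilde{G}$. Therefore the s-blocks of $G$ are precisely the induced subgraphs $G[W]$ such that $\tilde{G}[W]$ is a block of $\tilde{G}$.

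Next I would check that this set is exactly $\mathcal{B}(G)$, the set of blocks of $G$ as defined in \cref{sec:preliminaries}. A block of the directed graph $G$ is a maximal weakly connected induced subgraph without cut vertices; but weak connectivity and the notion of a (directed) cut vertex are both defined entirely through the underlying undirected graph, so the maximal weakly connected induced subgraphs of $G$ without cut vertices correspond (as vertex sets, hence as induced subgraphs) exactly to the blocks of $\tilde{G}$. Combining this with the previous paragraph gives $\mathcal{S}(G) = \mathcal{B}(G)$.

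Finally, applying \cref{thm:split_sblocks} yields $\sw(G) = \max_{H \in \mathcal{S}(G)} \sw(H) = \max_{H \in \mathcal{B}(G)} \sw(H)$, which is the desired equality. There is essentially no obstacle in this argument; the only point requiring a moment's care is verifying that ``block of a directed graph'' in the sense of \cref{sec:preliminaries} genuinely agrees with ``block of the underlying undirected graph'', and this is immediate because both ingredients of that definition—weak connectivity and the deletion-increases-components criterion for cut vertices—are phrased in terms of $\tilde{G}$ alone.
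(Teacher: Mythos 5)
Your proposal is correct and follows exactly the paper's route: the paper derives the corollary from \cref{thm:split_sblocks} by the same observation that for a single-rooted DAG no edges are added in the auxiliary graph of \cref{def:sblocks}, so the s-blocks coincide with the blocks. Your additional check that ``block of a directed graph'' agrees with ``block of the underlying undirected graph'' is a harmless elaboration of what the paper leaves implicit.
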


\subsection{Suppressing indegree-1 outdegree-1 vertices} \label{subsec:edge_contraction}
We now introduce another safe reduction rule that allows us to simplify the graph by removing certain arcs and vertices. It formalizes the idea that if a vertex has only one incoming arc and one outgoing arc, the two arcs can be replaced by a single arc and the vertex can be deleted. In other words, we suppress vertices of indegree-1 and outdegree-1.

\begin{lemma}
\label{lem:edge_contraction}
Let $G=(V,E)$ be a weakly connected DAG, and let $v\in V$ be a vertex with a single parent $u$ and a single child $w$ such that $uw \notin E$. Define a new DAG $H = (V \setminus \{ v\}, E \setminus \{ uv, vw \} \cup \{uw \})$. Then, $\sw(G) = \sw (H)$.
\end{lemma}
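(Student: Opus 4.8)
The plan is to exhibit a scanwidth-preserving bijection (or near-bijection) between extensions of $G$ and extensions of $H$, and then argue that the relevant cut-sets $\SW$ agree at every position. Since $v$ has a unique parent $u$ and unique child $w$, in any extension $\sigma$ of $G$ we must have $w <_\sigma v <_\sigma u$ (as $w <_G v <_G u$). Deleting $v$ from $\sigma$ yields an extension $\sigma'$ of $H$ (the arc $uw$ is consistent with $w <_{\sigma'} u$, and no other order constraints are affected). Conversely, given an extension $\sigma'$ of $H$, we can insert $v$ anywhere strictly between $w$ and $u$ to obtain an extension $\sigma$ of $G$; the natural choice is to place $v$ immediately after $w$. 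This gives the correspondence; the work is to check that $\sw(\sigma,G) = \sw(\sigma',H)$.

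First I would fix such a pair $\sigma$ (with $v$ placed immediately after $w$) and $\sigma'$, and compare $\SW_i^\sigma(G)$ with the corresponding cut in $H$ position by position. The key local fact is that the arc $uw$ in $H$ plays exactly the role that the two-arc path $u \to v \to w$ plays in $G$: for a cut that separates $u$ (above) from $w$ (below), either both arcs $uv, vw$ cross it in $G$ or neither does, and $uw$ crosses the corresponding cut in $H$ precisely in the same case — but one must be careful, because $\SW$ counts only arcs entering a vertex weakly connected (within the relevant prefix subgraph) to the boundary vertex $\sigma(i)$. So I would split into cases according to the position $i$ relative to $w$, $v$, $u$. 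For positions $i$ below $w$ or above $u$, the subgraphs $G[1\ldots i]$ and $H[1\ldots i']$ are essentially identical (with the arc substitution irrelevant since neither endpoint configuration changes), so the cuts match. The delicate positions are $i = \sigma^{-1}(w)$, $i = \sigma^{-1}(v)$, and positions between $v$ and $u$: at $\sigma^{-1}(w)$ in $G$, the arc $vw$ enters $w$; in $H$ at the corresponding position the arc $uw$ enters $w$; these are counted identically since $w \connect{} w$ trivially. At $\sigma^{-1}(v)$, the new position introduced in $G$, I would check that $|\SW_v^\sigma(G)|$ does not exceed the maximum already attained — here the only arc entering $v$ is $uv$, and $v$ is weakly connected (in $G[1\ldots v]$) to $w$ via the arc $vw$, so $\SW_v^\sigma(G)$ consists of $uv$ together with whatever crossed the cut just below; this should equal the cut at $w$'s position with $vw$ replaced by $uv$, hence same cardinality.

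The cleanest way to organize all this is probably to use the tree-extension characterization (\cref{def:scanwidth-tree}) instead: take an optimal tree extension $\Gamma$ of $G$, note $v$ lies on the $\Gamma$-path between $u$ and $w$ (again because $w <_G v <_G u$ forces comparabilities, and $v$'s only neighbours in $G$ are $u$ and $w$, so in the canonical/any tree extension $v$ should sit directly between them), suppress $v$ to get a tree extension $\Gamma'$ of $H$, and observe $\GW_x^{\Gamma'}(H) = \GW_x^\Gamma(G)$ for all $x \neq v$ after identifying $uw$ with the path $u\to v\to w$, while the value at $v$ is dominated by the value at $w$. Doing the direction $\sw(H) \le \sw(G)$ and $\sw(G) \le \sw(H)$ both reduce to such a translation. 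The main obstacle I anticipate is the bookkeeping around the weak-connectivity condition in $\SW$ (or equivalently around which branch of the tree extension an arc's lower endpoint sits in): one has to verify that replacing $u\to v\to w$ by $u\to w$ does not change which vertices are weakly connected to which within every prefix subgraph, which is true precisely because $v$ is a degree-two vertex whose removal-and-bypass preserves the connectivity structure, but spelling this out carefully across all positions is the fiddly part. I would also need the small observation that $uw \notin E$ guarantees $H$ has no parallel arcs, so $H$ is a legitimate DAG in the paper's sense, and that $H$ remains weakly connected and acyclic (acyclicity: a cycle through $uw$ in $H$ would give a cycle through $u,v,w$ in $G$).
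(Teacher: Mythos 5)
Your proposal follows essentially the same route as the paper's proof: restrict an extension of $G$ to one of $H$ by deleting $v$ (and conversely insert $v$ immediately after $w$), establish that weak connectivity within prefix subgraphs is preserved under the substitution, and match the arc $uw$ against the pair $uv,vw$ position by position, with the value at the new position $v$ shown to equal the value at $w$. One small correction: for a cut separating $u$ (above) from $w$ (below) it is never the case that both $uv$ and $vw$ cross it --- exactly one of them does, depending on which side of the cut $v$ lies --- and this is precisely what makes the cardinalities agree; your later case-by-case bookkeeping uses this correctly even though the general statement as you phrased it (``either both arcs cross or neither does'') is not.
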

\begin{proof}
We first present the following obvious claim without proof (see \cite{holtgrefe2023scanwidth} for its straightforward proof).

\emph{Claim:} Let $\sigma$ be an extension of $G$, and let $\pi = \sigma[V \setminus \{ v\} ]$ be an extension of $H$. Then for all $a,b \in V \setminus \{ v\}$, we have that $a \connect{G[\sigma[1 \ldots b]]} b$ if and only if $a \connect{H[\pi[1 \ldots b]]} b$. Furthermore, $a >_G b$ if and only if $a >_H b$.

(\emph{$\sw(G)\geq \sw(H)$}):
Let $\sigma$ be an optimal extension for $G$, and define the extension $\pi = \sigma[V \setminus \{ v\}]$ for $H$. Now let $z \in V\setminus \{v \}$ be arbitrary, and let $xy \in \SW_z^\pi (H)$. If $xy \neq uw$, then $xy \in \SW_z^\sigma (G)$ by both parts of the claim. If $xy = uw$, then clearly either $uv$ or $vw$ is in $\SW_z^\sigma (G)$. Thus, we get that $|\SW_z^\pi (H)| \leq |\SW_z^\sigma (G)| $. Therefore, 
$$\sw(G) = \sw(\sigma, G)  = \max_{z \in V} |\SW_z^\sigma(G)| \geq  \max_{z \in V\setminus \{ v \}} |\SW_z^\pi(H)| =\sw (\pi, H) \geq \sw (H).$$

(\emph{$\sw(G)\leq \sw(H)$}):
Let $\pi$ be an optimal extension for $H$ with $w = \pi (i)$. We also have that $\sigma = \pi[1 \ldots i] \circ (v) \circ \pi[i+1 \ldots ]$ is an extension of $G$. Note that then $\sigma[V \setminus \{ v \} ] = \pi$, as in the claim. Now let $z \in V\setminus \{v \}$ be arbitrary, and let $xy \in \SW_z^\sigma (G)$. If $xy\neq uv$ and $xy \neq vw$, then $xy \in \SW_z^\pi (H)$ by both parts of the claim and the definition of the sets $\SW$. In the other case, $xy$ is either $uv$ or $vw$ (note that they can not both be in the set $\SW_z^\sigma (G)$). But then, $uw \in \SW_z^\pi (H)$. Overall, we get that $|\SW_z^\sigma (H)| \leq |\SW_z^\pi (H)|$. By similar arguments, one finds that $|\SW_v^\sigma (G)| =|\SW_w^\pi (H)|$. Therefore, 
\begin{align*}
\sw(G) &\leq  \sw(\sigma, G)  = \max_{z \in V} |\SW_z^\sigma(G)| = \max \left\{ |\SW_v^\sigma(G)|,  \max_{z \in V \setminus \{ v\} } |\SW_z^\sigma(G)| \right\} \\
&\leq \max \left\{ |\SW_w^\pi(H)|,  \max_{z \in V \setminus \{ v\} } |\SW_z^\pi(H)| \right\} = \max_{z \in V \setminus \{ v\} } |\SW_z^\pi(H)| = \sw ( \pi, H) = \sw (H).
\qedhere
\end{align*}
\end{proof}

The reason we impose the restriction that $uw \notin E$ in the above lemma is that this would create a multigraph. Of course, one can generalize scanwidth to multigraphs and remove the restriction. Similarly, if we were to generalize scanwidth to weighted graphs, the restriction can be removed by increasing the weight of the already existing arc.

\subsection{Complete decomposition scheme}
\label{subsec:total_reduction}
In combination with any exact algorithm, the reduction rules from the previous two subsections can be used to compute the scanwidth. These rules are gathered in \cref{alg:reduction}.

\begin{algorithm}[H]
\caption{Decomposition algorithm}
\label{alg:reduction}
\Input{Weakly connected DAG $G=(V,E)$, an exact algorithm \texttt{ExactSW} that computes the scanwidth of a weakly connected DAG.}
\Output{Scanwidth $\sw$ of $G$.}
\KwInit{$\sw \gets 0$}\\
$G' \gets $ underlying undirected graph of $G$, with edges added between all roots of $G$\\
$\mathcal{S} \gets \{G[W]: W \subseteq V, G'[W] \text{ is a block of } G' \}$\tcp*{Set of s-blocks of $G$}
\For{$H \in \mathcal{S}$}{
\If{$H$ is a single arc}{$\sw \gets \max\{\sw, 1\}$}
\ElseIf{$H$ has a unique root and its underlying undirected graph is a cycle}{$\sw \gets \max\{\sw, 2\}$}
\Else{
$H' \gets H$ after exhaustively suppressing vertices using \cref{lem:edge_contraction}\\
$\sw \gets \max \{ \sw, \texttt{ExactSW} (H') \}$
}
}
\Return{$\sw$}
\end{algorithm}

To summarize, the algorithm first decomposes a DAG into its s-blocks. It then checks for each s-block whether it is a single arc or a rooted cycle. If this is the case, the scanwidth of that s-block is already known. For the remaining s-blocks, we use the vertex suppression rule to decrease their size. Then, we only need to exactly calculate the scanwidth of those s-blocks with some exact algorithm. \cref{fig:reductions} provides an illustration of the decomposition on an example graph.

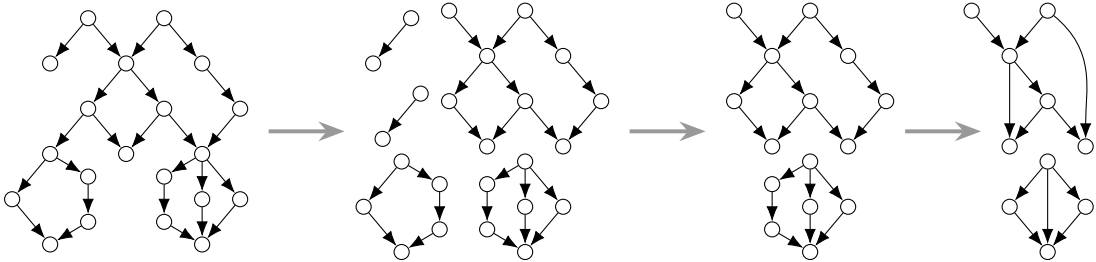
\begin{figure}[htb]
     \centering
		\begin{tikzpicture}[yscale=0.40, xscale=0.5]
\tikzstyle{hybrid_edge}=[-Stealth, draw=gray!80, line width=1.5pt]
	\begin{pgfonlayer}{nodelayer}
		\node [style={graph_node}] (42) at (6, -0.25) {};
		\node [style={graph_node}] (43) at (1, 1.25) {};
		\node [style={graph_node}] (44) at (3, 1.25) {};
		\node [style={graph_node}] (45) at (5, 1.25) {};
		\node [style={graph_node}] (46) at (2, 2.75) {};
		\node [style={graph_node}] (47) at (4, 2.75) {};
		\node [style={graph_node}] (48) at (3, 4.25) {};
		\node [style={graph_node}] (49) at (6, 2.75) {};
		\node [style={graph_node}] (50) at (4, 5.75) {};
		\node [style={graph_node}] (51) at (2, 0.5) {};
		\node [style={graph_node}] (52) at (0, -0.25) {};
		\node [style={graph_node}] (53) at (1, -1.75) {};
		\node [style={graph_node}] (54) at (2, 5.75) {};
		\node [style={graph_node}] (55) at (4, 0.5) {};
		\node [style={graph_node}] (56) at (5, -1.75) {};
		\node [style={graph_node}] (58) at (5, 4.25) {};
		\node [style={graph_node}] (59) at (1, 4.25) {};
		\node [style={graph_node}] (67) at (4, -1) {};
		\node [style={graph_node}] (68) at (5, -0.25) {};
		\node [style={graph_node}] (73) at (9.75, 1.75) {};
		\node [style={graph_node}] (74) at (10.75, 3.25) {};
		\node [style={graph_node}] (75) at (12.5, 1.5) {};
		\node [style={graph_node}] (76) at (14.5, 1.5) {};
		\node [style={graph_node}] (77) at (11.5, 3) {};
		\node [style={graph_node}] (78) at (13.5, 3) {};
		\node [style={graph_node}] (79) at (12.5, 4.5) {};
		\node [style={graph_node}] (80) at (15.5, 3) {};
		\node [style={graph_node}] (81) at (13.5, 6) {};
		\node [style={graph_node}] (82) at (11.5, 6) {};
		\node [style={graph_node}] (83) at (14.5, 4.5) {};
		\node [style={graph_node}] (84) at (14.5, -0.5) {};
		\node [style={graph_node}] (85) at (13.5, 1) {};
		\node [style={graph_node}] (86) at (12.5, 0.25) {};
		\node [style={graph_node}] (87) at (13.5, -2) {};
		\node [style={graph_node}] (88) at (12.5, -1.25) {};
		\node [style={graph_node}] (89) at (13.5, -0.5) {};
		\node [style={graph_node}] (90) at (10.5, 5.75) {};
		\node [style={graph_node}] (91) at (9.5, 4.25) {};
		\node [style={graph_node}] (117) at (20, 1.5) {};
		\node [style={graph_node}] (118) at (22, 1.5) {};
		\node [style={graph_node}] (119) at (19, 3) {};
		\node [style={graph_node}] (120) at (21, 3) {};
		\node [style={graph_node}] (121) at (20, 4.5) {};
		\node [style={graph_node}] (122) at (23, 3) {};
		\node [style={graph_node}] (123) at (21, 6) {};
		\node [style={graph_node}] (124) at (19, 6) {};
		\node [style={graph_node}] (125) at (22, 4.5) {};
		\node [style={graph_node}] (126) at (22, -0.5) {};
		\node [style={graph_node}] (127) at (21, 1) {};
		\node [style={graph_node}] (128) at (20, 0.25) {};
		\node [style={graph_node}] (129) at (21, -2) {};
		\node [style={graph_node}] (130) at (20, -1.25) {};
		\node [style={graph_node}] (131) at (21, -0.5) {};
		\node [style={graph_node}] (132) at (26.25, 1.5) {};
		\node [style={graph_node}] (133) at (28.25, 1.5) {};
		\node [style={graph_node}] (135) at (27.25, 3) {};
		\node [style={graph_node}] (136) at (26.25, 4.5) {};
		\node [style={graph_node}] (138) at (27.25, 6) {};
		\node [style={graph_node}] (139) at (25.25, 6) {};
		\node [style={graph_node}] (141) at (28.25, -0.5) {};
		\node [style={graph_node}] (142) at (27.25, 1) {};
		\node [style={graph_node}] (144) at (27.25, -2) {};
		\node [style={graph_node}] (147) at (26.25, -0.5) {};
		\node [style=none] (148) at (6.75, 2) {};
		\node [style=none] (149) at (8.75, 2) {};
		\node [style=none] (150) at (16.25, 2) {};
		\node [style=none] (151) at (18.25, 2) {};
		\node [style=none] (156) at (23.5, 2) {};
		\node [style=none] (157) at (25.5, 2) {};
		\node [style={graph_node}] (158) at (2, -1) {};
		\node [style={graph_node}] (159) at (10.25, 1) {};
		\node [style={graph_node}] (160) at (11.25, 0.25) {};
		\node [style={graph_node}] (161) at (9.25, -0.5) {};
		\node [style={graph_node}] (162) at (10.25, -2) {};
		\node [style={graph_node}] (163) at (11.25, -1.25) {};
	\end{pgfonlayer}
	\begin{pgfonlayer}{edgelayer}
		\draw [style={graph_edge}] (50) to (48);
		\draw [style={graph_edge}] (48) to (46);
		\draw [style={graph_edge}] (46) to (43);
		\draw [style={graph_edge}] (48) to (47);
		\draw [style={graph_edge}] (46) to (44);
		\draw [style={graph_edge}] (47) to (44);
		\draw [style={graph_edge}] (47) to (45);
		\draw [style={graph_edge}] (45) to (42);
		\draw [style={graph_edge}] (49) to (45);
		\draw [style={graph_edge}] (43) to (52);
		\draw [style={graph_edge}] (52) to (53);
		\draw [style={graph_edge}] (43) to (51);
		\draw [style={graph_edge}] (54) to (48);
		\draw [style={graph_edge}] (50) to (58);
		\draw [style={graph_edge}] (58) to (49);
		\draw [style={graph_edge}] (54) to (59);
		\draw [style={graph_edge}] (45) to (55);
		\draw [style={graph_edge}] (55) to (67);
		\draw [style={graph_edge}] (67) to (56);
		\draw [style={graph_edge}] (42) to (56);
		\draw [style={graph_edge}] (45) to (68);
		\draw [style={graph_edge}] (68) to (56);
		\draw [style={graph_edge}] (74) to (73);
		\draw [style={graph_edge}] (81) to (79);
		\draw [style={graph_edge}] (79) to (77);
		\draw [style={graph_edge}] (79) to (78);
		\draw [style={graph_edge}] (77) to (75);
		\draw [style={graph_edge}] (78) to (75);
		\draw [style={graph_edge}] (78) to (76);
		\draw [style={graph_edge}] (80) to (76);
		\draw [style={graph_edge}] (82) to (79);
		\draw [style={graph_edge}] (81) to (83);
		\draw [style={graph_edge}] (83) to (80);
		\draw [style={graph_edge}] (85) to (84);
		\draw [style={graph_edge}] (85) to (86);
		\draw [style={graph_edge}] (86) to (88);
		\draw [style={graph_edge}] (88) to (87);
		\draw [style={graph_edge}] (84) to (87);
		\draw [style={graph_edge}] (85) to (89);
		\draw [style={graph_edge}] (89) to (87);
		\draw [style={graph_edge}] (90) to (91);
		\draw [style={graph_edge}] (123) to (121);
		\draw [style={graph_edge}] (121) to (119);
		\draw [style={graph_edge}] (121) to (120);
		\draw [style={graph_edge}] (119) to (117);
		\draw [style={graph_edge}] (120) to (117);
		\draw [style={graph_edge}] (120) to (118);
		\draw [style={graph_edge}] (122) to (118);
		\draw [style={graph_edge}] (124) to (121);
		\draw [style={graph_edge}] (123) to (125);
		\draw [style={graph_edge}] (125) to (122);
		\draw [style={graph_edge}] (127) to (126);
		\draw [style={graph_edge}] (127) to (128);
		\draw [style={graph_edge}] (128) to (130);
		\draw [style={graph_edge}] (130) to (129);
		\draw [style={graph_edge}] (126) to (129);
		\draw [style={graph_edge}] (127) to (131);
		\draw [style={graph_edge}] (131) to (129);
		\draw [style={graph_edge}] (138) to (136);
		\draw [style={graph_edge}] (136) to (135);
		\draw [style={graph_edge}] (135) to (132);
		\draw [style={graph_edge}] (135) to (133);
		\draw [style={graph_edge}] (139) to (136);
		\draw [style={graph_edge}] (142) to (141);
		\draw [style={graph_edge}] (141) to (144);
		\draw [style={graph_edge}] (142) to (147);
		\draw [style={graph_edge}] (147) to (144);
		\draw [style={graph_edge}, in=90, out=-45] (138) to (133);
		\draw [style={graph_edge}] (136) to (132);
		\draw [style={hybrid_edge}] (148.center) to (149.center);
		\draw [style={hybrid_edge}] (150.center) to (151.center);
		\draw [style={hybrid_edge}] (156.center) to (157.center);
		\draw [style={graph_edge}] (51) to (158);
		\draw [style={graph_edge}] (158) to (53);
		\draw [style={graph_edge}] (159) to (161);
		\draw [style={graph_edge}] (161) to (162);
		\draw [style={graph_edge}] (159) to (160);
		\draw [style={graph_edge}] (160) to (163);
		\draw [style={graph_edge}] (163) to (162);
		\draw [style={graph_edge}] (142) to (144);
	\end{pgfonlayer}
\end{tikzpicture}
        \caption{Illustration of the decomposition algorithm. The first graph shows the original graph. Then, the graph is split into its s-blocks. Thereafter, the single arcs and the cycle with a unique root are `deleted', since their scanwidths are known. Note that we then also know that the scanwidth of the complete graph is at least 2. Lastly, we exhaustively suppress indegree-1 outdegree-1 vertices in the remaining s-blocks, until it will create a multigraph.}
        \label{fig:reductions}
\end{figure}

In the next lemma, we formally prove the algorithm's correctness and its time complexity. We note that the proofs of \cref{thm:split_sblocks,lem:edge_contraction} can be used to make the algorithm constructive if \texttt{ExactSW} also returns an optimal extension.
\begin{lemma}
\label{lem:alg_reduction}
Let $G=(V,E)$ be a weakly connected DAG of $n$ vertices and $m$ arcs, and let \texttt{ExactSW} be an algorithm that finds the scanwidth of any weakly connected DAG $H' =(V', E')$ in $O( f(|V'|, |E'|))$ time for some function $f$. Then, \cref{alg:reduction} returns the scanwidth of $G$ in $O(n^2 + n \cdot f(n', m'))$ time. Here, $n'\leq n$ (resp. $m'\leq m$) is the maximum number of vertices (resp. arcs) of any of the graphs $H'$ that appear in \cref{alg:reduction}.
\end{lemma}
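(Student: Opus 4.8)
The plan is to establish the two claims of the lemma in turn: correctness first, then the running time.

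For correctness I would lean entirely on \cref{thm:split_sblocks}. The set $\mathcal S$ assembled in the first two lines of the algorithm is, by \cref{def:sblocks}, exactly the set of s-blocks of $G$, so by \cref{thm:split_sblocks} it suffices to verify that the body of the loop returns $\sw(H)$ for every s-block $H$ and that the algorithm takes the maximum of these values. I would then go through the three branches. If $H$ is a single arc, $\sw(H)=1$ holds trivially. In the final branch, $H'$ is obtained from $H$ by exhaustively suppressing indegree-1 outdegree-1 vertices with non-adjacent parent and child; I would first check that such a suppression keeps the graph weakly connected and acyclic (so that \cref{lem:edge_contraction} keeps applying and its hypotheses remain satisfied), whence $\sw(H')=\sw(H)$ by repeated use of \cref{lem:edge_contraction}, and then $\texttt{ExactSW}(H')$ returns this common value by assumption. (An s-block that happens to be a cycle with more than one root simply falls into this last branch, which is still correct.)

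The only branch that needs a genuine argument is the middle one, which I would package as a claim: a weakly connected DAG $H$ whose underlying undirected graph is a cycle and which has a unique root has $\sw(H)=2$. For the lower bound, an acyclic orientation of a cycle with a single source has a unique sink $t$ of indegree $2$, and for any tree extension $\Gamma$ the two arcs entering $t$ belong to $\GW_t^\Gamma(H)$ (their head $t$ satisfies $t\le_\Gamma t$, and each tail $x$ satisfies $x>_\Gamma t$ since $t<_G x$), so $\sw(\Gamma,H)\ge 2$. For the matching upper bound I would exhibit a concrete extension: write $H$ as two internally vertex-disjoint directed paths from the root $\rho$ to $t$, and take the extension listing $t$, then the first path from $t$'s neighbour up to $\rho$'s neighbour, then the second path likewise, then $\rho$; a direct inspection of $\SW_i^\sigma(H)$ at each position (each prefix $G[1\ldots i]$ stays weakly connected, so the connectivity condition is vacuous) shows every such set has size at most $2$.

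For the running time I would tally three contributions. Constructing $G'$ takes $O(n+m)$ for the underlying undirected graph plus $O(r^2)=O(n^2)$ for the clique on the $r\le n$ roots; computing the blocks of $G'$, and hence $\mathcal S$, takes $O(|V(G')|+|E(G')|)$, which is $O(n^2)$ because $G$ is simple and so $m=O(n^2)$; and materialising the s-block subgraphs is $O(n+m)=O(n^2)$ in total, since each arc lies in exactly one s-block and $\sum_{H\in\mathcal S}|V(H)|=O(n)$. Inside the loop, the branch tests and the exhaustive suppressions are polynomial per s-block, and they sum to $O(n^2)$ using $\sum_H|V(H)|=O(n)$ together with $|V(H)|\le n'$. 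Finally there are $|\mathcal S|\le n$ s-blocks, hence at most $n$ calls to $\texttt{ExactSW}$, each on a graph with at most $n'$ vertices and $m'$ arcs; assuming (as is standard) $f$ is non-decreasing in each argument, every call costs $O(f(n',m'))$, so the calls contribute $O(n\cdot f(n',m'))$. Summing gives $O(n^2+n\cdot f(n',m'))$.

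I expect the bulk of the work --- and the only places where care is genuinely needed --- to be (i) the scanwidth-$2$ claim for rooted cycles, specifically being explicit about the extension that attains it, and (ii) the time-analysis bookkeeping, in particular justifying that the total vertex count of all s-blocks and the total cost of all suppression steps are both $O(n^2)$; I do not anticipate any conceptual difficulty beyond these.
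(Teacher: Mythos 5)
Your proof is correct and follows essentially the same route as the paper's: correctness via \cref{thm:split_sblocks}, \cref{lem:edge_contraction}, and the known scanwidth of the two trivial s-block types, plus the same $O(n^2 + n\cdot f(n',m'))$ accounting (at most $n$ s-blocks, $O(n^2)$ for building $G'$ and finding its blocks, $O(n)$ per-block overhead). The only difference is that you actually prove the scanwidth-$2$ claim for rooted cycles, which the paper asserts without proof; your argument for it is sound.
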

\begin{proof}
\emph{Correctness:} By \cref{def:sblocks}, $\mathcal{S}$ will be the set of s-blocks of $G$. From \cref{thm:split_sblocks} it follows that we can indeed maximize over the scanwidth of the s-blocks of $G$. Since a single arc has just one extension of scanwidth 1, the first if statement is correct. Any extension of a cycle with a unique root has a scanwidth of 2, showing correctness of the second if statement. Correctness of the else statement now follows from \cref{lem:edge_contraction}, and the assumption that \texttt{ExactSW} correctly returns the scanwidth of $H'$.

\emph{Time complexity:} The time complexity of the creation of $G'$ is bounded by $O(n^2)$. We can find the blocks of $G'$ in $O(|V(G')| + |E(G')|)=O(n^2)$ time \cite{tarjan}. We then loop over at most $n$ s-blocks. Each of those s-blocks $H$ contains at most $n$ vertices. Thus, it takes $O(n)$ time to check whether $H$ is a single arc, or if a cycle with a unique root. Exhaustively suppressing vertices also takes $O(n)$ time. Using the algorithm \texttt{ExactSW} on $H'$ then takes $f(n', m')$ time, where $n'$ and $m'$ are bounds on the size of $H'$, as defined in the lemma. Summarizing, the complexity becomes $O(n^2 + n \cdot (n + f(n', m')) )=  O(n^2 + n \cdot f(n', m'))$.
\end{proof}

Due to this lemma, it is valuable to search for an upper bound on the size of the graphs $H'$ in the algorithm. This could help to bound the running time of an exact algorithm.

It turns out that if $G$ is a network, we can bound the size of the graphs $H'$ by a linear function of its level. A crucial observation here is that the graphs $H'$ are very similar to the so-called \emph{simple level-$k$ generators}: building blocks of binary networks introduced in \cite{van2009algorithms,van2008constructing}. In \cite[Lem\,4.2]{van2009algorithms} it is shown that the number of vertices of a level-$k$ generator is at most $3k-1$, while the number of arcs is at most $4k-2$. Using techniques from this proof, we create similar bounds for our graphs $H'$, although we need some adjustments of the proof to allow for non-binary networks.

\begin{lemma}
\label{lem:reduction_level_bound}
Let $G=(V,E)$ be a level-$k$ network with $k\geq 1$, and let $H$ be a block of $G$. Let $H'$ be the graph obtained from~$H$ after exhaustively applying the vertex suppression reduction rule from \cref{lem:edge_contraction}. Then, $|V(H')| \leq 4k-1$ and $|E(H')| \leq 5k-2$.
\end{lemma}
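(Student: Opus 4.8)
The plan is to mimic the structure of the level-$k$ generator bound from \cite{van2009algorithms}, adapting it to the (possibly non-binary) reduced blocks $H'$. First I would set up the key structural facts about $H'$. Since $H'$ is obtained from a block $H$ of a network by exhaustively suppressing indegree-1 outdegree-1 vertices, every vertex of $H'$ has indegree $\geq 2$, or outdegree $\geq 2$, or is the unique root of $H$ (which has indegree $0$, outdegree $\geq 2$ since $H$ is a nontrivial block), or is a ``leaf'' of the block (a vertex $v$ of $H$ with $\deltaout_H(v)=0$, i.e. a vertex where a pendant subtree was attached in $G$). Actually I should be careful: in a block $H$ of a network, the vertices of outdegree $0$ in $H$ correspond to cut vertices of $G$ or true leaves; similarly for indegree. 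Let me instead classify the vertices of $H'$ into: the root $\rho$ (indegree $0$), reticulations (indegree $\geq 2$, outdegree $1$ in the original network but in $H$ possibly outdegree $0$ if the single child is outside the block), tree-vertices/split-vertices (outdegree $\geq 2$), and sink-vertices of the block (outdegree $0$ in $H$). The suppression rule guarantees no vertex has both indegree exactly $1$ and outdegree exactly $1$.

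Next I would count arcs by a handshake/degree argument. Let $R$ be the set of reticulation vertices of $H'$ and recall $r(H) = \sum_{v \in R}(\deltain_H(v) - 1) \leq k$, so $\sum_{v\in R}\deltain_H(v) \leq k + |R|$. On the other side, split the remaining indegree contributions: every non-reticulation non-root vertex of $H'$ has indegree exactly $1$ (in a network, only reticulations have indegree $\geq 2$, and suppression doesn't change that), so $|E(H')| = \sum_{v\in V(H')}\deltain_{H'}(v) = \sum_{v\in R}\deltain_{H'}(v) + (|V(H')| - |R| - 1)$. Since suppression can only decrease indegrees toward... no, suppression removes a vertex and merges two arcs into one, so it preserves the indegree of every surviving vertex; hence $\deltain_{H'}(v) = \deltain_H(v)$ for $v \in V(H')$, giving $\sum_{v\in R}\deltain_{H'}(v) \leq k + |R|$. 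Therefore $|E(H')| \leq k + |R| + |V(H')| - |R| - 1 = |V(H')| + k - 1$. So the arc bound follows from the vertex bound via $|E(H')| \leq |V(H')| + k - 1 \leq (4k-1) + k - 1 = 5k - 2$, which is exactly the claimed $5k-2$. Good — this reduces the problem to proving $|V(H')| \leq 4k-1$.

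For the vertex bound I would adapt the generator argument: consider the underlying structure and bound the number of ``branching'' vertices. Partition $V(H')$ into reticulations $R$, split-vertices $S$ (outdegree $\geq 2$ in $H'$), and ``others''. A vertex that is neither a reticulation nor a split-vertex has outdegree $\leq 1$ and indegree $\leq 1$; by the suppression rule it cannot have indegree and outdegree both equal to $1$, so it has outdegree $0$ or indegree $0$. Indegree $0$ means it is the root $\rho$ (unique). Outdegree $0$ vertices in $H'$ are the block-sinks; I'd need to argue there are few of these — actually each block-sink must be a reticulation of $G$ whose unique child lies outside the block, OR a leaf, but a block of a network cannot have a true leaf as a non-cut-vertex unless the block is a single arc (excluded by the algorithm's else-branch). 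Hmm, this is the delicate point. Let me think again: in the reduced block $H'$, an outdegree-$0$ vertex $v$ had, in $H$, outdegree $0$ as well (suppression only removes outdegree-$1$ indegree-$1$ vertices, and removing an internal vertex of a path doesn't kill anyone's only child), so $v$ is a sink of $H$. A sink of a block $H$ of $G$ that is not a cut vertex of $G$ must be a leaf of $G$; and if $v$ is a cut vertex, then in $G$ it had a child outside $H$, so $\deltaout_G(v) \geq 1$, and being a reticulation or tree-vertex; if it were a tree-vertex it has $\deltaout_G(v)\geq 2$, but only one child inside $H$... anyway, the upshot I want is: the number of outdegree-$0$ vertices of $H'$ is at most the number of reticulations $|R|$, because... this needs the observation that a tree-vertex cut vertex sink of $H$ would make $H$ not $2$-edge-connected in a way that... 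Actually I think the cleanest route is the one from \cite{van2009algorithms}: in a generator, sinks/leaves are exactly where new arcs get subdivided, and one shows $|S| \leq |R| + 1$ (each split-vertex ``pairs'' with reticulations) and $|\text{sinks}| \leq |R|$, then $|V(H')| = 1 + |R| + |S| + |\text{sinks}| \leq 1 + |R| + (|R|+1) + |R| = 3|R| + 2$... that gives $3k+2$, too weak since $|R|$ can be up to... no wait $|R| \leq r(H) \leq k$ only when the network is binary. For non-binary, $|R| \leq r(H) = \sum(\deltain - 1) \leq k$ still holds since each reticulation contributes at least $1$. Hmm, so $3k+2 \leq 4k - 1$ iff $k \geq 3$. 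The small cases $k=1,2$ would need separate handling, and the bound $4k-1$ (vs. the generator's $3k-1$) suggests the authors allow some slack for the non-binary complications — so I'd be generous in the counting rather than tight.

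The main obstacle I anticipate is correctly handling the non-binary case and the ``block-sink'' vertices: in a non-binary network a reticulation can have large indegree, and a block's sinks need not correspond cleanly to reticulations, so the bookkeeping that gives $3k-1$ for binary generators must be loosened. Concretely, I expect to bound $|V(H')| \leq 1 + |R| + |S| + |\text{sinks of } H'|$, argue $|S| \leq |R| + 1$ by an Euler/handshake argument on out-degrees ($\sum_{v}(\deltaout_{H'}(v) - 1)$ over split vertices equals $\sum_v (\deltain_{H'}(v)-1)$ over reticulations plus a correction for the root and sinks), and argue $|\text{sinks}| \leq |R|$ by showing every sink of $H'$ forces, via $2$-edge-connectivity of the block, a distinct reticulation ``above'' it; combining gives $|V(H')| \leq 3|R| + 2 \leq 3k+2$, and then for $k\geq 3$ this is $\leq 4k - 1$, with $k = 1, 2$ checked directly (a level-$1$ block is a single cycle, handled by the algorithm before suppression; a level-$2$ block has $|R|\leq 2$ and a bounded generator structure). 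I would write this carefully, leaning on the cited \cite[Lem.~4.2]{van2009algorithms} for the binary skeleton and only spelling out the adjustments needed when vertices have degree larger than three. Since the target constants ($4k-1$, $5k-2$) are not claimed to be tight, I will favor robust over-counting at each step to keep the argument short.
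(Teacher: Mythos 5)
There is a genuine gap, and it sits at the foundation of your vertex count: you assert that ``the suppression rule guarantees no vertex has both indegree exactly $1$ and outdegree exactly $1$,'' but \cref{lem:edge_contraction} only permits suppressing such a vertex $v$ (with parent $u$ and child $w$) when $uw \notin E$. After exhaustive application, $H'$ can therefore still contain indegree-1 outdegree-1 vertices whose parent and child are already joined by an arc. The paper's proof keeps these as a separate class of \emph{flow vertices} and bounds their number $f$ by observing that each one must be the parent of a reticulation (its child $w$ has both $u$ and $v$ as parents, hence indegree at least $2$), and that a reticulation of indegree $d$ can have at most $d-1$ flow-vertex parents (if all parents were flow vertices, the single parent of one of them would itself have to be a parent of $w$, forcing an outdegree-1 vertex to have two children); this gives $f \leq \sum_{v:\,\deltain(v)\geq 2}(\deltain(v)-1) \leq k$. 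Your partition into root, reticulations, split-vertices and sinks simply omits this class, and since $f$ can be of order $k$, the omission is not absorbed by the slack in $4k-1$: grafting $f\leq k$ onto your intended bound $3|R|+2$ would give roughly $4k+2$.

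Two further remarks. Your reduction of the arc bound to the vertex bound (summing indegrees, noting that suppression preserves the indegree of surviving vertices, to get $|E(H')| \leq |V(H')| + k - 1$) is correct and is essentially how the paper obtains $5k-2$. But even setting flow vertices aside, your target $|V(H')| \leq 3|R|+2 \leq 3k+2$ only implies $4k-1$ for $k\geq 3$, and the cases $k\in\{1,2\}$ are not actually discharged (appealing to the cycle branch of \cref{alg:reduction} does not prove the lemma, which is a statement about every reduced block); moreover, in a non-trivial block every sink has indegree at least $2$ and is therefore itself a reticulation, so ``sinks'' should not form a class disjoint from $R$. The paper sidesteps all of this bookkeeping with a single degree-sum identity: equating total indegree with total outdegree yields $t \leq \gamma_1 r - \alpha$ for the number $t$ of tree-vertices (with $\gamma_1$ the average reticulation indegree and $\alpha \geq 2$ the root's outdegree), which together with $f \leq (\gamma_1-1)r$ and $(\gamma_1-1)r \leq k$ gives $|V(H')| = 1+f+t+r \leq 4k-1$ uniformly in $k$. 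You would need to redo your count along these lines rather than patch the classification.
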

\begin{proof}
We can assume that $H'$ is not a single arc, as the lemma then follows trivially. With a careful analysis one can prove the following claim on the types of vertices that can occur in $H'$ (see \cite{holtgrefe2023scanwidth} for the formal proof). Note that $H'$ does not include the leaves of the network or their incident edges, since these form single-arc blocks and they have already been disregarded. 

\emph{Claim:} Every vertex $v \in V(H')$ is of one of the following types: (i) unique \emph{root} with $\deltain(v) = 0$ and $\deltaout(v) \geq 2$; (ii) \emph{flow vertex} with $\deltain(v) = \deltaout(v) = 1$; (iii) \emph{tree-vertex} with $\deltain(v) = 1$ and $\deltaout (v) \geq 2$; (iv) \emph{reticulation vertex} with $\deltain (v) \geq 2$ and $\deltaout(v) \leq 1$.

Now suppose that the unique root of $H'$ has outdegree $\alpha\geq 2$. Furthermore, suppose that $H'$ contains $f$ flow vertices, $t$ tree-vertices with average outdegree $\beta \geq 2$, and $r$ reticulations with average indegree $\gamma_1\geq 2$ and average outdegree $\gamma_2 \leq 1$. By the claim, this covers all possible vertices in $H'$.

The sum of indegrees of all vertices in $H'$ is now $f + t + \gamma_1 r$, while the sum of outdegrees is $\alpha + f + t \beta + \gamma_2 r$. Since these values must be equal, we get that $(\beta -1 ) t = (\gamma_1 - \gamma_2) r - \alpha$, and thus $t \leq \gamma_1 r - \alpha$. Every flow vertex in $H'$ must be the parent of a reticulation, otherwise we would be able to suppress the flow vertex. Furthermore, each reticulation~$v$ can have at most $\deltain (v) - 1$ flow vertices as its parents. Otherwise, if all its parents were flow vertices, we would be able to suppress at least one of them. Thus, we also have that $f \leq (\gamma_1 - 1 ) \cdot r$. Using these two inequalities, we now get for the total number of vertices:
\begin{align*}
|V(H')| &= 1 + f + t + r 
\leq 1 + (\gamma_1 - 1 ) \cdot r + \gamma_1 r - \alpha + r .
\end{align*}
As the total number of arcs equals the sum of indegrees, we also have:
\begin{align*}
|E(H')| = f + t + \gamma_1 r \leq (\gamma_1 - 1 ) \cdot r + \gamma_1 r - \alpha + \gamma_1 r.
\end{align*}

Combining these inequalities with $\gamma_1 \geq 2$ and $\alpha \geq 2$, we obtain $|V(H')|
\leq 4 (\gamma_1 - 1) r  -1$ and $|E(H')| \leq 5 (\gamma_1 - 1) r - 2$.
Now note that
$(\gamma_1 - 1) \cdot r = \sum_{v \in V(H'): \deltain(v) \geq 2} (\deltain(v) - 1)$.
Because $G$ is level-$k$, the graph $H$ must have a reticulation number of at most $k$. Suppressing some indegree-1 outdegree-1 vertices in $H$ will not increase this number. By the definition of the reticulation number from \cref{subsec:phylogenetic_networks}, we thus get that $\sum_{v \in V(H'): \deltain(v) \geq 2} (\deltain(v) - 1) \leq k$. This gives us $(\gamma_1 - 1) r \leq k$. Filling this in into the two upper bounds then proves the lemma.
\end{proof}

\section{Exact methods}\label{sec:exact_methods}
In this section, we focus on methods that compute the scanwidth exactly. From an optimization point of view, we aim to solve the following problem:

\begin{problem}
  \problemtitle{\textsc{Scanwidth}}
  \probleminput{Weakly connected DAG $G$.}
  \problemobjective{Find an extension $\sigma$ of $G$ with minimum scanwidth.}
\end{problem}

Note that we could also define the problem to search for an optimal tree extension (see \cref{def:scanwidth-tree}), but in this section we only consider solution methods that construct optimal extensions. As discussed in \cref{subsec:scanwidth_def} it is possible to create an optimal tree extension from an optimal extension in quadratic time.

As a benchmark, we will first shortly mention an exhaustive search method that solves \textsc{Scanwidth}. The intuitive idea behind the approach is to generate all possible permutations of the vertices in a DAG and check each permutation to determine if it is a valid extension. If it is, we calculate the scanwidth of the extension, and finally, we select the extension with the smallest scanwidth as the optimal solution. The combinatorial explosion of the search space results in such a brute force solution taking $O(n!\cdot n \cdot m)$ time on a weakly connected DAG of $n$ vertices and $m$ arcs.

\subsection{Recursive algorithm}
In this subsection, we develop a recursive algorithm that solves \textsc{Scanwidth}. The main idea is to recursively keep splitting the graph into two (almost) equal-sized parts. The method we will develop is based on an algorithm from Bodlaender et al. \cite{bodlaender2012note}. They present a solution method for a range of ordering problems on undirected graphs, using a technique by Gurevich and Shelah \cite{gurevich1987expected} for the \textsc{Travelling Salesman Problem}. We extend this approach to DAGs and specifically tailor it towards scanwidth. To adapt the approach to DAGs, we introduce the following concept.
\begin{definition}[Ordered partition]
\label{def:ordered_partition}
Let $G=(V,E)$ be a weakly connected DAG and $\PP = (A_1, \ldots, A_r)$ an $r$-partition of $V$, with $r \geq 2$ and the sets $A_i$ allowed to be empty. We call $\PP$ an \emph{ordered $r$-partition} of $V$, if for all $1\leq i < r$, no arcs in $E$ are directed from $\bigcup_{j\leq i} A_j$ towards $\bigcup_{j> i} A_j$.
\end{definition}
Intuitively, an ordered partition respects the DAG's direction: for any vertex $v \in A_j$, all outgoing edges from $v$ go only to sets $A_i$ with $i \le j$. As a consequence, it is possible to concatenate extensions of the induced subgraphs of the ordered partition, resulting in an extension of the entire graph. See \cref{subfig:ordered_partition_part,subfig:ordered_partition_ext} for an illustration of this concept, with $A_1 = L$, $A_2 = W$, and $A_3 = R$, where we use this relabeling for later reference. Note that arcs are allowed to `skip' over parts of the partition and that the set $L$ is forced to be a sinkset.

\begin{figure}[htb]
     \centering
     \parbox{.31\textwidth}{
     \begin{subfigure}[b]{0.32\textwidth}
         \centering
         \begin{tikzpicture}[scale=0.45]
	\begin{pgfonlayer}{nodelayer}
		\node [style={graph_node}, label=right:{$a$}, fill=green!50] (40) at (0, 0) {};
		\node [style={graph_node}, label=right:{$b$}, fill=green!50] (41) at (2, 0) {};
		\node [style={graph_node}, label=right:{$c$}, fill=red!50] (42) at (4, 0) {};
		\node [style={graph_node}, label=right:{$x$}, fill=green!50] (43) at (0, 1.5) {};
		\node [style={graph_node}, label=right:{$y$}, fill=red!50] (44) at (2, 1.5) {};
		\node [style={graph_node}, label=right:{$z$}, fill=red!50] (45) at (4, 1.5) {};
		\node [style={graph_node}, label=right:{$u$}, fill=red!50] (46) at (1, 3) {};
		\node [style={graph_node}, label=right:{$v$}, fill=red!50] (47) at (3, 3) {};
		\node [style={graph_node}, label=right:{$q$}, fill=blue!50] (48) at (2, 4.5) {};
		\node [style={graph_node}, label=right:{$w$}, fill=blue!50] (49) at (5, 3) {};
		\node [style={graph_node}, label=right:{$\rho$}, fill=blue!50] (50) at (3, 6) {};

            \node [style=none] (11) at (-0.5, -0.5) {};
		\node [style=none] (12) at (-0.5, 2) {};
		\node [style=none] (13) at (0.5, 2) {};
		\node [style=none] (14) at (0.5, 0.5) {};
		\node [style=none] (15) at (2.5, 0.5) {};
		\node [style=none] (16) at (2.5, -0.5) {};

		\node [style=none] (17) at (4.5, -0.5) {};
		\node [style=none] (18) at (4.5, 2) {};
		\node [style=none] (19) at (3.5, 3.5) {};
		\node [style=none] (20) at (0.5, 3.5) {};
		\node [style=none] (21) at (0.5, 2.5) {};
		\node [style=none] (22) at (1.5, 1) {};
		\node [style=none] (23) at (3.5, 1) {};
		\node [style=none] (24) at (3.5, -0.5) {};

		\node [style=none] (25) at (2.25, 3.75) {};
		\node [style=none] (26) at (3, 5.25) {};
		\node [style=none] (27) at (4.5, 2.75) {};
		\node [style=none] (28) at (5.35, 2.4) {};
		\node [style=none] (29) at (5.5, 3.25) {};
		\node [style=none] (30) at (3, 6.75) {};
		\node [style=none] (31) at (1.75, 5) {};
		\node [style=none] (32) at (1.25, 4.25) {};

		\node [style=none, label=right:{\large{$L$}}] (33) at (-2, 0.75) {};
		\node [style=none, label=right:{\large{$W$}}] (34) at (4.5, 0.75) {};
		\node [style=none, label=right:{\large{$R$}}] (35) at (4, 4.75) {};

	\end{pgfonlayer}
	\begin{pgfonlayer}{edgelayer}
		\draw [style={graph_edge}] (50) to (48);
		\draw [style={graph_edge}] (48) to (46);
		\draw [style={graph_edge}] (46) to (43);
		\draw [style={graph_edge}] (43) to (40);
		\draw [style={graph_edge}] (48) to (47);
		\draw [style={graph_edge}] (46) to (44);
		\draw [style={graph_edge}] (47) to (44);
		\draw [style={graph_edge}] (44) to (41);
		\draw [style={graph_edge}] (47) to (45);
		\draw [style={graph_edge}] (45) to (42);
		\draw [style={graph_edge}] (50) to (49);
		\draw [style={graph_edge}] (49) to (45);
        \draw [style={graph_edge}, bend right=40] (48) to (43);
	\end{pgfonlayer}

\draw[black!40,fill=gray!10,rounded corners=2mm,opacity=0.35] ($(11.center)$) -- ($(12.center)$) -- ($(13.center)$) -- ($(14.center)$) -- ($(15.center)$) -- ($(16.center)$) -- cycle;

\draw[black!40,fill=gray!10,rounded corners=2mm,opacity=0.35] ($(17.center)$) -- ($(18.center)$) -- ($(19.center)$) -- ($(20.center)$) -- ($(21.center)$) -- ($(22.center)$) -- ($(23.center)$) -- ($(24.center)$) -- cycle;

\draw[black!40,fill=gray!10,rounded corners=2mm,opacity=0.35] ($(25.center)$) -- ($(26.center)$) -- ($(27.center)$) -- ($(28.center)$) -- ($(29.center)$) -- ($(30.center)$) -- ($(31.center)$) -- ($(32.center)$) -- cycle;

\end{tikzpicture}
         \caption{Ordered 3-partition of a DAG}
         \label{subfig:ordered_partition_part}
     \end{subfigure}
     }
     \parbox{.67\textwidth}{
     \begin{subfigure}[b]{0.98\linewidth}
         \centering
         \begin{tikzpicture}[yscale=0.75,xscale=0.85]
	\begin{pgfonlayer}{nodelayer}
		\node [style={graph_node}, fill=green!50, label={below:{$a$}}] (11) at (0, 0) {};
		\node [style={graph_node}, fill=green!50, label={below:{$b$}}] (12) at (2, 0) {};
		\node [style={graph_node}, fill=red!50, label={below:{$c$}}] (13) at (5, 0) {};
		\node [style={graph_node}, fill=green!50, label={below:{$x$}}] (14) at (1, 0) {};
		\node [style={graph_node}, fill=red!50, label={below:{$y$}}] (15) at (3, 0) {};
		\node [style={graph_node}, fill=red!50, label={below:{$z$}}] (16) at (6, 0) {};
		\node [style={graph_node}, fill=red!50, label={below:{$u$}}] (17) at (4, 0) {};
		\node [style={graph_node}, fill=red!50, label={below:{$v$}}] (18) at (7, 0) {};
		\node [style={graph_node}, fill=blue!50, label={below:{$q$}}] (19) at (8, 0) {};
		\node [style={graph_node}, fill=blue!50, label={below:{$w$}}] (20) at (9, 0) {};
		\node [style={graph_node}, fill=blue!50, label={below:{$\rho$}}] (21) at (10, 0) {};
	\end{pgfonlayer}
	\begin{pgfonlayer}{edgelayer}
		\draw [style={graph_edge}, bend left=330] (21) to (19);
		\draw [style={graph_edge}, in=30, out=150, looseness=0.75] (19) to (17);
		\draw [style={graph_edge}, bend left, looseness=0.50] (17) to (14);
		\draw [style={graph_edge}] (14) to (11);
		\draw [style={graph_edge}] (19) to (18);
		\draw [style={graph_edge}] (17) to (15);
		\draw [style={graph_edge}, bend left=330, looseness=0.75] (18) to (15);
		\draw [style={graph_edge}] (15) to (12);
		\draw [style={graph_edge}] (18) to (16);
		\draw [style={graph_edge}] (16) to (13);
		\draw [style={graph_edge}] (21) to (20);
		\draw [style={graph_edge}, bend left, looseness=0.50] (20) to (16);
        \draw [style={graph_edge}, bend right=30, looseness=0.75] (19) to (14);
	\end{pgfonlayer}
\end{tikzpicture}
         \caption{Extension $\sigma$ adhering to the ordered 3-partition}
     \label{subfig:ordered_partition_ext}
     \end{subfigure}
     \begin{subfigure}[b]{0.98\linewidth}
         \centering
         \begin{tikzpicture}[yscale=0.75,xscale=0.85]
	\begin{pgfonlayer}{nodelayer}
		\node [style={graph_node}, fill=green!25, label={below:{$a$}}] (11) at (0, 0) {};
		\node [style={graph_node}, fill=green!25, label={below:{$b$}}] (12) at (2, 0) {};
		\node [style={graph_node}, fill=red!60, label={below:{$c$}}] (13) at (5, 0) {};
		\node [style={graph_node}, fill=green!25, label={below:{$x$}}] (14) at (1, 0) {};
		\node [style={graph_node}, fill=red!60, label={below:{$y$}}] (15) at (3, 0) {};
		\node [style={graph_node}, fill=red!60, label={below:{$z$}}] (16) at (6, 0) {};
		\node [style={graph_node}, fill=red!60, label={below:{$u$}}] (17) at (4, 0) {};
		\node [style={graph_node}, fill=red!60, label={below:{$v$}}] (18) at (7, 0) {};
		\node [style={graph_node}, fill=blue!25, label={below:{$q$}}] (19) at (8, 0) {};
		\node [style={graph_node}, fill=blue!25, label={below:{$w$}}] (20) at (9, 0) {};
		\node [style={graph_node}, fill=blue!25, label={below:{$\rho$}}] (21) at (10, 0) {};
	\end{pgfonlayer}
	\begin{pgfonlayer}{edgelayer}
		\draw [style={graph_edge},dashed,color=black!50, bend left=330] (21) to (19);
		\draw [style={graph_edge}, in=30, out=150, looseness=0.75] (19) to (17);
		\draw [style={graph_edge}, bend left, looseness=0.50] (17) to (14);
		\draw [style={graph_edge},dashed,color=black!50] (14) to (11);
		\draw [style={graph_edge}] (19) to (18);
		\draw [style={graph_edge}] (17) to (15);
		\draw [style={graph_edge}, bend left=330, looseness=0.75] (18) to (15);
		\draw [style={graph_edge}] (15) to (12);
		\draw [style={graph_edge}] (18) to (16);
		\draw [style={graph_edge}] (16) to (13);
		\draw [style={graph_edge},dashed,color=black!50] (21) to (20);
		\draw [style={graph_edge}, bend left, looseness=0.50] (20) to (16);
        \draw [style={graph_edge}, bend right=30, looseness=0.75] (19) to (14);
	\end{pgfonlayer}
	
\draw[black,fill=red!40,rounded corners=2mm,opacity=0.35] ($(15.north west)+(-0.2,0.6)$) -- ($(18.north east)+(0.5,0.6)$) -- ($(18.south east)+(0.5,-0.67)$) -- ($(15.south west)+(-0.2,-0.67)$) -- cycle;
\node [style=none] (99) at (0, 0.8) {};

\end{tikzpicture}
         \caption{Visualization of partial scanwidth}
     \label{subfig:ordered_partition_window}
     \end{subfigure}
     }
        \caption{(a): A weakly connected DAG $G = (V,E)$, with the colors indicating an ordered 3-partition $\PP =(L,W,R)$ of $V$. (b): An extension $\sigma$ of $G$ which is a concatenation of extensions of the three subgraphs induced by the partition $\PP$. (c): The same extension $\sigma$ but with a `window' drawn around the vertices in~$W$, aiding the interpretation of partial scanwidth. The arcs between a pair of vertices within~$L$ or a pair vertices within~$R$ are grey and dashed because they never count towards the partial scanwidth for this ordered 3-partition.}
        \label{fig:ordered_partition}
\end{figure}

Our recursive approach uses a natural generalization of scanwidth: \emph{partial scanwidth}. This concept allows us to analyze the scanwidth of only a subset of the vertices of a graph. This will be useful to break down the problem into smaller subproblems. By solving these subproblems recursively, we can build up the scanwidth of the entire graph. Before we state the formal definition, recall that $\Pi [W]$ is the set of all extensions of $G[W]$. 

\begin{definition}[Partial scanwidth]\label{def:psw}
Let $G=(V,E)$ be a weakly connected DAG and $\PP = (L, W, R)$ an ordered 3-partition of $V$ such that $W \neq \emptyset$. For $\sigma \in \Pi[W]$ and a position $i$ of $\sigma$, we will denote $$\Pdash\PSW_i^\sigma(G) = \{ uv \in E: u \in \sigma[i+1 \ldots ]\cup R, v \connect{G[\sigma[1 \ldots i] \cup L]} \sigma (i) \}.$$ Then the \emph{partial scanwidth} of $G$ for $\PP$ is 
$$\Pdash\psw (G) = \min_{\sigma \in \Pi[W]} \max_{i \in W} |\Pdash\PSW_i^\sigma(G)|.$$
Furthermore, we let $\Pdash\psw (\sigma, G) = \max_{i \in W} |\Pdash\PSW_i^\sigma( G)|$ be the partial scanwidth of $\sigma$ for $\PP$. If $G$ is clear from the context, we mostly write  $\Pdash \PSW_i^\sigma$ instead of $\Pdash\PSW_i^\sigma(G)$.
\end{definition}

Essentially, partial scanwidth only considers the scanwidth within a `window' $W$ of the vertices of $G$, while assuming the vertices in the set $L$ to be positioned on the left of $W$ in an extension, and those in the set $R$ to be right of $W$. We emphasize that arcs from $R$ to $L$ may belong to a set $\Pdash\PSW_i^\sigma(G)$, whereas arcs between pairs of vertices within $R$ or within $L$ are never part of these sets. \cref{subfig:ordered_partition_window} clarifies this concept.

We can now proceed with the main recursive idea of the algorithm. It builds upon Lemma~4 from \cite{bodlaender2012note}, which presents a similar concept for more general functions on undirected graphs. The following lemma is in some sense an adapted version for DAGs, specifically tailored to (partial) scanwidth. See \cref{fig:recursive_alg} for an accompanying illustration with a specific choice of~$W'$.

\begin{figure}[htb]
     \centering
     \begin{subfigure}[b]{0.32\textwidth}
         \centering
         \begin{tikzpicture}[scale=0.45]
  \begin{pgfonlayer}{nodelayer}
    \node [style={graph_node}, label=right:{$a$}, fill=green!50] (40b) at (0, 0) {};
    \node [style={graph_node}, label=right:{$b$}, fill=green!50] (41b) at (2, 0) {};
    \node [style={graph_node}, label=right:{$c$}, fill=red!50] (42b) at (4, 0) {};
    \node [style={graph_node}, label=right:{$x$}, fill=green!50] (43b) at (0, 1.5) {};
    \node [style={graph_node}, label=right:{$y$}, fill=red!50] (44b) at (2, 1.5) {};
    \node [style={graph_node}, label=right:{$z$}, fill=red!50] (45b) at (4, 1.5) {};
    \node [style={graph_node}, label=right:{$u$}, fill=red!50] (46b) at (1, 3) {};
    \node [style={graph_node}, label=right:{$v$}, fill=red!50] (47b) at (3, 3) {};
    \node [style={graph_node}, label=right:{$q$}, fill=blue!50] (48b) at (2, 4.5) {};
    \node [style={graph_node}, label=right:{$w$}, fill=blue!50] (49b) at (5, 3) {};
    \node [style={graph_node}, label=right:{$\rho$}, fill=blue!50] (50b) at (3, 6) {};

    % left set (L): same as before
    \node [style=none] (11b) at (-0.5, -0.5) {};
    \node [style=none] (12b) at (-0.5, 2) {};
    \node [style=none] (13b) at (0.5, 2) {};
    \node [style=none] (14b) at (0.5, 0.5) {};
    \node [style=none] (15b) at (2.5, 0.5) {};
    \node [style=none] (16b) at (2.5, -0.5) {};

    % W' around {z,c}
    \node [style={special_node2}] (31) at (3.75, 1.75) {};
    \node [style={special_node2}] (32) at (4.25, 1.75) {};
    \node [style={special_node2}] (33) at (4.25, -0.25) {};
    \node [style={special_node2}] (34) at (3.75, -0.25) {};

    % W \setminus W' around {u,v,y} (skewed to follow edges u->y and v->y)
    \node [style={special_node}] (65) at (0.75, 3.25) {};
    \node [style={special_node}] (66) at (3.25, 3.25) {};
    \node [style={special_node}] (67) at (3.25, 2.75) {};
    \node [style={special_node}] (68) at (2.25, 1.25) {};
    \node [style={special_node}] (69) at (1.75, 1.25) {};
    \node [style={special_node}] (70) at (0.75, 2.75) {};

    \node [style=none] (56) at (0.5, 3.5) {};
    \node [style=none] (57) at (3.5, 3.5) {};
    \node [style=none] (58) at (4.5, 2) {};
    \node [style=none] (59) at (4.5, -0.5) {};
    \node [style=none] (60) at (3.5, -0.5) {};
    \node [style=none] (61) at (3.5, 1) {};
    \node [style=none] (62) at (2.5, 1) {};
    \node [style=none] (63) at (1.5, 1) {};
    \node [style=none] (64) at (0.5, 2.5) {};

    % R: same as before
    \node [style=none] (25b) at (2.25, 3.75) {};
    \node [style=none] (26b) at (3, 5.25) {};
    \node [style=none] (27b) at (4.5, 2.75) {};
    \node [style=none] (28b) at (5.35, 2.4) {};
    \node [style=none] (29b) at (5.5, 3.25) {};
    \node [style=none] (30b) at (3, 6.75) {};
    \node [style=none] (31b) at (1.75, 5) {};
    \node [style=none] (32b) at (1.25, 4.25) {};

    \node [style=none, label=right:{\large{$L$}}] (33b) at (-2, 0.75) {};
    \node [style=none, label=left:{\normalsize{$W\setminus W'$}}] (34b) at (0.55, 3.0) {};
    \node [style=none, label=right:{\normalsize{$W'$}}] (35b) at (4.25, 0.75) {};
    \node [style=none, label=right:{\large{$R$}}] (36b) at (4, 4.75) {};
  \end{pgfonlayer}

  \begin{pgfonlayer}{edgelayer}
    \draw [style={graph_edge}] (50b) to (48b);
    \draw [style={graph_edge}] (48b) to (46b);
    \draw [style={graph_edge}] (46b) to (43b);
    \draw [style={graph_edge}] (43b) to (40b);
    \draw [style={graph_edge}] (48b) to (47b);
    \draw [style={graph_edge}] (46b) to (44b);
    \draw [style={graph_edge}] (47b) to (44b);
    \draw [style={graph_edge}] (44b) to (41b);
    \draw [style={graph_edge}] (47b) to (45b);
    \draw [style={graph_edge}] (45b) to (42b);
    \draw [style={graph_edge}] (50b) to (49b);
    \draw [style={graph_edge}] (49b) to (45b);
    \draw [style={graph_edge}, bend right=40] (48b) to (43b);
  \end{pgfonlayer}

  % L
  \draw[black!40,fill=gray!10,rounded corners=2mm,opacity=0.35]
    ($(11b.center)$) -- ($(12b.center)$) -- ($(13b.center)$) -- ($(14b.center)$) -- ($(15b.center)$) -- ($(16b.center)$) -- cycle;

  % W \setminus W'
  \draw[black!40,fill=gray!65,rounded corners=2mm,opacity=0.35]
    ($(65.center)$) -- ($(66.center)$) -- ($(67.center)$) -- ($(68.center)$) -- 
    ($(69.center)$) -- ($(70.center)$) -- cycle;

  % all
  \draw[black!40,fill=gray!10,rounded corners=2mm,opacity=0.35]
    ($(56.center)$) -- ($(57.center)$) -- ($(58.center)$) -- ($(59.center)$) -- ($(60.center)$) -- ($(61.center)$) -- ($(62.center)$) -- ($(63.center)$) -- ($(64.center)$) -- cycle;

  % W'
  \draw[black!40,fill=gray!65,rounded corners=2mm,opacity=0.35]
    ($(31.center)$) -- ($(32.center)$) -- ($(33.center)$) -- ($(34.center)$) -- cycle;

  % R
  \draw[black!40,fill=gray!10,rounded corners=2mm,opacity=0.35]
    ($(25b.center)$) -- ($(26b.center)$) -- ($(27b.center)$) -- ($(28b.center)$) -- ($(29b.center)$) -- ($(30b.center)$) -- ($(31b.center)$) -- ($(32b.center)$) -- cycle;
\end{tikzpicture}
         \caption{Ordered 3-partition of a DAG}
         \label{subfig:recursive_alg1}
     \end{subfigure}
     \begin{subfigure}[b]{0.67\textwidth}
         \centering
         \begin{tikzpicture}[scale=0.45]
  \begin{pgfonlayer}{nodelayer}
    % original graph nodes/labels
    \node [style={graph_node}, label=right:{$a$}, fill=green!50] (40r1) at (0, 0) {};
    \node [style={graph_node}, label=right:{$b$}, fill=green!50] (41r1) at (2, 0) {};
    \node [style={graph_node}, label=right:{$c$}, fill=red!50] (42r1) at (4, 0) {};
    \node [style={graph_node}, label=right:{$x$}, fill=green!50] (43r1) at (0, 1.5) {};
    \node [style={graph_node}, label=right:{$y$}, fill=blue!50] (44r1) at (2, 1.5) {};
    \node [style={graph_node}, label=right:{$z$}, fill=red!50] (45r1) at (4, 1.5) {};
    \node [style={graph_node}, label=right:{$u$}, fill=blue!50] (46r1) at (1, 3) {};
    \node [style={graph_node}, label=right:{$v$}, fill=blue!50] (47r1) at (3, 3) {};
    \node [style={graph_node}, label=right:{$q$}, fill=blue!50] (48r1) at (2, 4.5) {};
    \node [style={graph_node}, label=right:{$w$}, fill=blue!50] (49r1) at (5, 3) {};
    \node [style={graph_node}, label=right:{$\rho$}, fill=blue!50] (50r1) at (3, 6) {};

    % none-nodes for merged box (renamed to avoid clashing with graph nodes)
    \node [style=none] (r1n36) at (2.5, 1) {};
    \node [style=none] (r1n37) at (3.5, 2.5) {};
    \node [style=none] (r1n38) at (5.25, 2.5) {};
    \node [style=none] (r1n39) at (5.5, 3.25) {};
    \node [style=none] (r1n40) at (3, 6.75) {};
    \node [style=none] (r1n41) at (1.75, 5) {};
    \node [style=none] (r1n42) at (1.25, 4.25) {};
    \node [style=none] (r1n43) at (0.5, 3.5) {};
    \node [style=none] (r1n44) at (0.5, 2.5) {};
    \node [style=none] (r1n45) at (1.5, 1) {};

    % L box anchors (same as original)
    \node [style=none] (r1L11) at (-0.5, -0.5) {};
    \node [style=none] (r1L12) at (-0.5, 2) {};
    \node [style=none] (r1L13) at (0.5, 2) {};
    \node [style=none] (r1L14) at (0.5, 0.5) {};
    \node [style=none] (r1L15) at (2.5, 0.5) {};
    \node [style=none] (r1L16) at (2.5, -0.5) {};

    % W' box anchors around {z,c} (same as partitioned)
    \node [style=none] (r1Wp17) at (3.5, -0.5) {};
    \node [style=none] (r1Wp18) at (3.5, 2.0) {};
    \node [style=none] (r1Wp19) at (4.5, 2.0) {};
    \node [style=none] (r1Wp20) at (4.5, -0.5) {};

    % L box anchors (same as original)
    \node [style=none] (r1L11) at (-0.5, -0.5) {};
    \node [style=none] (r1L12) at (-0.5, 2) {};
    \node [style=none] (r1L13) at (0.5, 2) {};
    \node [style=none] (r1L14) at (0.5, 0.5) {};
    \node [style=none] (r1L15) at (2.5, 0.5) {};
    \node [style=none] (r1L16) at (2.5, -0.5) {};

    % W' box anchors around {z,c} (same as partitioned)
    \node [style=none] (r1Wp17) at (3.5, -0.5) {};
    \node [style=none] (r1Wp18) at (3.5, 2.0) {};
    \node [style=none] (r1Wp19) at (4.5, 2.0) {};
    \node [style=none] (r1Wp20) at (4.5, -0.5) {};

    \node [style=none, label=right:{\large{$L$}}] (33r1) at (-2, 0.75) {};
    \node [style=none, label=left:{\large{$R\cup (W\setminus W')$}}] (34r1) at (1.5, 4.9) {};
    \node [style=none, label=right:{\large{$W'$}}] (35r1) at (4.75, 0.75) {};
  \end{pgfonlayer}

  \begin{pgfonlayer}{edgelayer}
    \draw [style={graph_edge}] (50r1) to (48r1);
    \draw [style={graph_edge}] (48r1) to (46r1);
    \draw [style={graph_edge}] (46r1) to (43r1);
    \draw [style={graph_edge}] (43r1) to (40r1);
    \draw [style={graph_edge}] (48r1) to (47r1);
    \draw [style={graph_edge}] (46r1) to (44r1);
    \draw [style={graph_edge}] (47r1) to (44r1);
    \draw [style={graph_edge}] (44r1) to (41r1);
    \draw [style={graph_edge}] (47r1) to (45r1);
    \draw [style={graph_edge}] (45r1) to (42r1);
    \draw [style={graph_edge}] (50r1) to (49r1);
    \draw [style={graph_edge}] (49r1) to (45r1);
    \draw [style={graph_edge}, bend right=40] (48r1) to (43r1);
  \end{pgfonlayer}

  % merged: R \cup (W \setminus W')
  \draw[black!40,fill=gray!10,rounded corners=2mm,opacity=0.35]
    ($(r1n43.center)$) -- ($(r1n44.center)$) -- ($(r1n45.center)$) -- ($(r1n36.center)$) -- ($(r1n37.center)$) -- ($(r1n38.center)$) -- ($(r1n39.center)$) -- ($(r1n40.center)$) -- ($(r1n41.center)$) -- ($(r1n42.center)$) -- cycle;

  % L
  \draw[black!40,fill=gray!10,rounded corners=2mm,opacity=0.35]
    ($(r1L11.center)$) -- ($(r1L12.center)$) -- ($(r1L13.center)$) -- ($(r1L14.center)$) -- ($(r1L15.center)$) -- ($(r1L16.center)$) -- cycle;

  % W'
  \draw[black!40,fill=gray!10,rounded corners=2mm,opacity=0.35]
    ($(r1Wp17.center)$) -- ($(r1Wp18.center)$) -- ($(r1Wp19.center)$) -- ($(r1Wp20.center)$) -- cycle;
\end{tikzpicture}
         \quad
         \begin{tikzpicture}[scale=0.45]
  \begin{pgfonlayer}{nodelayer}
    \node [style={graph_node}, label=right:{$a$}, fill=green!50] (40r2) at (0, 0) {};
    \node [style={graph_node}, label=right:{$b$}, fill=green!50] (41r2) at (2, 0) {};
    \node [style={graph_node}, label=right:{$c$}, fill=green!50] (42r2) at (4, 0) {};
    \node [style={graph_node}, label=right:{$x$}, fill=green!50] (43r2) at (0, 1.5) {};
    \node [style={graph_node}, label=right:{$y$}, fill=red!50] (44r2) at (2, 1.5) {};
    \node [style={graph_node}, label=right:{$z$}, fill=green!50] (45r2) at (4, 1.5) {};
    \node [style={graph_node}, label=right:{$u$}, fill=red!50] (46r2) at (1, 3) {};
    \node [style={graph_node}, label=right:{$v$}, fill=red!50] (47r2) at (3, 3) {};
    \node [style={graph_node}, label=right:{$q$}, fill=blue!50] (48r2) at (2, 4.5) {};
    \node [style={graph_node}, label=right:{$w$}, fill=blue!50] (49r2) at (5, 3) {};
    \node [style={graph_node}, label=right:{$\rho$}, fill=blue!50] (50r2) at (3, 6) {};

    % none-nodes for merged L \cup W' region (renamed to avoid clashes)
    \node [style=none] (r2n15) at (2.5, 0.5) {};
    \node [style=none] (r2n25) at (3.5, 2) {};
    \node [style=none] (r2n26) at (4.5, 2) {};
    \node [style=none] (r2n27) at (4.5, -0.5) {};
    \node [style=none] (r2n28) at (3.5, -0.5) {};
    \node [style=none] (r2n29) at (2.5, -0.5) {};
    \node [style=none] (r2n30) at (-0.5, -0.5) {};
    \node [style=none] (r2n31) at (-0.5, 2) {};
    \node [style=none] (r2n32) at (0.5, 2) {};
    \node [style=none] (r2n33) at (0.5, 0.5) {};
    \node [style=none] (r2n34) at (3.5, 0.5) {};

    % W \setminus W' around {u,v,y} (same coordinates as the partitioned figure)
    \node [style=none] (21r2) at (1.8, 1.0) {};
    \node [style=none] (22r2) at (0.3, 3.4) {};
    \node [style=none] (23r2) at (3.7, 3.4) {};
    \node [style=none] (24r2) at (2.2, 1.0) {};

    % R: same as partitioned
    \node [style=none] (25r2) at (2.25, 3.75) {};
    \node [style=none] (26r2) at (3, 5.25) {};
    \node [style=none] (27r2) at (4.5, 2.75) {};
    \node [style=none] (28r2) at (5.35, 2.4) {};
    \node [style=none] (29r2) at (5.5, 3.25) {};
    \node [style=none] (30r2) at (3, 6.75) {};
    \node [style=none] (31r2) at (1.75, 5) {};
    \node [style=none] (32r2) at (1.25, 4.25) {};

    \node [style=none, label=right:{\large{$R$}}] (33r2) at (4, 4.75) {};
    \node [style=none, label=left:{\large{$W\setminus W'$}}] (34r2) at (0.6, 3.05) {};
    \node [style=none, label=right:{\large{$L\cup W'$}}] (35r2) at (4.75, 0.75) {};
  \end{pgfonlayer}

  \begin{pgfonlayer}{edgelayer}
    \draw [style={graph_edge}] (50r2) to (48r2);
    \draw [style={graph_edge}] (48r2) to (46r2);
    \draw [style={graph_edge}] (46r2) to (43r2);
    \draw [style={graph_edge}] (43r2) to (40r2);
    \draw [style={graph_edge}] (48r2) to (47r2);
    \draw [style={graph_edge}] (46r2) to (44r2);
    \draw [style={graph_edge}] (47r2) to (44r2);
    \draw [style={graph_edge}] (44r2) to (41r2);
    \draw [style={graph_edge}] (47r2) to (45r2);
    \draw [style={graph_edge}] (45r2) to (42r2);
    \draw [style={graph_edge}] (50r2) to (49r2);
    \draw [style={graph_edge}] (49r2) to (45r2);
    \draw [style={graph_edge}, bend right=40] (48r2) to (43r2);
  \end{pgfonlayer}

  % merged: L \cup W' (custom polygon order as provided)
  \draw[black!40,fill=gray!10,rounded corners=2mm,opacity=0.35]
    ($(r2n34.center)$) -- ($(r2n25.center)$) -- ($(r2n26.center)$) -- ($(r2n27.center)$) -- ($(r2n28.center)$) -- ($(r2n29.center)$) -- ($(r2n30.center)$) -- ($(r2n31.center)$) -- ($(r2n32.center)$) -- ($(r2n33.center)$) -- ($(r2n15.center)$) -- cycle;

  % W \setminus W'
  \draw[black!40,fill=gray!10,rounded corners=2mm,opacity=0.35]
    ($(21r2.center)$) -- ($(22r2.center)$) -- ($(23r2.center)$) -- ($(24r2.center)$) -- cycle;

  % R
  \draw[black!40,fill=gray!10,rounded corners=2mm,opacity=0.35]
    ($(25r2.center)$) -- ($(26r2.center)$) -- ($(27r2.center)$) -- ($(28r2.center)$) -- ($(29r2.center)$) -- ($(30r2.center)$) -- ($(31r2.center)$) -- ($(32r2.center)$) -- cycle;
\end{tikzpicture}
         \caption{Two ordered 3-partitions of the same DAG}
     \label{subfig:recursive_alg2}
     \end{subfigure}
        \caption{Example illustrating an iteration of \cref{alg:recursion}. (a): The weakly connected DAG $G = (V,E)$ and the ordered 3-partition $\PP =(L,W,R)$ of $V$ from \cref{subfig:ordered_partition_part}. The set $W$ is split further into $W'$ and $W \setminus W'$, with  $|W'| = \left \lfloor{|W| / 2} \right \rfloor$ and no arc directed from  $W'$ to $W \setminus W'$. (b): The two ordered 3-partitions $(L,W',R\cup (W\backslash W'))$ and $(L \cup W' ,W \backslash W',R)$ used in both \cref{lem:psw_recursion}(b) and \cref{alg:recursion}.}
        \label{fig:recursive_alg}
\end{figure}

\begin{lemma}
\label{lem:psw_recursion}
Let $G=(V,E)$ be a weakly connected DAG and $\PP = (L,W,R)$ an ordered 3-partition of $V$, with $W \neq \emptyset$.

\begin{enumerate}[label={(\alph*)}, noitemsep,topsep=3pt]
\item If $|W| = 1$ (with $W = \{w \}$), then $\Pdash \psw (G) = |\{uv \in E: u \in R, v\connect{G[L\cup W]} w\}|.$

\item If $|W| \geq 2$, then for any $1 \leq k < |W|$,
$$\Pdash\psw (G) = \min_{W' \in \mathcal{W}_k}  \max \Big\{ \PP_1^{W'}\sdash\psw (G) , \PP_2^{W'}\sdash\psw (G)  \Big\},$$
where $\mathcal{W}_k = \{W' \subseteq W: |W'| =k \text{ and no arc in } E \text{ is directed from } W' \text{ to } W \setminus W'\}$ and for all $W' \in \mathcal{W}_k$ we let $\PP_1^{W'} = (L,W',R\cup (W\backslash W'))$ and $\PP_2^{W'} = (L \cup W' ,W \backslash W',R)$.
\end{enumerate}
\end{lemma}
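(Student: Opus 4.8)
The plan is to prove part (a) directly from the definitions and to prove part (b) by a two-way inequality, exhibiting in each direction an explicit construction of extensions that achieves the claimed value. For part (a), when $W = \{w\}$ the only extension $\sigma \in \Pi[W]$ is the trivial one with $\sigma(1) = w$. Unwinding \cref{def:psw} at position $i=1$, the set $\Pdash\PSW_1^\sigma(G)$ consists of all arcs $uv \in E$ with $u \in \sigma[2\ldots] \cup R = R$ (since $\sigma[2\ldots]$ is empty) and $v \connect{G[\sigma[1\ldots 1] \cup L]} w = v \connect{G[L \cup \{w\}]} w$, which is exactly the set on the right-hand side. So part (a) is immediate.

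For part (b), fix $1 \le k < |W|$. For the direction ``$\Pdash\psw(G) \le \min_{W' \in \mathcal{W}_k} \max\{\ldots\}$'', I would take any $W' \in \mathcal{W}_k$ achieving the minimum, let $\pi_1 \in \Pi[W']$ achieve $\PP_1^{W'}\sdash\psw(G)$ and $\pi_2 \in \Pi[W\setminus W']$ achieve $\PP_2^{W'}\sdash\psw(G)$, and form $\sigma = \pi_1 \circ \pi_2$. The condition defining $\mathcal{W}_k$ (no arc from $W'$ to $W\setminus W'$) together with $\PP$ being ordered guarantees $\sigma \in \Pi[W]$. The key observation is then that for a position $i$ of $\sigma$ lying in the $\pi_1$-part, the window of $\PP_1^{W'}=(L,W',R\cup(W\setminus W'))$ places exactly the vertices of $W\setminus W'$ into the ``$R$'' side, so $\Pdash\PSW_i^\sigma(G) = \PP_1^{W'}\sdash\PSW_i^{\pi_1}(G)$; and for $i$ in the $\pi_2$-part, the window of $\PP_2^{W'}=(L\cup W',W\setminus W',R)$ places the vertices of $W'$ into the ``$L$'' side, with the connectivity graph $G[\sigma[1\ldots i] \cup L]$ coinciding with $G[\pi_2[1\ldots \cdot] \cup L \cup W']$, so $\Pdash\PSW_i^\sigma(G) = \PP_2^{W'}\sdash\PSW_{\cdot}^{\pi_2}(G)$. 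Taking the max over all positions gives $\Pdash\psw(\sigma,G) = \max\{\PP_1^{W'}\sdash\psw(\pi_1,G), \PP_2^{W'}\sdash\psw(\pi_2,G)\}$, hence the inequality.

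For the reverse direction ``$\Pdash\psw(G) \ge \min_{W' \in \mathcal{W}_k}\max\{\ldots\}$'', I would take an optimal $\sigma \in \Pi[W]$ for $\PP$ and set $W' = \sigma[1\ldots k]$, i.e. the first $k$ vertices of $\sigma$. Because $\sigma$ is an extension, no arc of $G$ is directed from $W'$ to $W\setminus W'$, so $W' \in \mathcal{W}_k$. Writing $\pi_1 = \sigma[1\ldots k]$ and $\pi_2 = \sigma[k+1\ldots]$, the same position-by-position identity as above shows $\Pdash\PSW_i^\sigma(G)$ equals $\PP_1^{W'}\sdash\PSW_i^{\pi_1}(G)$ for $i \le k$ and $\PP_2^{W'}\sdash\PSW_{i-k}^{\pi_2}(G)$ for $i > k$; therefore $\Pdash\psw(G) = \Pdash\psw(\sigma,G) \ge \max\{\PP_1^{W'}\sdash\psw(G), \PP_2^{W'}\sdash\psw(G)\} \ge \min_{W''\in\mathcal{W}_k}\max\{\ldots\}$, as required.

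The main obstacle I anticipate is the bookkeeping in the position-by-position identity $\Pdash\PSW_i^\sigma(G) = \PP_j^{W'}\sdash\PSW_{\cdot}^{\pi_j}(G)$ — in particular verifying that the weak-connectivity condition ``$v \connect{G[\sigma[1\ldots i]\cup L]} \sigma(i)$'' in the definition of $\Pdash\PSW$ matches up exactly with the corresponding condition for the sub-partitions. For the $\pi_1$-part this requires noting that $\sigma[1\ldots i] \cup L = \pi_1[1\ldots i] \cup L$ when $i \le k$; for the $\pi_2$-part it requires $\sigma[1\ldots i] \cup L = \pi_2[1\ldots i-k] \cup (L \cup W')$ when $i > k$, which holds since $\sigma[1\ldots k] = W'$. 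Once these set identities are recorded, both the arc-membership condition on the tail ($u \in \sigma[i+1\ldots]\cup R$ versus $u \in \pi_j[\cdot]\cup R_j$) and the connectivity condition on the head transfer verbatim, and the rest is routine.
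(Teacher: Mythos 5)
Your proposal is correct and follows essentially the same route as the paper's proof: part (a) by direct unwinding of \cref{def:psw}, and part (b) via the position-by-position identity $\Pdash\PSW_i^\sigma(G) = \PP_j^{W'}\sdash\PSW_{\cdot}^{\pi_j}(G)$ for a concatenated extension $\sigma = \pi_1 \circ \pi_2$ (the paper's central displayed equality), applied with optimal sub-extensions for ($\leq$) and with $W'$ taken as the first $k$ vertices of an optimal $\sigma$ for ($\geq$). The set identities you flag as the main bookkeeping obstacle are exactly the justification the paper gives for its key step, so nothing is missing.
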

\begin{proof}
\emph{(a):} By \cref{def:psw} and the fact that $G[W]$ only has a single extension $(w)$, we get \begin{align*}
\Pdash\psw(G) &= \min_{\sigma \in \Pi[W]} \max_{i \in W} |\Pdash\PSW_i^\sigma| = |\Pdash\PSW_w^{(w)}| 
= | \{ uv \in E(G): u \in R,    v\connect{G[L\cup W]} w \}|.
\end{align*}

\emph{(b):} First note that both $\PP_1^{W'}$ and $\PP_2^{W'}$ are also ordered 3-partitions of $V$ with the middle set non-empty. Now, we prove an equality that is at the core of the result. Let $W' \in \mathcal{W}_k$ be arbitrary, and let $\sigma_1 \in \Pi[W']$, $\sigma_2 \in \Pi[W \setminus W']$ and $\sigma = \sigma_1 \circ \sigma_2  \in \Pi[W]$. We then have that 
\begin{align}
\Pdash\psw(\sigma, G) &= \max \left\{\max_{w \in W'} \Pdash\PSW_w^\sigma , \max_{w \in W\setminus W'} \Pdash\PSW_w^\sigma \right\} \nonumber \\
&= \max \left\{\max_{w \in W'} \PP_1^{W'}\sdash\PSW_w^{\sigma_1} , \max_{w \in W\setminus W'} \PP_2^{W'}\sdash\PSW_w^{\sigma_2} \right\} \nonumber \\
&= \max \left\{ \PP_1^{W'}\sdash\psw( \sigma_1, G),  \PP_2^{W'}\sdash\psw( \sigma_2, G) \right\}. \label{eq:psw}
\end{align}
Here, the second equality is essential. It uses the important observation that if we only maximize over a consecutive subsequence of vertices in $\sigma$, we can just as well put the vertices to the left of this subsequence in the set $L$ and the ones to the right in the set $R$. We are now ready to prove the lemma.

($\leq$) Let $W' \in \mathcal{W}_k$ be arbitrary. Furthermore, let $\sigma_1 \in \Pi [W']$ be such that $\PP_1^{W'}\sdash\psw(\sigma_1, G) = \PP_1^{W'}\sdash\psw(G)$ and $\sigma_2 \in \Pi [W \setminus W']$ be such that $\PP_2^{W'}\sdash\psw(\sigma_2, G) = \PP_2^{W'}\sdash\psw (G)$. Both exist by definition of the partial scanwidth. We now define $\sigma = \sigma_1 \circ \sigma_2 \in \Pi[W]$. Using \cref{eq:psw}, we then have that
$\Pdash\psw (G) \leq \max \left\{ \PP_1^{W'}\sdash\psw(G),  \PP_2^{W'}\sdash\psw (G) \right\}$.
Because $W' \in \mathcal{W}_k$ was arbitrary, we obtain
$$\Pdash\psw (G) \leq \min_{W' \in \mathcal{W}_k}  \max \Big\{ \PP_1^{W'}\sdash\psw (G) , \PP_2^{W'}\sdash\psw (G)  \Big\}.$$

($\geq$) Let $\sigma \in \Pi[W]$ be such that $\Pdash\psw (G) = \Pdash\psw (\sigma, G)$, which exists by \cref{def:psw}. Now choose $W''$ to be the set of the first $k$ vertices of $\sigma$ (clearly $W'' \in \mathcal{W}_k$). We denote by $\sigma_1$ the ordering consisting of the first $k$ vertices of $\sigma$ (in the same order). Similarly, $\sigma_2$ denotes the $|W| - k$ other vertices (again keeping the order). Thus, $\sigma = \sigma_1 \circ \sigma_2$, with $\sigma_1 \in \Pi[W'']$ and $\sigma_2 \in \Pi[W \setminus W'']$. Then, again using \cref{eq:psw}, $\Pdash\psw (G) \geq\max \{ \PP_1^{W''}\sdash\psw(G),  \PP_2^{W''}\sdash\psw( G) \}$.
As $W''$ was an element of $\mathcal{W}_k$, we can minimize over all $W' \in\mathcal{W}_k$ to obtain
$$\Pdash\psw (G) \geq \min_{W' \in \mathcal{W}_k}  \max \Big\{ \PP_1^{W'}\sdash\psw (G) , \PP_2^{W'}\sdash\psw (G)  \Big\},$$ which proves the lemma.
\end{proof}

With this at first glance quite complicated recursive relation, we can formulate the relatively concise and elegant \cref{alg:recursion} that solves \textsc{Scanwidth}, by setting $k$ equal to half the size of the set $W$. In this way, we will be able to bound the number of 3-partitions that are considered. See again \cref{fig:recursive_alg} for an example of the recursion of the algorithm for a specific choice of~$W'$ in the for-loop.  Note that in a practical implementation of the algorithm (and also in the subsequent algorithm), we can replace the value $\infty$ with $|E| + 1$: the trivial upper bound of the scanwidth.

\begin{algorithm}[htb]
\caption{Recursive algorithm to solve \textsc{Scanwidth}.}
\label{alg:recursion}
\Input{Weakly connected DAG $G=(V,E)$.}
\Output{Scanwidth $\sw$ of $G$, optimal extension $\sigma_{\mathrm{opt}}$.}
\SetKwFunction{partsw}{PartialScanwidth}
$\sw, \sigma_{\mathrm{opt}} \gets \partsw( \emptyset, V, \emptyset)$ \\
\Return{$\sw, \sigma_{\mathrm{opt}}$}\\
\setcounter{AlgoLine}{0}
\nonl \Proc{$\partsw(L, W, R)$}{
\KwInit{$\psw \gets \infty$; $\sigma \gets ()$}\\
\If{$|W|=1$ with $W = \{w\}$}{
$\psw \gets |\{uv \in E: u \in R, v\connect{G[L\cup W]} w \}|$\\
$\sigma \gets (v)$
}
\ElseIf{$|W| > 1$}{
\For{$W' \subseteq W : |W'| = \left \lfloor{\frac{|W|}{2}}\right \rfloor \text{ and no arc in } E \text{ is directed from } W' \text{ to } W \setminus W'$}
{
$\psw_1', \sigma_1' \gets \partsw (L,W',R\cup (W\setminus W')) $ \label{line:psw1}\\
$\psw_2', \sigma_2' \gets \partsw (L\cup W',W\setminus W',R) $\label{line:psw2}\\
$\psw' \gets \max \{ \psw_1', \psw_2' \}$\\
\If{$\psw' < \psw$}{$\psw \gets \psw'$\\
$\sigma \gets \sigma_1' \circ \sigma_2'$
}
}
}
\Return{$\psw, \sigma$}}
\end{algorithm}

It turns out that \cref{alg:recursion} runs in $\tilde{O}(4^n)$ time, which is a major improvement over the earlier discussed brute force solution running in $\tilde{O} (n!)$ time. In the following theorem, we prove correctness of the algorithm and its time complexity.

\begin{theorem}
\label{thm:recursive_algorithm}
Let $G=(V,E)$ be a weakly connected DAG of $n$ vertices and $m$ arcs, then \cref{alg:recursion} solves \textsc{Scanwidth} in $\tilde{O} (4^n)$ time and polynomial space.
\end{theorem}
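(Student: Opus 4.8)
The plan is to prove correctness by induction on $|W|$ using \cref{lem:psw_recursion}, and then to bound the running time by a Gurevich--Shelah-style count of the recursion tree. For correctness, first note that the top-level call $\texttt{PartialScanwidth}(\emptyset, V, \emptyset)$ concerns the ordered $3$-partition $\PP = (\emptyset,V,\emptyset)$ (trivially a valid ordered $3$-partition with nonempty middle), and that for this $\PP$ one has $\Pdash\PSW_i^\sigma(G) = \SW_i^\sigma(G)$ for every $\sigma \in \Pi[V]$ and every position $i$, so $\Pdash\psw(G) = \sw(G)$ by \cref{def:scanwidth-ext}. It therefore suffices to show by induction on $|W|$ that, whenever $(L,W,R)$ is an ordered $3$-partition of $V$ with $W\neq\emptyset$, the call $\texttt{PartialScanwidth}(L,W,R)$ returns a pair $(\Pdash\psw(G),\sigma)$ with $\sigma\in\Pi[W]$ and $\Pdash\psw(\sigma,G)=\Pdash\psw(G)$. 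The base case $|W|=1$ is exactly \cref{lem:psw_recursion}(a), and the returned single-vertex extension of $G[W]$ trivially lies in $\Pi[W]$.

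For the inductive step with $|W|\geq 2$, set $k=\lfloor |W|/2\rfloor$, so $1\leq k<|W|$. The \textbf{for}-loop ranges precisely over $\mathcal{W}_k$; for each $W'\in\mathcal{W}_k$ the two recursive calls are made on the partitions $\PP_1^{W'}$ and $\PP_2^{W'}$, which are ordered $3$-partitions with nonempty middle (as observed in the proof of \cref{lem:psw_recursion}(b)) and strictly smaller window ($|W'|=k<|W|$ and $|W\setminus W'|=|W|-k<|W|$), so the induction hypothesis yields optimal extensions $\sigma_1'$ and $\sigma_2'$ for $\PP_1^{W'}$ and $\PP_2^{W'}$. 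By \cref{eq:psw}, $\sigma_1'\circ\sigma_2'\in\Pi[W]$ (a legal extension of $G[W]$ because no arc runs from $W'$ to $W\setminus W'$) and attains $\max\{\PP_1^{W'}\sdash\psw(G),\PP_2^{W'}\sdash\psw(G)\}$; \cref{lem:psw_recursion}(b) states that minimising this quantity over $W'\in\mathcal{W}_k$ equals $\Pdash\psw(G)$, and the algorithm performs exactly this minimisation while recording the associated extension.

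For the running time, the structural point is that the window is (at least) halved at each recursion level: a call with $|W|=s\geq 2$ spawns calls with $|W|=\lfloor s/2\rfloor$ and $|W|=\lceil s/2\rceil$, so the recursion depth is $O(\log n)$. A call with $|W|=s$ iterates over at most $\binom{s}{\lfloor s/2\rfloor}\leq 2^s$ candidate sets $W'$ and makes at most $2^{s+1}$ recursive calls. Letting $N_d$ denote the number of calls at depth $d$ (where the window size is at most $\lceil n/2^d\rceil$), we get $N_{d+1}\leq N_d\cdot 2\cdot 2^{\lceil n/2^d\rceil}$, hence $N_d \leq 2^{O(d)}\cdot 2^{\,n\sum_{j\geq 0}2^{-j}} \leq \mathrm{poly}(n)\cdot 4^n$, since the geometric series in the exponent sums to less than $2n$ and $d=O(\log n)$. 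Summing over the $O(\log n)$ levels, the total number of recursive calls is $\mathrm{poly}(n)\cdot 4^n$. Each call does only polynomial extra work: enumerating the sets $W'$, checking in $O(m)$ time for each that no arc leaves $W'$ into $W\setminus W'$, evaluating the base-case cardinality (which only needs polynomial-time weak-connectivity queries in $G[\sigma[1\ldots i]\cup L]$), and combining two returned values and extensions; moreover the enumeration cost at a node is of the same order as its number of children and is already accounted for. Hence the total time is $\mathrm{poly}(n)\cdot 4^n = \tilde{O}(4^n)$. For the space bound, the recursion stack has depth $O(\log n)$ and each frame stores $L,W,R$, the current $W'$, and the running best extension, each of size $O(n)$ (the \textbf{for}-loop keeps only one $W'$ and one optimum at a time, not all $\binom{|W|}{|W|/2}$ of them), so the algorithm uses $O(n\log n)$ space. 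I expect the main obstacle to be the time analysis: one must verify that enumerating the $\binom{|W|}{|W|/2}$ candidates at every node does not inflate the bound, and that the exponents accumulated over the $O(\log n)$ halving levels telescope to $2n$ rather than $n\log n$ --- it is exactly this telescoping that converts the naive per-level blow-up into the final $4^n$.
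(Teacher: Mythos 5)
Your proposal is correct and follows essentially the same route as the paper: correctness by induction on $|W|$ via \cref{lem:psw_recursion}, a halving recurrence for the running time, and the $O(\log n)$ recursion depth for the space bound. The only difference is presentational — the paper writes the recurrence $T(m,k) \leq \binom{k}{\lfloor k/2\rfloor}\bigl(T(m,\lfloor k/2\rfloor)+T(m,\lceil k/2\rceil)\bigr)+c\cdot m$ and cites \cite{bodlaender2012note} for its $\tilde{O}(4^k)$ solution, whereas you unroll the recursion tree level by level and verify the telescoping $\sum_j n/2^j = 2n$ explicitly, which is a valid (and self-contained) way to obtain the same bound.
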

\begin{proof}
\emph{Correctness:}
The correctness of the partial scanwidth that is returned by the subroutine (and consequently the correctness of the scanwidth of $G$) follows directly from \cref{lem:psw_recursion}, while the correctness of the corresponding extension is a direct consequence of the constructive nature of the proof of that lemma. The algorithm terminates, as each recursive call is made for a strictly smaller set $W$.

\emph{Time complexity:} The following analysis mimics the analysis of the recursive algorithm in \cite{bodlaender2012note}. Let $T(m, k)$ denote the time it takes to run the subroutine \texttt{PartialScanwidth} for a set $W$ with $|W|=k$, and with $m$ the number of arcs of $G$. If $k\geq 2$, then we loop over at most all subsets of size $\left \lfloor{k / 2}\right \rfloor $ of the set $W$. There are $\binom{k}{\left \lfloor{k / 2}\right \rfloor }$ such subsets. For each of these subsets, we have two recursive calls: one for a set of size $\left \lfloor{k / 2}\right \rfloor$ and one for a set of size $k - \left \lfloor{k / 2}\right \rfloor = \left \lceil{k / 2}\right \rceil$. Outside of the for-loop, we do some work in $O(m)$ time. Furthermore, each iteration of the for-loop can also be performed in $O(m)$ time per recursive call. Overall, there exists some constant $c\geq 0$, such that all these computations are bounded by $c \cdot m$. Thus, we obtain the following recurrence relation:
$$\begin{dcases}
  T(m,1) \leq c \cdot m,  & \text{ if } k = 1; \\
  T(m,k) \leq \binom{k}{\left \lfloor{k / 2}\right \rfloor } \big( T(m,\left \lfloor{k / 2}\right \rfloor ) + T(m,\left \lceil{k / 2}\right \rceil ) \big) + c \cdot m  , & \text{ if } k \geq 2.
\end{dcases}$$

As in \cite{bodlaender2012note}, it then follows that $T(m, k) = \tilde{O} (4^k)$. Since the algorithm runs in $T(m,n)$ time, we obtain the time complexity of $\tilde{O} (4^n)$.

\emph{Space complexity:} The recursion depth of the algorithm is $O ( \log n)$, due to the sets $W$ being split in half. Furthermore, within each recursive step, only polynomial space is used. Therefore, the complete algorithm uses polynomial space. (See also \cite{bodlaender2012exact}, where the same explanation is given for a specific case of the algorithm from \cite{bodlaender2012note}, applied to treewidth.)
\end{proof}

\subsection{Dynamic programming}
In this subsection, we employ a dynamic programming approach to efficiently solve \textsc{Scanwidth} in polynomial time when the value of the scanwidth is bounded by a constant. More formally, we try to solve the following fixed-parameter problem:

\begin{problem}
  \problemtitle{$k$-\textsc{Scanwidth}}
  \probleminput{Weakly connected DAG $G$.}
  \problemobjective{Find an extension $\sigma$ of $G$ with a scanwidth of at most $k$, if it exists. Otherwise, certify that the scanwidth of $G$ is larger than $k$.}
\end{problem}

Part of the algorithm we will present shows some similarities with a dynamic programming algorithm from \cite{bodlaender2012note}, whose authors adapt a classical technique by Held and Karp \cite{held1962dynamic} for the \textsc{Travelling Salesman Problem} to address general ordering problems on undirected graphs. Particularly, both our and their algorithm consider subsets (or equivalently, 2-partitions) of the vertex set of a graph at most once, and recursively decrease the size of the considered sets by one. We introduce some new machinery specific to the scanwidth to further bound the time complexity, and provide an interpretation of the time complexity in terms of roots of subgraphs of the graph.

Our new algorithm neatly fits within the framework of ordered partitions and the partial scanwidth from before. Specifically, we consider ordered 2-partitions instead of ordered 3-partitions. This leads to a restricted case of the partial scanwidth where $L$ (the set of vertices to the left of the set $W$) is empty.
\begin{definition}[Restricted partial scanwidth]
\label{def:rpsw}
Let $G$ be a weakly connected DAG and $\QQ = (W, R)$ an ordered 2-partition of $V$ such that $W \neq \emptyset$. The \emph{restricted partial scanwidth} of $G$ for $\QQ$ is 
$$\Qdash\rpsw (G) = \Pdash \psw (G),$$ where $\PP = (\emptyset, W, R)$. Similarly, we define $\Qdash\rpsw(\sigma, G) = \Pdash\psw(\sigma, G)$, and $\Qdash\RPSW_i^\sigma (G) = \Pdash\PSW_i^\sigma( G)$.
\end{definition}
Note that if $(W, R)$ is an ordered 2-partition, then $(\emptyset, W, R)$ is indeed an ordered 3-partition of a graph. Hence, in light of \cref{def:psw}, the notions from \cref{def:rpsw} are well-defined. Furthermore, we emphasize that a partition $(W, R)$ being an ordered 2-partition means nothing more than that $W$ is a sinkset of the graph, since arcs are not allowed to be oriented from $W$ to $R$ (see \cref{subfig:dp_alg1}).

\begin{figure}[htb]
     \centering
     \begin{subfigure}[b]{0.32\textwidth}
         \centering
         \begin{tikzpicture}[scale=0.45]
  \begin{pgfonlayer}{nodelayer}
    \node [style={graph_node}, label=right:{$a$}, fill=red!50] (dp1a) at (0, 0) {};
    \node [style={graph_node}, label=right:{$b$}, fill=red!50] (dp1b) at (2, 0) {};
    \node [style={graph_node}, label=right:{$c$}, fill=red!50] (dp1c) at (4, 0) {};
    \node [style={graph_node}, label=right:{$x$}, fill=red!50] (dp1x) at (0, 1.5) {};
    \node [style={graph_node}, label=right:{$y$}, fill=red!50] (dp1y) at (2, 1.5) {};
    \node [style={graph_node}, label=right:{$z$}, fill=red!50] (dp1z) at (4, 1.5) {};
    \node [style={graph_node}, label=right:{$u$}, fill=red!50] (dp1u) at (1, 3) {};
    \node [style={graph_node}, label=right:{$v$}, fill=blue!50] (dp1v) at (3, 3) {};
    \node [style={graph_node}, label=right:{$q$}, fill=blue!50] (dp1q) at (2, 4.5) {};
    \node [style={graph_node}, label=right:{$w$}, fill=red!50] (dp1w) at (5, 3) {};
    \node [style={graph_node}, label=right:{$\rho$}, fill=blue!50] (dp1rho) at (3, 6) {};

    % none nodes for W box (in order)
    \node [style=none] (38) at (4.5, -0.5) {};
    \node [style=none] (39) at (-0.5, -0.5) {};
    \node [style=none] (40) at (-0.5, 2) {};
    \node [style=none] (41) at (0.5, 3.5) {};
    \node [style=none] (42) at (1.5, 3.5) {};
    \node [style=none] (43) at (2.5, 2) {};
    \node [style=none] (44) at (3.5, 2) {};
    \node [style=none] (45) at (4.5, 3.5) {};
    \node [style=none] (46) at (5.5, 3.5) {};
    \node [style=none] (47) at (5.5, 2.5) {};
    \node [style=none] (48) at (4.5, 1.5) {};

    % special nodes for R box (in order)
		\node [style={special_node3}] (18) at (1.5, 5) {};
		\node [style={special_node3}] (19) at (2.5, 6.5) {};
		\node [style={special_node3}] (20) at (3.5, 6.5) {};
		\node [style={special_node3}] (22) at (3.5, 2.5) {};
		\node [style={special_node3}] (23) at (2.5, 2.5) {};
		\node [style={special_node3}] (24) at (1.5, 4) {};

    % special nodes for U1 cycle (in order)
    \node [style={special_node}] (dp1u125) at (-0.25, 1.75) {};
    \node [style={special_node}] (dp1u126) at (0.75, 3.25) {};
    \node [style={special_node}] (dp1u127) at (1.25, 3.25) {};
    \node [style={special_node}] (dp1u128) at (2.25, 1.75) {};
    \node [style={special_node}] (dp1u129) at (2.25, -0.25) {};
    \node [style={special_node}] (dp1u130) at (-0.25, -0.25) {};

    % special nodes for U2 cycle (in order)
    \node [style={special_node2}] (dp1u231) at (3.75, 1.75) {};
    \node [style={special_node2}] (dp1u232) at (4.25, 1.75) {};
    \node [style={special_node2}] (dp1u233) at (4.25, -0.25) {};
    \node [style={special_node2}] (dp1u234) at (3.75, -0.25) {};

    % special nodes for U2 cycle (in order)
    \node [style={special_node2}] (31) at (3.75, -0.25) {};
    \node [style={special_node2}] (32) at (4.25, -0.25) {};
    \node [style={special_node2}] (33) at (4.25, 1.5) {};
    \node [style={special_node2}] (34) at (5.25, 2.75) {};
    \node [style={special_node2}] (35) at (5.25, 3.25) {};
    \node [style={special_node2}] (36) at (4.75, 3.25) {};
    \node [style={special_node2}] (37) at (3.75, 1.5) {};

    \node [style=none, label=right:{\large{$W$}}] (dp1Wlabel) at (-1.5, 3.25) {};
    \node [style=none, label=right:{\large{$R$}}] (dp1Rlabel) at (0.75, 5.3) {};
    \node [style=none, label=below:{\normalsize{$U_1$}}] (dp1U1label) at (-1, 1.45) {};
    \node [style=none, label=below:{\normalsize{$U_2$}}] (dp1U2label) at (5.2, 1.45) {};
  \end{pgfonlayer}

  \begin{pgfonlayer}{edgelayer}
    \draw [style={graph_edge}] (dp1rho) to (dp1q);
    \draw [style={graph_edge}] (dp1q) to (dp1u);
    \draw [style={graph_edge}] (dp1u) to (dp1x);
    \draw [style={graph_edge}] (dp1x) to (dp1a);
    \draw [style={graph_edge}] (dp1q) to (dp1v);
    \draw [style={graph_edge}] (dp1u) to (dp1y);
    \draw [style={graph_edge}] (dp1v) to (dp1y);
    \draw [style={graph_edge}] (dp1y) to (dp1b);
    \draw [style={graph_edge}] (dp1v) to (dp1z);
    \draw [style={graph_edge}] (dp1z) to (dp1c);
    \draw [style={graph_edge}] (dp1rho) to (dp1w);
    \draw [style={graph_edge}] (dp1w) to (dp1z);
    \draw [style={graph_edge}, bend right=40] (dp1q) to (dp1x);
  \end{pgfonlayer}

  % W box
  \draw[style=set_box]
    ($(38.center)$) -- ($(39.center)$) -- ($(40.center)$) -- ($(41.center)$) -- ($(42.center)$) -- ($(43.center)$) -- ($(44.center)$) -- ($(45.center)$) -- ($(46.center)$) -- ($(47.center)$) -- ($(48.center)$) -- cycle;

  % R box (using special nodes, same box style)
  \draw[style=set_box]
    ($(18.center)$) -- ($(19.center)$) -- ($(20.center)$) -- ($(22.center)$) -- ($(23.center)$) -- ($(24.center)$) -- cycle;

  % U1 cycle polygon (inner)
  \draw[style=set_box_inner]
    ($(dp1u125.center)$) -- ($(dp1u126.center)$) -- ($(dp1u127.center)$) -- ($(dp1u128.center)$) -- ($(dp1u129.center)$) -- ($(dp1u130.center)$) -- cycle;

  % U2 cycle polygon (inner)
  \draw[style=set_box_inner]
    ($(31.center)$) -- ($(32.center)$) -- ($(33.center)$) -- ($(34.center)$) -- ($(35.center)$) -- ($(36.center)$) -- ($(37.center)$) -- cycle;

\end{tikzpicture}
         \caption{Ordered 2-partition of a DAG}
         \label{subfig:dp_alg1}
     \end{subfigure}
     \begin{subfigure}[b]{0.67\textwidth}
         \centering
         \begin{tikzpicture}[scale=0.45]
  \begin{pgfonlayer}{nodelayer}
    \node [style={graph_node}, label=right:{$a$}, fill=red!50] (dp1a) at (0, 0) {};
    \node [style={graph_node}, label=right:{$b$}, fill=red!50] (dp1b) at (2, 0) {};
    \node [style={graph_node}, label=right:{$c$}, fill=blue!50] (dp1c) at (4, 0) {};
    \node [style={graph_node}, label=right:{$x$}, fill=red!50] (dp1x) at (0, 1.5) {};
    \node [style={graph_node}, label=right:{$y$}, fill=red!50] (dp1y) at (2, 1.5) {};
    \node [style={graph_node}, label=right:{$z$}, fill=blue!50] (dp1z) at (4, 1.5) {};
    \node [style={graph_node}, label=right:{$u$}, fill=blue!50] (dp1u) at (1, 3) {};
    \node [style={graph_node}, label=right:{$v$}, fill=blue!50] (dp1v) at (3, 3) {};
    \node [style={graph_node}, label=right:{$q$}, fill=blue!50] (dp1q) at (2, 4.5) {};
    \node [style={graph_node}, label=right:{$w$}, fill=blue!50] (dp1w) at (5, 3) {};
    \node [style={graph_node}, label=right:{$\rho$}, fill=blue!50] (dp1rho) at (3, 6) {};

    \node [style=none, label=right:{\large{$V \setminus (U_1 \setminus \{u\}) $}}] (dp1Rlabel) at (-3.75, 5.3) {};
    \node [style=none, label=below:{\large{$U_1 \setminus \{u\}$}}] (dp1U1label) at (-2.2, 1.45) {};

    \node [style=none] (44) at (2.5, -0.5) {};
    \node [style=none] (45) at (2.5, 2) {};
    \node [style=none] (46) at (-0.5, 2) {};
    \node [style=none] (47) at (-0.5, -0.5) {};

    \node [style={special_node3}] (55) at (4.5, -0.5) {};
    \node [style={special_node3}] (56) at (4.5, 1.5) {};
    \node [style={special_node3}] (57) at (5.5, 2.5) {};
    \node [style={special_node3}] (58) at (5.5, 3.5) {};
    \node [style={special_node3}] (59) at (3.5, 6.5) {};
    \node [style={special_node3}] (60) at (2.5, 6.5) {};
    \node [style={special_node3}] (61) at (1.5, 5) {};
    \node [style={special_node3}] (62) at (0.5, 3.5) {};
    \node [style={special_node3}] (63) at (0.5, 2.5) {};
    \node [style={special_node3}] (64) at (2.5, 2.5) {};
    \node [style={special_node3}] (65) at (3.5, 2.5) {};
    \node [style={special_node3}] (66) at (3.5, -0.5) {};
  \end{pgfonlayer}

  \begin{pgfonlayer}{edgelayer}
    \draw [style={graph_edge}] (dp1rho) to (dp1q);
    \draw [style={graph_edge}] (dp1q) to (dp1u);
    \draw [style={graph_edge}] (dp1u) to (dp1x);
    \draw [style={graph_edge}] (dp1x) to (dp1a);
    \draw [style={graph_edge}] (dp1q) to (dp1v);
    \draw [style={graph_edge}] (dp1u) to (dp1y);
    \draw [style={graph_edge}] (dp1v) to (dp1y);
    \draw [style={graph_edge}] (dp1y) to (dp1b);
    \draw [style={graph_edge}] (dp1v) to (dp1z);
    \draw [style={graph_edge}] (dp1z) to (dp1c);
    \draw [style={graph_edge}] (dp1rho) to (dp1w);
    \draw [style={graph_edge}] (dp1w) to (dp1z);
    \draw [style={graph_edge}, bend right=40] (dp1q) to (dp1x);
  \end{pgfonlayer}

  \draw[style=set_box]
($(44.center)$) -- ($(45.center)$) -- ($(46.center)$) -- ($(47.center)$) -- cycle;
 
  \draw[style=set_box]
($(55.center)$) -- ($(56.center)$) -- ($(57.center)$) -- ($(58.center)$) -- ($(59.center)$) -- ($(60.center)$) -- ($(61.center)$) -- ($(62.center)$) -- ($(63.center)$) -- ($(64.center)$) -- ($(65.center)$) -- ($(66.center)$) -- cycle;

\end{tikzpicture}
         \quad
         \begin{tikzpicture}[scale=0.45]
  \begin{pgfonlayer}{nodelayer}
    \node [style={graph_node}, label=right:{$a$}, fill=blue!50] (dp1a) at (0, 0) {};
    \node [style={graph_node}, label=right:{$b$}, fill=blue!50] (dp1b) at (2, 0) {};
    \node [style={graph_node}, label=right:{$c$}, fill=red!50] (dp1c) at (4, 0) {};
    \node [style={graph_node}, label=right:{$x$}, fill=blue!50] (dp1x) at (0, 1.5) {};
    \node [style={graph_node}, label=right:{$y$}, fill=blue!50] (dp1y) at (2, 1.5) {};
    \node [style={graph_node}, label=right:{$z$}, fill=red!50] (dp1z) at (4, 1.5) {};
    \node [style={graph_node}, label=right:{$u$}, fill=blue!50] (dp1u) at (1, 3) {};
    \node [style={graph_node}, label=right:{$v$}, fill=blue!50] (dp1v) at (3, 3) {};
    \node [style={graph_node}, label=right:{$q$}, fill=blue!50] (dp1q) at (2, 4.5) {};
    \node [style={graph_node}, label=right:{$w$}, fill=blue!50] (dp1w) at (5, 3) {};
    \node [style={graph_node}, label=right:{$\rho$}, fill=blue!50] (dp1rho) at (3, 6) {};

    \node [style=none, label=right:{\large{$V \setminus (U_2 \setminus \{w\}) $}}] (dp1Rlabel) at (-3.75, 5.3) {};
    \node [style=none, label=below:{\large{$U_2 \setminus \{w\}$}}] (dp1U1label) at (6.5, 1.45) {};

    \node [style=none] (76) at (4.5, -0.5) {};
    \node [style=none] (77) at (4.5, 2) {};
    \node [style=none] (78) at (3.5, 2) {};
    \node [style=none] (79) at (3.5, -0.5) {};

    \node [style={special_node3}] (67) at (2.5, -0.5) {};
    \node [style={special_node3}] (68) at (-0.5, -0.5) {};
    \node [style={special_node3}] (69) at (-0.5, 2) {};
    \node [style={special_node3}] (70) at (2.5, 6.5) {};
    \node [style={special_node3}] (71) at (3.5, 6.5) {};
    \node [style={special_node3}] (72) at (5.5, 3.5) {};
    \node [style={special_node3}] (73) at (5.5, 2.5) {};
    \node [style={special_node3}] (74) at (3.5, 2.5) {};
    \node [style={special_node3}] (75) at (2.5, 1.5) {};
        
  \end{pgfonlayer}

  \begin{pgfonlayer}{edgelayer}
    \draw [style={graph_edge}] (dp1rho) to (dp1q);
    \draw [style={graph_edge}] (dp1q) to (dp1u);
    \draw [style={graph_edge}] (dp1u) to (dp1x);
    \draw [style={graph_edge}] (dp1x) to (dp1a);
    \draw [style={graph_edge}] (dp1q) to (dp1v);
    \draw [style={graph_edge}] (dp1u) to (dp1y);
    \draw [style={graph_edge}] (dp1v) to (dp1y);
    \draw [style={graph_edge}] (dp1y) to (dp1b);
    \draw [style={graph_edge}] (dp1v) to (dp1z);
    \draw [style={graph_edge}] (dp1z) to (dp1c);
    \draw [style={graph_edge}] (dp1rho) to (dp1w);
    \draw [style={graph_edge}] (dp1w) to (dp1z);
    \draw [style={graph_edge}, bend right=40] (dp1q) to (dp1x);
  \end{pgfonlayer}

  \draw[style=set_box]
($(76.center)$) -- ($(77.center)$) -- ($(78.center)$) -- ($(79.center)$) -- cycle;

  \draw[style=set_box]
($(67.center)$) -- ($(68.center)$) -- ($(69.center)$) -- ($(70.center)$) -- ($(71.center)$) -- ($(72.center)$) -- ($(73.center)$) -- ($(74.center)$) -- ($(75.center)$) -- cycle;

\end{tikzpicture}
         \caption{Two ordered 2-partition of the same DAG}
     \label{subfig:dp_alg2}
     \end{subfigure}
        \caption{Example illustrating an iteration of \cref{alg:dynamic_program_XP}. (a): The weakly connected DAG $G = (V,E)$ from \cref{subfig:ordered_partition_part} and an ordered 2-partition $\QQ = (W,R)$ of $V$. The set $W$ is split further into $U_1$ and $U_2$, with $G[U_1]$ and $G[U_2]$ forming the two connected components of $G[W]$. Both subgraphs have a single root: $u$ is the root of $G[U_1]$ and $w$ is the root of $G[U_2]$. (b): The two ordered 2-partitions $(U_1 \setminus \{u\}, V \setminus (U_1 \setminus \{u\}))$ and $(U_2 \setminus \{w\}, V \setminus (U_2 \setminus \{w\}))$ that appear when combining \cref{lem:rpsw_recursion}(b) and \cref{lem:rpsw_component_splitting}, which form the basis of \cref{alg:dynamic_program_XP}.}
        \label{fig:dp_alg}
\end{figure}

Similar to the partial scanwidth, the restricted partial scanwidth focuses only at a `window' of an extension. One only considers the vertices in a sinkset $W$ while considering the other vertices in the set $R$ to be right of $W$, disregarding the exact position these other vertices may have. 

A useful by-product of the connection to partial scanwidth is that the main recursion of the dynamic programming algorithm can be seen as a special case of \cref{lem:psw_recursion}. Another key recursion applies when the subgraph $G[W]$ is not weakly connected, allowing the problem to be split across the weakly connected components of $G[W]$. These ideas are formalized in \cref{lem:rpsw_recursion} and \cref{lem:rpsw_component_splitting}, respectively. See \cref{fig:dp_alg} for an illustration that combines the two recursive steps.

\begin{lemma}
\label{lem:rpsw_recursion}
Let $G=(V,E)$ be a weakly connected DAG and $\QQ = (W,R)$ an ordered 2-partition of $V$ with $W \neq \emptyset$.
\begin{enumerate}[label={(\alph*)},noitemsep,topsep=3pt]
\item If $|W| = 1$ (with $W = \{w \}$), then $\Qdash\rpsw (G) = \delta^{\mathrm{in}} (w).$

\item If $|W| \geq 2$ and $G[W]$ is weakly connected, then
$$\Qdash\rpsw (G) = \min_{\rho \in P (G[W])}  \max \Big\{ (W\setminus \{\rho \} ,R\cup \{\rho \} ) \ssdash \rpsw (G), \delta^{\mathrm{in}} (W)  \Big\},$$
where $P(G[W])$ is the set of roots of $G[W]$.
\end{enumerate}
\end{lemma}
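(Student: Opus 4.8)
The plan is to derive both parts as immediate specialisations of \cref{lem:psw_recursion}, using that by \cref{def:rpsw} the restricted partial scanwidth $\Qdash\rpsw(G)$ for $\QQ=(W,R)$ is exactly $\Pdash\psw(G)$ for $\PP=(\emptyset,W,R)$. Two elementary facts will be invoked repeatedly: (i) since $W$ is a sinkset, every arc of $G$ with exactly one endpoint in $W$ has its tail in $R=V\setminus W$ and its head in $W$, so there are $\delta^{\mathrm{in}}(W)$ of them; and (ii) if $G[W]$ is weakly connected, then $v\connect{G[W]}x$ for every $v,x\in W$.

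For part~(a) I would apply \cref{lem:psw_recursion}(a) with $L=\emptyset$, which gives $\Qdash\rpsw(G)=|\{uv\in E: u\in R,\ v\connect{G[W]}w\}|$. As $W=\{w\}$, the graph $G[W]$ is a single vertex, so $v\connect{G[W]}w$ forces $v=w$; combined with fact~(i) for $|W|=1$, the remaining set is precisely the set of arcs entering $w$, whence its size is $\delta^{\mathrm{in}}(w)$.

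For part~(b) I would apply \cref{lem:psw_recursion}(b) with $k=|W|-1$, which is legitimate since $|W|\geq 2$ gives $1\le k<|W|$. The first step is to identify $\mathcal{W}_{|W|-1}$: a subset $W'\subseteq W$ of size $|W|-1$ belongs to it exactly when the unique vertex $\rho$ of $W\setminus W'$ receives no arc from $W'$, i.e.\ when $\rho$ is a root of $G[W]$; this yields a bijection $W'\leftrightarrow\rho$ between $\mathcal{W}_{|W|-1}$ and $P(G[W])$, the latter being non-empty because $G[W]$ is a DAG. Fixing $\rho\in P(G[W])$ and $W'=W\setminus\{\rho\}$, I would then evaluate the two terms inside the maximum of \cref{lem:psw_recursion}(b). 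Since $\PP_1^{W'}=(\emptyset,W\setminus\{\rho\},R\cup\{\rho\})$, \cref{def:rpsw} gives $\PP_1^{W'}\sdash\psw(G)=(W\setminus\{\rho\},R\cup\{\rho\})\ssdash\rpsw(G)$. Since $\PP_2^{W'}=(W\setminus\{\rho\},\{\rho\},R)$ has a singleton middle set, \cref{lem:psw_recursion}(a) gives $\PP_2^{W'}\sdash\psw(G)=|\{uv\in E: u\in R,\ v\connect{G[W]}\rho\}|$, which by facts~(ii) and~(i) equals $\delta^{\mathrm{in}}(W)$. Substituting both evaluations into \cref{lem:psw_recursion}(b) and rewriting the minimum over $\mathcal{W}_{|W|-1}$ as a minimum over $\rho\in P(G[W])$ produces exactly the claimed identity.

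I do not expect a real obstacle here: the whole argument is a substitution into \cref{lem:psw_recursion}. The only steps needing care are the bijection between $\mathcal{W}_{|W|-1}$ and the roots of $G[W]$, and the two sinkset-plus-connectivity bookkeeping steps that collapse the $\connect{}$-restricted cut-sets to the plain indegrees $\delta^{\mathrm{in}}(w)$ and $\delta^{\mathrm{in}}(W)$; the weak-connectedness hypothesis on $G[W]$ in part~(b) enters at precisely one point, namely to ensure that every vertex of $W$ is weakly connected to $\rho$ inside $G[W]$.
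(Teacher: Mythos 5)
Your proposal is correct and follows essentially the same route as the paper's proof: both parts are obtained by specialising \cref{lem:psw_recursion} with $L=\emptyset$, part (b) via $k=|W|-1$, the identification of $\mathcal{W}_{|W|-1}$ with the roots of $G[W]$, and the collapse of the singleton-middle-set term to $\delta^{\mathrm{in}}(W)$ using the weak connectivity of $G[W]$. The bookkeeping steps you flag (the bijection with $P(G[W])$ and the reduction of the connectivity-restricted cut-sets to plain indegrees) are exactly the points the paper also singles out, and your treatment of them is sound.
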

\begin{proof}
\emph{(a):} First note that $R = V \setminus \{w\}$ in this case. Then from \cref{lem:psw_recursion}a and \cref{def:rpsw} we obtain $\Qdash \rpsw(G) = |\{uv \in E: u \in V \setminus \{w\}, v\connect{G[\emptyset \cup \{w\}]} w\}| = \delta^{\mathrm{in}} (w)$.

(b): Recall that in \cref{lem:psw_recursion}b, we defined the collection of sets $\mathcal{W}_k = \{W' \subseteq W: |W'| =k \text{ and no arc in } E \\ \text{ is directed from } W' \text{ to } W \setminus W'\}$ for ordered 3-partitions $(L, W, R)$ and some $k \in \{ 1, \ldots, |W|-1 \}$. We now use this lemma with $k = |W|-1$, to get
\begin{align*}
\Qdash\rpsw (G) 
&=\min_{W' \in \mathcal{W}_{|W|-1}}  \max \Big\{ (\emptyset,W',R\cup (W\setminus W'))\ssdash\psw(G), (\emptyset \cup W' ,W \setminus W',R) \ssdash\psw (G)   \Big\} \\
&= \min_{W'= W \setminus \{\rho \}: \rho \in P(G[W])}  \max \Big\{ (W',R\cup \{\rho \}) \ssdash\rpsw (G) , ( W' ,\{ \rho \},R) \ssdash \psw (G)  \Big\} \\
&= \min_{ \rho \in P(G[W])}  \max \Big\{ (W \setminus \{\rho \},R\cup \{\rho \})  \ssdash \rpsw (G), ( W \setminus \{\rho \} ,\{ \rho \},R) 
\ssdash \psw (G) \Big\}.
\end{align*}
The crucial observation in the above equality is that $\mathcal{W}_{|W| - 1}$ contains precisely the subsets of $W$ obtained by removing one vertex that is a root of $G[W]$.

Now using \cref{lem:psw_recursion}a, we immediately have that
$( W \setminus \{\rho \} ,\{ \rho \},R)  \ssdash \psw (G)  = |\{uv \in E(G): u \in R, v\connect{G[W]} \rho\}|$. This last expression equals $\deltain(W)$, since $G[W]$ is weakly connected.
\end{proof}

\begin{lemma}
\label{lem:rpsw_component_splitting}
Let $G=(V,E)$ be a weakly connected DAG and $\QQ = (W,R)$ an ordered 2-partition of $V$ with $W \neq \emptyset$. Then
$$\Qdash\rpsw (G) =  \max_{U_i \vartriangleleft W}  \Big\{ (U_i, V \setminus U_i )\ssdash\rpsw (G)  \Big\},$$ 
where $U_i \vartriangleleft W$ indicates that $G[U_i]$ is a weakly connected component of $G[W]$.
\end{lemma}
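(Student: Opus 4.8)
The plan is to prove the identity by relating the restricted partial scanwidth of $G$ for $\QQ = (W,R)$ to those of the components $U_i \vartriangleleft W$ by unwinding \cref{def:rpsw,def:psw}. Recall that $\Qdash\rpsw(G) = \min_{\sigma \in \Pi[W]} \max_{i \in W} |\Pdash\PSW_i^\sigma|$ where $\PP = (\emptyset, W, R)$, and $(U_i, V \setminus U_i)\ssdash\rpsw(G) = \min_{\tau_i \in \Pi[U_i]} \max_{j \in U_i} |\PSW_j^{\tau_i}|$ where the corresponding ordered 3-partition is $(\emptyset, U_i, V \setminus U_i)$. So both sides are min-max expressions over (collections of) extensions of subgraphs, and the key is to compare how a given extension of $G[W]$ interacts with its restriction to each component.

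The core observation I would first isolate as a claim: for any $\sigma \in \Pi[W]$ and any vertex $w \in W$ lying in component $U_i$, writing $\tau_i = \sigma[U_i]$ for the restriction of $\sigma$ to $U_i$, we have $\Pdash\PSW_w^\sigma(G) = \PSW_w^{\tau_i}(G)$ where the latter uses the ordered 3-partition $(\emptyset, U_i, V\setminus U_i)$. This is because the arcs counted at $w$ in $\Pdash\PSW_w^\sigma$ are those $uv \in E$ with $v \connect{G[\sigma[1\ldots w]]} w$ and $u \in \sigma[w+1\ldots] \cup R$. Since $U_i$ is a connected component of $G[W]$, there are no arcs between distinct components of $G[W]$, so any vertex $v$ with $v \connect{G[\sigma[1\ldots w]]} w$ must lie in $U_i$, and moreover its connectivity to $w$ within $G[\sigma[1\ldots w]]$ is the same as within $G[\tau_i[1\ldots w]]$ (the vertices of other components added before $w$ are irrelevant, being in different components of $G[W] \supseteq G[\sigma[1\ldots w]]$). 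Then the tail $u$ either lies in $U_i$ to the right of $w$, or lies outside $U_i$; in the latter case $u \in V\setminus U_i$ by construction, matching the $R$-part of the 3-partition $(\emptyset, U_i, V\setminus U_i)$. Hence the two arc sets coincide. A symmetric remark handles which tails $u$ outside $W$ are counted.

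Given this claim, the two inequalities follow cleanly. For $(\geq)$: take an optimal $\sigma \in \Pi[W]$ achieving $\Qdash\rpsw(G)$. For each component $U_i$, the restriction $\tau_i = \sigma[U_i] \in \Pi[U_i]$ satisfies $\max_{j\in U_i}|\PSW_j^{\tau_i}| = \max_{j\in U_i}|\Pdash\PSW_j^\sigma| \leq \max_{w\in W}|\Pdash\PSW_w^\sigma| = \Qdash\rpsw(G)$, and since $(U_i,V\setminus U_i)\ssdash\rpsw(G) \leq \max_{j\in U_i}|\PSW_j^{\tau_i}|$, we get the bound for every $i$, hence for the max. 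For $(\leq)$: for each component pick an optimal $\tau_i \in \Pi[U_i]$ achieving $(U_i,V\setminus U_i)\ssdash\rpsw(G)$. Concatenate the $\tau_i$ in any order consistent with the partial order of $G$ restricted to $W$ (such an order exists because the $U_i$ are the components of $G[W]$, a DAG, so one can order the components respecting all arcs — indeed arbitrarily, since there are no arcs between components) to obtain $\sigma \in \Pi[W]$. By the claim, $\max_{w\in W}|\Pdash\PSW_w^\sigma| = \max_i \max_{j\in U_i}|\PSW_j^{\tau_i}| = \max_i (U_i,V\setminus U_i)\ssdash\rpsw(G)$, so $\Qdash\rpsw(G) \leq \max_i (U_i,V\setminus U_i)\ssdash\rpsw(G)$. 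Combining the two directions gives the identity.

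The main obstacle I anticipate is making the claim $\Pdash\PSW_w^\sigma(G) = \PSW_w^{\tau_i}(G)$ fully rigorous, in particular the statement that weak connectivity of $v$ to $w$ inside $G[\sigma[1\ldots w]]$ is unaffected by whether vertices of other components of $G[W]$ appear before $w$ in $\sigma$. This needs the fact that $G[\sigma[1\ldots w]]$ is an induced subgraph of $G[W]$, so its weakly connected components refine those of $G[W]$; the component of $w$ in $G[\sigma[1\ldots w]]$ therefore lies entirely inside $U_i$ and equals the component of $w$ in $G[\tau_i[1\ldots w]]$. This is a standard but slightly fiddly argument about induced subgraphs and connectivity that should be spelled out carefully. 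Everything else — the concatenation respecting the DAG order, the min-max bookkeeping — is routine.
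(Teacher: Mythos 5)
Your proof is correct, and it takes a route that differs from the paper's in one substantive way. The paper proves its key identity (its Equation~(3)) only for extensions of the concatenated form $\sigma = \sigma_1 \circ \dots \circ \sigma_r$ with $\sigma_i \in \Pi[U_i]$; this suffices for the ($\leq$) direction, but for ($\geq$) the paper must first convert an arbitrary optimal $\sigma \in \Pi[W]$ into concatenated form via a swapping claim (consecutive vertices from different components can be exchanged without changing the restricted partial scanwidth), whose proof is deferred to the thesis as technical. You instead prove the per-vertex identity $\Pdash\PSW_w^\sigma(G) = \PSW_w^{\tau_i}(G)$ for an \emph{arbitrary} $\sigma \in \Pi[W]$ and its restriction $\tau_i = \sigma[U_i]$, which makes the ($\geq$) direction immediate and dispenses with the swapping lemma entirely. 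Your claim is valid: the component of $w$ in $G[\sigma[1\ldots w]]$ refines that of $G[W]$ and hence lies in $U_i$, so it coincides with the component of $w$ in $G[\tau_i[1\ldots w]]$; and the apparent discrepancy in the tail sets ($V \setminus \sigma[1\ldots w]$ versus $V \setminus (\sigma[1\ldots w]\cap U_i)$) is harmless because any extra tail would have to sit in another component of $G[W]$ with its head in $U_i$, and no such arc exists. The underlying insight — components of $G[W]$ do not interact — is the same in both arguments, but your formulation localizes it more cleanly and is arguably the tidier proof; the price is that the connectivity-refinement step must be spelled out carefully, which you correctly flag. Do also record, as the paper does at the outset, that each $U_i$ is a sinkset so that $(U_i, V\setminus U_i)$ is a legitimate ordered 2-partition and the right-hand side is well defined.
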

\begin{proof}
Because $W$ must be a sinkset, each $U_i \vartriangleleft W$ must also be a sinkset. Therefore, $(U_i, V \setminus U_i)$ is indeed an ordered 2-partition of $V$, for each $U_i \vartriangleleft W$.

First, we prove a critical equality. Let $r$ be the number of weakly connected components of $G[W]$. For each $i \in \{1, \ldots , r  \}$, let $\sigma_i \in \Pi[U_i]$ be arbitrary, and define $\sigma_1 \circ \dots \circ \sigma_r = \sigma \in \Pi[W]$. This is indeed an extension because the different $\sigma_i$ are not weakly connected in $G[W]$. Using \cref{def:rpsw} and denoting $|\Qdash \RPSW_w^\sigma (G)|$ by $\rpsw_w^\sigma (\QQ)$ for ease of notation (and similar for $\PSW$), we have:
\begin{align}
\Qdash\rpsw(\sigma, G) = \max_{w \in W} \rpsw_w^\sigma (\QQ) 
&= \max_{U_i \vartriangleleft W}  \left\{ \max_{w \in U_i} \psw_w^\sigma ((\emptyset, W,R)) \right\} \nonumber \\
&= \max_{U_i \vartriangleleft W}  \left\{ \max_{w \in U_i} \psw_w^{\sigma_i} \left(\left(\bigcup\nolimits_{1\leq j < i} U_j, U_i,R \cup \bigcup\nolimits_{j>i} U_j \right)\right) \right\} \nonumber  \\
&= \max_{U_i \vartriangleleft W}  \left\{ \max_{w \in U_i} \psw_w^{\sigma_i} \left(\left(\emptyset, U_i,R \cup \bigcup\nolimits_{j\neq i} U_j \right)\right) \right\} \nonumber  \\
&= \max_{U_i \vartriangleleft W}  \Big\{ (U_i, V \setminus U_i )\ssdash\rpsw (\sigma_i, G)  \Big\}.
\label{eq:component}
\end{align}
The third equality is similar to the third equality of \cref{eq:psw} that appears in the proof of \cref{lem:psw_recursion}b. It uses the observation that if we only maximize over vertices that form a consecutive subsequence of the extension $\sigma$, we can just as well put the vertices to the left of this subsequence in the set $L$, and the ones to the right in the set $R$. For the fourth equality we then use that in $G[W]$, the vertices of $\bigcup_{1\leq j < i} U_j$ are not weakly connected to those in $U_i$. Thus, we can put them in the $R$-set, without changing the restricted partial scanwidth. We are now ready to prove the lemma.

($\leq$) For all $U_i \vartriangleleft W$, we let $\sigma_i \in \Pi[U_i]$ be an optimal extension. In other words, $(U_i,V \setminus U_i )\ssdash\rpsw (G) = (U_i,V \setminus U_i )\ssdash\rpsw (\sigma_i, G)$. Now let $\sigma = \sigma_1 \circ \dots \circ \sigma_r \in \Pi[W]$. Using \cref{eq:component}, we get that
$\Qdash \rpsw (G) \leq \max_{U_i \vartriangleleft W} \left\{  (U_i,V \setminus U_i) \ssdash\rpsw (G) \right\}$.

($\geq$) We first present a claim. As the claim is quite intuitive, we refer to \cite{holtgrefe2023scanwidth} for the rather technical proof.

\emph{Claim:} Let $\sigma \in \Pi[W]$ be such that for some $k \in \{1, \ldots, |W|-1 \}$, $\sigma(k)$ and $\sigma(k+1)$ are not weakly connected in $G[W]$. Let $\pi$ be obtained from $\sigma$ by swapping $\sigma(k)$ and $\sigma (k+1)$. Then, $\pi \in \Pi[W]$ and $\Qdash \rpsw(\sigma, G) = \Qdash \rpsw(\pi, G)$.

Let $\sigma \in \Pi[W]$ be such that $\Pdash\rpsw (G) = \Pdash\rpsw(\sigma, G)$, which exists by definition. We can also assume that $\sigma = \sigma_1 \circ \dots \circ \sigma_r$, where for each $i$ we have $\sigma_i \in \Pi[U_i]$. Such an extension exists since we can keep swapping consecutive vertices from different $U_i$ until this condition holds, if $\sigma$ does not have this property. By the claim, this is also an extension, and it will give the same restricted partial scanwidth. From \cref{eq:component} it then quickly follows that
$\Qdash \rpsw (G) \geq \max_{U_i \vartriangleleft W} \left\{  (U_i,V \setminus U_i) \ssdash\rpsw (G) \right\}$.
\end{proof}

The previous two lemmas now result in the following corollary, which forms the core of our dynamic programming algorithm. In particular, part (a) is an immediate consequence of \cref{lem:rpsw_component_splitting}, while parts (b) and (c) follow from \cref{lem:rpsw_recursion}.

\begin{corollary}
\label{cor:rpsw_larger_k}
Let $G=(V,E)$ be a weakly connected DAG, $k\geq 1$ an integer and $\QQ = (W,R)$ an ordered 2-partition of $V$ such that $W \neq \emptyset$. Then, 
\begin{enumerate}[label={(\alph*)},noitemsep,topsep=3pt]
\item $\Qdash \rpsw (G) >k$ if and only if $(U_i, V \setminus U_i )\ssdash\rpsw( G) >k$ for some component $G[U_i]$ of $G[W]$.

\item For $|W|=1$, $\Qdash\rpsw (G) > k$ if and only if $\deltain (W) > k$.

\item For $|W|\geq 2$ such that $G[W]$ is weakly connected, $\Qdash\rpsw (G) > k$ if and only if $\deltain (W) > k$ or $(W\setminus \{\rho \} ,R\cup \{\rho \} ) \ssdash\rpsw (G) >k$ for all roots $\rho$ of $G[W]$.
\end{enumerate}
\end{corollary}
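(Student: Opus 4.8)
The plan is to derive \cref{cor:rpsw_larger_k} directly from the two preceding lemmas by taking contrapositives of the min-max characterizations, so almost no new work is needed. For part (a), I would start from \cref{lem:rpsw_component_splitting}, which states that $\Qdash\rpsw(G) = \max_{U_i \vartriangleleft W} (U_i, V\setminus U_i)\ssdash\rpsw(G)$. Since a maximum over a finite collection exceeds $k$ if and only if at least one element of the collection exceeds $k$, this immediately yields: $\Qdash\rpsw(G) > k$ iff $(U_i, V\setminus U_i)\ssdash\rpsw(G) > k$ for some component $G[U_i]$ of $G[W]$. The only small thing to note is that the components $U_i$ do indeed give ordered 2-partitions $(U_i, V\setminus U_i)$, which is already observed in the first line of the proof of \cref{lem:rpsw_component_splitting} (each $U_i$ is a sinkset because $W$ is).

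For part (b), I would invoke \cref{lem:rpsw_recursion}a, which says that when $|W| = 1$ with $W = \{w\}$, we have $\Qdash\rpsw(G) = \deltain(w) = \deltain(W)$. Since the two quantities are literally equal, $\Qdash\rpsw(G) > k$ iff $\deltain(W) > k$, and there is nothing further to prove. For part (c), I would use \cref{lem:rpsw_recursion}b: when $|W|\geq 2$ and $G[W]$ is weakly connected,
$$\Qdash\rpsw(G) = \min_{\rho \in P(G[W])} \max\Big\{ (W\setminus\{\rho\}, R\cup\{\rho\})\ssdash\rpsw(G),\ \deltain(W) \Big\}.$$
Taking the contrapositive of ``$\leq k$'': a minimum over the nonempty set $P(G[W])$ of roots is $> k$ if and only if every term in the minimization is $> k$, and each such term, being a max of two values, exceeds $k$ iff $\deltain(W) > k$ or $(W\setminus\{\rho\}, R\cup\{\rho\})\ssdash\rpsw(G) > k$. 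Factoring out the root-independent disjunct $\deltain(W) > k$, this is equivalent to: $\deltain(W) > k$, or for every root $\rho$ of $G[W]$ we have $(W\setminus\{\rho\}, R\cup\{\rho\})\ssdash\rpsw(G) > k$. One should also remark that $P(G[W])$ is nonempty because $G[W]$ is a finite nonempty DAG, so the minimum is well-defined, and that $(W\setminus\{\rho\}, R\cup\{\rho\})$ is an ordered 2-partition because removing a root of $G[W]$ from the sinkset $W$ leaves a sinkset.

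There is essentially no main obstacle here: the corollary is a bookkeeping consequence of the quantifier structure (``$\max > k$ iff some term $> k$'', ``$\min > k$ iff all terms $> k$'') applied to \cref{lem:rpsw_component_splitting} and \cref{lem:rpsw_recursion}. The only point requiring the slightest care is making sure the sub-instances referenced in each part are genuinely ordered 2-partitions so that $\rpsw$ is defined for them — but this is already handled in the cited lemmas. I would therefore write the proof as three short paragraphs, one per part, each a single sentence or two citing the relevant lemma and performing the contrapositive.
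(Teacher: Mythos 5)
Your proposal is correct and matches the paper exactly: the paper likewise obtains part (a) as an immediate consequence of \cref{lem:rpsw_component_splitting} and parts (b) and (c) from \cref{lem:rpsw_recursion}, via precisely the quantifier bookkeeping you describe. The paper in fact states this without spelling out the contrapositive details, so your version is simply a more explicit rendering of the same argument.
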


With this corollary in mind, we can set up a dynamic programming procedure, described in \cref{alg:dynamic_program_XP}. The algorithm involves the use of a `table', denoted by $T$, to store previously calculated results. At the cost of more space, the recursive procedure can first check if a result is already known, thereby saving time. See \cref{fig:dp_alg} for an example where $G[W]$ consists of two weakly connected components, each with a single root, so the algorithm recurses on two ordered 2-partitions (assuming $k$ is not too small, which would prevent the recursive calls).

\begin{algorithm}[htb]
\caption{Dynamic programming algorithm to solve $k$-\textsc{Scanwidth}.}
\label{alg:dynamic_program_XP}
\Input{Weakly connected DAG $G=(V,E)$, integer $k\geq 1$. }
\Output{If $\sw(G) \leq k$: scanwidth $\sw$ of $G$ and an optimal extension $\sigma_{\mathrm{opt}}$. If $\sw(G) > k$: $\infty$ and an incomplete extension.}
\SetKwFunction{dresswcs}{R-PartialScanwidth}
$T \gets$ empty table to tabulate results, indexed by all 2-partitions of $V$ \\
$\sw, \sigma_{\mathrm{opt}} \gets \dresswcs(  V, \emptyset, k)$ \label{algline:initial} \\
\Return{$\sw, \sigma_{\mathrm{opt}}$}\\
\setcounter{AlgoLine}{0}
\nonl \Proc{$\dresswcs(W, R, k)$}{
\If(\tcp*[f]{Look up result in global table, if available}){$T(W, R) $ exists}{\Return $T(W,R)$}
\KwInit{$\rpsw \gets \infty$ ; $\sigma \gets ()$}\\
\For{weakly connected component $G[U_i]$ of $G[W]$, $(i = 1, \dots, r)$ \label{algline:for}}{
\KwInit{$\rpsw_i \gets \infty$ ; $\sigma_i \gets ()$}\\
\If{$|U_i|=1$ with $U_i = \{v\} \mathrm{\normalfont{\textbf{ and }}}  \delta^{\mathrm{in}} (U_i) \leq k$}{
$\rpsw_i \gets \delta^{\mathrm{in}} (v)$\\
$\sigma_i \gets (v)$
}
\ElseIf{$|U_i|>1 \mathrm{\normalfont{\textbf{ and }}}  \delta^{\mathrm{in}} (U_i) \leq k$}{
\For{root $\rho$ of $G[U_i]$}
{
$\rpsw_1', \sigma_1' \gets \dresswcs (U_i \setminus \{\rho \},V \setminus (U_i \setminus \{ \rho \} ), k) $\\
$\rpsw' \gets \max \{ \rpsw_1', \delta^{\mathrm{in}} (U_i) \}$\\
\If{$\rpsw' < \rpsw_i$}{$\rpsw_i \gets \rpsw'$\\
$\sigma_i \gets \sigma_1' \circ (\rho)$
}
}
}
}
$\rpsw \gets \max \{ \rpsw_i: i = 1, \dots, r \}$ \label{algline:max}\\
$\sigma \gets \sigma_1 \circ \dots \circ \sigma_r$ \\
$T(W,R) \gets \rpsw, \sigma$ \tcp*{Store result in global table}
\Return{$\rpsw, \sigma$}}
\end{algorithm}

Before \cref{thm:dynamic_programXP} formally states correctness of the algorithm, we need two lemmas that help to further bound the number of considered sets in the algorithm and consequently its time complexity. 
  
An \emph{antichain} is a subset of a (partially) ordered set, in which each pair of elements is incomparable to each other. In our context, the roots of a sinkset form an antichain when considering the natural order of a DAG. This is proved in the next lemma. Recall that we write $W \sqsubseteq U$ for any $U \subseteq V$ if both $W \subseteq U$ and $W$ is a sinkset.

\begin{lemma}
\label{lem:sinksets-antichains}
Let $G=(V,E)$ be a weakly connected DAG. Then for all $W \sqsubseteq V$, the roots of $G[W]$ form an antichain with respect to the partial order $<_G$. Moreover, there is a one-to-one correspondence between the sets $W\sqsubseteq V$ and the antichains of the partial order $<_G$, defined by the roots of $G[W]$.
\end{lemma}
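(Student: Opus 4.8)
The statement has two parts: first, that for any sinkset $W \sqsubseteq V$ the roots of $G[W]$ form an antichain under $<_G$; second, that the map $W \mapsto \{\text{roots of } G[W]\}$ is a bijection between sinksets of $G$ and antichains of $(V, <_G)$.

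\emph{Part 1 (roots form an antichain).} The plan is a direct argument by contradiction. Suppose $\rho_1$ and $\rho_2$ are distinct roots of $G[W]$ that are comparable in $<_G$, say $\rho_1 <_G \rho_2$, so there is a directed path $P$ from $\rho_2$ to $\rho_1$ in $G$. Since $W$ is a sinkset, no arc leaves $W$; as $\rho_2 \in W$, the first arc of $P$ stays inside $W$, and inductively every vertex of $P$ lies in $W$. Hence $P$ is a directed path from $\rho_2$ to $\rho_1$ inside $G[W]$, which means $\rho_1$ has positive indegree in $G[W]$, contradicting that $\rho_1$ is a root of $G[W]$. The key observation here is just the sinkset-closure property: directed paths starting in a sinkset remain in the sinkset.

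\emph{Part 2 (bijection).} I would split this into injectivity and surjectivity of the map $\varphi \colon W \mapsto P(G[W])$, where $P(G[W])$ denotes the root set. For \textbf{injectivity}, note that a sinkset $W$ can be recovered from its roots: $W$ is exactly the set of vertices $v$ such that $v \leq_G \rho$ for some $\rho \in P(G[W])$ (every vertex of $G[W]$ is reachable from some root of $G[W]$ since $G[W]$ is acyclic; conversely such vertices lie in $W$ by the sinkset-closure property again). So if $\varphi(W_1) = \varphi(W_2)$ then $W_1 = W_2$. For \textbf{surjectivity}, given an arbitrary antichain $A \subseteq V$, I would define $W_A = \{ v \in V : v \leq_G a \text{ for some } a \in A\}$ (the down-closure of $A$ under $<_G$). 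Then $W_A$ is a sinkset: if $vw \in E$ with $v \in W_A$, pick $a \in A$ with $v \leq_G a$, so $w <_G v \leq_G a$, giving $w \in W_A$. It remains to check $P(G[W_A]) = A$. One inclusion: each $a \in A$ is a root of $G[W_A]$ — if $a$ had a parent $u \in W_A$, then $u \leq_G a'$ for some $a' \in A$ and $a <_G u \leq_G a'$, forcing $a <_G a'$ with $a, a' \in A$, contradicting that $A$ is an antichain (note $a \neq a'$ since $a <_G a'$). Other inclusion: if $v$ is a root of $G[W_A]$, then $v \in W_A$ so $v \leq_G a$ for some $a \in A$; if $v \neq a$, the penultimate vertex on a directed path from $a$ to $v$ would be a parent of $v$ lying in $W_A$ (by closure), contradicting that $v$ is a root of $G[W_A]$; hence $v = a \in A$.

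\textbf{Main obstacle.} The argument is essentially bookkeeping with the sinkset-closure property and the definition of $<_G$; there is no deep step. The part that most needs care is surjectivity — specifically verifying that the down-closure $W_A$ has precisely $A$ as its root set, where one must use the antichain property to rule out a root of $G[W_A]$ dominating another element of $A$. I would also double-check the edge case $W = \emptyset$ (corresponding to the empty antichain) so the correspondence is genuinely with \emph{all} antichains including the empty one; since the excerpt writes $W \sqsubseteq V$ which permits $W = \emptyset$, this is consistent.
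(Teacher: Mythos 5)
Your proof is correct and follows essentially the same route as the paper: the antichain property is established by the same sinkset-closure contradiction (a comparable pair of roots forces a parent of the lower root into $W$), and the bijection is the same map $W \mapsto P(G[W])$. The only difference is that the paper merely asserts injectivity and surjectivity, deferring the details to \cite{holtgrefe2023scanwidth}, whereas you supply the standard down-closure argument in full; your version, including the remark on the empty antichain, is sound.
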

\begin{proof}
We will start with the first statement. Assume towards a contradiction that the roots of $G[W]$ are not an antichain. If we let $P(G[W])$ be the set of roots of $G[W]$, we must then have that for some $\rho_1, \rho_2 \in P(G[W])$ it holds that $\rho_1 <_G \rho_2$. Now let $v \in V$ be such that $\rho_1 <_G v \leq_G \rho_2$ and $(v, \rho_1) \in E$. Such a vertex exists, as otherwise $\rho_1$ and $\rho_2$ would not be comparable. But as $v \leq_G \rho_2$ and $W$ is a sinkset, $v$ must be in $W$. Thus, $\rho_1$ can not be a root of $G[W]$: a contradiction. This proves the first statement.

Let us denote the set of all $W \sqsubseteq V$ by $\mathcal{V}$ and the set of all vertex-antichains with respect to $<_G$ by $\mathcal{A}$. We can now define a function $f$ $: \mathcal{V} \rightarrow \mathcal{A}$ by $f(W) = P(G[W])$ for all $W \in \mathcal{V}$. By the first statement, $f$ indeed maps all sets $W$ into $\mathcal{A}$. It can easily be shown that $f$ is both surjective and injective (see \cite{holtgrefe2023scanwidth} for the full proof), thus proving that $f$ is a bijection from the sinksets to the vertex-antichains of $G$, as desired.
\end{proof}

With the characterization of the sinksets of a DAG by means of their roots, we can bound the number of sinksets that have bounded indegree. 
\begin{lemma}
\label{lem:nr_sinksets_bound}
Let $G=(V,E)$ be a weakly connected DAG of $n$ vertices and $r$ roots, and let $k\geq 1$ be an integer. Then, the number of sets $W \sqsubseteq V$ such that $\deltain (W) \leq k$, is bounded from above by $n^{k+r-1}$.
\end{lemma}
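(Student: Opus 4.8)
The plan is to use the bijection from \cref{lem:sinksets-antichains} between sinksets $W \sqsubseteq V$ and vertex-antichains of $(V, <_G)$, together with the fact that $\deltain(W)$ counts the arcs crossing into $W$. First I would fix a sinkset $W$ with $\deltain(W) \le k$ and let $\rho_1, \dots, \rho_s$ be the roots of $G[W]$, which by \cref{lem:sinksets-antichains} form an antichain and uniquely determine $W$. So it suffices to bound the number of antichains $\{\rho_1, \dots, \rho_s\}$ that arise from sinksets of indegree at most $k$; that is, to bound the number of \emph{allowed} antichains. The key observation to establish is that such an antichain is essentially determined by a small amount of data: each $\rho_i$ is either a root of $G$, or it has at least one incoming arc in $G$, and every such incoming arc $uv$ with $v = \rho_i$ must be one of the $\le k$ arcs counted by $\deltain(W)$ (since $u \notin W$: if $u$ were in $W$ then $u$ would have a directed path down to $\rho_i$, contradicting $\rho_i$ being a root of $G[W]$, because $W$ is a sinkset so the whole path lies in $W$). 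Conversely, a non-root vertex $v$ of $G$ is a root of $G[W]$ only if \emph{all} of its parents lie outside $W$; in particular at least one arc into $v$ is a crossing arc.

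The core counting step is then: distinguish the roots of $G[W]$ into those that are roots of $G$ (there are at most $r$ of them, and each either belongs to the antichain or not, giving at most $2^r$ — but we can do better, see below) and those that are non-roots of $G$. For each non-root $\rho_i \in W$, pick one incoming arc $u_i \rho_i$; this arc is one of the at most $k$ crossing arcs of $W$, and distinct $\rho_i$'s give distinct arcs (they have different heads). So the number of non-root-of-$G$ roots of $G[W]$ is at most $k$, and the set of such roots is determined by choosing which $\le k$ of the (at most... this is where I have to be careful) crossing arcs are the "chosen" ones. I would instead argue more directly: an allowed antichain consists of some subset of the at most $r$ roots of $G$ together with at most $k$ non-root vertices, each of the latter "charged" to a distinct crossing arc. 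The cleanest route is to show the antichain is determined by a function from a set of size at most $k+r-1$ into $V$ (or a set of at most $k+r-1$ vertices), yielding the bound $n^{k+r-1}$. I would likely induct on $k$ or on $r$, or set up the charging so that the total number of "free choices of a vertex in $V$" is $k+r-1$: roughly, $r-1$ choices to pick the relevant roots of $G$ plus up to $k$ choices (one vertex per crossing arc) — with one saved because the graph is weakly connected and has at least one root, so not all of $r$ roots and $k$ crossing arcs can be "used up" independently. Tracking exactly why it is $k+r-1$ rather than $k+r$ is the delicate bookkeeping point.

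The main obstacle I expect is precisely pinning down the "$-1$" in the exponent and making the charging injective and tight. A loose argument gives $n^{k+r}$ or $2^r \cdot n^k$ easily; squeezing to $n^{k+r-1}$ requires exploiting that $W$ is nonempty and the graph weakly connected (so the root-block structure from \cref{thm:split_sblocks}/\cref{cor:split_blocks} couples the roots), or alternatively choosing the encoding so that one of the would-be independent choices is forced. A natural way to make this rigorous: observe $\deltain(W) = \sum_{\rho \in P(G[W])} \deltain_G(\rho) - (\text{arcs within } W \text{ into some } \rho)$, but since $\rho$ is a root of $G[W]$ there are no such internal arcs, so $\deltain(W) = \sum_{\rho \in P(G[W])} \deltain_G(\rho)$. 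Hence $\sum_{\rho \in P(G[W])} \deltain_G(\rho) \le k$. Now split $P(G[W]) = P_0 \cup P_1$ where $P_0 = P(G[W]) \cap P(G)$ (roots of $G$, contributing $\deltain_G = 0$) and $P_1$ the rest (each contributing $\ge 1$), so $|P_1| \le k$. Then $W$ is encoded by the pair $(P_0, P_1)$: there are at most $2^r \le$ (number of subsets of the $r$ roots) choices for $P_0$ and at most $\binom{n}{\le k} \le (n+1)^k$ choices for... — and at this point the bound is $2^r (n+1)^k$, which is not quite $n^{k+r-1}$, so I must instead encode $P_0$ also as an ordered selection and absorb it into the $n^{\bullet}$ factor, carefully merging the two counts. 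I would present the final argument by choosing $\min\{r, \text{something}\}$ appropriately and checking the boundary cases $r=1$ (giving $n^k$) and small $k$ against the claimed bound $n^{k+r-1}$, which is where the constant shaving becomes visible and must be verified directly.
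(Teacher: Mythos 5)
Your high-level strategy is the same as the paper's: map each sinkset $W$ with $\deltain(W)\leq k$ injectively (via \cref{lem:sinksets-antichains}) to the antichain $P(G[W])$ of roots of $G[W]$, bound the size of that antichain, and then count antichains. However, the proposal has a genuine gap at both of the quantitative steps, and you explicitly acknowledge that you cannot close it. First, your accounting gives at most $r$ roots of $G[W]$ that are roots of $G$ plus at most $k$ that are not, i.e.\ $k+r$ in total, and you never find the saving of $1$. The paper's argument is a simple dichotomy you are missing: if \emph{every} root of $G$ is a root of $G[W]$, then $W=V$ (a sinkset containing all roots must be everything), so $G[W]$ has exactly $r\leq k+r-1$ roots since $k\geq 1$; otherwise at most $r-1$ roots of $G$ survive as roots of $G[W]$, and together with the at most $k$ roots of positive indegree this gives $k+r-1$. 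Second, even granting the size bound, your proposed encoding of the antichain as a pair $(P_0,P_1)$ yields $2^r(n+1)^k$, which, as you note yourself, is not bounded by $n^{k+r-1}$ (already for $r=1$ it exceeds $n^k$). The paper instead proves by induction on $\ell$ that the number of antichains of size at most $\ell$ is at most $n^\ell$: every antichain of size exactly $\ell+1$ is obtained from one of size $\ell$ by adding one of at most $n-\ell$ incomparable vertices, so $|\mathcal{A}_{\ell+1}|\leq n^\ell + n^\ell\cdot(n-\ell)\leq n^{\ell+1}$. Neither of these two ideas appears in your proposal, so the bound $n^{k+r-1}$ is not established.

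A smaller but real error: your claimed identity $\deltain(W)=\sum_{\rho\in P(G[W])}\deltain_G(\rho)$ is false in general, because a crossing arc may enter a vertex of $W$ that is \emph{not} a root of $G[W]$ (a vertex with one parent inside $W$ and one outside). The correct statement is the inequality $\sum_{\rho\in P(G[W])}\deltain_G(\rho)\leq\deltain(W)\leq k$, which fortunately is all you use (it still yields that at most $k$ roots of $G[W]$ have positive indegree in $G$), but as written the step is wrong.
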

\begin{proof}
Let $k \geq 1$ be arbitrary and $\mathcal{V}_k = \{W \sqsubseteq V: \deltain(W) \leq k \}$. We first prove the claim that for all $W \in \mathcal{V}_k$, $G[W]$ has at most $k+r-1$ roots.

\begin{claimproof}
Let $W  \in \mathcal{V}_k$ be arbitrary. We now consider two cases.

\emph{Case 1: each root of $G$ is a root of $W$.} As $W$ is a sinkset, this means that $W=V$, and so $G[W] = G$. Therefore, $G[W]$ has exactly $r$ roots. Because $k\geq 1$, the bound then follows.

\emph{Case 2: there exists a root of $G$ that is not a root of $G[W]$.} Then, at most $r-1$ of the roots of $G[W]$, are also a root of $G$. Therefore, $G[W]$ has at most $r-1$ roots with indegree 0 in $G$. Furthermore, $G[W]$ has at most $k$ roots with an indegree of at least 1 in $G$ (otherwise, the indegree of $W$ would be larger than $k$, and then $W \notin \mathcal{V}_k$). Together, this gives that $G[W]$ has at most $k + r -1 $ roots.
\end{claimproof}
Together with \cref{lem:sinksets-antichains}, this claim shows that for all $W \in \mathcal{V}_k$, the roots of $G[W]$ form a vertex-antichain of $G$, with a size of at most $k+r-1$. Now let $\mathcal{A}_\ell$ denote the set of vertex-antichains of $G$ of size at most $\ell$. Then we can define a function $h: \mathcal{V}_k \rightarrow \mathcal{A}_{k+r-1}$ by $h(W) = P(G[W])$ for all $W \in \mathcal{V}_k$ (here, $P(G[W])$ indicates the set of roots of $G[W]$ again). But then, $h$ is actually a restriction of the function $f$ from the proof of \cref{lem:sinksets-antichains}, to the domain $\mathcal{V}_k$ (and with a smaller co-domain). Using that this function $f$ was bijective, we must have that $h: \mathcal{V}_k \rightarrow \mathcal{A}_{k+r-1}$ is injective, otherwise this would contradict the injectivity of $f$. Therefore, we naturally have that $|\mathcal{V}_k| \leq |\mathcal{A}_{k+r-1}|$. So, it suffices to count the number of vertex-antichains in $G$ of size at most $k+r-1$, to get an upper bound for our lemma. We will now show by induction on $\ell$ that $|\mathcal{A}_\ell| \leq n^\ell$ for all integers $\ell \geq 1$, which will then prove the lemma.

The base case where $\ell = 1$ is trivial. Now assume that the induction hypothesis holds for $\ell$, so $|\mathcal{A}_\ell| \leq n^\ell$. Note that all vertex-antichains of size exactly $\ell+1$ can be created from the antichains of size $\ell$, by the addition of one incomparable vertex. As there are only $n - \ell$ vertices left to be added for each antichain of size $\ell$, we obtain that $|\mathcal{A}_{\ell+1}| \leq  n^\ell + n^\ell \cdot (n - \ell ) \leq n^{\ell+1}$.
\end{proof}

The previous lemma allows us to bound the number of sets that are considered during the execution of the algorithm. This sets up the stage for proving the time complexity of \cref{alg:dynamic_program_XP} in the following theorem. 

\begin{theorem}
\label{thm:dynamic_programXP}
Let $G=(V,E)$ be a weakly connected DAG of $n$ vertices, $m$ arcs and $r$ roots, and let $k\geq 1$ be an integer. Then, \cref{alg:dynamic_program_XP} solves $k$-\textsc{Scanwidth} in $O ( (k+r-1) \cdot n^{k+r-1}\cdot m )$ time and $O ( (k+r-1) \cdot n^{k+r})$ space.
\end{theorem}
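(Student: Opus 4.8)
The plan is to analyze Algorithm~\ref{alg:dynamic_program_XP} by separating the correctness argument from the resource analysis. For \emph{correctness}, I would argue that the call \texttt{R-PartialScanwidth}$(V,\emptyset,k)$ computes $\sw(G)$ whenever $\sw(G)\le k$, and returns $\infty$ otherwise, by induction on $|W|$ over the recursive calls. The base case $|W|=1$ is handled by Corollary~\ref{cor:rpsw_larger_k}(b) together with Lemma~\ref{lem:rpsw_recursion}(a): the value $\delta^{\mathrm{in}}(w)$ is exactly the restricted partial scanwidth, capped at $\infty$ when it exceeds $k$. The inductive step splits into two pieces: first, Lemma~\ref{lem:rpsw_component_splitting} (equivalently Corollary~\ref{cor:rpsw_larger_k}(a)) justifies the outer \textbf{for}-loop over weakly connected components $G[U_i]$ of $G[W]$ and the final $\max$ at line~\ref{algline:max}; second, for each component, Lemma~\ref{lem:rpsw_recursion}(b) (equivalently Corollary~\ref{cor:rpsw_larger_k}(c)) justifies the inner loop over roots $\rho$ of $G[U_i]$, recursing on $(U_i\setminus\{\rho\},\,V\setminus(U_i\setminus\{\rho\}))$ and taking $\max\{\rpsw_1',\delta^{\mathrm{in}}(U_i)\}$. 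The guards $\delta^{\mathrm{in}}(U_i)\le k$ are safe since $\delta^{\mathrm{in}}(U_i)$ is a lower bound on the restricted partial scanwidth of that subproblem by the same lemmas, so once it exceeds $k$ we may immediately report $\infty$. The constructive nature of Lemmas~\ref{lem:psw_recursion}, \ref{lem:rpsw_recursion}, and \ref{lem:rpsw_component_splitting} (concatenation $\sigma_1'\circ(\rho)$ for components, and $\sigma_1\circ\dots\circ\sigma_r$ across components) shows that the returned extension $\sigma_{\mathrm{opt}}$ is a genuine extension of $G$ achieving the reported value; finally, invoking Definition~\ref{def:rpsw} with $\QQ=(V,\emptyset)$ gives $(V,\emptyset)\text{-}\rpsw(G)=\sw(G)$, so line~\ref{algline:initial} returns the scanwidth.

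For the \emph{time and space complexity}, the key is to count the distinct table entries $T(W,R)$ that are ever populated and the work done per entry. I would first observe that every recursive call is made with an ordered 2-partition $(W,R)$ — i.e.\ $W$ is a sinkset — and moreover, because of the guards and Corollary~\ref{cor:rpsw_larger_k}, the only calls that do nontrivial work (rather than immediately returning $\infty$) are those with $\delta^{\mathrm{in}}(W)\le k$; when $\delta^{\mathrm{in}}(W)>k$ the call still runs but contributes $\infty$ upward, and such $W$ arising inside the recursion can be bounded similarly. Since $R=V\setminus W$ is determined by $W$, the number of relevant table entries is at most the number of sinksets $W\sqsubseteq V$ with $\delta^{\mathrm{in}}(W)\le k$, which by Lemma~\ref{lem:nr_sinksets_bound} is at most $n^{k+r-1}$. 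For the per-entry work: computing the weakly connected components of $G[W]$ and the indegrees $\delta^{\mathrm{in}}(U_i)$, $\delta^{\mathrm{in}}(W)$ takes $O(m)$ time; the inner loop ranges over the roots of the components $G[U_i]$, of which there are at most $k+r-1$ by the claim inside the proof of Lemma~\ref{lem:nr_sinksets_bound} (each $W\sqsubseteq V$ with $\delta^{\mathrm{in}}(W)\le k$ has at most $k+r-1$ roots), and each such iteration costs $O(m)$ for the comparison and bookkeeping (the recursive call itself is charged to the child entry via memoization). Thus each of the $\le n^{k+r-1}$ entries costs $O((k+r-1)\cdot m)$, giving the stated $O((k+r-1)\cdot n^{k+r-1}\cdot m)$ running time. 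For space, the table stores, for each of the $\le n^{k+r-1}$ entries, a value together with a partial extension, which is a list of up to $n$ vertices, and each entry is indexed by a subset of $V$; accounting $O(n)$ per stored extension and index yields $O((k+r-1)\cdot n^{k+r-1}\cdot n)=O((k+r-1)\cdot n^{k+r})$ space (alternatively one stores only the local decision $\rho$ per entry and reconstructs, giving the same bound).

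The main obstacle I anticipate is the bookkeeping needed to make the counting argument airtight: one must verify that \emph{every} set $W$ appearing as a first argument of a (non-trivially-returning) recursive call is indeed a sinkset with $\delta^{\mathrm{in}}(W)\le k$, so that Lemma~\ref{lem:nr_sinksets_bound} applies. Tracing the recursion, the initial call has $W=V$ (a sinkset), and from a call on $W$ we either recurse on a component $U_i\sqsubseteq W\sqsubseteq V$ (still a sinkset of $G$, since components of a sinkset are sinksets) or, inside that component, on $U_i\setminus\{\rho\}$ for a root $\rho$ of $G[U_i]$ — which is again a sinkset because removing a source cannot create an outgoing arc. The guard $\delta^{\mathrm{in}}(U_i)\le k$ (resp.\ the observation that we only care about entries with bounded indegree, the rest contributing $\infty$) ensures the indegree bound is maintained along the branches that matter. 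A secondary care point is the claim that each relevant $W$ has at most $k+r-1$ roots, so that the inner loop length is bounded uniformly; this is exactly the claim proved within Lemma~\ref{lem:nr_sinksets_bound}, and I would cite it directly. Once these structural facts are in place, the recurrence-free amortized counting gives both bounds cleanly, and the $\tilde O$-free statement follows without further optimization.
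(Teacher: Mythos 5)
Your proposal matches the paper's proof in all essentials: correctness via \cref{cor:rpsw_larger_k} together with the constructive \cref{lem:rpsw_recursion,lem:rpsw_component_splitting}, and the complexity via \cref{lem:nr_sinksets_bound} plus the bound of $k+r-1$ on the number of roots of each relevant sinkset. The one place your counting needs the paper's refinement is the claim that the populated table entries are bounded by the number of sinksets of indegree at most $k$: the recursion is invoked on $U_i\setminus\{\rho\}$, whose indegree may well exceed $k$, so one cannot apply \cref{lem:nr_sinksets_bound} to the entries directly. The paper instead counts recursive \emph{calls}: each is spawned by a weakly connected sinkset $W$ with $\delta^{\mathrm{in}}(W)\le k$ (at most $n^{k+r-1}$ of these) choosing one of its at most $k+r-1$ roots, giving $O((k+r-1)\cdot n^{k+r-1})$ executed subroutine instances at $O(m)$ work each; your ``can be bounded similarly'' is exactly this step, and the final time and space bounds are unaffected.
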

\begin{proof}
\emph{Correctness:} In the algorithm we use $\infty$ as a placeholder value whenever the restricted partial scanwidth is larger than $k$, and in that case, we do not care about the corresponding extension. The correctness of the subroutine \texttt{R-PartialScanwidth} now follows immediately from \cref{cor:rpsw_larger_k}, which precisely defines when $\infty$ should be returned. The correct extensions are returned due to the constructive nature of \cref{lem:rpsw_recursion,lem:rpsw_component_splitting}, on which \cref{cor:rpsw_larger_k} is based.

\emph{Time complexity:} Within each subroutine call, it takes $O(m)$ time to create the components $G[U_i]$ and to compute all indegrees $\deltain (U_i)$. Given these indegrees, only $O(|U_i|)$ extra time is spent in the outer for-loop (without the recursive call). But since the $U_i$ are disjoint, the complete time spend per subroutine call must then be dominated by $O(m)$.

Due to the tabulation, we execute the subroutine at most once for each of the (at most $2^n$) sinksets of $G$. The subroutine is only executed for sinksets $U_i$ that are created by the deletion of a root of a weakly connected sinkset $W$ with indegree at most $k$. According to \cref{lem:nr_sinksets_bound}, there are at most $n^{k+r-1}$ such sinksets $W$. As stated in the claim of the proof of \cref{lem:nr_sinksets_bound} these sinksets $W$ have at most $k+r-1$ roots, yielding $O((k+r-1) \cdot n^{k+r-1})$ recursive subroutine calls. This results in a total time complexity of $O((k+r-1) \cdot n^{k+r-1} \cdot m)$.

\emph{Space complexity:}
We store an extension of size $O(n)$ and a value $\rpsw$ for each considered sinkset. This requires $O ((k+r-1) \cdot n^{k+r})$ space. The space needed for the graph, and to run the subroutine, is also surely bounded by this function.
\end{proof}

From this theorem, we can deduce a nice complexity result for DAGs with a fixed number of roots. Our algorithm then functions as an XP algorithm when considering scanwidth as the parameter.

\begin{corollary}
\label{cor:sw_is_xp}
Let $G=(V,E)$ be a weakly connected rooted DAG with $n$ vertices, $m$ arcs, $r$ roots, and a scanwidth of $k$. Then, there exists an algorithm that solves \textsc{Scanwidth} in $O((k+r-1) \cdot m \cdot n^{k+r-1})$ time and $O((k+r-1) \cdot n^{k + r})$ space. Thus, for DAGs with a fixed number of roots \textsc{Scanwidth} is in $\mathrm{XP}$ when considering the scanwidth as the parameter.
\end{corollary}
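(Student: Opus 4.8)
The plan is to obtain the corollary as an almost immediate consequence of \cref{thm:dynamic_programXP}. That theorem already hands us an algorithm for $k$-\textsc{Scanwidth} with the right running time, but it takes the target value $k$ as part of its input, whereas here $k=\sw(G)$ is not known in advance. The standard remedy is a linear search from below: after disposing of the trivial case $\sw(G)=0$ (i.e.\ $G$ has no arcs, hence is a single vertex), call \cref{alg:dynamic_program_XP} successively for $k=1,2,3,\ldots$ and stop at the first value $k^\ast$ for which the returned answer is finite. By the guarantee of \cref{thm:dynamic_programXP}, running the algorithm with parameter $j$ returns $\sw(G)$ together with an optimal extension whenever $j\geq\sw(G)$, and returns $\infty$ (with an incomplete extension) whenever $j<\sw(G)$. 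Hence the first finite answer is produced exactly at $j=\sw(G)=k$, and it is the desired optimal extension; this establishes correctness.

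For the running time I would bound the total cost by the sum $\sum_{j=1}^{k} O\big((j+r-1)\cdot n^{j+r-1}\cdot m\big)$ of the per-iteration bounds from \cref{thm:dynamic_programXP}. When $n\geq 2$ the factor $n^{j+r-1}$ at least doubles with each increment of $j$, so this sum is a geometric-type series dominated up to a constant by its last term $O\big((k+r-1)\cdot n^{k+r-1}\cdot m\big)$ (the case $n=1$ being trivial). The same reasoning, together with the observation that the tabulation table $T$ may be freed between successive calls, shows the space usage is dominated by the final call, namely $O\big((k+r-1)\cdot n^{k+r}\big)$. This yields the stated bounds.

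It then remains to spell out membership in $\mathrm{XP}$ with scanwidth as the parameter when the number of roots $r$ is a fixed constant: since $m\leq n^2$ and $k+r-1\leq n$, the running time $O\big((k+r-1)\cdot m\cdot n^{k+r-1}\big)$ is bounded by $n^{\,k+r+2}=n^{f(k)}$ for the computable function $f(k)=k+r+2$ (with $r$ constant), which is precisely the form required by the definition of $\mathrm{XP}$. There is no genuine obstacle in this argument; the only steps that need a little care are the summation step -- making sure the search-from-below does not inflate the asymptotic complexity, which is where the mild case distinction on whether $n\geq 2$ enters -- and absorbing the polynomial prefactors $m$ and $k+r-1$ into the clean $n^{f(k)}$ form demanded for $\mathrm{XP}$.
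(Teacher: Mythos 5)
Your proposal is correct and follows essentially the same route as the paper: run \cref{alg:dynamic_program_XP} for increasing target values $1,2,\ldots$ and stop at the first finite answer, which by \cref{thm:dynamic_programXP} occurs exactly at $k=\sw(G)$. The only difference is the cost accounting --- the paper retains the memoization table $T$ across the successive calls so that the total number of subroutine executions equals that of a single run at $k$, whereas you discard the table and bound $\sum_{j=1}^{k}O\bigl((j+r-1)\cdot n^{j+r-1}\cdot m\bigr)$ by a geometric-type series dominated by its last term; both arguments are valid and give the stated bounds.
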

\begin{proof}
By repeatedly running \cref{alg:dynamic_program_XP} we can solve $i$-\textsc{Scanwidth} for an increasing value of $i$. When we eventually reach the value $k$ (which equals the scanwidth), we will find an optimal extension. If we keep the intermediate results of the previous algorithm runs in the table $T$, we consider the same amount of sinksets as we would have considered by directly solving $k$-\textsc{Scanwidth}. Thus, the time and space complexity is the same as in \cref{thm:dynamic_programXP}. Since we consider the number of roots $r$ to be fixed, this directly proves the XP result stated in the corollary.
\end{proof}

In \cref{subsec:total_reduction} we introduced a decomposition algorithm aimed at reducing the size of an instance. We proved that, in the case of networks, these reduced instances have their size bounded by a linear function of the level. If we apply this to the algorithm described in the previous corollary, we can formulate another complexity result, proving that for networks \textsc{Scanwidth} is FPT when considering the level as a parameter.
\begin{corollary}
\label{cor:level_FPT}
Let $G=(V,E)$ be a level-$\ell$ network of $n$ vertices and $m$ arcs. Then, there exists an algorithm that solves \textsc{Scanwidth} in $O (2^{4 \ell-1} \cdot \ell \cdot n + n^2)$ time. Thus, for networks \textsc{Scanwidth} is in $\mathrm{FPT}$ when considering the level as the parameter.
\end{corollary}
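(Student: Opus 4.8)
The plan is to combine the decomposition machinery from \cref{subsec:total_reduction} with the XP algorithm from \cref{cor:sw_is_xp}, feeding the former's size bound into the latter's running time. Concretely, I would invoke \cref{alg:reduction} with \texttt{ExactSW} instantiated by the algorithm of \cref{cor:sw_is_xp} (run for increasing values of $k$ until the scanwidth is found). Since $G$ is a network, it has a single root, so by \cref{cor:split_blocks} its s-blocks coincide with its blocks, and \cref{alg:reduction} computes $\sw(G)$ as the maximum of the scanwidth values of the reduced blocks $H'$. Each such $H'$ is obtained from a block $H$ of $G$ by exhaustively suppressing indegree-1 outdegree-1 vertices, so by \cref{lem:reduction_level_bound} we have $|V(H')| \le 4\ell - 1$ and $|E(H')| \le 5\ell - 2$.

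The next step is to bound the cost of \texttt{ExactSW} on a single reduced block $H'$. Each $H'$ is itself a weakly connected DAG; I would first argue it has a single root (the root-vertex of the original block, or one of its original vertices — this follows from the structure described in the claim inside the proof of \cref{lem:reduction_level_bound}, where the root has indegree $0$). With $r = 1$ and $n' := |V(H')| \le 4\ell - 1$, \cref{cor:sw_is_xp} gives a running time of $O((k) \cdot m' \cdot (n')^{k})$ where $k = \sw(H') \le \sw(G)$. Using $k \le n'-1 \le 4\ell - 2$ and $m' \le 5\ell - 2 = O(\ell)$, and crucially the fact that $(n')^k \le (n')^{n'} \le (4\ell)^{4\ell} = 2^{O(\ell \log \ell)}$ — or, more sharply, noting that the number of sinksets of a graph on $n'$ vertices is at most $2^{n'} \le 2^{4\ell - 1}$, so the dynamic program of \cref{alg:dynamic_program_XP} actually visits at most $2^{4\ell-1}$ sinksets regardless of $k$ — the work per block is $O(2^{4\ell - 1} \cdot \ell \cdot \ell) = O(2^{4\ell-1} \cdot \ell)$ (after absorbing the $m' = O(\ell)$ and the $k = O(\ell)$ factors; a careful accounting of the polynomial-in-$\ell$ factors is the routine part). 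This is where I would lean on the $2^{n'}$ bound on the number of distinct subproblems rather than the raw $n^{k+r-1}$ expression, since that is what makes the exponent $4\ell - 1$ appear cleanly.

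Finally I would assemble the total via \cref{lem:alg_reduction}: that lemma says \cref{alg:reduction} runs in $O(n^2 + n \cdot f(n', m'))$ time when \texttt{ExactSW} runs in $O(f(n', m'))$. Here $f(n', m') = O(2^{4\ell - 1} \cdot \ell)$ from the previous paragraph (note this bound is uniform over all blocks $H'$, since $n'$ and $m'$ are bounded by linear functions of $\ell$ independently of which block we look at), so the total becomes $O(n^2 + n \cdot 2^{4\ell - 1} \cdot \ell) = O(2^{4\ell - 1} \cdot \ell \cdot n + n^2)$, which is exactly the claimed bound. Since this is $f(\ell) \cdot n^{O(1)}$ for a computable $f$, \textsc{Scanwidth} is in FPT parameterized by the level.

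The main obstacle I anticipate is the bookkeeping in the second step: one must be careful that the $(n')^{k}$ factor in \cref{cor:sw_is_xp}, which looks like $(4\ell)^{4\ell}$ if bounded naively, is in fact controlled by the tighter $2^{n'}$ bound on the number of sinksets considered by \cref{alg:dynamic_program_XP} (this is implicit in the proof of \cref{thm:dynamic_programXP}, which notes "at most $2^n$ sinksets"). Getting the constant in the exponent to be exactly $4\ell - 1$ rather than, say, $4\ell$ requires using $|V(H')| \le 4\ell - 1$ precisely, and one should also double-check that the per-subroutine $O(m')$ cost and the $O((k+r-1))$ multiplicities collapse into a single $O(\ell)$ polynomial factor rather than something larger — but these are all routine once the right bound ($2^{|V(H')|}$ on the number of subproblems) is identified as the driver.
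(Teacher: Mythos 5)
Your proposal is correct and follows essentially the same route as the paper: bound the per-block cost of the XP algorithm by the number of distinct sinksets ($2^{n'}$) times the $O(m')$ work per sinkset, plug in $n' \leq 4\ell-1$ and $m' \leq 5\ell-2$ from \cref{lem:reduction_level_bound}, and assemble the total via \cref{lem:alg_reduction}. The only blemish is the line ``$O(2^{4\ell-1}\cdot\ell\cdot\ell)=O(2^{4\ell-1}\cdot\ell)$'': the extra factor of $k=O(\ell)$ should simply not appear once you count distinct sinkset executions rather than recursive calls, so the correct per-block cost is $O(2^{n'}\cdot m')=O(2^{4\ell-1}\cdot\ell)$ directly, as in the paper.
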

\begin{proof}
The algorithm in the previous corollary visits each sinkset at most once and spents at most $O(m)$ time on each such set. Thus it also has its time complexity bounded by $O(2^n \cdot m)$. When combined with decomposition algorithm~\ref{alg:reduction}, we can then solve \textsc{Scanwidth} in $O(n^2 + n \cdot m' \cdot 2^{n'})$ time, according to \cref{lem:alg_reduction}. Here, $n'$ (resp. $m'$) is the maximum number of vertices (resp. arcs) of any of the subproblems created by the decomposition algorithm. $G$ is assumed to be a network, and thus the graphs of the different subproblems created by the decomposition algorithm have at most $4\ell -1$ vertices, and at most $5\ell-2$ arcs, according to \cref{lem:reduction_level_bound}. Substituting these numbers for $n'$ and $m'$ gives the desired result.
\end{proof}

\section{Heuristics}\label{sec:heuristics}
Motivated by the successful use of sub-optimal solution methods for other width parameters \cite{diaz2002survey}, we now divert our attention to heuristics. We observe that the sets $\SW_i^\sigma$ of an extension $\sigma$ of a DAG correspond to arc-cuts of the DAG. In the next two subsections, we leverage this observation to develop a heuristic. Instinctively, it makes sense to search for a small arc-cut in the graph and then use that cut to split into two subproblems. \cref{subsec:dag-cut} takes a closer look at these \emph{DAG-cuts} that appear in a (tree) extension and thereafter develops the corresponding heuristic idea. We reserve \cref{sec:greedy_heuristic} for a brief discussion on how to apply two other already existing algorithmic frameworks to scanwidth.

\subsection{Repeated DAG-cut-splitting heuristic}
\label{subsec:dag-cut}
First recall some standard terminology for cuts. A \emph{(directed) cut} in a directed graph $G=(V,E)$ is a partition $C=(S, T)$ of $V$ (with $|S|, |T| > 0$). The corresponding \emph{cut-set} is the set $\{uv \in E: u \in S, v \in T\}$. A directed cut $C$ is \emph{minimal} if no other cut exists that has a cut-set that is contained in the cut-set of $C$. For two distinct vertices $s, t \in V$, an \emph{$s$-$t$ cut} is a directed cut $(S, T)$ such that $s \in S$ and $t \in T$. The size (resp. weight) of the cut refers to the size (resp. sum of weights) of the cut-set and we denote it by $|C|$ (resp. $w(C)$, where $w$ is the weight function of the graph).

We can now introduce a certain type of cut, for which we will show shortly that they are the exact cuts that appear in extensions. To illustrate the definition, see \cref{subfig:DAG-cuts}.
\begin{definition}[DAG-cut]\label{def:dagcut}
Let $G=(V, E)$ be a weakly connected DAG, then we call a directed cut $C=(S, T)$ a \emph{DAG-cut} if $C$ is minimal and $T$ is a sinkset. If $|S| = 1$ or $|T| = 1$, $C$ is a \emph{trivial} DAG-cut.
\end{definition}

\begin{figure}[htb]
     \begin{subfigure}[b]{0.45\textwidth}
         \centering
	\begin{tikzpicture}[scale=0.40]
	\begin{pgfonlayer}{nodelayer}
		\node [style={graph_node}] (0) at (3, 0) {};
		\node [style={graph_node}] (1) at (0, 3) {};
		\node [style={graph_node}] (2) at (1.25, 3) {};
		\node [style={graph_node}] (3) at (2.5, 3) {};
		\node [style={graph_node}] (4) at (3, 6) {};
		\node [style={graph_node}] (5) at (5, 3) {};
		\node [style={graph_node}] (6) at (5, 9) {};
		\node [style={graph_node}] (7) at (5.5, 6) {};
		\node [style={graph_node}] (8) at (6.75, 6) {};
		\node [style={graph_node}] (9) at (8, 6) {};
		\node [style={graph_node}] (10) at (7.75, 0.5) {};
		\node [style=none] (11) at (6, 0.25) {};
		\node [style=none, label=below:{$C_3$}] (12) at (8.75, 1.5) {};
		\node [style=none, label=above:{$C_5$}] (13) at (8.75, 3) {};
		\node [style=none] (14) at (1, 0.5) {};
		\node [style=none, label=above:{$C_1$}] (15) at (4, 8.5) {};
		\node [style=none] (16) at (8.75, 5.25) {};
		\node [style=none, label=left:{$C_4$}] (17) at (2.75, 7.5) {};
		\node [style=none] (18) at (5, 1.75) {};
		\node [style=none] (19) at (5.25, 1) {};
		\node [style=none, label=above:{$C_2$}] (20) at (1, 5.25) {};
		\node [style=none] (21) at (4.75, 1.25) {};
	\end{pgfonlayer}
			\draw [style={cut_line}, bend left] (11.center) to (12.center);
		\draw [style={cut_line}, in=-165, out=-90, looseness=1.50] (15.center) to (16.center);
		\draw [style={cut_line}, in=165, out=0, looseness=0.75] (17.center) to (18.center);
		\draw [style={cut_line}, in=0, out=-165, looseness=0.75] (13.center) to (19.center);
		\draw [style={cut_line}, in=45, out=-180, looseness=0.50] (19.center) to (14.center);
		\draw [style={cut_line}, in=135, out=-15, looseness=1.50] (20.center) to (21.center);
	\begin{pgfonlayer}{edgelayer}
		\draw [style={graph_edge}] (6) to (4);
		\draw [style={graph_edge}] (4) to (1);
		\draw [style={graph_edge}] (4) to (2);
		\draw [style={graph_edge}] (4) to (3);
		\draw [style={graph_edge}] (6) to (7);
		\draw [style={graph_edge}] (6) to (8);
		\draw [style={graph_edge}] (6) to (9);
		\draw [style={graph_edge}] (3) to (0);
		\draw [style={graph_edge}] (2) to (0);
		\draw [style={graph_edge}] (1) to (0);
		\draw [style={graph_edge}] (7) to (5);
		\draw [style={graph_edge}] (8) to (5);
		\draw [style={graph_edge}] (9) to (5);
		\draw [style={graph_edge}] (4) to (5);
		\draw [style={graph_edge}] (5) to (0);
		\draw [style={graph_edge}] (5) to (10);
		\draw [style={graph_edge}] (9) to (10);
	\end{pgfonlayer}
\end{tikzpicture}
         \caption{Weakly connected DAG $G$}
         \label{subfig:DAG-cuts}
     \end{subfigure}
\hfill
     \begin{subfigure}[b]{0.45\textwidth}
         \centering
	\begin{tikzpicture}[scale=0.40,
semi_fill/.style 2 args={fill=#2, path picture={
\fill[#1] (path picture bounding box.south east) -|
                         (path picture bounding box.north west) -- cycle;}}
]

\tikzstyle{infinity_weight}=[-{Latex[scale=1.2]}, draw=gray!70, dashed]
	\begin{pgfonlayer}{nodelayer}
		\node [style={graph_node}] (0) at (3, 0) {};
		\node [style={graph_node}, fill=black] (1) at (0, 3) {};
		\node [style={graph_node}, fill=black] (2) at (1.25, 3) {};
		\node [style={graph_node}, fill=black, label=right:{$t$}] (3) at (2.5, 3) {};
		\node [style={graph_node}, fill=gray!50, label=right:{$s$}] (4) at (3, 6) {};
		\node [style={graph_node}, fill=black] (5) at (5, 3) {};
		\node [style={graph_node}] (6) at (5, 9) {};
		\node [style={graph_node}, fill=gray!50] (7) at (5.5, 6) {};
		\node [style={graph_node}, fill=gray!50] (8) at (6.75, 6) {};
		\node [style={graph_node}, fill=gray!50] (9) at (8, 6) {};
		\node [style={graph_node}, semi_fill={black}{gray!50}] (200) at (8, 6) {};		
		\node [style={graph_node}] (10) at (7.75, 0.5) {};
		\node [style=none] (11) at (6, 0.25) {};
		\node [style=none] (12) at (8.75, 1.5) {};
		\node [style=none] (13) at (8.75, 3) {};
		\node [style=none] (14) at (1, 0.5) {};
		\node [style=none] (15) at (4, 8.5) {};
		\node [style=none] (16) at (8.75, 5.25) {};
		\node [style=none] (17) at (2.75, 7.5) {};
		\node [style=none] (18) at (5, 1.75) {};
		\node [style=none] (19) at (5.25, 1) {};
		\node [style=none, label=above:{$C_2$}] (20) at (1, 5.25) {};
		\node [style=none] (21) at (4.75, 1.25) {};
	\end{pgfonlayer}
\draw [style={cut_line}, in=135, out=-15, looseness=1.50] (20.center) to (21.center);
	\begin{pgfonlayer}{edgelayer}
		\draw [style={graph_edge}] (6) to (4);
		\draw [style={graph_edge}] (4) to (1);
		\draw [style={graph_edge}] (4) to (2);
		\draw [style={graph_edge}] (4) to (3);
		\draw [style={graph_edge}] (6) to (7);
		\draw [style={graph_edge}] (6) to (8);
		\draw [style={graph_edge}] (6) to (9);
		\draw [style={graph_edge}] (3) to (0);
		\draw [style={graph_edge}] (2) to (0);
		\draw [style={graph_edge}] (1) to (0);
		\draw [style={graph_edge}] (7) to (5);
		\draw [style={graph_edge}] (8) to (5);
		\draw [style={graph_edge}] (9) to (5);
		\draw [style={graph_edge}] (4) to (5);
		\draw [style={graph_edge}] (5) to (0);
		\draw [style={graph_edge}] (5) to (10);
		\draw [style={graph_edge}] (9) to (10);
		\draw [style={infinity_weight}, bend left=15] (4) to (6);
		\draw [style={infinity_weight}, bend left=15] (7) to (6);
		\draw [style={infinity_weight}, bend left=15] (8) to (6);
		\draw [style={infinity_weight}, bend right=15, looseness=0.75] (9) to (6);
		\draw [style={infinity_weight}, bend left=15] (5) to (7);
		\draw [style={infinity_weight}, bend left=15] (5) to (8);
		\draw [style={infinity_weight}, bend right=15] (5) to (9);
		\draw [style={infinity_weight}, bend right=15, looseness=0.75] (10) to (9);
		\draw [style={infinity_weight}, bend right=15] (10) to (5);
		\draw [style={infinity_weight}, bend right=15] (5) to (4);
		\draw [style={infinity_weight}, bend right, looseness=0.75] (3) to (4);
		\draw [style={infinity_weight}, bend right=15, looseness=0.75] (2) to (4);
		\draw [style={infinity_weight}, bend left=15, looseness=1.25] (1) to (4);
		\draw [style={infinity_weight}, bend left=15, looseness=1.25] (0) to (1);
		\draw [style={infinity_weight}, bend right=15] (0) to (2);
		\draw [style={infinity_weight}, bend right=15] (0) to (3);
		\draw [style={infinity_weight}, bend right=15] (0) to (5);
	\end{pgfonlayer}
\end{tikzpicture}
         \caption{Weighted auxiliary graph $H$}
         \label{subfig:auxil_DAG_cut}
     \end{subfigure}
     \caption{(a): Weakly connected DAG $G$ with unit weights on all arcs. $C_1$ is a non-trivial DAG-cut of weight 5; $C_2$ is a smallest non-trivial DAG-cut of weight 4; $C_3$ is a trivial DAG-cut; $C_4$ is a directed cut that is not a DAG-cut, because it does not cut off a sinkset; $C_5$ is a directed cut that is not a DAG-cut, because the cut is not minimal (in particular, $C_3$ is a cut with a cut-set contained in the cut-set of $C_5$). (b): The corresponding weighted auxiliary graph $H$, where all dashed arcs have weight $\infty$ and the black arcs have unit weights. The grey vertices are in the set $U$ and the black vertices are in the set $W$. The smallest DAG-cut $C_2$ of $G$ corresponds to a minimum directed $s$-$t$ cut in $H$ with the same weight, where $s \in U$ and $t \in W$.}
\end{figure}

It might not immediately be clear that the sets $\SW^\sigma_i$ of an extension $\sigma$ correspond to the cut-sets of DAG-cuts, and vice versa. The next lemma formally proves this, thus solidifying the idea of using DAG-cuts to split a graph.
\begin{lemma}
Let $G=(V,E)$ be a weakly connected DAG with $n \geq 2$ vertices. Then, a set $F\subseteq E$ is the cut-set of some DAG-cut $C$ if and only if $F=\SW_i^\sigma$ for some position $i < n$ of an extension $\sigma$ of $G$.
\end{lemma}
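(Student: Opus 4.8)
The statement is an "if and only if" between two descriptions of a set $F \subseteq E$: being the cut-set of a DAG-cut, and arising as $\SW_i^\sigma$ for some extension $\sigma$ and position $i < n$. I would prove the two directions separately, and in both directions the key bookkeeping object is the sinkset $\sigma[1\ldots i]$ (respectively the side $T$ of the cut), together with the weak-connectivity relation inside the induced subgraph on that sinkset.

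\emph{($\Leftarrow$) From an extension to a DAG-cut.} Given an extension $\sigma$ and a position $i < n$, set $T = \{\, v \in \sigma[1\ldots i] : v \connect{G[1\ldots i]} \sigma(i)\,\}$, i.e. the weakly connected component of $\sigma(i)$ in $G[1\ldots i]$, and $S = V \setminus T$. First I would check this is a directed cut: $T \neq \emptyset$ since $\sigma(i) \in T$, and $S \neq \emptyset$ since $n \geq 2$ and $i < n$ means position $i+1$ exists and lies in $S$. Next, $T$ is a sinkset: $\sigma$ is an extension, so every arc out of a vertex in $T$ goes to a vertex at a position $\leq i$ (by the $G$-respecting property, no arc points "forward"), hence into $G[1\ldots i]$, and since $T$ is a whole weakly connected component of $G[1\ldots i]$ such an arc cannot leave $T$. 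Then the cut-set of $(S,T)$ is exactly $\{\, uv \in E : u \in S,\ v \in T\,\} = \{\, uv \in E: u \notin T,\ v \in T\,\}$; comparing with $\SW_i^\sigma = \{\,uv \in E : u \in \sigma[i+1\ldots],\ v \in \sigma[1\ldots i],\ v \connect{G[1\ldots i]}\sigma(i)\,\}$, I need that $u \in \sigma[i+1\ldots]$ whenever $uv \in E$ with $v \in T$ — but again $\sigma$ respects $G$ so the tail of any arc into $v \in \sigma[1\ldots i]$ cannot be among the first $i$ vertices \emph{unless} it is in the same component, and if it is in the same component it is in $T$, contradicting $u \notin T$; so indeed the cut-set equals $\SW_i^\sigma$. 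Finally, minimality: if some cut $(S',T')$ had cut-set strictly contained in that of $(S,T)$, then $T'$ would still have to be a sinkset containing... here I'd argue that $T$ is exactly the downward closure-type set $\{w : w \leq_G v \text{ for some } v\ \text{reachable within } T\}$ — more carefully, $T$ being a single weakly connected sinkset forces minimality because any proper "sub-cut" would disconnect $T$, which is impossible for a connected sinkset whose only arcs to the outside are incoming. I would make this precise using that a minimal cut $(S,T)$ with $T$ a sinkset is characterised by $T$ being weakly connected (this mirrors \cref{prop:canonical_tree_iff}), so I should state and use a small auxiliary observation: \emph{a directed cut $(S,T)$ with $T$ a sinkset is minimal iff $G[T]$ is weakly connected}.

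\emph{($\Rightarrow$) From a DAG-cut to an extension.} Given a DAG-cut $C = (S,T)$, $T$ is a sinkset, so $(T, S)$ is an ordered $2$-partition and I can pick any extension $\sigma_T$ of $G[T]$ and any extension $\sigma_S$ of $G[S]$; since no arc runs from $S$ to $T$, the concatenation $\sigma = \sigma_T \circ \sigma_S$ is an extension of $G$. Set $i = |T| < n$ (strict since $S \neq \emptyset$). Then $\sigma[1\ldots i] = T$. By the auxiliary observation, minimality of $C$ gives that $G[T]$ is weakly connected, so the weak-connectivity condition $v \connect{G[1\ldots i]} \sigma(i)$ is satisfied by \emph{every} $v \in T$; hence $\SW_i^\sigma = \{\,uv \in E: u \in S,\ v \in T\,\}$, which is exactly the cut-set of $C$.

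\emph{Main obstacle.} The routine parts are the $G$-respecting/sinkset arguments; the part that needs genuine care is the \textbf{minimality $\Leftrightarrow$ weak connectivity of $T$} equivalence, since the paper defines minimality in terms of containment of cut-sets rather than connectivity. I expect to spend most of the effort there: showing that if $G[T]$ is disconnected then one of its components already gives a cut with a smaller cut-set (each component is itself a sinkset, so cutting off just that component is a valid directed cut, and its cut-set is a proper subset), and conversely that a connected sinkset $T$ admits no proper sub-cut (any cut $(S',T')$ with cut-set contained in that of $(S,T)$ must have $T' \subseteq T$ a sinkset, and if $T' \subsetneq T$ then some arc of $G[T]$ crosses from $T\setminus T'$ into $T'$, forcing that arc into the cut-set of $(S',T')$ but not into that of $(S,T)$ — contradiction). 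This is the only place where the definitions do not line up verbatim, so it is where I would be most careful.
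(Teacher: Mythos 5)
Your proposal is correct and follows essentially the same route as the paper's proof: in one direction you take the weakly connected component of $\sigma(i)$ in $G[1\ldots i]$ as the sink side $T$ of the cut, and in the other you concatenate an extension of $G[T]$ with one of $G[S]$ and read off $\SW_{|T|}^\sigma$. The only difference is that you explicitly isolate and prove the equivalence ``minimal with $T$ a sinkset $\Leftrightarrow$ $G[T]$ weakly connected,'' which the paper asserts in both directions without argument, so your version is if anything more complete.
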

\begin{proof}
($\Rightarrow$) Let $F$ be the cut-set of some DAG-cut $C=(S,T)$. By \cref{def:dagcut}, $C$ is now a minimal cut. This means that no other cut exists with a cut-set contained in $F$. This implies that $G[T]$ is a weakly connected graph. Now let $\sigma_1 \in \Pi[T]$ and $\sigma_2 \in \Pi[S]$. Since $T$ is a sinkset, $\sigma = \sigma_1 \circ \sigma_2$ is then an extension of $G$. Using that $G[T]$ was weakly connected, we have that $\SW_{|T|}^\sigma = F$, where $|T| < n$ because $|S| > 0$.

($\Leftarrow$) Let $\sigma$ be an extension of $G$ and $i < n$ a position of $\sigma$. Now let $T$ be the vertex set of the weakly connected component of $G[1 \ldots i]$ that contains $\sigma (i)$. We let $S = V \setminus T$. It should come as no surprise that $\SW_i^\sigma = \{uv \in E: u \in S, v \in T \}$. In other words, $\SW_i^\sigma$ is the cut-set of $C = (S, T)$, with $|S| > 0$ because $|T| \leq i < n$. Since $T$ is weakly connected, $C$ must be minimal. As $T$ was a component of $G[1 \ldots i]$ and $\sigma$ was an extension, it must also be a sinkset. Thus, $C$ is a DAG-cut.
\end{proof}

Our main focus will be to find non-trivial DAG-cuts, specifically the smallest such cut(s). Finding these minimum-weight non-trivial cuts is not obvious, but the following lemma, illustrated in \cref{subfig:auxil_DAG_cut}, will clarify the approach. In particular, we prove that finding non-trivial DAG-cuts of finite weight is equivalent to finding certain $s$-$t$ cuts in an auxiliary graph $H$. This is helpful, as several algorithms are known that find minimum $s$-$t$ cuts. The idea to use reverse arcs of infinite weight in the auxiliary graph is inspired by  Ravi, Agrawal, and Klein \cite{ravi1991ordering}, who employed this technique for the closely related \emph{DAG edge-separators}.

\begin{lemma}
\label{lem:DAG-cut_reduction}
Let $G = (V, E, w)$ be a weighted, weakly connected DAG with weight function $w : E \rightarrow \mathbb{R}_{> 0}$. Let $H$ be the weighted directed graph obtained from $G$, by adding for each arc a reverse arc with infinite weight. Denote by $U$ (resp. $W$) the set of children (resp. parents) of the roots (resp. leaves) of $G$. Then,
\begin{enumerate}[label={(\alph*)},noitemsep,topsep=3pt]
\item $C=(S, T)$ is a non-trivial DAG-cut in $G$ with weight $k < \infty$ if and only if for some $s \in U$ and $t \in W\setminus \{ s\}$, $C$ is a minimal directed $s$-$t$ cut in $H$ with weight $k<\infty$ .

\item No non-trivial DAG-cut in $G$ exists if and only if for all $s \in U$ and $t \in W\setminus \{ s\}$ no minimal directed $s$-$t$ cut in $H$ with finite weight exists.
\end{enumerate}
\end{lemma}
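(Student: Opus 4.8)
The plan is to prove part (a) directly, since part (b) then follows as a near-immediate contrapositive: (b) says exactly that \emph{no} non-trivial DAG-cut of finite weight exists in $G$ iff \emph{no} minimal finite-weight directed $s$-$t$ cut exists in $H$ for any valid choice of $s\in U$, $t\in W\setminus\{s\}$, which is the negation of the existence statement in (a) quantified over all $(s,t)$ pairs. (One should also handle the trivial edge case $|U|=0$ or $W\setminus\{s\}=\emptyset$ for all $s$, i.e.\ when no valid $(s,t)$ pair exists; I would note that in that case $G$ is a single arc or has too few vertices, and no non-trivial DAG-cut exists either, so (b) holds vacuously.)

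For the ($\Leftarrow$) direction of (a): suppose $C=(S,T)$ is a minimal directed $s$-$t$ cut in $H$ of finite weight $k$, for some $s\in U$, $t\in W\setminus\{s\}$. Since the reverse arcs have infinite weight and $C$ has finite weight, the cut-set of $C$ in $H$ contains no reverse arcs; hence it consists only of original arcs of $G$, and because for every original arc $uv$ with $u\in S$, $v\in T$ the reverse arc $vu$ would then be a back-arc from $T$ to $S$ (which, having infinite weight, cannot be cut), minimality of $C$ in $H$ forces: there is no arc of $G$ directed from $T$ to $S$. That is exactly the statement that $T$ is a sinkset of $G$ and $S$ is a sourceset. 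So $C$ is a directed cut of $G$ with $T$ a sinkset, it has the same (finite) weight $k$ in $G$, and it is minimal in $G$ because any $G$-cut with smaller cut-set would also be an $H$-cut (the reverse arcs never help and never hurt a cut that already has $T$ closed under descendants) contradicting minimality of $C$ in $H$; thus $C$ is a DAG-cut. Non-triviality: since $s\in U\subseteq S$ is a child of a root $\rho$ of $G$, and $\rho\notin S$ would force the arc $\rho s$ to be cut — but then $\rho \in T$, and $\rho$ being a root with a reverse arc of infinite weight back into... — more cleanly: the root $\rho$ of $G$ with $\rho s\in E$ must lie in $S$ (else $T$ contains a vertex with an out-arc, fine, but we need $\rho\in S$: indeed if $\rho\in T$ then, $T$ being a sinkset and $\rho$ a root, $T$ contains the whole graph, contradiction since $s\in T$ is impossible as $s\notin T$). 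Hence $|S|\geq 2$. Symmetrically, $t\in W\subseteq T$ is a parent of a leaf $\ell$, and $\ell\in T$ (as $T$ is a sinkset containing $t$ and $t\ell\in E$), so $|T|\geq 2$. Therefore $C$ is non-trivial.

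For the ($\Rightarrow$) direction of (a): let $C=(S,T)$ be a non-trivial DAG-cut of $G$ of finite weight $k$. Because $T$ is a sinkset, no arc of $G$ goes from $T$ to $S$, so none of the infinite-weight reverse arcs of $H$ are cut by $C$ either (a reverse arc $vu$ is cut only if $v\in S$, $u\in T$, i.e.\ $uv$ goes from $T$ to $S$ in $G$, impossible); hence $C$ has the same finite weight $k$ in $H$. Minimality of $C$ in $H$: any $H$-cut with smaller finite cut-set would, by the same argument, cut no reverse arcs, hence be a $G$-cut with $T'$ a sinkset and smaller cut-set, contradicting minimality of $C$ as a DAG-cut. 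It remains to choose $s,t$. Since $C$ is non-trivial, $|S|\geq 2$ and $|T|\geq 2$. I claim $S$ contains a child of some root of $G$: take any root $\rho$ of $G$ (which lies in $S$, as $\rho\in T$ with $T$ a sinkset would force $T=V$); since $|S|\geq 2$ and $G$ is weakly connected, $S$ contains a vertex other than $\rho$, and walking along a directed path from $\rho$ one finds a child of $\rho$ inside $S$ — more carefully, I would argue that $S$ being closed under predecessors (as $T$ is a sinkset, $S$ is a sourceset — wait, that is not quite it; $S$ need only satisfy that $T$ is closed under descendants) — the clean statement is: every minimal sinkset-complement $S$ with $|S|\geq 2$ contains, for the root $\rho$ it contains, at least one child of $\rho$, because otherwise all out-arcs of $\rho$ go into $T$ and $\{\rho\}$ together with the rest of $S$; if $S=\{\rho\}\cup S'$ with $S'$ having no arc from $\rho$, then by minimality and weak-connectedness $\rho$ would be isolated from $S$ in a way that contradicts $G[S]$'s structure — I will pin this down using that $C$ is a \emph{minimal} cut, which forces $G[T]$ weakly connected (as shown in the preceding lemma) and correspondingly constrains $S$. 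Symmetrically $T$ contains a parent of some leaf, giving $t\in W\cap T$. Then $s\in U\subseteq S$, $t\in W\subseteq T$, $s\neq t$, and $C$ is a directed $s$-$t$ cut.

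The main obstacle I anticipate is the bookkeeping in the ($\Rightarrow$) direction showing that a non-trivial DAG-cut necessarily separates \emph{some} valid $(s,t)$ pair with $s\in U$, $t\in W$ — i.e.\ that the side $S$ genuinely contains a child-of-a-root and the side $T$ contains a parent-of-a-leaf. This requires using non-triviality ($|S|,|T|\geq 2$) together with weak connectedness and the minimality/sinkset structure, and I would lean on the fact (established in the immediately preceding lemma and its proof) that minimality of the DAG-cut forces $G[T]$ to be weakly connected, plus an analogous observation about $S$ containing its roots. The infinite-weight-reverse-arc argument itself is routine once set up; the quantifier juggling between "for some $s,t$" and "for all $s,t$" in passing from (a) to (b) is the other place to be careful, but it is purely logical.
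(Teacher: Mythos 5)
Your overall strategy is the same as the paper's: transfer cuts between $G$ and $H$ using the fact that a finite-weight cut in $H$ cannot contain any infinite-weight reverse arc, which forces $T$ to be a sinkset and makes the cut-sets (and hence weights) coincide. Your ($\Leftarrow$) direction of (a) is essentially complete and correct — in particular your argument that the root above $s$ must lie in $S$ and the leaf below $t$ must lie in $T$, giving $|S|,|T|\geq 2$, matches the paper's. (One small slip: you attribute ``no arc of $G$ from $T$ to $S$'' to minimality of $C$ in $H$, when it follows already from finiteness of the weight, since such an arc's reverse copy would be an infinite-weight arc from $S$ to $T$ in the cut-set. Your reduction of (b) to (a) by negation is also fine.)

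The genuine gap is the one you flag yourself and then leave unresolved: in the ($\Rightarrow$) direction you never actually prove that $S$ contains a child of a root and $T$ contains a parent of a leaf, so you have not exhibited the required pair $s\in U$, $t\in W\setminus\{s\}$. Your paragraph trails off into several half-started arguments (``walking along a directed path\ldots'', ``I will pin this down using\ldots'') without committing to one. The argument can be closed as follows, using exactly the two ingredients you name. Since $T$ is a sinkset, $S$ is closed under taking ancestors: if $v\in S$ and $uv\in E$ then $u\in S$ (else $T$ would have an out-arc). Hence every $v\in S$ has all of its ancestors, including some root, in $S$; and if $S$ contains any non-root vertex $v$, then walking up a directed path from a root $\rho\in S$ to $v$, the second vertex on that path is a child of $\rho$ lying in $S$. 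The only way $S$ can fail to contain a child of a root is therefore that $S$ consists entirely of roots; but then $|S|\geq 2$ gives two roots in $S$, and the cut $(\{\rho_1\},V\setminus\{\rho_1\})$ has cut-set strictly contained in that of $C$ (the out-arcs of the second root are missing), contradicting minimality. Dually, $T$ is closed under descendants, so either $T$ contains a parent of a leaf or $T$ consists only of leaves, and the latter again contradicts minimality when $|T|\geq 2$. Since $S$ and $T$ are disjoint, $s\neq t$. This is precisely the step the paper compresses into ``otherwise either $T$ would not be a sinkset, or the cut would not be minimal''; without it your forward direction does not produce the $s$ and $t$ the statement requires.
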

\begin{proof}
($a, \Rightarrow$) By definition, $C$ is a minimal cut in $G$, $T$ is a sinkset, and $|S|, |T| \geq 2$. We must then have that $S$ (resp. $T$) contains at least one root (resp. leaf) of $G$ and a child $s$ (resp. parent $t$) of this vertex. (Note that they can not be the same.) If this were not the case, either $T$ would not be a sinkset, or the cut would not be minimal (e.g. if $T$ consists of just two leaves). We now have that $s \in U$ and $t \in W \setminus \{ s \}$. Thus, $C$ is clearly an $s$-$t$ cut in $H$. Furthermore, it has exactly weight $k$, because the arcs going from $S$ to $T$ in $H$ are exactly the arcs in the original cut-set in $G$. We have no infinite weight arcs going from $S$ to $T$ in $H$, since $T$ was a sinkset in $G$. Lastly, $C$ is also minimal in $H$, since it was minimal in~$G$.

($a, \Leftarrow$) Because $C$ has finite weight in $H$, it does not have any of the infinite weight arcs in its cut-set. This must mean that no infinite weight arc goes from $S$ to $T$ in $H$, or equivalently no arc in $G$ goes from $T$ to $S$. Thus $T$ must be a sinkset in $G$. This also means that the parent of $s$ in $G$ (which by definition is a root of $G$) is in the set $S$, otherwise $T$ would be no sinkset in $G$ any more. Similarly, the child of $t$ in $G$ (which is a leaf of $G$) must be in the set $T$. Thus, we have that $|S|, |T| \geq 2$. Again, the only arcs going from $S$ to $T$ in $H$ are exactly the arcs in the cut-set of $C$ in $G$. So, the weights also coincide. Lastly, $C$ is minimal in $G$, because it was minimal in $H$.

($b$) This follows directly from (a), and the fact that $G$ only has finite weight cuts.
\end{proof}

By the previous lemma, we can find a minimum-weight non-trivial DAG-cut by finding a minimum directed $s$-$t$ cut in the auxiliary graph $H$ for all $s \in U$ and $t \in W\setminus\{s\}$. A minimum weight directed $s$-$t$ cut in a directed graph with $n$ vertices and $m$ arcs can be found in $O (n\cdot m)$ time with the max-flow algorithm from \cite{orlin2013max}. By repeating this algorithm for $O (n^2)$ choices of $s$ and $t$, we can therefore find a minimum-weight non-trivial DAG-cut in $O (n^3 \cdot m)$ time.

The idea behind our scanwidth-heuristic is to recursively split the graph at a smallest non-trivial DAG-cut. Consequently, we obtain an upper and a lower subgraph. However, when considering scanwidth we can not just `forget' about the arcs in the cut, as they might also be counted at vertices lower or higher in the graph. Thus, for both created graphs, we merge the other part of the graph into one `supervertex'. This ensures the arcs in the DAG-cut are still accounted for. It also explains why we look for non-trivial DAG-cuts: otherwise, the merging operation will not decrease the size of our graph. Whenever no non-trivial DAG-cut exists, the graph is very small, and we simply take an arbitrary extension. This leads to the heuristic described in \cref{alg:DAGcut_heuristic}.

\begin{algorithm}[htb]
\caption{Repeated DAG-cut-splitting heuristic to find an extension.}
\label{alg:DAGcut_heuristic}
\Input{Weakly connected DAG $G = (V, E)$.}
\Output{Extension $\sigma_G$.}
\SetKwFunction{antichain}{MinDAGCutSplit}
$G'\gets$ weighted version of $G$ with unit weights\\
$\sigma_G \gets \antichain (G')$ \\
\Return{$\sigma_G$}\\
\setcounter{AlgoLine}{0}
\nonl \Proc(\tcp*[f]{$H$ is a weighted graph.}){$\antichain(H)$}{
\If{$H$ has no non-trivial DAG-cut}{
$\sigma \gets$ arbitrary extension of $H$ \tcp*{Use reverse BFS traversal.}
}
\Else{
$ C = (S, T) \gets$ minimum-weight non-trivial DAG-cut of $H$ \\
$H_1 \gets H[S]$; $H_2 \gets H[T]$\\
add a vertex $x$ to $H_1$ and a vertex $y$ to $H_2$\\
\For{$uv  \in E(H): u \in S, v \in T$}{
add an arc $ux$ to $H_1$ (if it already exists increase the weight by 1)\\
add an arc $yv$ to $H_2$ (if it already exists increase the weight by 1)
}
$\sigma_1 \gets \antichain (H_1)$ \\
$\sigma_2 \gets \antichain (H_2)$ \\
$\sigma \gets  \sigma_2[T] \circ \sigma_1[S] $\\
}
\Return{$\sigma$}}
\end{algorithm}

In \cref{subfig:cut_split1,subfig:cut_split2} the first iteration of the algorithm is visualized. \cref{subfig:cut_split_bad} shows the canonical tree extension corresponding to the extension resulting from the algorithm. We indeed see the cut $C$ reappearing. \cref{subfig:cut_split_opt} shows the optimal tree extension which has a smaller scanwidth than the one in \cref{subfig:cut_split_bad}. Thus, in general, it is not necessarily true that the smallest non-trivial DAG-cut appears in an optimal extension. Consequently, the algorithm is not necessarily optimal. Nonetheless, the algorithm remains practical, running in polynomial time, as formalized in the following theorem.

\begin{figure}[htb]
     \centering
    \parbox{.35\textwidth}{
     \begin{subfigure}[b]{0.98\linewidth}
         \centering
         \begin{tikzpicture}[xscale=0.45, yscale=0.35]
	\begin{pgfonlayer}{nodelayer}
		\node [style={graph_node}, label=below left:{$b$}] (30) at (0, 3) {};
		\node [style={graph_node}, label=above left:{$a$}] (31) at (2, 4) {};
		\node [style={graph_node}, label=below left:{$c$}] (32) at (2, 2) {};
		\node [style={graph_node}, label=below left:{$d$}] (33) at (4, 1) {};
		\node [style={graph_node}, label=above:{$\rho$}] (34) at (4, 5) {};
		\node [style={graph_node}, label=below left:{$e$}] (35) at (6, 4) {};
		\node [style={graph_node}, label=above:{$f$}] (36) at (8, 3) {};
		\node [style={graph_node}, label=above right:{$g$}] (37) at (6, 0) {};
		\node [style=none,label=right:{$C$}] (38) at (6.5, 5.25) {};
		\node [style=none] (39) at (5.5, -0.5) {};
	\end{pgfonlayer}
\draw [style={cut_line}, in=120, out=-165, looseness=1.25] (38.center) to (39.center);
	\begin{pgfonlayer}{edgelayer}
		\draw [style={graph_edge}, bend right=15, looseness=1.25] (34) to (30);
		\draw [style={graph_edge}, bend right=15, looseness=1.25] (30) to (33);
		\draw [style={graph_edge}, bend left=15, looseness=1.25] (34) to (36);
		\draw [style={graph_edge}] (34) to (31);
		\draw [style={graph_edge}] (31) to (30);
		\draw [style={graph_edge}] (30) to (32);
		\draw [style={graph_edge}] (32) to (33);
		\draw [style={graph_edge}] (33) to (37);
		\draw [style={graph_edge}] (34) to (33);
		\draw [style={graph_edge}] (31) to (32);
		\draw [style={graph_edge}] (34) to (35);
		\draw [style={graph_edge}] (35) to (36);
		\draw [style={graph_edge}] (35) to (37);
	\end{pgfonlayer}
\end{tikzpicture}
         \caption{Weakly connected DAG $G$}
         \label{subfig:cut_split1}
     \end{subfigure}
     \begin{subfigure}[b]{0.98\linewidth}
         \centering
         \begin{tikzpicture}[xscale=0.45, yscale=0.35]
	\begin{pgfonlayer}{nodelayer}
		\node [style={graph_node}, label=below left:{$b$}] (30) at (0, 3) {};
		\node [style={graph_node}, label=above left:{$a$}] (31) at (2, 4) {};
		\node [style={graph_node}, label=below left:{$c$}] (32) at (2, 2) {};
		\node [style={graph_node}, label=below left:{$d$}] (33) at (4, 1) {};
		\node [style={graph_node}, label=above:{$\rho$}] (34) at (4, 5) {};
		\node [style={graph_node}, label=below left:{$x$}, fill=black] (35) at (6, 0.5) {};
		\node [style={graph_node}, label=below left:{$e$}] (41) at (7.75, 4) {};
		\node [style={graph_node}, label=above:{$f$}] (42) at (9.75, 3) {};
		\node [style={graph_node}, label=above right:{$g$}] (43) at (7.75, 0) {};
		\node [style={graph_node}, label=above:{$y$}, fill=black] (44) at (5.75, 4.75) {};
	\end{pgfonlayer}
	\begin{pgfonlayer}{edgelayer}
		\draw [style={graph_edge}, bend right=15, looseness=1.25] (34) to (30);
		\draw [style={graph_edge}, bend right=15, looseness=1.25] (30) to (33);
		\draw [style={graph_edge}] (34) to (31);
		\draw [style={graph_edge}] (31) to (30);
		\draw [style={graph_edge}] (30) to (32);
		\draw [style={graph_edge}] (32) to (33);
		\draw [style={graph_edge}] (34) to (33);
		\draw [style={graph_edge}] (31) to (32);
		\draw [style={graph_edge}] (34) to node[midway, above,sloped]{\footnotesize{$2$}} (35) ;
		\draw [style={graph_edge}] (33) to (35);
		\draw [style={graph_edge}] (41) to (42);
		\draw [style={graph_edge}] (41) to (43);
		\draw [style={graph_edge}] (44) to (41);
		\draw [style={graph_edge}] (44) to (43);
		\draw [style={graph_edge}, bend left, looseness=1.25] (44) to (42);
	\end{pgfonlayer}
\end{tikzpicture}
         \caption{Graphs $H_1$ and $H_2$}
         \label{subfig:cut_split2}
     \end{subfigure}
     }
     \parbox{.64\textwidth}{
     \begin{subfigure}[b]{0.49\linewidth}
         \centering
         \begin{tikzpicture}[scale=0.55]
\begin{scope}[yscale=1,xscale=-1]
	\begin{pgfonlayer}{nodelayer}
		\node [style={graph_node}, label=above left:{$b$}] (62) at (2, 6) {};
		\node [style={graph_node}, label=above left:{$a$}] (63) at (1, 7.5) {};
		\node [style={graph_node}, label=above left:{$c$}] (64) at (3, 4.5) {};
		\node [style={graph_node}, label=above left:{$d$}] (65) at (4, 3) {};
		\node [style={graph_node}, label=above left:{$\rho$}] (66) at (0, 9) {};
		\node [style={graph_node}, label=above left:{$e$}] (67) at (5, 1.5) {};
		\node [style={graph_node}, label=above right:{$f$}] (68) at (4, 0) {};
		\node [style={graph_node}, label=above left:{$g$}] (69) at (6, 0) {};
		\node [style=none] (70) at (4.5, 1.75) {};
		\node [style=none, label=right:{$C$}] (71) at (3.25, 1.75) {};
		\node [style=none] (72) at (5.25, 3) {};
	\end{pgfonlayer}
		\draw [style={extension_edge}] (66.center) to (63.center);
		\draw [style={extension_edge}] (63.center) to (62.center);
		\draw [style={extension_edge}] (62.center) to (64.center);
		\draw [style={extension_edge}] (64.center) to (65.center);
		\draw [style={extension_edge}] (65.center) to (67.center);
		\draw [style={extension_edge}] (67.center) to (68.center);
		\draw [style={extension_edge}] (67.center) to (69.center);
		\draw [style={cut_line}, bend left=15] (71.center) to (72.center);	
	\begin{pgfonlayer}{edgelayer}
		\draw [style={graph_edge}, in=150, out=-75, looseness=0.75] (66) to (62);
		\draw [style={graph_edge}, bend right=15, looseness=0.75] (62) to (65);
		\draw [style={graph_edge}] (66) to (63);
		\draw [style={graph_edge}] (63) to (62);
		\draw [style={graph_edge}] (62) to (64);
		\draw [style={graph_edge}] (64) to (65);
		\draw [style={graph_edge}, bend left=15] (65) to (69);
		\draw [style={graph_edge}, in=90, out=-45, looseness=0.50] (66) to (65);
		\draw [style={graph_edge}, bend left=15] (63) to (64);
		\draw [style={graph_edge}, in=135, out=-90, looseness=0.50] (66) to (67);
		\draw [style={graph_edge}] (67) to (68);
		\draw [style={graph_edge}] (67) to (69);
		\draw [in=135, out=-105, looseness=0.50] (66) to (70.center);
		\draw [style={graph_edge}, in=75, out=-30, looseness=0.75] (70.center) to (68);
	\end{pgfonlayer}
\end{scope}
\end{tikzpicture}
         \caption{Heuristic tree extension $\Gamma_1$}
         \label{subfig:cut_split_bad}
     \end{subfigure}     
     \hfill
     \begin{subfigure}[b]{0.49\linewidth}
         \centering
         \begin{tikzpicture}[scale=0.55]
\begin{scope}[yscale=1,xscale=-1]
	\begin{pgfonlayer}{nodelayer}
		\node [style={graph_node}, label=above left:{$b$}] (54) at (-6.5, 4.5) {};
		\node [style={graph_node}, label=above left:{$a$}] (55) at (-7.25, 6) {};
		\node [style={graph_node}, label=above left:{$c$}] (56) at (-5.75, 3) {};
		\node [style={graph_node}, label=above left:{$d$}] (57) at (-5, 1.5) {};
		\node [style={graph_node}, label=above left:{$\rho$}] (58) at (-8.25, 9) {};
		\node [style={graph_node}, label=above left:{$e$}] (59) at (-8.25, 7.5) {};
		\node [style={graph_node}, label=left:{$f$}] (60) at (-9.25, 6) {};
		\node [style={graph_node}, label=above left:{$g$}] (61) at (-4.25, 0) {};
	\end{pgfonlayer}
		\draw [style={extension_edge}] (58.center) to (59.center);
		\draw [style={extension_edge}] (59.center) to (60.center);
		\draw [style={extension_edge}] (59.center) to (55.center);
		\draw [style={extension_edge}] (55.center) to (54.center);
		\draw [style={extension_edge}] (54.center) to (56.center);
		\draw [style={extension_edge}] (56.center) to (57.center);
		\draw [style={extension_edge}] (57.center) to (61.center);
	\begin{pgfonlayer}{edgelayer}
		\draw [style={graph_edge}, in=150, out=-90, looseness=0.50] (58) to (54);
		\draw [style={graph_edge}, bend right=15] (54) to (57);
		\draw [style={graph_edge}, in=60, out=-105] (58) to (60);
		\draw [style={graph_edge}, in=120, out=-75, looseness=0.75] (58) to (55);
		\draw [style={graph_edge}] (55) to (54);
		\draw [style={graph_edge}] (54) to (56);
		\draw [style={graph_edge}] (56) to (57);
		\draw [style={graph_edge}] (57) to (61);
		\draw [style={graph_edge}, in=90, out=-75, looseness=0.75] (58) to (57);
		\draw [style={graph_edge}, bend left=15] (55) to (56);
		\draw [style={graph_edge}] (58) to (59);
		\draw [style={graph_edge}] (59) to (60);
		\draw [style={graph_edge}, in=150, out=-90, looseness=0.25] (59) to (61);
	\end{pgfonlayer}
\end{scope}
\end{tikzpicture}
         \caption{Optimal tree extension $\Gamma_2$}
         \label{subfig:cut_split_opt}
     \end{subfigure}
     }
        \caption{(a): Weakly connected DAG $G$, with the unique minimum non-trivial DAG-cut $C$. (b): Weighted graphs $H_1$ and $H_2$ created after one iteration of \cref{alg:DAGcut_heuristic}. The arc $\rho x$ has a weight of 2, while the other arcs have unit weights. The two `supervertices' $x$ and $y$ are in black. (c): Canonical tree extension $\Gamma_1$ corresponding to the extension obtained by \cref{alg:DAGcut_heuristic}. The cut $C$ appears again. The tree extension is not optimal, and has a scanwidth of 6. (d): Optimal tree extension $\Gamma_2$ (of scanwidth 5) which does not contain the cut $C$.}
\end{figure}
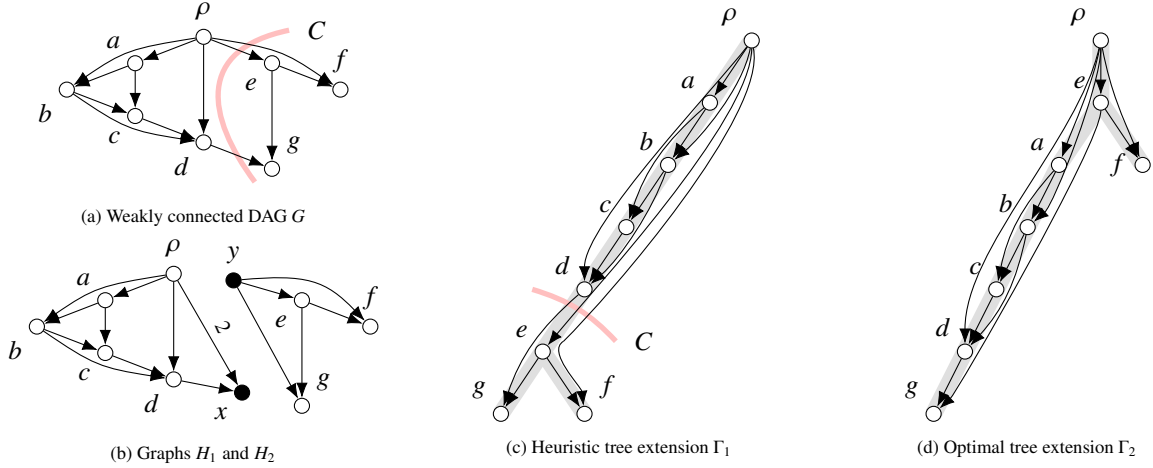

\begin{theorem}
Let $G=(V,E)$ be a weakly connected DAG of $n$ vertices and $m$ arcs. Then, \cref{alg:DAGcut_heuristic} returns an extension of $G$ and runs in $O(n^4 \cdot m)$ time.
\end{theorem}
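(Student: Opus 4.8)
The plan is to establish two things: that \cref{alg:DAGcut_heuristic} indeed outputs an extension of $G$ (correctness), and that it terminates within $O(n^4 \cdot m)$ time (complexity). First I would argue correctness by induction on $|V(G)|$. In the base case, the recursion hits a graph with no non-trivial DAG-cut, and the algorithm returns any extension found by a reverse BFS traversal, which is trivially an extension. For the inductive step, consider the split at a minimum-weight non-trivial DAG-cut $C = (S,T)$: since $T$ is a sinkset, $S$ sits "above" $T$, so $H_1$ (which is $H[S]$ plus a sink supervertex $x$) and $H_2$ (which is $H[T]$ plus a root supervertex $y$) are both weakly connected DAGs strictly smaller than $H$. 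By the inductive hypothesis the recursive calls return extensions $\sigma_1, \sigma_2$ of $H_1, H_2$; restricting to $S$ and $T$ respectively gives extensions of $H[S]$ and $H[T]$ (removing a single source/sink vertex from an extension leaves an extension). Then $\sigma_1[S] \circ \sigma_2[T]$ respects $<_H$: within each block the order is preserved, and no arc of $G$ runs from $T$ to $S$ because $T$ is a sinkset, so putting all of $S$ before all of $T$ is consistent with the DAG order. Hence the output is an extension of $G$.

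Next I would bound the running time. The key quantity to track is the recursion tree of \texttt{MinDAGCutSplit}. Each non-leaf call splits the current graph $H$ into $H_1$ on vertex set $S \cup \{x\}$ and $H_2$ on vertex set $T \cup \{y\}$ where $(S,T)$ partitions $V(H)$ with $|S|, |T| \geq 2$ (non-triviality). So each child has at most $|V(H)| - 1$ vertices, and — more importantly — the two children partition $V(H)$ up to the two added supervertices. A standard charging argument shows the recursion tree has $O(n)$ nodes: if $t(H)$ denotes the number of leaves of the subtree rooted at the call on $H$, then $t(H) = t(H_1) + t(H_2)$ with both $|V(H_i)| < |V(H)|$, and since a graph on $2$ or $3$ vertices is a leaf (it has no non-trivial DAG-cut), induction gives $t(H) \leq |V(H)| - 1$, hence $O(n)$ leaves and $O(n)$ total calls.

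Then I would bound the work per call. Outside the recursive invocations, the dominant cost is finding a minimum-weight non-trivial DAG-cut. By \cref{lem:DAG-cut_reduction}, this reduces to computing a minimum directed $s$-$t$ cut in the auxiliary graph $H$ (with reverse arcs of infinite weight) over all $s \in U$ and $t \in W \setminus \{s\}$. A single minimum $s$-$t$ cut costs $O(n \cdot m)$ via the max-flow algorithm of \cite{orlin2013max}, and there are $O(n^2)$ choices of $(s,t)$, so $O(n^3 \cdot m)$ per call; checking whether a non-trivial DAG-cut exists at all falls out of the same computation by part (b) of the lemma. The construction of $H_1$, $H_2$ and the merging of the concatenated orders is $O(n + m)$ and absorbed. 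Multiplying the $O(n^3 \cdot m)$ per-call cost by the $O(n)$ calls yields the claimed $O(n^4 \cdot m)$ bound. I do not expect a genuine obstacle here; the one point requiring a little care is verifying that the arc weights and sizes of the auxiliary graphs $H_i$ in recursive calls stay bounded by $O(n)$ and $O(m)$ respectively — the supervertex merging can only decrease the total number of arcs (parallel arcs are collapsed with added weight), and weights never affect the $O(n \cdot m)$ max-flow bound we invoke — so the per-call estimate is uniform over the recursion, and the argument goes through cleanly.
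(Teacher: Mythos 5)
Your correctness argument is essentially the paper's: strong induction on $|V(H)|$, using that a non-trivial DAG-cut leaves both sides with at least two vertices so $H_1,H_2$ are strictly smaller, that restricting $\sigma_i$ to $S$ (resp.\ $T$) yields an extension of $H[S]$ (resp.\ $H[T]$), and that $T$ being a sinkset makes the concatenation an extension. For the running time the paper instead writes the recurrence $T(n,m)\leq \max_{2\leq k\leq n-2} T(k+1,m)+T(n-k+1,m)+c\cdot n^3\cdot m$ and proves $T(n,m)\leq d\cdot n^4\cdot m$ by induction, using a concavity argument to see the maximum is attained at $k=2$; you instead bound the number of recursion-tree nodes by $O(n)$ and multiply by the $O(n^3\cdot m)$ per-call cost. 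Both yield $O(n^4\cdot m)$, and your accounting is arguably the more transparent of the two. One small repair: your claimed invariant $t(H)\leq |V(H)|-1$ does not close under the recursion, since $t(H_1)+t(H_2)\leq |S|+|T|=|V(H)|$; the invariant that does close is $t(H)\leq |V(H)|-2$ (valid because every recursive child has at least $3$ vertices), which still gives $O(n)$ leaves and hence $O(n)$ calls, so the conclusion stands. Also note that each cut arc is duplicated into both $H_1$ and $H_2$, so the arc counts of the children are each at most $m$ rather than summing to at most $m$; this is all your uniform per-call bound needs.
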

\begin{proof}
\emph{Correctness:} We will show that the subroutine \texttt{MinDAGCutSplit} always returns an extension of the graph $H$, which will imply that the complete algorithm returns an extension of $G$. We will do this by strong induction on the number of vertices $k$ of $H$.

\emph{Base case:} Whenever $k \in \{1,2,3\}$, $H$ never has a non-trivial DAG-cut, because such a cut needs at least 2 vertices on either side. Then, the procedure returns an extension by reversing a BFS traversal.

\emph{Induction step:} Let $k\geq 4$ be arbitrary, and assume that the statement holds for all $1\leq \ell \leq k-1$. We can furthermore assume that $H$ has a non-trivial DAG-cut, otherwise the procedure will automatically return an extension. We have $|S|, |T| \leq k-2$, since the DAG-cut is non-trivial and cuts off at least 2 vertices on either side. After adding the `supervertices', we thus have that $H_1$ and $H_2$ both have at most $k-1$ vertices. By the induction hypothesis, $\sigma_1$ is then an extension of $H_1$. Because $H[S]$ is a subgraph of $H_1$, $\sigma_1[S]$ (which restricts $\sigma_1$ to $S$) is an extension of $H[S]$. Analogously, we can obtain that $\sigma_2[T]$ is an extension of $H[T]$. Since $C$ was a DAG-cut, $T$ is a sinkset of $G$. This then means that $\sigma = \sigma_2[T] \circ \sigma_1[S]$ is an extension of $H$.

\emph{Time complexity:} Let $T(n, m)$ be the time the algorithm takes to run on a graph with $n$ vertices and $m$ arcs. According to the discussion after \cref{lem:DAG-cut_reduction}, finding the minimum non-trivial DAG-cut of such a graph (or showing that no non-trivial DAG-cut exists) can be done in $O(n^3 \cdot m)$ time. As the other non-recursive parts in the procedure are also dominated by this complexity and graphs with less than four vertices are not split any further, we obtain the following recurrence relation for some constant $c>0$:
$$\begin{dcases}
  T(3, m) \leq c,  & \text{ if } n = 3 \,(\text{and so } m \leq 3); \\
  T(n, m) \leq \max_{2\leq k \leq n-2} T(k + 1, m) + T(n - k +1, m) + c \cdot n^3 \cdot m  , & \text{ if } n \geq 4.
\end{dcases}$$

We prove by strong induction on $n\geq 1$ that $T(n,m) \leq d \cdot n^4 \cdot m$ for some constant $d>0$. The base cases for $n \leq 3$ are trivial. Now assume that the induction hypothesis holds for $1 \leq \ell < n$. Then, using the recurrence relation, we get 
\begin{align*}
T(n, m) &\leq  \max_{2\leq k \leq n-2} \left\{ d\cdot (k+1)^4 \cdot m +d \cdot (n-k+1)^4 \cdot m\right\} +c\cdot n^3 \cdot m \\
&=  d\cdot 3^4 \cdot m +d \cdot (n-2)^4 \cdot m +c\cdot n^3 \cdot m.
\end{align*}
In the last equality we used that the above maximum is attained at the extreme value of $k=2$ (or equivalently, $k=n-2$) for fixed $m$ and $n$, which can easily be shown with a concavity argument. We can now choose $d$ large enough (and independent of $m$ and $n$) such that this expression is at most $d \cdot n^4 \cdot m$. This proves the theorem.
\end{proof}

\subsection{Greedy heuristic and simulated annealing}
\label{sec:greedy_heuristic}
In this subsection we discuss two other heuristic methods. Since simulated annealing and greedy algorithms are well known and intuitive, we will only briefly describe the general ideas and refer the interested reader to \cite[Ch.\,5]{holtgrefe2023scanwidth} for the (algorithmic) details and (complexity) analyses.%\nh{[In response to Mark's comment below]: Is this reference not already enough for simulated annealing, or should I put an external one as well?}

\paragraph{Greedy heuristic}
The idea behind our \emph{greedy algorithm} is to create an extension of seemingly small scanwidth by adding vertices one by one, each time adding the vertex that increases the scanwidth the least. This is achieved by maintaining a set $S$ that keeps track of the vertices that still need to be added. In each iteration, the algorithm chooses the leaf of $G[S]$ that increases the scanwidth the least. The procedure is repeated until the set $S$ is empty (or equivalently, $\sigma$ contains all vertices of $G$). This consecutive `picking' of leaves could lead to an optimal extension if the correct leaf would be chosen in each iteration. This is because we can create any extension by consecutively picking leaves of a graph.

The greedy rule will not necessarily pick the correct leaf each time. One can construct instances where the algorithm will perform very badly. An example is depicted in \cref{fig:bad_greedy_scan}. The figure shows a class of graphs that all have scanwidth 5, yet the greedy algorithm will construct a solution of scanwidth $n$, with $n$ half the number of vertices in the graph.

\begin{figure}[htb]
\centering
\valign{#\cr
  \hsize=0.35\columnwidth
  \begin{subfigure}{0.35\columnwidth}
  \centering
  \begin{tikzpicture}[scale=0.6]
	\begin{pgfonlayer}{nodelayer}
		%\node [style={graph_node},label=left:{$\rho$}] (0) at (1.25, 6.25) {};
		\node [style={graph_node},label=left:{$a_1$}] (1) at (0, 5) {};
		\node [style={graph_node},label=right:{$b_1$}] (2) at (2.5, 4.5) {};
		\node [style={graph_node},label=right:{$b_2$}] (3) at (2.5, 3) {};
		\node [style={graph_node},label={[label distance=.3cm]left:{$a_2$}}] (4) at (0, 3.5) {};
		\node [style={graph_node},label={[label distance=.3cm]left:{$a_{n-1}$}}] (5) at (0, 1.5) {};
		\node [style={graph_node},label=right:{$b_{n-1}$}] (6) at (2.5, 1) {};
		\node [style={graph_node},label=right:{$b_n$}] (7) at (2.5, -0.5) {};
		\node [style={graph_node},label=left:{$a_n$}] (8) at (0, 0) {};
	\end{pgfonlayer}
	\begin{pgfonlayer}{edgelayer}
		%\draw [style={graph_edge}] (0) to (2);
		\draw [style={graph_edge}] (2) to (3);
		%\draw [style={graph_edge}] (0) to (1);
		\draw [style={graph_edge}] (1) to (4);
		\draw [style={graph_edge}] (1) to (2);
		\draw [style={graph_edge}] (4) to (3);
		\draw [style={graph_edge}] (6) to (7);
		\draw [style={graph_edge}] (5) to (8);
		\draw [style={graph_edge}] (8) to (7);
		\draw [style={graph_edge}] (5) to (6);
		\draw [style={graph_edge_dotted}] (3) to (6);
		\draw [style={graph_edge_dotted}] (4) to (5);
		\draw [style={graph_edge}, in=120, out=-120, looseness=0.75] (1) to (8);
		\draw [style={graph_edge}, in=105, out=-120, looseness=0.75] (4) to (8);
	\end{pgfonlayer}
\end{tikzpicture}
  \caption{Extended ladder graph $L_n'$}
  \end{subfigure}
  
  \cr\noalign{\hfill}
  \hsize=0.6\columnwidth
  
  \begin{subfigure}{0.6\columnwidth}
  \centering
  \begin{tikzpicture}[scale=0.5]
	\begin{pgfonlayer}{nodelayer}
		%\node [style={graph_node},label=above:{$\rho$}] (0) at (0, 0) {};
		\node [style={graph_node},label=above:{$a_1$}] (1) at (1.5, -0.25) {};
		\node [style={graph_node},label=above:{$b_1$}] (2) at (3, -0.5) {};
		\node [style={graph_node},label=above:{$b_2$}] (3) at (6, -1) {};
		\node [style={graph_node},label=above:{$a_2$}] (4) at (4.5, -0.75) {};
		\node [style={graph_node},label=above:{$a_{n-1}$}] (5) at (8, -1.25) {};
		\node [style={graph_node},label=above:{$b_{n-1}$}] (6) at (9.5, -1.5) {};
		\node [style={graph_node},label=above:{$b_n$}] (7) at (12.5, -2) {};
		\node [style={graph_node},label=above:{$a_n$}] (8) at (11, -1.75) {};
	\end{pgfonlayer}
	\begin{pgfonlayer}{edgelayer}
		%\draw [style={graph_edge}, bend left=15] (0) to (2);
		\draw [style={graph_edge}, bend left=15] (2) to (3);
		%\draw [style={graph_edge}] (0) to (1);
		\draw [style={graph_edge}, bend right=15] (1) to (4);
		\draw [style={graph_edge}] (1) to (2);
		\draw [style={graph_edge}] (4) to (3);
		\draw [style={graph_edge}, bend left=15] (6) to (7);
		\draw [style={graph_edge}, bend right=15] (5) to (8);
		\draw [style={graph_edge}] (8) to (7);
		\draw [style={graph_edge}] (5) to (6);
		\draw [style={graph_edge_dotted}, bend left=15] (3) to (6);
		\draw [style={graph_edge_dotted}, bend right=15] (4) to (5);
	\end{pgfonlayer}		
		%\draw [style={extension_edge}] (0.center) to (1.center);
		\draw [style={extension_edge}] (1.center) to (2.center);
		\draw [style={extension_edge}] (2.center) to (4.center);
		\draw [style={extension_edge}] (4.center) to (3.center);
		\draw [style={extension_edge}] (5.center) to (6.center);
		\draw [style={extension_edge}] (6.center) to (8.center);
		\draw [style={extension_edge_dashed}] (3.center) to (5.center);
		\draw [style={extension_edge}] (8.center) to (7.center);
		\draw [style={graph_edge}, bend right=15] (4) to (8);
		\draw [style={graph_edge}, in=-150, out=-30, looseness=0.50] (1) to (8);
\end{tikzpicture}
  \caption{Optimal tree extension $\Gamma_1$}
  \end{subfigure}
  
  \vfill
  
  \begin{subfigure}{0.6\columnwidth}
  \centering
  \begin{tikzpicture}[scale=0.5,label distance=3]
	\begin{pgfonlayer}{nodelayer}
		%\node [style={graph_node},label=above:{$\rho$}] (0) at (0, 0) {};
		\node [style={graph_node},label=above:{$a_1$}] (1) at (1.5, -0.25) {};
		\node [style={graph_node},label=above:{$b_1$}] (2) at (8, -1.25) {};
		\node [style={graph_node},label=above:{$b_2$}] (3) at (9.5, -1.5) {};
		\node [style={graph_node},label=above:{$a_2$}] (4) at (3, -0.5) {};
		\node [style={graph_node},label=above:{$a_{n-1}$}] (5) at (5, -0.75) {};
		\node [style={graph_node},label=above:{$b_{n-1}$}] (6) at (11.5, -1.75) {};
		\node [style={graph_node},label=above:{$b_n$}] (7) at (13, -2) {};
		\node [style={graph_node},label=above:{$a_n$}] (8) at (6.5, -1) {};
	\end{pgfonlayer}
	\begin{pgfonlayer}{edgelayer}
		%\draw [style={graph_edge}, bend left=15, looseness=0.75] (0) to (2);
		\draw [style={graph_edge}] (2) to (3);
		%\draw [style={graph_edge}] (0) to (1);
		\draw [style={graph_edge}] (1) to (4);
		\draw [style={graph_edge}, bend right=15] (1) to (2);
		\draw [style={graph_edge}, bend left=15] (4) to (3);
		\draw [style={graph_edge}] (6) to (7);
		\draw [style={graph_edge}] (5) to (8);
		\draw [style={graph_edge}, bend right=15, looseness=0.50] (8) to (7);
		\draw [style={graph_edge}, bend left=15, looseness=0.50] (5) to (6);
		\draw [style={graph_edge_dotted}] (3) to (6);
		\draw [style={graph_edge_dotted}] (4) to (5);
	\end{pgfonlayer}		
		%\draw [style={extension_edge}] (0.center) to (1.center);
		\draw [style={extension_edge}] (1.center) to (4.center);
		\draw [style={extension_edge}] (5.center) to (8.center);
		\draw [style={extension_edge}] (8.center) to (2.center);
		\draw [style={extension_edge}] (2.center) to (3.center);
		\draw [style={extension_edge}] (6.center) to (7.center);
		\draw [style={extension_edge_dashed}] (4.center) to (5.center);
		\draw [style={extension_edge_dashed}] (3.center) to (6.center);
		\draw [style={graph_edge}, bend right=15] (1) to (8);
		\draw [style={graph_edge}, bend right=15, looseness=0.75] (4) to (8);
\end{tikzpicture}
  \caption{Heuristic tree extension $\Gamma_2$}
  \end{subfigure}
  
  \cr
}
\caption{(a): The \emph{extended ladder-graph} $L_n'$ (with $n\geq 3$), which is a weakly connected DAG. (b): An optimal tree extension $\Gamma_1$ of $L_n'$ with scanwidth 5. (c): The worst-case tree extension $\Gamma_2$ of $L_n'$ with scanwidth $n$. This is the canonical tree extension of the extension returned by the greedy heuristic.
}
        \label{fig:bad_greedy_scan}
\end{figure}
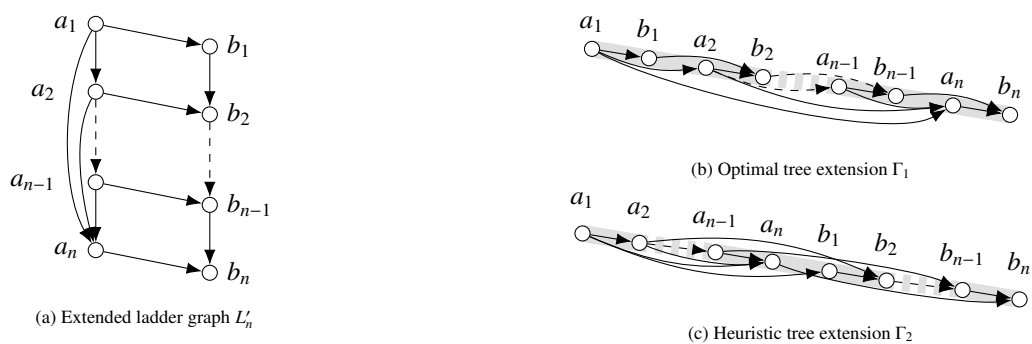

\paragraph{Simulated annealing} A different approach to generate an extension with small scanwidth is to use \emph{simulated annealing}: a metaheuristic that efficiently guides itself through the space of possible solutions. The algorithm starts with some initial extension, either created at random, or utilizing another heuristic. It then randomly selects a neighboring extension by swapping two vertices that are adjacent in the extension but that have no arc connecting them in the graph. This extension is accepted as the new solution if it has a smaller scanwidth. If the neighbor has a larger scanwidth, we still have a probability of accepting it depending on the decreasing \emph{temperature} parameter: the lower the temperature, the less likely it becomes to accept a worse solution. 

As a result of the high starting temperature, the algorithm will initially perform a fairly global and random search. Consequently, almost any solution will be accepted as the next state. Due to the `cooling' of the temperature, the heuristic will slowly start to limit its choices to better solutions. Thus, it nudges itself to a (hopefully global) minimum.

\section{Experimental results}\label{sec:experiments}
In this section we conduct an experimental study to evaluate the performance of our exact algorithms, heuristics and reduction rules on phylogenetic networks. In \cref{subsec:network_generation} we will describe the networks that are used in the experiments and compare their level, scanwidth and treewidth. The subsequent subsections offer analyses of the reduction scheme from \cref{sec:reductions}, the exact methods from \cref{sec:exact_methods}, and the heuristics from \cref{sec:heuristics}, respectively.

All algorithms are implemented in Python and are publicly available as an easily installable package at \begin{center}
\url{https://github.com/nholtgrefe/scanwidth}.
\end{center} This repository also provides the used networks and complete numerical results. All experiments in this section were conducted on an Intel Core i7-8750H CPU @ 2.20 GHz with 16 GB RAM.

\subsection{Network generation}\label{subsec:network_generation}
Closely following the experimental study from \cite{vIersel2023making}, we utilize both a dataset of real-world networks and a synthetically created dataset. The real data is made up of 27 real phylogenetic networks found in the literature, collected on \cite{phylo_online}. Among these \emph{real} networks, 15 are binary, while the remaining 12 are non-binary. The number of leaves ranges from 6 to 39, while the number of reticulations ranges from 1 to 9, except for one outlier with 32 reticulations.

To augment our dataset we use the birth-hybridization network generator from \cite{zhang2018bayesian}, often called the \emph{ZODS generator}. The method takes two input parameters: the speciation rate $\lambda$, and the hybridization rate~$\nu$. Following the computational experiments from \cite{janssen2021comparing, vIersel2023making, bernardini2023constructing}, we set $\lambda = 1$ and sample $\nu$ uniformly at random from the interval $[0.0001, 0.4]$ for each individual network. We adapt the implementation from \cite{vIersel2023making} to generate 100 binary networks for each pair of $(r, \ell)$, where $r \in \{ 10, 20, 30 \}$ denotes the number of reticulations, and $\ell \in \{20, 50 , 100 \}$ the number of leaves. In total, this gives rise to a dataset comprising 900 \emph{synthetic} networks. 

Scornavacca and Weller \cite{Scornavacca2022} requested a comparison of the reticulation number, level, scanwidth and treewidth for different network classes. \cref{fig:parameter_boxplots} functions as a partial answer to this call. It depicts boxplots that show the spread of the level, the treewidth and the scanwidth within each dataset. 

\begin{figure}[htb]
         \centering
	\includegraphics[width=0.99\textwidth]{{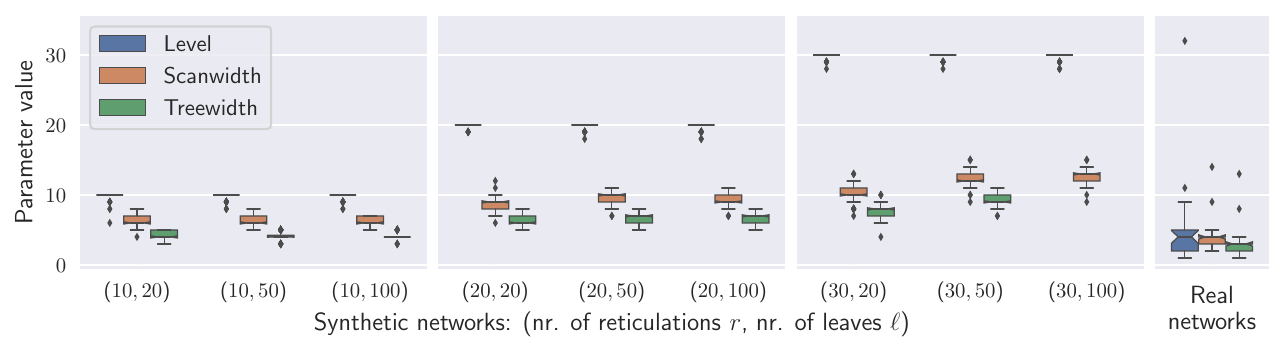}}
     \caption[.]{Variation in level, scanwidth and treewidth within each dataset. The figure displays a boxplot for each of the nine subsets of the synthetic data and one for the real dataset. The boxplots show the quartiles of the data and its outliers. Different colours indicate the three different parameters. The treewidth values of the last synthetic dataset are not presented due to computational constraints, since obtaining these values required excessive computation times surpassing 19 hours for some networks.\protect\footnotemark}
     \label{fig:parameter_boxplots}
\end{figure}

\footnotetext{The values of the scanwidth are calculated using our exact algorithms, whose performance is discussed in \cref{sec:experiment_exact}. The values of the treewidth are calculated on a different CPU using a Java implementation by Tamaki of one of the fastest known exact algorithms \cite{tamaki2022heuristic, treewidth_repo}.}%Intel Core i5-12500H CPU @ 2.50 GHz with 8 GB RAM.

For the synthetic data, we fixed the number of reticulations, explaining why the level has a sharp cutoff at those values in the figure. Moreover, it is evident that the ZODS generator favours networks with levels very close to the reticulation number. Notably, we observe that the level + 1 is greater than or equal to the scanwidth, which in turn is greater than or equal to the treewidth. This aligns with the bounds from \cref{lem:treewidth-bound,lem:level_bound}. As mentioned in the introductory section, scanwidth was proposed as an alternative for treewidth in parametrized algorithms. Although the level and the scanwidth exhibit a considerable difference in value, the treewidth is not much smaller than the scanwidth. This observation strengthens our belief in the practical value of scanwidth as a parameter.

Regarding the real networks, we see a somewhat different trend. The values of the three parameters are closer together in this case. We attribute this to the fact that most levels of the real networks are fairly small. As a consequence, there is limited `room' for the scanwidth and the treewidth. The scanwidth, therefore, takes on predominantly small values, which further suggests its practicality. This is particularly promising since we have an algorithm capable of computing the scanwidth in polynomial time for fixed scanwidth, which becomes efficient when the scanwidth is small.

\subsection{Reductions}
\label{sec:reduction_experiments}
To test the effect the reduction rules from \cref{sec:reductions} have on the size of the networks, we employed the decomposition method (\cref{alg:reduction}) on each network. \cref{subfig:reduction_perc} showcases the percentage of the original vertices that remain after decomposition. Networks with fewer reticulations demonstrate greater potential for reduction. This is to be expected, as such networks are more tree-like, and many of their blocks thus have scanwidth 1 or 2. The decomposition algorithm effectively `deletes' these blocks, leading to a significant reduction in the overall network size.

\begin{figure}[htb]
     \begin{subfigure}[b]{0.495\textwidth}
         \centering
	\includegraphics[height=.495\textwidth]{{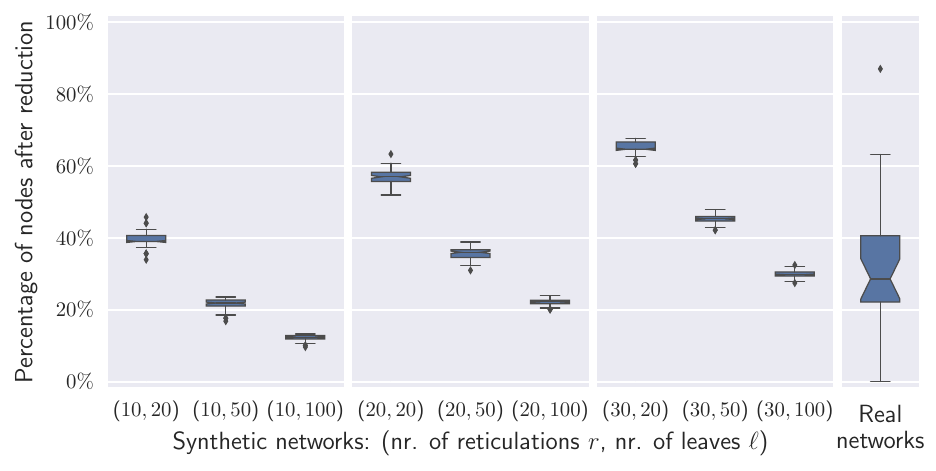}}
         \caption{Relative network size after reduction}
         \label{subfig:reduction_perc}
     \end{subfigure}
\hfill
     \begin{subfigure}[b]{0.495\textwidth}
         \centering
	\includegraphics[height=.495\textwidth]{{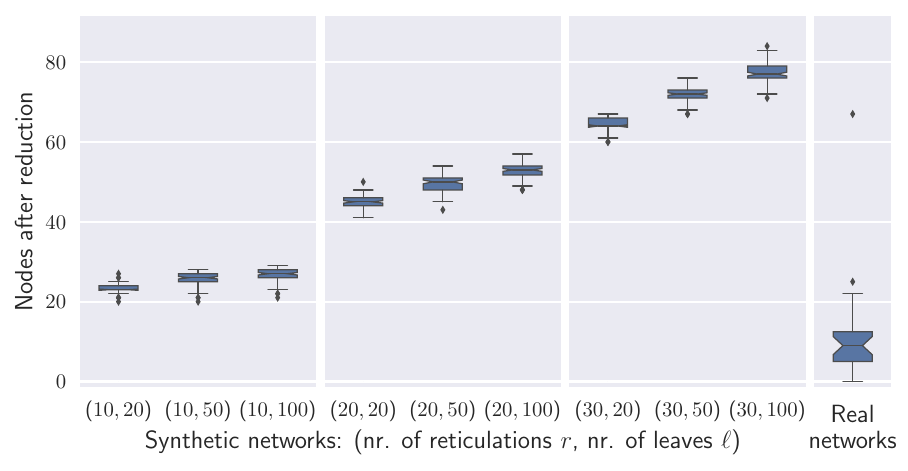}}
         \caption{Absolute network size after reduction}
         \label{subfig:reduction_absolute}
     \end{subfigure}
     \caption{Performance of the decomposition method (\cref{alg:reduction}). Figure (a) depicts how many vertices remain after the decomposition, as a percentage of the number of vertices of the original network. Figure (b) shows the absolute number of vertices after the decomposition. Both subfigures contain boxplots - showing quartiles and outliers of the data - for each of the nine subsets of the synthetic data and one for the real dataset.}
     \label{fig:reduction_performance}
\end{figure}

Regarding the number of leaves, we see a different relationship: a larger number of leaves corresponds to a greater reduction in size. This can be attributed to the fact that our reduction rules delete leaves of networks. Since the real networks vary in terms of reticulation numbers and number of leaves, their reduction percentages exhibit a wider range of values.

\cref{subfig:reduction_absolute} shows the absolute number of vertices after the decomposition of the networks. Logically, networks with more leaves and more reticulations are larger, even after decomposing them. We also see that most of the real networks are relatively small after decomposition.

The time to reduce the networks is extremely small. This outcome is not surprising given that \cref{lem:alg_reduction} proved the quadratic time complexity of the decomposition algorithm. The largest computation time for any of the instances was 0.056 seconds, while the average computation time remained below 0.005 seconds. All in all, the reduction rules prove to be beneficial without imposing substantial computational overhead. Therefore, we certainly recommend incorporating these reduction rules in practice.

\subsection{Exact algorithms}
\label{sec:experiment_exact}
In \cref{sec:exact_methods} we explored multiple exact algorithms and their respective time complexities. A brute force solution runs in $\tilde{O}(n!)$ time, while the recursive \cref{alg:recursion} runs in $\tilde{O} (4^n)$ time. The theoretically superior algorithm repeatedly applies the fixed-parameter \cref{alg:dynamic_program_XP}, as outlined in the proof of \cref{cor:sw_is_xp}. Since phylogenetic networks have a single root by definition (and thus $r=1$), the time complexity from \cref{cor:sw_is_xp} simplifies to $O (k \cdot m \cdot n^k)$, where $k$ represents the scanwidth.

While we have proved that these algorithms yield optimal solutions, it is interesting to assess how fast they run in practice. For each of the above-described algorithms, \cref{fig:exact_performance} provides insight into the percentage of networks for which the scanwidth can be determined within 60 seconds, after first applying the decomposition algorithm. Throughout the different data (sub)sets, the order of the algorithms concerning their completion rates aligns with the theoretical time complexities of the algorithms. The brute-force solution has the smallest completion rate, while (the repeated application of) \cref{alg:dynamic_program_XP} achieves the highest.

\begin{figure}[htb]
         \centering
	\includegraphics[width=0.98\textwidth]{{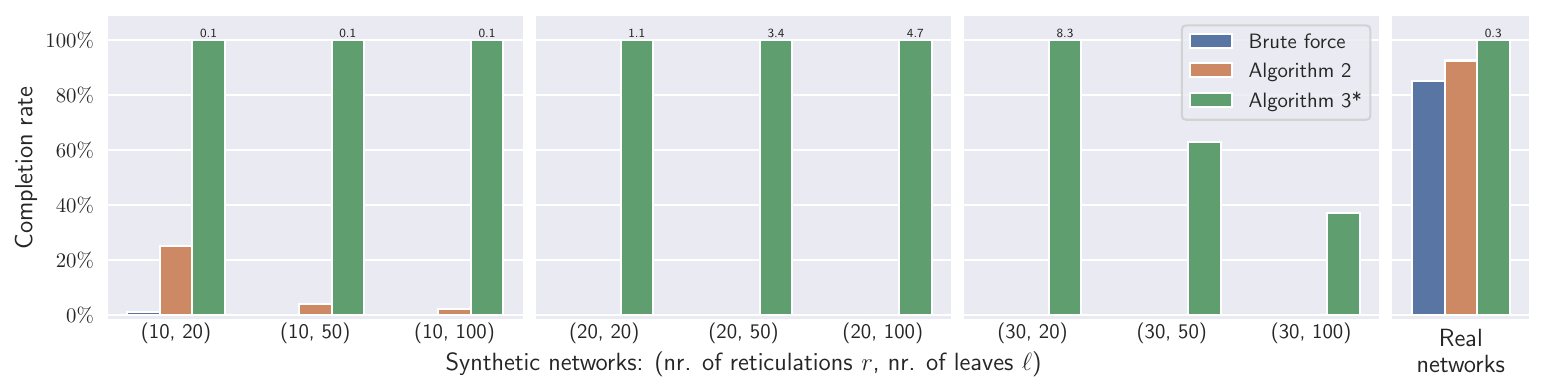}}
     \caption{Performance of the exact algorithms. For each algorithm, the completion rate, calculated as the number of instances per data set where the algorithm successfully computed the scanwidth within 60 seconds, is shown. If the rate is $100 \%$, the average running time in seconds is also depicted. The figure contains results for each of the nine subsets of the synthetic data and one for the real dataset. The different colours indicate the different algorithms. In all cases, the decomposition algorithm was also applied. The asterisk in the legend indicates that we repeatedly applied \cref{alg:dynamic_program_XP} as outlined in the proof of \cref{cor:sw_is_xp}, since the algorithm itself only solves the fixed-parameter version of the problem.}
     \label{fig:exact_performance}
\end{figure}

Interestingly, for the real networks, (the repeated application of) \cref{alg:dynamic_program_XP} achieves a $100\%$ completion rate and an average computation time of just $0.3$ seconds. Thus, on the real dataset the algorithms perform extremely well. However, the brute-force solution also attains a significant completion rate of $85 \%$, indicating that most of these real instances are not too hard after applying the decomposition algorithm.

In general, the completion rates drop as the number of leaves and reticulations increases. This is in line with our discussion from the previous two subsections, where it is noted that these networks are larger and have a higher scanwidth and level.

Looking at the complete synthetic dataset, the best-performing algorithm (i.e. repeated application of \cref{alg:dynamic_program_XP}) had a completion rate of $88.9 \%$ within 60 seconds. Additionally, we allowed this algorithm to run indefinitely to get the scanwidth values for all networks. After 300 seconds, the overall completion rate increased to $98.9 \%$. The maximum computation time of any of the networks using this algorithm turned out to be $453.52$ seconds. Hence, for all generated networks, with up to 30 reticulations and 100 leaves, the scanwidth could be computed exactly within 8 minutes.

\subsection{Heuristics}
\label{sec:experiment_heuristics}
In the previous section, we observed that our fastest algorithm is able to find all optimal tree extensions within 500 seconds. If less time is allowed, we need to turn our attention to heuristics. We evaluate the performance of the greedy heuristic (see \cref{sec:greedy_heuristic}) and the cut-splitting heuristic (\cref{alg:DAGcut_heuristic}) developed in \cref{sec:heuristics}. Additionally, we apply simulated annealing (see \cref{sec:greedy_heuristic}) to both results and compare the performances. Note that we have not proved any theoretical bounds on the approximation ratios of the heuristics. We refer to \cite{holtgrefe2023scanwidth} for the used parameter-settings of the simulated annealing algorithm.

\cref{fig:heuristic_performance} presents the results of the experiment. In \cref{subfig:heuristic_ratio} the practical approximation ratios obtained by the heuristics are shown for the different datasets. First of all, we observe that for the real networks, the practical approximation ratios are very close to 1. In fact, the whiskers of the boxplots are not even shown, indicating that for most networks all heuristics can find an optimal (tree) extension. Applying the cut-splitting heuristic together with simulated annealing even solves all but one network to optimality.

\begin{figure}[htb]
     \begin{subfigure}[b]{\textwidth}
         \centering
         \includegraphics[width=0.98\textwidth]{{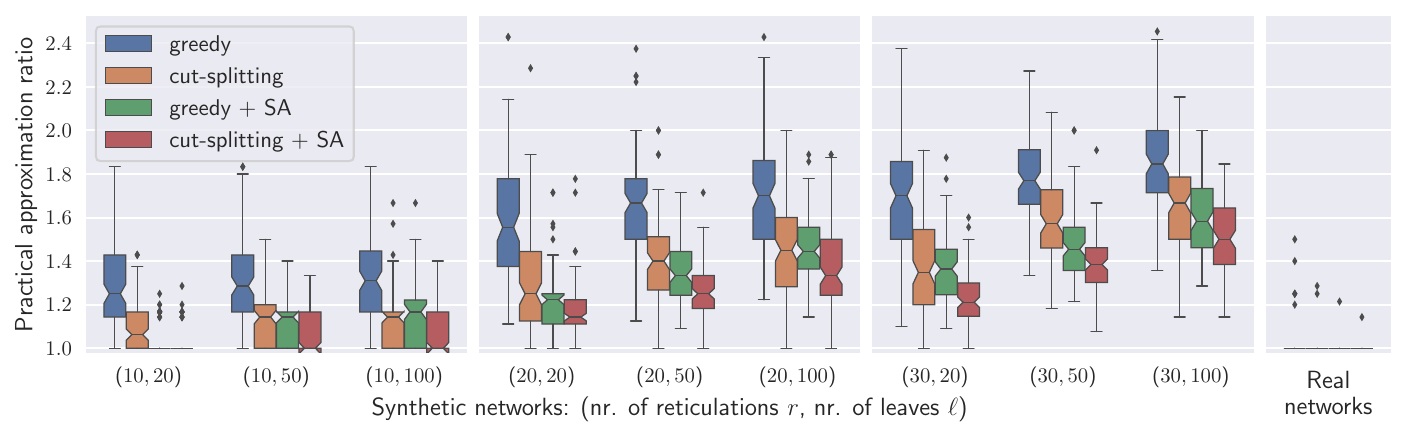}}
         \caption{Practical approximation ratios}
         \label{subfig:heuristic_ratio}
     \end{subfigure}
\hfill
     \begin{subfigure}[b]{\textwidth}
         \centering
	\includegraphics[width=0.98\textwidth]{{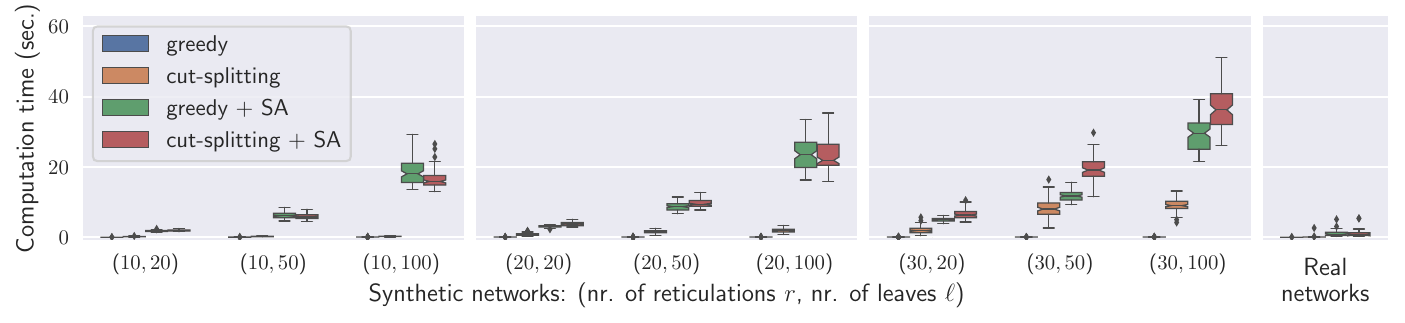}}
         \caption{Computation times}
         \label{subfig:heuristic_time}
     \end{subfigure}
     \caption{Performance of the different heuristics. Boxplots are shown for each of the nine subsets of the synthetic data and the real dataset. The boxplots show the quartiles of the data and its outliers. Figure (a) displays the practical approximation ratios, while Figure (b) depicts the computation times. The computation times for simulated annealing (SA) include the computation time to obtain the initial tree extension. We also applied the decomposition algorithm in all cases.}
     \label{fig:heuristic_performance}
\end{figure}

For the synthetic data, the approximation ratios increase with the number of reticulations and leaves. This is attributed to the fact that such networks are inherently more complex and thus more challenging. It is fairly clear that the greedy heuristic performs the worst, although the practical approximation ratio stays below 2.5. In all cases, the cut-splitting heuristic consistently shows significant improvement over the greedy algorithm. Lastly, applying simulated annealing indeed improves the solution quality even more, albeit at the cost of more computation time.

The computation times of the heuristics are visualized in \cref{subfig:heuristic_time}. It is apparent that simulated annealing takes considerably more computation time compared to just applying the heuristics on their own. Furthermore, we see that the better-performing heuristics require more time than the less effective ones.

\section{Conclusions}\label{sec:conclusion}
The main contribution of this paper is threefold: we have presented and implemented a relatively fast exact algorithm to compute the scanwidth, we have provided insight into the parametrized complexity of the \textsc{Scanwidth} problem, and we developed a heuristic to compute the scanwidth for instances that are too large for the exact method.

Regarding the first topic, we first improve upon a brute-force approach by a recursive exponential time algorithm. The recursive relations in this method form the foundation of an exact algorithm that can compute the scanwidth of DAGs with a fixed number of roots in slicewise polynomial time for fixed scanwidth. This algorithm iterates from top-to-bottom through different vertex subsets of a graph while storing intermediate results. The time complexity can be bounded by $O((k+r-1) \cdot m \cdot n^{k+r-1})$ for DAGs with $r$ roots, where $k$ is the scanwidth. Furthermore, in combination with a decomposition algorithm, the time complexity of the algorithm can be bounded by $O(2^{4 \ell - 1} \cdot \ell \cdot n + n^2)$ for level-$\ell$ networks.

The worst-case time complexity of this algorithm shows that for DAGs with a fixed number of roots \textsc{Scanwidth} is part of the complexity class XP with respect to its natural parametrization (i.e. with respect to the parameter scanwidth itself). Moreover, the above time complexity for level-$\ell$ networks shows that \textsc{Scanwidth} is fixed-parameter tractable with respect to the level of a network. Specifically, it takes quadratic time to calculate the scanwidth of a network when the level is fixed. It remains open whether \textsc{Scanwidth} lies in XP with respect to its natural parametrization when the number of roots of the DAG is unbounded. An even stronger open question is whether the problem is FPT with respect to its natural parametrization (with, or possibly without, a bounded number of roots), as is known to be the case for cutwidth~\cite{bodlaender2009derivation} and treewidth~\cite{bodlaender1993linear}.

We observe that the cuts in a tree extension are of a specific type: DAG-cuts. Using the fact that one can find a smallest (non-trivial) DAG-cut in polynomial time, we are able to efficiently keep splitting a graph into smaller subgraphs at these cuts. Although not necessarily optimal, we showed that this heuristic performs great in practice, especially when enhanced with simulated annealing.

Tested on a set of 27 real-world rooted phylogenetic networks, the exact XP algorithm performs best. The algorithm is able to compute the scanwidth of any network within 7.86 seconds, averaging a computation time of just 0.30 seconds. On a synthetic dataset of networks, the algorithm struggles with the harder instances, albeit still able to compute any scanwidth within 500 seconds. On these fairly hard instances---with 30 reticulations and 100 leaves---the earlier described heuristic attains an average practical approximation ratio of 1.5.

These experimental results show that computing the scanwidth exactly or finding a near-optimal solution can surely be done in a reasonable time. Additionally, we show that in practice, the treewidth---scanwidth's main competitor when it comes to parametrized algorithms---is not much smaller for networks. Scanwidth is also more intuitive for phylogenetic networks than treewidth. Together, these observations motivate the use of scanwidth (and the corresponding tree extensions) when designing parametrized algorithms in phylogenetics. No treewidth algorithm is known for \textsc{Hybridization Number} \cite{BORDEWICH2007914}, thus making it a possible candidate for a scanwidth-based approach. On the other hand, \textsc{Tree Containment} \cite{iersel2023embedding} and \textsc{Small Parsimony} \cite{Scornavacca2022} can be parametrized by treewidth, but might be solved faster in practice when parametrized by scanwidth.

Another direction of possible further research would be the transferability of some of our results to edge-treewidth, introduced in \cite{magne2023edge}. As this parameter is very closely related to scanwidth, it seems that translating our algorithms to edge-treewidth is not far-fetched. Perhaps, it is also possible to adapt our algorithms to \emph{node-scanwidth}, where instead of arcs we are interested in the tails of the arcs (similar to \cref{def:treewidth} of treewidth). A possibly simpler generalization would be the translation of our results to arc-weighted DAGs or multigraphs.

Furthermore, it is not known whether scanwidth can be approximated efficiently. Recent research into the inapproximability of width parameters suggests that treewidth and undirected cutwidth are inapproximable up to a constant factor within polynomial time  \cite{wu2014inapproximability}. It is not unthinkable that these inapproximability results translate to scanwidth. On the positive side, treewidth can be approximated within a ratio only linearly dependent on the treewidth itself (see~\cite{korhonen2022single} for the state-of-the-art and an overview of other approximation algorithms).

Lastly, we address our experimental study. Our synthetic networks were created with a single generating method~\cite{zhang2018bayesian}. A sensible next step is then to extend the study to other network generators (see \cite{janssen2021comparing}). Moreover, we could look into specific types of networks, both from a computational and a theoretical standpoint. We already considered networks with fixed reticulation numbers and levels, but other classes also exist (see \cite{kong2022classes} for a recent survey). During our experiments, we observed that scanwidth was a lot faster to compute than treewidth. We would thus welcome efforts towards a thorough comparison of the practical computability of the two parameters, both from an exact and a heuristic point of view.

\section*{Acknowledgements}
We extend our sincere appreciation to Mathias Weller and Norbert Zeh for insightful discussions on the topic of this paper.

\appendix

\section{Omitted proofs}\label{sec:omitted_proofs}

\begin{lemma}\label{lem:GW_unique}
Let $G=(V,E)$ be a weakly connected DAG, and let $\Gamma$ and $\Omega$ be two tree extensions of $G$. Then, $\GW_v^\Gamma = \GW_v^\Omega$ for all $v \in V$ if and only if $\Gamma = \Omega$. Therefore, a tree extension is uniquely determined by the sets $\GW$.
\end{lemma}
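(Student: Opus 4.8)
The plan is to prove both directions of the equivalence, where the forward direction (if $\GW_v^\Gamma = \GW_v^\Omega$ for all $v$, then $\Gamma = \Omega$) is the substantive one; the reverse direction is immediate since $\GW_v^\Gamma$ is defined purely in terms of $\Gamma$ and $G$. For the forward direction, the key idea is to recover the parent-child structure of the tree extension from the sets $\GW$. Recall that $\GW_v^\Gamma = \{xy \in E : x >_\Gamma v \geq_\Gamma y\}$ is exactly the set of arcs that ``pass through or enter'' $v$ in the tree extension. Since $G$ is weakly connected and every tree extension has a unique maximal element (the root, which must be the unique root of $G$), the two tree extensions share the same root $\rho$, and $\GW_\rho^\Gamma = \GW_\rho^\Omega$ consists exactly of the arcs leaving $\rho$ in $G$.

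First I would set up the recovery of $\Gamma$ from the $\GW$ sets by a top-down argument. The crucial observation is this: for a vertex $v$ with parent $p$ in $\Gamma$, the set $\GW_v^\Gamma$ is contained in $\GW_p^\Gamma$ (every arc passing through or into $v$ also passes through or into $p$), and moreover $\GW_v^\Gamma \subsetneq \GW_p^\Gamma$ unless $p$ is the only ancestor-path neighbour — but in fact strict containment can fail when $\GW$ sets coincide along a path. A cleaner route: I would show that the relation ``$u$ is a $\Gamma$-ancestor of $v$'' can be characterized via the $\GW$ sets together with $G$. Specifically, for $u \neq v$, $u >_\Gamma v$ holds if and only if $\GW_v^\Gamma \subseteq \GW_u^\Gamma$ \emph{and} $v$ lies ``below'' $u$ in a sense detectable from the arcs — here one uses that $u >_\Gamma v$ implies $u \geq_G v$ is \emph{not} required, but the $G$-respecting property and weak connectivity let us track which component each vertex falls into.

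Actually, the most direct approach I would carry out: induct on the tree structure from the root down. The root $\rho$ is common. Suppose we have shown $\Gamma$ and $\Omega$ agree on the subtree-ancestry relations down to some level. For a vertex $v$, its $\Gamma$-parent is the $\leq_\Gamma$-minimal vertex $u$ with $u >_\Gamma v$; I claim $u$ is recoverable as follows. Consider all arcs $xy \in E$ with $y = v$ or $y$ a $\Gamma$-descendant of $v$ entering $v$; take $\GW_v^\Gamma$. The parent $u$ of $v$ is characterized by: $u$ is the unique vertex such that $v$ is a child of $u$ in the transitive-reduction-like structure, which by \cref{prop:canonical_tree_iff}-style reasoning corresponds to $G[V(\Gamma_v)]$ being a weakly connected ``chunk'' hanging off $u$. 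The set $\GW_v^\Gamma$ records precisely the arcs from outside $V(\Gamma_v)$ into $V(\Gamma_v)$ plus arcs from outside into the path above — and the tails of arcs in $\GW_v^\Gamma$ that are $\leq_\Gamma$-minimal among ancestors pin down $u$.

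The hard part will be making the recovery of the parent function rigorous without circularity — i.e., extracting $\Gamma$ from the $\GW$ sets in a way that only uses $G$ and the sets themselves, not $\Gamma$. I expect the cleanest formulation is: define, for each $v$, the set $A_v = \{u \in V : \GW_v^\Gamma \subseteq \GW_u^\Gamma, \exists xy \in \GW_v^\Gamma \text{ with } x = u\} \cup \{\text{vertices forced by } G\text{-connectivity}\}$, show $A_v$ is exactly the set of strict $\Gamma$-ancestors of $v$, and then the parent is the unique element of $A_v$ that is in $A_w$ for no proper descendant relationship — equivalently the $\Gamma$-minimal element, which is itself recoverable since for ancestors $u_1, u_2$ of $v$ one has $u_1 >_\Gamma u_2$ iff $v \in V(\Gamma_{u_2}) \subseteq V(\Gamma_{u_1})$, detectable because $\GW_{u_2}^\Gamma \subseteq \GW_{u_1}^\Gamma$ with the containment being ``witnessed'' consistently. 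Once the parent function $\Gamma \mapsto (v \mapsto \mathrm{par}_\Gamma(v))$ is shown to factor through $(\GW_v^\Gamma)_{v \in V}$, we get $\mathrm{par}_\Gamma = \mathrm{par}_\Omega$, hence $\Gamma = \Omega$ as rooted trees on $V$, completing the proof. I would double-check the base/edge cases: vertices $v$ with $\GW_v^\Gamma = \emptyset$ (these are exactly the $G$-roots that are $\Gamma$-leaves, of which there is only the root $\rho$ in the weakly connected case, so this is not an issue) and leaves of $\Gamma$ (where $\GW_v^\Gamma$ is just the in-arcs of $v$).
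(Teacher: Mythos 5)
Your overall strategy---reconstructing the tree from the sets $\GW$ by recovering the parent function top-down---is genuinely different from the paper's, but as written it has gaps that would not survive being made rigorous. First, your ``crucial observation'' that $\GW_v^\Gamma \subseteq \GW_p^\Gamma$ for the $\Gamma$-parent $p$ of $v$ is false: any arc $py$ with $y \leq_\Gamma v$ lies in $\GW_v^\Gamma$ but not in $\GW_p^\Gamma$, since the definition requires the tail to be \emph{strictly} above $p$. Second, your setup claims the two tree extensions share the same root and that $\GW_\rho^\Gamma$ equals the out-arcs of $\rho$; neither holds. A weakly connected DAG may have several roots (vertices of indegree $0$), and different tree extensions may place different ones at the top, while $\GW_\rho^\Gamma = \emptyset$ for the root of $\Gamma$ by definition. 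Third, and most importantly, the actual recovery of the parent function is never carried out: the set $A_v$ you propose is internally inconsistent (if $xy \in \GW_v^\Gamma$ with $x = u$ and $\GW_v^\Gamma \subseteq \GW_u^\Gamma$, then $uy \in \GW_u^\Gamma$ would force $u >_\Gamma u$), and ``vertices forced by $G$-connectivity'' is left undefined. You flag this yourself as ``the hard part,'' but it is precisely the content of the lemma.

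For comparison, the paper argues the contrapositive directly and much more economically: it uses that $xy \in \GW_v^\Gamma$ exactly when $x \notin V(\Gamma_v)$ and $y \in V(\Gamma_v)$, so if $\Gamma \neq \Omega$ then some subtree vertex sets differ, say $V(\Gamma_u) \not\subseteq V(\Omega_u)$; taking $y \in V(\Gamma_u)\setminus V(\Omega_u)$ highest in $\Gamma$, a three-case analysis (on whether $y$ has a $G$-parent and where the successors of $y$ sit) exhibits a single arc lying in exactly one of $\GW_u^\Gamma$, $\GW_u^\Omega$. If you want to salvage your reconstruction approach, the essential step you must prove is that the arc set $\GW_v^\Gamma$ determines the vertex set $V(\Gamma_v)$ among subtrees of tree extensions---and establishing that requires essentially the same witness-arc case analysis the paper performs, so the direct contrapositive is the shorter route.
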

\begin{proof}
The `if direction' is trivial, so it remains to prove the `only-if direction'. Assume that $\Gamma$ and $\Omega$ are two different tree extensions of the graph $G$. For any $v \in V$, we write $\Gamma_v$ (resp. $\Omega_v$) for the subtree of $\Gamma$ (resp. $\Omega$) rooted at $v$. As $\Gamma \neq \Omega$, we can assume without loss of generality that $V (\Gamma_u) \not\subseteq V( \Omega_u )$ for some $u \in V$.

Let $y$ be a vertex in $V (\Gamma_u) \setminus V(\Omega_u)$ that is highest in $\Gamma$. Assuming that $G$ is non-trivial, we get three cases. (i): $y$ has a parent $x$ in $G$. Then $x \notin V( \Omega_u)$ and $x \notin V(\Gamma_u)$ (otherwise, $x$ would be a higher vertex in $V (\Gamma_u) \setminus V(\Omega_u)$). Then, $xy \in \GW_u^\Gamma \setminus \GW_u^\Omega$. (ii): $y$ does not have a parent in $G$ and some successor of $y$ in $G$ is in $V(\Omega_u)$. (A \emph{successor} of $y$ is a vertex $z <_G y$.) Let $z$ be the highest such successor in $G$, then there exist some arc $xz \in E$ and $xz \in \GW_u^\Omega \setminus \GW_u^\Gamma$. (iii): $y$ does not have a parent in $G$ and all successors of $y$ in $G$ are not in $V(\Omega_u)$. Since $V(\Gamma_u) \neq V$, there must be some arc~$xz$ entering $V(\Gamma_u)$. Using that $y$ is a root in $G$, we can choose $xz$ such that $z$ has a successor that is also a successor of~$y$. But then, $x$ and $z$ are not in $V(\Omega_u)$,
and so $xz \in \GW_u^\Gamma \setminus \GW_u^\Omega$.

In all cases, $\GW_u^\Gamma \neq \GW_u^\Omega$. This proves the lemma, and consequently, the fact that the sets $\GW$ uniquely determine a tree extension.
\end{proof}

\propcanon*
\begin{proof}
The `only-if direction' follows directly from \cite[Lem.\,5f]{Berry2022b}, thus it remains to prove the `if direction'. Let $\Gamma$ be a tree extension that is not canonical for $\sigma$. We now claim that there exists a $v\in V$ such that $\SW_v^{\sigma} \subset \GW_v^{\Gamma}$.

\begin{claimproof}
Let $v\in V$ and $xy \in \SW_v^{\sigma}$ be arbitrary. By definition, $x >_\sigma v \geq_\sigma y$ and $y \connect{G[1 \ldots v]} v$. As $\sigma$ is an extension of $\Gamma$, together this gives that $v \geq_\Gamma y$. Furthermore, $x$ and $v$ can not be incomparable in $\Gamma$, since $y \connect{G[1 \ldots v]} v$ and $xy$ is an arc of $G$. Combining with the fact that $x >_\sigma v$ and that $\sigma$ is an extension of $\Gamma$, we must then have $x >_\Gamma v$. This means that $xy \in \GW_v^{\Gamma}$. So, $\SW_v^{\sigma} \subseteq \GW_v^{\Gamma}$.

According to \cref{lem:GW_unique}, there exists some $v \in V$ such that $\GW_v^\Gamma \neq \GW_v^{\Gamma^\sigma}$. By \cite[Lem.\,5i]{Berry2022b}, we then get that $\SW_v^\sigma = \GW_v^{\Gamma^\sigma} \neq \GW_v^\Gamma$. We already had that $\SW_v^{\sigma} \subseteq \GW_v^{\Gamma}$, so $\SW_v^{\sigma}  \subset  \GW_v^\Gamma  $.
\end{claimproof}

Let $v$ be as in the claim, then there exists an arc $xy \in \GW_v^{\Gamma}$ that is not in $\SW_v^{\sigma}$. But since $\sigma$ is an extension of $\Gamma$, we must have that $x >_\sigma v \geq_\sigma y$. Then, $xy \notin \SW_v^{\sigma}$ implies that $y$ is not weakly connected to $v$ in $G[\sigma[1 \ldots v]]$. Clearly, $V(\Gamma_v) \subseteq \sigma[1 \ldots v]$, as $\sigma$ is an extension of $\Gamma$. But then, $y$ is also not weakly connected to $v$ in $G[V(\Gamma_v)]$. This means that $G[V(\Gamma_v)]$ is not weakly connected.
\end{proof}

\twbound*
\begin{proof}
Let $\Gamma$ be some tree extension of $G$ and $v \in V$ arbitrary. Clearly, $\Gamma$ is also a tree layout for $\tilde{G}$. By definition, $\GW_v^\Gamma (G)= \{xy \in E: x>_\Gamma v \geq_\Gamma y\}$. We now define the mapping $\phi : \GW_v^\Gamma (G) \rightarrow \TW_v^\Gamma (\tilde{G})$ as $\phi (xy) = x$. We first show that $\phi$ indeed maps all elements of $\GW_v^\Gamma (G)$ to $\TW_v^\Gamma (\tilde{G})$. For any $xy \in \GW_v^\Gamma (G)$, we surely have that $x \in V$ and that $x >_\Gamma v$. If we now set $y = w$, we immediately find the $w$ with $w\leq_\Gamma v$, such that $xw \in E$. Thus, $x$ is indeed in $\TW_v^\Gamma (\tilde{G})$.

Secondly, we prove that $\phi$ is surjective. To this end, let $u \in \TW_v^\Gamma (\tilde{G})$ be arbitrary. Then, there exists at least one~$w$ such that $u >_\Gamma v \geq_\Gamma w$ and $uw \in E$. But then, it holds that $uw \in \GW_v^\Gamma (G)$, and so $\phi(uw) = u$, proving that that $\phi$ is surjective.

Since $\phi$ is a surjective mapping from $\GW_v^\Gamma (G)$ to $\TW_v^\Gamma (\tilde{G})$, it must hold that $|\GW_v^\Gamma (G)| \geq |\TW_v^\Gamma (\tilde{G})|$. Then also, $\tw (\Gamma, G) = \max_v |\TW_v^\Gamma (\tilde{G})| \leq \max_v |\GW_v^\Gamma (G)| = \sw (\Gamma, G)$. Finally, letting $\Gamma$ be such that $\sw (G) = \sw (\Gamma, G)$ proves the bound.
\end{proof}

Recall that we write $W \sqsubseteq V$ to indicate that $W$ is a sinkset of a DAG $G=(V,E)$.

\begin{lemma}
\label{lem:sinkset_reticulation_nr}
Let $G=(V,E)$ be a network with reticulation number $k$. Then, for all $W\sqsubseteq V$ such that $G[W]$ is weakly connected, we have that $\deltain (W) \leq k + 1.$
\end{lemma}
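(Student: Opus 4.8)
The plan is to rewrite $\delta^{\mathrm{in}}(W)$ as a difference of two sums of indegrees and then estimate each sum separately, comparing the ``total indegree inside $W$'' against the number of arcs of the connected subgraph $G[W]$. Since $G[W]$ is an induced subgraph, for every $v\in W$ the arcs entering $v$ from $V\setminus W$ are exactly those arcs of $G$ into $v$ that do not appear in $G[W]$, so
\[
\delta^{\mathrm{in}}(W)\;=\;\sum_{v\in W}\bigl(\delta^{\mathrm{in}}_G(v)-\delta^{\mathrm{in}}_{G[W]}(v)\bigr)\;=\;\Bigl(\sum_{v\in W}\delta^{\mathrm{in}}_G(v)\Bigr)-\bigl|E(G[W])\bigr|,
\]
where the last equality uses the directed handshake identity $\sum_{v\in W}\delta^{\mathrm{in}}_{G[W]}(v)=|E(G[W])|$.

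First I would lower-bound $|E(G[W])|$: because $G[W]$ is weakly connected and $G$ has no parallel arcs, the underlying undirected graph of $G[W]$ is a connected simple graph on $|W|$ vertices, hence has at least $|W|-1$ edges, so $|E(G[W])|\ge |W|-1$. Second I would upper-bound $\sum_{v\in W}\delta^{\mathrm{in}}_G(v)$ using the network structure: in a network the only vertices whose indegree differs from $1$ are the root (indegree $0$) and the reticulations (indegree at least $2$). Writing a reticulation's indegree as $1+(\delta^{\mathrm{in}}_G(v)-1)$ then gives
\[
\sum_{v\in W}\delta^{\mathrm{in}}_G(v)\;\le\;|W|\;+\;\sum_{\substack{v\in W\\ v\text{ a reticulation}}}\!\!\bigl(\delta^{\mathrm{in}}_G(v)-1\bigr)\;\le\;|W|+\sum_{\substack{v\in V\\ v\text{ a reticulation}}}\!\!\bigl(\delta^{\mathrm{in}}_G(v)-1\bigr)\;=\;|W|+r(G),
\]
where the middle step only enlarges the (nonnegative) index set and the final equality is the definition of the reticulation number, which equals $k$. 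Combining the three displays yields $\delta^{\mathrm{in}}(W)\le(|W|+k)-(|W|-1)=k+1$, as required.

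There is no genuine obstacle here; it is a short double-counting estimate, and the only steps needing a word of justification are the directed handshake identity and the bound $|E(G[W])|\ge|W|-1$ from weak connectivity. I would also remark in passing that the sinkset hypothesis $W\sqsubseteq V$ is not actually used — the bound holds for every weakly connected induced subgraph — and that the same computation gives a slightly sharper statement: if the root of $G$ lies in $W$, then $\sum_{v\in W}\delta^{\mathrm{in}}_G(v)\le|W|-1+k$, and hence $\delta^{\mathrm{in}}(W)\le k$.
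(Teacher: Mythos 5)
Your proof is correct, but it takes a genuinely different route from the paper's. The paper proves the (equivalent) stronger statement $\delta^{\mathrm{in}}(W)\le r_W(G)+1$, where $r_W(G)=\sum_{v\in W:\,\delta^{\mathrm{in}}(v)\ge 2}(\delta^{\mathrm{in}}(v)-1)$, by induction on $r_W(G)$: it deletes an arc entering a reticulation of $W$ and splits into cases according to whether the deletion disconnects $G[W]$, using the sinkset hypothesis to apply the induction hypothesis to the resulting components. Your argument replaces all of this with a single double-counting identity, $\delta^{\mathrm{in}}(W)=\sum_{v\in W}\delta^{\mathrm{in}}_G(v)-|E(G[W])|$, bounding the first term by $|W|+r_W(G)$ via the indegree classification of network vertices and the second from below by $|W|-1$ via weak connectivity. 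What your approach buys: it is shorter and non-inductive, it recovers the same sharper local bound $r_W(G)+1$ before you relax to the global reticulation number, it makes transparent that the sinkset hypothesis is never needed (the paper's base case and induction step both lean on it), and it yields the free improvement to $\delta^{\mathrm{in}}(W)\le k$ when the root lies in $W$. It also extends verbatim to arbitrary DAGs, since the pointwise bound $\delta^{\mathrm{in}}_G(v)\le 1+\max\{0,\delta^{\mathrm{in}}_G(v)-1\}$ holds for every vertex, not just those of the four network types. The only steps requiring care --- the directed handshake identity for the induced subgraph and the edge count of a connected graph --- are both handled correctly, so there is nothing to fix.
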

\begin{proof}
For any DAG $H$ and $W\subseteq V(H)$, let $r_W (H)= \sum_{v \in W: \deltain(v) \geq 2} (\deltain (v) -1 )$. We will now prove the stronger statement that for any DAG $G=(V,E)$ and for all $W\sqsubseteq V$ with $G[W]$ weakly connected, $\deltain (W) \leq r_W (G) + 1$. For any such set $W$, we will prove this by induction on $r_W(G)$. This will immediately imply the lemma, since any network is a DAG and $r_W (G)$ is at most the reticulation number $k$ of a network $G$.

\emph{Base case:} ($r_W (G) =0$). Let $G$ be a DAG with $W\sqsubseteq V$ such that $G[W]$ is weakly connected and $r_W (G) = 0$. Then, $W$ contains no reticulation vertices, otherwise $r_W (G) > 0$. As $W$ is a weakly connected sinkset, $G[W]$ must then either be a pendent tree of $G$, or $G[W] = G$. In both cases, $\deltain (W) \leq 1 = r_W (G) + 1$.

\emph{Induction step:} ($r_W (G) \geq 1$). Let $G$ be a DAG with $W\sqsubseteq V$ such that $G[W]$ is weakly connected and $r_W (G) \geq 1$. Assume that the induction hypothesis holds for all DAGs $H$ and sets $W' \sqsubseteq V(H)$ such that $r_{W'} (H) < r_W (G)$. As $r_W (G) \geq 1$, we know that $W$ contains a reticulation vertex. We can then pick an arc $uv\in E$ such that $v$ is a reticulation vertex in $W$. It is easy to see that $W$ is still a sinkset of $G' = G - \{ uv \}$, and that $r_W(G') = r_W (G) - 1$. We now consider two cases and show that $\deltain_G (W) \leq r_W (G) + 1$, which will conclude the proof.

\emph{Case 1:} If $G'[W]$ is weakly connected, the induction hypothesis shows that $\deltain_{G'} (W) \leq r_W (G') + 1$. But then, if we add $uv$ back to $G'$ to obtain $G$ again, we can only increase the indegree of $W$ by 1, and it follows that $\deltain_G (W) \leq \deltain_{G'} (W) + 1  \leq r_W (G') + 2 =  r_W(G)+1$.

\emph{Case 2:} If $G'[W]$ is not weakly connected, the deletion of $uv$ must have disconnected $G[W]$. Deletion of one arc can only increase the number of weakly connected components by one. Therefore, $G'[W]$ consists of two weakly connected components: $G'[W_1]$ and $G'[W_2]$. Because $G'[W_1]$ and $G'[W_2]$ are disconnected in $G'$, and $W$ was a sinkset in $G'$, we must have that $W_1$ and $W_2$ are sinksets of $G'$. Since $W_1$ and $W_2$ partition $W$, we have $r_W (G') = r_{W_1} (G') + r_{W_2} (G')$.

For both $i=1$ and $i=2$, we now get that $r_{W_i} (G') \leq r_W (G') = r_W (G) -1$. Furthermore, $G'[W_i]$ is a weakly connected sinkset, as discussed above. Thus, we can apply the induction hypothesis, and get $\deltain_{G'} (W_i) \leq r_{W_i} (G') + 1$ for $i=1, 2$. Using this and the fact that no arc exists between $W_1$ and $W_2$ in $G'$, we get
$\deltain_{G'} (W) = \deltain_{G'} (W_1) + \deltain_{G'} (W_2) \leq  ( r_{W_1} (G') + 1 ) + ( r_{W_2} (G') + 1 )=  r_W (G') + 2$.
Since $uv$ disconnected $G[W]$, $u$ and $v$ must both be in $W$. Consequently, the arc $uv$ can not count towards the indegree of $W$. From this, it follows that $\deltain_G (W) = \deltain_{G'} (W) \leq r_W(G') + 2 = r_W(G)+1$.
\end{proof}

\levelbound*
\begin{proof}
Let $r$ be the reticulation number of $G$. We will prove that the scanwidth of $G$ is at most $r+1$. The result then follows from the definition of the level and from \cref{cor:split_blocks}, which shows that the scanwidth of a rooted weakly connected DAG is equal to the maximum scanwidth of its blocks.

Let $\sigma$ be an optimal extension of $G$. Let $v\in V$ be arbitrary, and consider the set $\SW_v^\sigma$. We can write $|\SW_v^\sigma| = \deltain (S)$, where $S \subseteq \sigma[1 \ldots v]$ contains all vertices that are weakly connected to $v$ in $G[1\ldots v]$. Clearly, $S$ is also a sinkset (as $\sigma$ is an extension of $G$). According to \cref{lem:sinkset_reticulation_nr}, we then have that $|\SW_v^\sigma| = \deltain(S) \leq r+1$. Since $v$ was arbitrary, $\sw(G) = \max_{v\in V} |\SW_v^\sigma| \leq r+1$.
\end{proof}

\bibliographystyle{elsarticle-num} 
\bibliography{references}

@book{diestel2017graph,
  title = {Graph Theory},
  DOI = {10.1007/978-3-662-53622-3},
  journal = {Graduate Texts in Mathematics},
  publisher = {Springer Berlin Heidelberg},
  author = {Diestel,  Reinhard},
  year = {2017}
}

@phdthesis{van2009algorithms,
  title={Algorithms, haplotypes and phylogenetic networks},
  author={van Iersel, Leo},
  year={2009},
  month={January},
  school={Eindhoven University of Technology},
  address={Eindhoven, The Netherlands},
  note={Available at \url{https://ir.cwi.nl/pub/17418/17418B.pdf}}
}

@mastersthesis{holtgrefe2023scanwidth,
title={Computing the Scanwidth of Directed Acyclic Graphs},
author={Niels Holtgrefe},
year={2023},
month={July},
address={Delft, The Netherlands},
note={Available at \url{http://resolver.tudelft.nl/uuid:9c82fd2a-5841-4aac-8e40-d4d22542cdf5}},
school={Delft University of Technology},
type= {Master's thesis}
}

@inproceedings{Berry2022b,
author="Berry, Vincent
and Scornavacca, Celine
and Weller, Mathias",
year = 2020,
title="Scanning Phylogenetic Networks is {NP-}hard",
booktitle={{46th International Conference on Current Trends in Theory and Practice of Informatics (SOFSEM 2020)}},
pages={519--530},
doi = {10.1007/978-3-030-38919-2_42}
}

@article{bodlaender1993linear,
author = {Bodlaender, Hans L.},
title = {A Linear-Time Algorithm for Finding Tree-Decompositions of Small Treewidth},
year = {1996},
volume = {25},
number = {6},
doi = {10.1137/S0097539793251219},
journal = {SIAM J. Comput.},
pages = {1305–1317},
}

@inproceedings{ravi1991ordering,
	doi = {10.1007/3-540-54233-7_180},
	year = 1991,
	pages = {751--762},
	author = {Ramamurthy Ravi and Ajit Agrawal and Philip Klein},
	title = {Ordering problems approximated: single-processor scheduling and interval graph completion},
	booktitle = {International Colloquium on Automata, Languages, and Programming (ICALP 1991)}
}

@inproceedings{orlin2013max,
title = {Max Flows in {$O(nm)$} Time, or Better},
  author={Orlin, James B},
  booktitle={45th Annual ACM Symposium on Theory of Computing (STOC 2013)},
  pages={765--774},
  year={2013},
  doi = {10.1145/2488608.2488705},
}

@inproceedings{korhonen2022single,
	doi = {10.1109/FOCS52979.2021.00026},
	year = 2022,
	author = {Tuukka Korhonen},
	title = {A Single-Exponential Time 2-Approximation Algorithm for Treewidth},
        pages={184--192},
	booktitle = {{IEEE} 62nd Annual Symposium on Foundations of Computer Science ({FOCS} 2021)}
}

@inproceedings{vIersel2023making,
  title={Making a Network Orchard by Adding Leaves},
  author={van Iersel, Leo and Jones, Mark and Julien, Esther and Murakami, Yukihiro},
  booktitle =	{23rd International Workshop on Algorithms in Bioinformatics (WABI 2023)},
  pages =	{7:1--7:20},
  year =	{2023},
  doi =	{10.4230/LIPIcs.WABI.2023.7},
}

@misc{tamaki2022heuristic,
      title={Heuristic computation of exact treewidth}, 
      author={Hisao Tamaki},
      year={2022},
      eprint={2202.07793},
      archivePrefix={arXiv}
}

@article{bernardini2023constructing,
  title={Constructing phylogenetic networks via cherry picking and machine learning},
  author={Bernardini, Giulia and van Iersel, Leo and Julien, Esther and Stougie, Leen},
  journal={Algorithms Mol. Biol.},
  volume={18},
  number={1},
  pages={13},
  year={2023},
  doi={10.1186/s13015-023-00233-3}
}

@misc{phylo_online,
  author = {Agarwal, Tushar and Gambette, Philippe and Morrison, David},
  title = {Who is Who in Phylogenetic Networks: Articles, Authors and Programs},
  year = 2016,
  note = "\url{http://phylnet.univ-mlv.fr/} (accessed 26 March 2023)"
}

@misc{treewidth_repo,
  author = {Hisao Tamaki},
  title = {tw},
  year = 2022,
  howpublished = "\url{https://github.com/twalgor/tw}",
  note = {repository}
}

@article{iersel2023embedding,
  doi = {10.46298/dmtcs.10116},
  year = {2023},
  publisher = {Centre pour la Communication Scientifique Directe ({CCSD})},
  volume = {25:2},
  number = {4},
  author = {Leo van Iersel and Mark Jones and Mathias Weller},
  title = {Embedding phylogenetic trees in networks of low treewidth},
  journal = {Discret. Math. Theor. Comput. Sci.}
}

@article{magne2023edge,
title = {Edge-treewidth: Algorithmic and combinatorial properties},
journal = {Discret. Appl. Math.},
volume = {341},
pages = {40-54},
year = {2023},
doi = {10.1016/j.dam.2023.07.023},
author = {Loïc Magne and Christophe Paul and Abhijat Sharma and Dimitrios M. Thilikos},
keywords = {Graph parameters, Edge-treewidth, Treewidth, Obstructions, Weak topological minors}
}

@article{van2008constructing,
  author={van Iersel, Leo and Keijsper, Judith and Kelk, Steven and Stougie, Leen and Hagen, Ferry and Boekhout, Teun},
  journal={IEEE/ACM Trans. Comput. Biol. Bioinform.}, 
  title={Constructing Level-2 Phylogenetic Networks from Triplets}, 
  year={2009},
  volume={6},
  number={4},
  pages={667-681},
  doi={10.1109/TCBB.2009.22}
}

@article{van2010phylogenetic,
    author = {van Iersel, Leo and Kelk, Steven and Rupp, Regula and Huson, Daniel},
    title = "{Phylogenetic networks do not need to be complex: using fewer reticulations to represent conflicting clusters}",
    journal = {Bioinformatics},
    volume = {26},
    number = {12},
    pages = {i124-i131},
    year = {2010},
    doi = {10.1093/bioinformatics/btq202}
}

@article{kong2022classes,
  title={Classes of explicit phylogenetic networks and their biological and mathematical significance},
  author={Kong, Sungsik and Pons, Joan Carles and Kubatko, Laura and Wicke, Kristina},
  journal={J. Math. Biol.},
  volume={84},
  number={6},
  pages={47},
  year={2022},
  doi = {10.1007/s00285-022-01746-y}
}

@article{BORDEWICH2007914,
author = {Magnus Bordewich and Charles Semple},
title = {Computing the minimum number of hybridization events for a consistent evolutionary history},
journal = {Discret. Appl. Math.},
volume = {155},
number = {8},
pages = {914-928},
year = {2007},
doi = {10.1016/j.dam.2006.08.008}
}

@article{wu2014inapproximability,
	doi = {10.1613/jair.4030},
	year = 2014,
	volume = {49},
	pages = {569--600},
	author = {Y. Wu and P. Austrin and T. Pitassi and D. Liu},
	title = {Inapproximability of Treewidth and Related Problems},
	journal = {J. Artif. Intell. Res.}
}

@article{held1962dynamic,
 author = {Michael Held and Richard M. Karp},
 journal = {J. Soc. Ind. Appl. Math.},
 number = {1},
 pages = {196--210},
 publisher = {Society for Industrial and Applied Mathematics},
 title = {A Dynamic Programming Approach to Sequencing Problems},
 volume = {10},
 year = {1962},
 doi={10.1137/0110015}
}

@article{bodlaender2012exact,
author = {Bodlaender, Hans L. and Fomin, Fedor V. and Koster, Arie M. C. A. and Kratsch, Dieter and Thilikos, Dimitrios M.},
title = {On Exact Algorithms for Treewidth},
year = {2012},
issue_date = {December 2012},
publisher = {Association for Computing Machinery},
address = {New York, NY, USA},
volume = {9},
number = {1},
doi = {10.1145/2390176.2390188},
journal = {ACM Trans. Algorithms},
articleno = {12},
numpages = {23}
}

@article{gurevich1987expected,
author = {Gurevich, Yuri and Shelah, Saharon},
title = {Expected Computation Time for Hamiltonian Path problem},
journal = {SIAM J. Comput.},
volume = {16},
number = {3},
pages = {486-502},
year = {1987},
doi = {10.1137/0216034}
}

@article{tarjan,
author = {Hopcroft, John and Tarjan, Robert},
title = {Algorithm 447: Efficient Algorithms for Graph Manipulation},
year = {1973},
issue_date = {June 1973},
publisher = {Association for Computing Machinery},
address = {New York, NY, USA},
volume = {16},
number = {6},
doi = {10.1145/362248.362272},
journal = {Commun. ACM},
pages = {372–378},
numpages = {7}
}

@article{petit2013addenda,
  title={Addenda to the survey of layout problems},
  author={Petit, Jordi},
  journal={Bull. EATCS},
  volume={3},
  number={105},
  year={2013}
}

@article{Scornavacca2022,
  doi = {10.1186/s13015-022-00216-w},
  year = {2022},
  publisher = {Springer Science and Business Media {LLC}},
  volume = {17},
  number = {1},
  author = {Celine Scornavacca and Mathias Weller},
  title = {Treewidth-based algorithms for the small parsimony problem on networks},
  journal = {Algorithms Mol. Biol.}
}

@article{bodlaender2009derivation,
title = {Derivation of algorithms for cutwidth and related graph layout parameters},
journal = {J. Comput. Syst. Sci.},
volume = {75},
number = {4},
pages = {231-244},
year = {2009},
doi = {10.1016/j.jcss.2008.10.003},
author = {Hans L. Bodlaender and Michael R. Fellows and Dimitrios M. Thilikos}
}

@article{bodlaender2012note,
  doi = {10.1007/s00224-011-9312-0},
  year = {2011},
  publisher = {Springer Science and Business Media {LLC}},
  volume = {50},
  number = {3},
  pages = {420--432},
  author = {Hans L. Bodlaender and Fedor V. Fomin and Arie M. C. A. Koster and Dieter Kratsch and Dimitrios M. Thilikos},
  title = {A Note on Exact Algorithms for Vertex Ordering Problems on Graphs},
  journal = {Theory Comput. Syst.}
}

@article{janssen2021comparing,
  doi = {10.1142/s0219720021400126},
  year = {2021},
  publisher = {World Scientific Pub Co Pte Ltd},
  volume = {19},
  number = {06},
  author = {Remie Janssen and Pengyu Liu},
  title = {Comparing the topology of phylogenetic network generators},
  journal = {J. Bioinform. Comput. Biol.}
}

@article{zhang2018bayesian,
    author = {Zhang, Chi and Ogilvie, Huw A and Drummond, Alexei J and Stadler, Tanja},
    title = "{Bayesian Inference of Species Networks from Multilocus Sequence Data}",
    journal = {Mol. Biol. Evol.},
    volume = {35},
    number = {2},
    pages = {504-517},
    year = {2017},
    doi = {10.1093/molbev/msx307}
}

@article{rabier2021inference,
  doi = {10.1371/journal.pcbi.1008380},
  year = {2021},
  publisher = {Public Library of Science ({PLoS})},
  volume = {17},
  number = {9},
  author = {Charles-Elie Rabier and Vincent Berry and Marnus Stoltz and Jo{\~{a}}o D. Santos and Wensheng Wang and Jean-Christophe Glaszmann and Fabio Pardi and Celine Scornavacca},
  title = {On the inference of complex phylogenetic networks by {Markov} chain {Monte-Carlo}},
  journal = {{PLOS} Comput. Biol.}
}

@article{diaz2002survey,
author = {D\'{\i}az, Josep and Petit, Jordi and Serna, Maria},
title = {A Survey of Graph Layout Problems},
year = {2002},
issue_date = {September 2002},
publisher = {Association for Computing Machinery},
address = {New York, NY, USA},
volume = {34},
number = {3},
doi = {10.1145/568522.568523},
journal = {ACM Comput. Surv.},
pages = {313–356},
numpages = {44}
}

@article{bulteau2019parameterized,
  title={Parameterized algorithms in bioinformatics: an overview},
  author={Bulteau, Laurent and Weller, Mathias},
  journal={Algorithms},
  volume={12},
  number={12},
  pages={256},
  year={2019},
  doi = {10.3390/a12120256}
}

@article{Borst2022,
  title = {New {FPT} Algorithms for Finding the Temporal Hybridization Number for Sets of Phylogenetic Trees},
  volume = {84},
  DOI = {10.1007/s00453-022-00946-8},
  number = {7},
  journal = {Algorithmica},
  author = {Borst,  Sander and van Iersel,  Leo and Jones,  Mark and Kelk,  Steven},
  year = {2022},
  pages = {2050–2087}
}

@article{vanIersel2017,
  title = {On Unrooted and Root-Uncertain Variants of Several Well-Known Phylogenetic Network Problems},
  volume = {80},
  DOI = {10.1007/s00453-017-0366-5},
  number = {11},
  journal = {Algorithmica},
  author = {van Iersel,  Leo and Kelk,  Steven and Stamoulis,  Georgios and Stougie,  Leen and Boes,  Olivier},
  year = {2017},
  pages = {2993–3022}
}

@article{nevsetvril2006tree,
  title={Tree-depth, subgraph coloring and homomorphism bounds},
  author={Ne{\v{s}}et{\v{r}}il, Jaroslav and De Mendez, Patrice Ossona},
  journal={Eur. J. Comb.},
  volume={27},
  DOI={10.1016/j.ejc.2005.01.010},
  number={6},
  pages={1022--1041},
  year={2006},
}

\end{document}